\newtheorem{lemma}{Lemma}[section]
\newtheorem{definition}{Definition}[section]
\newtheorem{theorem}{Theorem}[section]
\newtheorem{corollary}{Corollary}[section]
\newtheorem{claim}{Claim}[section]
\newcommand{\E}[0]{\mathbb{E}}
\newcommand{\Var}[0]{\textrm{Var}}
\newcommand{\ignore}[1]{}
\newcommand{\kibitz}[2]{\ifnum\Comments=1\textcolor{#1}{#2}\fi}
\newcommand{\sherry}[1]{\kibitz{red}{\noindent[Sherry: #1]}}
\newcommand{\yc}[1]{\kibitz{magenta}{\noindent[YC: #1]}}
\renewcommand{\vec}[1]{\mathbf{#1}}
\begin{document}

%%
%% The "title" command has an optional parameter,
%% allowing the author to define a "short title" to be used in page headers.

\title{Prior-free Data Acquisition for Accurate Statistical Estimation}
\author{
Yiling Chen \\%\thanks{Footnote for author to give an alternate address.}}   \\
Harvard University\\
yiling@seas.harvard.edu\\
\and 
Shuran Zheng \\%\thanks{Footnote for author to give an alternate address.}} \\
Harvard University\\
shuran\_zheng@seas.harvard.edu\\
}
\date{}
\maketitle

%%
%% By default, the full list of authors will be used in the page
%% headers. Often, this list is too long, and will overlap
%% other information printed in the page headers. This command allows
%% the author to define a more concise list
%% of authors' names for this purpose.
%\renewcommand{\shortauthors}{Trovato and Tobin, et al.}

%%
%% The abstract is a short summary of the work to be presented in the
%% article.
\begin{abstract}
We study a data analyst's problem of acquiring data from self-interested individuals to obtain an accurate estimation of some statistic of a population, subject to an expected budget constraint. Each data holder incurs a cost, which is unknown to the data analyst, to acquire and report his data. The cost can be arbitrarily correlated with the data. The data analyst has an expected budget that she can use to incentivize individuals to provide their data. The goal is to design a joint acquisition-estimation mechanism to optimize the performance of the produced estimator, without any prior information on the underlying distribution of cost and data. We investigate two types of estimations: unbiased point estimation and confidence interval estimation.
  \begin{description}
  \item[Unbiased estimators:] We design a truthful, individually rational, online mechanism to acquire data from individuals and output an unbiased estimator of the population mean when the data analyst has no prior information on the cost-data distribution and individuals arrive in a random order. The performance of this mechanism matches that of the optimal mechanism, which knows the true cost distribution, within a constant factor.  The performance of an estimator is evaluated by its variance under the worst-case cost-data correlation. %Our online mechanism makes sequential purchases without any prior information of the underlying distribution, but assumes that the agents come in random order.
  \item[Confidence intervals:] We characterize an approximately optimal (within a factor $2$) mechanism for obtaining a confidence interval of the population mean when the data analyst knows the true cost distribution at the beginning. This mechanism is efficiently computable. We then design a truthful, individually rational, online algorithm that is only worse than the approximately optimal mechanism by a constant factor. The performance of an estimator is evaluated by its expected length under the worst-case cost-data correlation. 
  \end{description}

\end{abstract}

%% This command processes the author and affiliation and title
%% information and builds the first part of the formatted document.

\section{Introduction}
We study a data analyst's problem of estimating a population statistic (e.g. mean workout time in November) when data need to be acquired from self-interested data holders and the analyst has an expected budget constraint. Each data holder has a heterogeneous private cost to acquire and report his data (e.g. record duration of each workout in a month and report the total) and needs to be compensated at least by this cost to reveal his data. Individuals cannot fabricate their data if they decide to reveal it. Moreover, the values of the data and the private costs can be arbitrarily correlated in the population (e.g. those who work out regularly may use some fitness tracker which automatically records workout durations) and the correlation is unknown to the analyst a priori. A naive way for the analyst to acquire data in this setting is to offer a fixed compensation for each individual's data. But unless the payment level is higher than everyone's cost, in which case the analyst may run out of budget quickly and only be able to obtain a small sample, the collected sample will bias toward a low-cost subpopulation.  Thus, the problem is how to design a joint pricing-estimation mechanism to get accurate estimations when data holders are strategic.

%We consider the problem of estimating a population statistic when the data comes from self-interested agents. Each agent may have a private cost to provide the data, and the cost can vary from person to person. The simplest way to acquire the data would be offering a fix compensation for each agent's data. But unless the payment level is higher than everyone's cost, it cannot be guaranteed that the samples are not biased toward a low-cost subpopulation. So the problem is how to get accurate estimation in a cost-effective way.

The problem of purchasing data for unbiased estimation of population mean was first formulated by \citet{Roth2012} and then further studied by \citet{Chen2018}. Both works however assume that the cost distribution is known to the analyst and aim at obtaining an optimal unbiased estimator with minimum worst-case variance for population mean, where the worst-case is over all data-cost distributions consistent with the known cost distribution, subject to an expected budget constraint. The mechanism proposed by \citet{Roth2012} achieves optimality approximately when the cost distribution has  piece-wise differentiable PDF, while the mechanism proposed by \citet{Chen2018} achieves the exact optimality when the cost distribution is regular. \citet{Chen2018} also extend the result to linear regression. The high-level idea of both mechanisms is to acquire a data point with reported cost $c_i$ with a positive probability $A(c_i)$ (and some payment that is greater than or equal to $c_i$), then remove the sampling bias by re-weighting each collected data by $1/A(c_i)$, and finally average the re-weighted data to obtain an unbiased estimation (the Horvitz-Thompson Estimator). The assumption that the cost distribution is known allows the analyst to turn the mechanism design problem into a constrained optimization problem for finding an optimal allocation rule $A(c_i)$.  

Our paper makes two novel contributions, both of which do away from the main limiting assumptions of the prior works.  
First, we consider a data analyst with no prior information on the data holders' costs. We design an online mechanism that outputs an unbiased estimation of population mean, with the same goal as in the prior works: minimize the variance of the unbiased estimator subject to a budget constraint. Our only assumption is that data holders show up  in a uniformly random order. Here the challenge is that, in order to price well, the analyst needs to learn the cost distribution, but the pricing decisions need to be made for every arriving data holder. 
Our second contribution is to consider the bias-variance trade-off of the estimator. The previous works only consider unbiased estimators and the goal is to minimize the variance of the estimator. In this work, we allow the estimator to be biased and try to minimize the length of the confidence interval around the population mean, given a budget constraint. This necessarily requires us to reason about bias-variance trade-off together with data pricing, an aspect that, to the best of our knowledge, has not been explored in the literature. We design mechanisms for both the scenario where the analyst knows the cost distribution and the scenario where there is no prior information on costs.

\subsection{Summary of Our Results and Techniques}
Our work mainly addresses two questions:
\begin{enumerate}
\item If the data analyst does not have any prior information on data holders' private costs (as well as their private data), is it possible to design an online data acquisition mechanism for unbiased estimation of population mean that is competitive with the optimal mechanism that knows the cost distribution a priori? %Optimality refers to minimum variance of the estimator. 
%\item What is the optimal mechanism if the estimation does not have to be unbiased?
\item Can we design an optimal joint acquisition-estimation mechanism for estimating confidence intervals of population mean, when cost distribution is known? Optimality here means minimum length of the confidence interval. When cost distribution is unknown, can we design an online joint acquisition-estimation mechanism for confidence intervals that is competitive with the optimal mechanism that knows the cost distribution a priori? 
\end{enumerate}
For the first question, we design an online mechanism that is only worse than the optimal mechanism by a constant factor. The only substantial assumption we make in our setting is that the data holders come in random order, so if there are $n$ data holders in total, the cost-data distribution at each round is the discrete uniform distribution over the set of cost-data pairs of these $n$ data holders. Our mechanism satisfies the budget constraint in expectation, with the guarantee that the data holders will always be willing to participate and truthfully report their costs. 
\begin{theorem}[Informal] 
For the problem of purchasing data to get an unbiased estimator of population mean, assuming that the data holders come in random order, our online mechanism satisfies the following properties: (1) it is truthful and individually rational, (2) it satisfies the expected budget constraint, and (3) for any cost distribution, the variance of the produced unbiased estimator approaches that of the benchmark within a constant factor, where the benchmark is the optimal mechanism that knows the true cost distribution a priori.
\end{theorem}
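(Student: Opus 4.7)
My plan is to use a two-phase, exploration-then-exploitation mechanism that exploits the random-arrival assumption to learn the cost distribution on the fly. In the exploration phase, the mechanism processes the first $n_0 = \Theta(n)$ arrivals using a simple posted-price rule in order to collect an i.i.d.\ sample from the underlying cost distribution; by a DKW-type inequality, with high probability the empirical CDF $\widehat F$ is within $O(1/\sqrt{n_0})$ of the true CDF $F$ uniformly. In the exploitation phase, the mechanism plugs $\widehat F$ into the benchmark allocation rule $A^{\star}$ of \citet{Chen2018} (rescaled for the remaining budget) and offers each arriving holder a take-it-or-leave-it price whose acceptance probability is the learned $\widehat A(\cdot)$; accepted data are reweighted by $1/\widehat A(c_i)$ to form a Horvitz--Thompson estimator.

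For truthfulness and individual rationality, the key observation is that the price offered to the currently arriving holder depends only on information available before he arrives (the fixed exploration threshold, or the empirical distribution of previously reported costs). Since the current holder's own report has no effect on his own offered price, truthful reporting of $c_i$ is a dominant strategy, and offering at least $c_i$ whenever a purchase occurs guarantees individual rationality. The expected budget constraint is enforced by scaling the exploitation-phase allocation so that the total expected payment --- including the bounded expected payments during exploration --- stays within $B$; an $O(1/\sqrt{n})$ safety margin absorbs the estimation error and is in turn absorbed into the constant factor.

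The heart of the analysis is the variance guarantee. Unbiasedness of the Horvitz--Thompson estimator follows immediately from reweighting by the known realization probability $\widehat A(c_i)$. For the variance, I would compare $\widehat A$ to the benchmark $A^{\star}$ pointwise: because the optimization defining $A^{\star}$ depends smoothly on $F$ and $\widehat F$ approximates $F$ uniformly with high probability, one should obtain $\widehat A(c) \geq c_0 A^{\star}(c)$ for an absolute constant $c_0 > 0$ on the relevant support. Then every per-round contribution to the worst-case variance is within a constant factor of the benchmark's, and summing over the $\Theta(n)$ exploitation rounds yields the claim; the low-probability concentration failure contributes an $o(1)$ term that does not change the constant. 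The main obstacle I anticipate is exactly this sensitivity step: since the worst-case variance is driven by reciprocals $1/A(c)$, a uniform additive error in $\widehat F$ does not directly translate into a multiplicative error in $\widehat A$ in regions where $A^{\star}$ is small, so one typically needs either a floor on the allocation or a case analysis that sacrifices a negligible fraction of the budget to preserve a multiplicative stability bound. Turning this into an explicit constant is the most delicate part of the proof.
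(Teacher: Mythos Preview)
Your approach is structurally quite different from the paper's, and there are two genuine gaps that would need to be closed before it could yield the stated theorem.

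\textbf{Unbiasedness and exploration.} In this model, an estimator of the population mean $\frac{1}{n}\sum_i z_i$ can be unbiased only if \emph{every} arriving data point is purchased with strictly positive probability; otherwise some $z_i$ is never observed and the Horvitz--Thompson reweighting cannot recover it. A ``simple posted-price rule'' during exploration either buys nothing above the threshold (introducing bias from the first $n_0$ rounds) or must be replaced by a randomized allocation $A_0(c)>0$ for all $c\le \overline{C}$. But then the first $n_0=\Theta(n)$ rounds contribute variance of order $\frac{1}{n^2}\sum_{i\le n_0} 1/A_0(c_i)$ to the final estimator, and this term is not controlled by the benchmark: when the budget is generous and $\Var^*(A^*)$ is small, a distribution-agnostic $A_0$ gives a contribution that dwarfs the benchmark by an unbounded factor. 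So there is a real tension between (i) making exploration cheap enough to fit the budget, (ii) keeping $A_0$ bounded away from zero for unbiasedness, and (iii) keeping exploration variance within a constant of the benchmark. Your sketch does not resolve this.

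\textbf{The sensitivity step.} You correctly flag that uniform closeness of $\widehat F$ to $F$ does not give multiplicative control of $\widehat A$ where $A^\star$ is small. This is exactly the obstruction, and it is not a detail: the optimal allocation is determined by \emph{regularized virtual costs} $\phi(c)$, which depend on $F$ through ratios of the form $F(c_{i-1})/f(c_i)$ and through a global ironing step, so small additive errors in $\widehat F$ can produce arbitrarily bad multiplicative errors in $\widehat A$ in the high-cost tail. The paper sidesteps this entirely. It never estimates $F$ and then plugs in; instead, at round $i$ it solves the \emph{exact} benchmark optimization on the empirical cost set $T_i=\{c_1,\dots,c_{i-1},\overline C\}$ with a per-round budget $\propto 1/\sqrt{i}$ (front-loaded, not back-loaded). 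The comparison to the true benchmark is done via an intermediate ``one-step-ahead'' mechanism on $T_{i+1}=T_i\cup\{c_i\}$, and the key structural lemma is that when two cost sets differ by a single element, their regularized virtual costs differ by at most a factor of $2$. This converts the whole sensitivity problem into a local, one-element perturbation bound that is robust precisely where your DKW-based approach is fragile. The $1/\sqrt{i}$ budget schedule then makes the per-round losses sum to a constant multiple of the benchmark.

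In short: a two-phase explore/exploit scheme is not what the paper does, and making it work would require new ideas for both the exploration-phase variance and the plug-in stability that you have not supplied.
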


For the second question, we extend our mechanism to output a confidence interval (using sample mean and sample variance). The mechanism may introduce some bias to mean estimation in exchange for a lower variance, so that the length of the confidence interval is approximately optimized. We provide the characterization of the approximately optimal confidence interval mechanism when the cost distribution is known. This characterization allows us to efficiently compute the mechanism. We then design an online mechanism that matches the performance of the optimal mechanism that knows the cost distribution within a constant factor.
\begin{theorem}[Informal]
For the problem of purchasing data to obtain a confidence interval, the approximately optimal mechanism that knows the cost distribution can be computed in polynomial time. 
\end{theorem}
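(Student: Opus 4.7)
The plan is to characterize the approximately optimal mechanism by a small set of parameters and then show that a good setting of these parameters can be found in polynomial time when the cost CDF $F$ is known.

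First, I would restrict attention to a natural class of mechanisms: for each arriving data holder, allocate with probability $A(c)$ depending only on the reported cost $c$, pay the minimum truthful price induced by $A$ (by a standard Myerson-style threshold argument), and form the estimator as a reweighted mean of the collected data. Within this class, the expected budget, the expected sample size, and the worst-case bias and variance of the estimator (taken over all data-cost distributions consistent with $F$) are all expressible as linear or fractional-linear functionals of $A$ and $F$.

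Second, I would upper bound the expected confidence interval length by a tractable surrogate that splits into a bias term and a standard-deviation term, each a structured functional of $A$. The factor-$2$ slack enters here, naturally via an inequality of the form $\sqrt{a+b} \le \sqrt{a}+\sqrt{b} \le \sqrt{2(a+b)}$ applied to the mean-squared-error decomposition, or by similarly bounding the interval length above and below by comparable expressions. The worst-case bias of the reweighted sample mean can be related to the mass of costs where $A(c)$ is small, while the worst-case variance is governed by the expected sample size $n\int A(c)\,dF(c)$ together with a reweighted second moment.

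Third, I would use a Lagrangian / exchange argument on the budget-constrained minimization of the surrogate to show that the optimum has a simple parametric form --- plausibly a step rule $A(c)=p$ for $c\le\tau$ and $A(c)=0$ otherwise, or a rule determined by a constant number of breakpoints. This reduces the infinite-dimensional optimization over $A$ to a search over a constant number of scalar parameters. Finally, since $F$ is known, each evaluation of the surrogate and the budget constraint at a candidate parameter tuple reduces to integration against $F$ and is computable in polynomial time to any desired precision; a polynomially-fine grid (or one-dimensional convex minimization if the surrogate is well behaved in $\tau$ for fixed $p$) over the bounded parameter region then yields an approximately optimal mechanism in polynomial time. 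The main obstacle is the third step: establishing the simple parametric structure requires a careful exchange argument that accommodates the square-root dependence of the surrogate on the variance and the coupling between bias reduction and variance reduction through $A$, so that a pointwise perturbation of $A$ can be shown to strictly decrease the objective whenever $A$ deviates from the conjectured form.
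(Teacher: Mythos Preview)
Your high-level plan matches the paper: restrict to monotone allocation rules with Myerson pricing, replace the expected interval length by a tractable surrogate (the paper uses $a^2+b^2\le (a+b)^2\le 2(a^2+b^2)$ to pass between $(\text{bias}+\text{std})$ and $\text{bias}^2+\text{variance}$, which is exactly your factor-$2$ slack), reduce to a small number of scalar parameters, and search. The genuine gap is your conjectured parametric form. The optimum is \emph{not} a step rule $A(c)=p\cdot\mathbf{1}[c\le\tau]$, nor a piecewise-constant rule with a bounded number of breakpoints. The paper separates the allocation into two pieces: an ``ignore'' rule $U(c)\in\{0,1\}$ that drops all costs whose \emph{regularized virtual cost} $\phi(c)$ exceeds a threshold $H$, and on the retained costs an allocation $A(c)=\min\{1,\lambda/\sqrt{\phi(c)}\}$. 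The $1/\sqrt{\phi(c)}$ shape is forced by the KKT conditions of the convex surrogate minimization and is essential: a constant-$p$ rule can be a factor $\Theta(n)$ worse in the variance term (take $n-1$ costs with tiny virtual cost and one with large virtual cost and compare $n\sum\psi_i$ to $(\sum\sqrt{\psi_i})^2$), so your step-rule ansatz would not deliver any constant-factor guarantee.

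Two further ingredients you are missing. First, the virtual cost function $\psi$ for a finite cost set is typically \emph{not} monotone, so the naive Lagrangian solution violates the monotonicity constraint on $A$; the paper introduces a regularized (ironed) virtual cost $\phi$, analogous to Myerson ironing, and proves the KKT system is satisfied with $\phi$ in place of $\psi$. Second, to search efficiently over the remaining two scalars $(\lambda,H)$, the paper shows that if one writes the optimal surrogate value as a function $g(M)+ (M/n)^2$ of the total ignored mass $M=\sum_j U_j$, then $g$ is convex with an explicit derivative $g'(M)=-\tfrac{2\beta^2}{n}\cdot 1/A_M(c_{(r)})$; this is what licenses binary search rather than a grid. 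Your ``polynomially-fine grid'' would also work for polynomial time, but only once the correct two-parameter family $(\lambda,H)$ and the $1/\sqrt{\phi}$ shape are in place.
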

This approximately optimal mechanism with known costs is constructed by analyzing the bias and variance trade-off for estimators for the mean. At any given bias level, by producing an estimator that has the lowest variance (for that bias level), we can construct a confidence interval using this biased mean estimation. We hence can design a mechanism to optimize for the length of the confidence interval. Since the optimal mechanism is difficult to compute, we approximate it to gain computational efficiency. 

\begin{theorem}[Informal]
For the problem of purchasing data to obtain a confidence interval, assuming that the data holders come in random order, our online mechanism has the following properties: (1) it is truthful and individually rational, (2) it satisfies the expected budget constraint, and (3) for any cost distribution, the performance of the produced confidence interval approaches that of the benchmark within a constant factor, where the benchmark is the optimal mechanism that knows the true cost distribution a priori.
\end{theorem}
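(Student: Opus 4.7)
The plan is to build the online confidence-interval mechanism by combining two ingredients already developed in the paper: the online ``learn-then-price'' template used in the unbiased-mean theorem (which exploits random-order arrivals to estimate the cost CDF before committing to an allocation rule), and the characterization of the approximately optimal offline confidence-interval mechanism, which, given a known cost CDF $F$ and expected budget $B$, returns a bias level together with a monotone allocation rule $A^{*}_{F,B}(\cdot)$ that approximately minimizes the CI length. The online mechanism will first estimate $F$ from an initial block of arrivals, then instantiate the offline characterization on the estimated $\hat F$ for the remainder.

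Concretely, I would split the $n$ rounds into an exploration phase of $k = \Theta(\sqrt{n})$ arrivals and an exploitation phase over the remaining $n-k$ arrivals. In exploration the mechanism uses the same truthful cost-elicitation subroutine as in the unbiased-estimator online mechanism (a randomized posted-price rule whose acceptance threshold is independent of the reported cost), and forms the empirical CDF $\hat F$ of the accepted cost reports; by the random-order assumption and a DKW-style inequality, $\|\hat F - F\|_\infty = O(1/\sqrt{k})$ with high probability. In exploitation the mechanism runs $A^{*}_{\hat F, B'}$ on each arriving holder, where $B'$ is the budget net of the expected spend in exploration, paying each accepted holder the Myerson threshold price implied by $A^{*}_{\hat F, B'}$. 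Truthfulness and individual rationality then follow from monotonicity of $A^{*}_{\hat F, B'}$ in the reported cost and the threshold-pricing rule, exactly as in the unbiased-estimator mechanism; the expected budget constraint is satisfied by construction of $B'$.

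The main obstacle is the performance comparison. Unlike the unbiased setting, where only the variance of a Horvitz--Thompson estimator needs to be tracked, the CI length involves both the bias and the variance of the biased estimator, and both quantities depend on the allocation rule through the offline characterization in a nontrivial way. I would address this in two steps. First, a stability lemma for the offline characterization: I would show that if $\|\hat F - F\|_\infty$ is small, then the CI length achieved by plugging $\hat F$ into the characterization and then running the resulting mechanism against the true population is within a $(1+o(1))$ factor of the CI length achieved by the true optimum $A^{*}_{F,B}$. This is essentially a sensitivity analysis of the approximately optimal mechanism given by the previous theorem, and I expect it to reduce to continuity of the objective in the allocation rule together with the explicit form produced by the characterization.

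Second, I would bound the loss from the exploration phase and the budget adjustment: exploration contributes at most $k = O(\sqrt{n})$ rounds of ``wasted'' data, which changes the effective sample size and the per-round budget by a $(1 - O(1/\sqrt{n}))$ factor, affecting the variance and bias terms by only a constant factor. Combining the stability bound with the concentration bound on $\hat F$ and a Chernoff bound controlling the deviation between expected and realized exploration spend yields that, with high probability, the CI length of the online mechanism is within a constant factor of the benchmark in the theorem of the previous paragraph, which in turn is within a factor $2$ of the true offline optimum. Monotonicity and hence truthfulness of the composite mechanism need to be checked across the exploration/exploitation boundary, which I would handle by making the exploration phase oblivious to each individual's reported cost (only aggregates are used), so that the allocation probability seen by a given holder is a monotone function of his own report.
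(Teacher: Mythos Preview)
Your approach is structurally different from the paper's and contains a genuine gap at the step you yourself flag as ``the main obstacle.'' The paper does \emph{not} use an explore-then-exploit split. Instead, at every round $i$ it runs $APPROX_{CI}$ on the empirical cost set $T_i=\{c_1,\dots,c_{i-1},\overline{C}\}$ with a per-round budget proportional to $1/\sqrt{i}$ (so more budget early, decaying smoothly), and compares the round-$i$ mechanism to an intermediate ``one-step-ahead'' mechanism whose cost set is $T_{i+1}=T_i\cup\{c_i\}$. The entire competitive analysis rests on a purely combinatorial fact: when two cost sets differ by a single element, the regularized virtual costs change by at most a factor of~$2$ (Lemmas~\ref{lem:unbiased_adj_lem1}--\ref{lem:unbiased_adj}, \ref{lem:biased_adj1}--\ref{lem:biased_adj2}). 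No concentration of the empirical CDF is ever invoked.

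Your stability lemma is exactly the step the paper's approach is engineered to avoid, and it would not follow from a DKW bound. The offline characterization (Theorems~\ref{lem:unbiased_chrz} and~\ref{thm:CI_roundi}) sets $A(c)\propto 1/\sqrt{\phi(c)}$, where $\phi$ is the \emph{regularized virtual cost}. The unregularized virtual cost is $\psi(c_i)=c_i+(c_i-c_{i-1})F(c_{i-1})/f(c_i)$, which has the density in the denominator; the paper explicitly notes (paragraph ``Optimality of the online mechanisms'') that $\psi$ is ``very sensitive to the PDF of the cost distribution'' and that this is what makes a direct estimation-based approach hard to analyze. A sup-norm bound on $\hat F-F$ gives you nothing about $\hat f$, and hence nothing about the closeness of $\hat\phi$ to $\phi$ or of $A^*_{\hat F,B'}$ to $A^*_{F,B}$. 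Separately, your exploration phase as described cannot simultaneously be ``oblivious to each individual's reported cost'' and elicit costs truthfully: if the allocation in exploration does not depend on the report, there is no incentive to report, and you learn accept/reject at posted prices rather than the cost profile needed to compute virtual costs. The paper sidesteps both issues by never needing global closeness of $\hat F$ and by making every round's allocation a bona fide monotone rule in the current report.
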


Our online mechanisms for both unbiased mean estimation and confidence interval are designed using approximately optimal mechanisms with known costs as building blocks. At any round $i$, the reported costs in previous rounds gives us an empirical cost distribution. We then apply the optimal mechanism for this cost distribution for data holder $i$. Each round's mechanism has a fraction of the total expected budget. Our online mechanisms allocate more budget for early rounds in a way so that the performance of the final produced estimator is only worse than the benchmark by a constant factor.

\subsection{Other Related Work}
There is a growing interest in understanding statistical estimation and learning in environments with strategic agents. The works can be put in a few categories depending on the sources and types of strategic considerations.  

In this work, as well as \cite{Roth2012} and \cite{Chen2018}, agents do not derive utility or disutility from the estimation outcome, cannot fabricate their data, and have a cost for revealing their data. The mechanism uses payment to incentivize data revelation. \cite{Yiling15} is similar on these fronts, but the work considers general supervised learning. They do not seek to achieve a notion of optimality. Instead, they take a learning-theoretic approach and design mechanisms to obtain learning guarantees (risk bounds). \citet{CDP15} focused on incentivizing individuals to exert effort to obtain high-quality data for the purpose of linear regression. 

Another line of research examines data acquisition using differential privacy \citep{GR11,FL12, GLRS14, NVX14, CIL15}. Agents care about their privacy and hence may be reluctant to reveal their data.  The mechanism designer uses payments to balance the trade-off between privacy and accuracy. In this work, we implicitly assume that data holders to not have a privacy cost and hence they don't worry about potential leaking of their data by reporting their cost. In Section~\ref{sec:discussion}, we discuss the complication when data holders care about their privacy and their data and costs are correlated. 

A third line of research studies settings where data holders may strategically misreport their data, there is no ground truth to verify the acquired data, and the analyst would like to design payment mechanisms to incentivize truthful data reporting for the purpose of regression or other analyses~\citep{DBLP:journals/corr/abs-1802-09158,liu2016learning,liu2017sequential}. Because of the lack of verification, this line of work is closely related to the literature on peer prediction~\citep{miller2005eliciting,Shnayder:2016}.  

In a fourth line of research, individuals' utilities directly depend on the inference or learning outcome (e.g. they want a regression line to be as close to their own data point as possible) and they can manipulate their reported data to influence the outcome. In these works, there often is no cost for reporting one's data and the data analyst doesn't use monetary payments. These works attempt to design or identify mechanisms (inference or learning processes) that are robust to potential data manipulations~\citep{DFP10,MAMR11,MPR12, PP03,PP04,HMPW16,DRSWW17,Chen:2018:SLR:3219166.3219175}. 

% \begin{definition}[Martingale]
% A sequence of random variables $Z_0, Z_1, Z_2, \dots$ (denoted by $(Z_i)$) is a martingale if each $Z_i$ satisfies $\E[|Z_i|] < \infty$ and $\E[Z_i | Z_0, \dots, Z_{i-1}] = Z_{i-1}$ for all $i>0$.  
% \end{definition}

% \begin{lemma}[Azuma-Hoeffding]
% If $(Z_i)$ is a martingale such that $|Z_i -Z_{i-1}| < c_i$ for each $i>0$, then 
% $$
% \Pr[|Z_n - Z_0| \ge \lambda] \le 2 \exp\left(-\frac{\lambda^2}{2\sum_i c_i^2}\right).
% $$
% \end{lemma}

\subsection{Organization of the Paper}

In the rest of the paper, we first formulate and characterize the optimal (or approximately optimal) mechanisms for the unbiased mean estimation and the confidence interval estimation when {\em the cost distribution is known} (Section \ref{sec:prelim} and Section \ref{sec:CI} respectively). %Our approximately optimal mechanism for unbiased mean estimation with known costs, characterized in Section ~\ref{}, relaxes the regularity requirement of the cost distribution that prior optimal mechanism assumes. 
These mechanisms serve both as building blocks for developing our online mechanisms when the cost distribution is unknown and as benchmarks to which our online mechanisms are compared. Section~\ref{unbiased} turns to the setting when the cost distribution is unknown. We develop online mechanisms with proven performance guarantees for both the unbiased mean estimation and the confidence interval estimation. We conclude with discussions and future directions in Section \ref{sec:discussion}.

\section{Model}
Consider a data analyst who conducts a survey to estimate some statistic of a population of $n$ people. In this work we focus on estimating the mean of some parameter of interest (e.g. alcohol consumption or BMI of an individual), denoted by $z$, and the confidence interval of the mean.  Each individual incurs a cost $c_i$, unknown to the data analyst, to acquire and report his data $z_i$. The cost and data pair can be correlated (e.g. those who consume more alcohol may have a higher cost recording their consumption), and follows an unknown distribution $\mathcal{D}$ supported on $(\mathcal{C}, \mathcal{Z})$. We assume that the cost is bounded by $\overline{C}$, i.e., $\mathcal{C} \subseteq [0, \overline{C}]$. The parameter $z$ is also bounded, and, without loss of generality, we assume $z$ is between $0$ and $1$, i.e., $\mathcal{Z} \subseteq [0,1]$. The data analyst has a budget $B = n \overline{B}$ that she can use to purchase data from the data holders.
%yc{Do we need cost to be bounded? Or any other requirement on ?$\mathcal{C}$?}
%\sherry{moment estimation wlog just estimate mean of $z$, assumption of $c$ and $z$ }
%\yc{Should we use $z_i$ instead for individual's data.}

We study an online setting where data holders arrive one by one to the survey, and no prior information on the distribution $\mathcal{D}$ (including the marginal distribution of the cost) is available before the survey. The analyst can gradually learn the distribution as data holders report their data. We make the following assumptions about the data sequence: (1) each individual only appears once, and (2) the data holders arrive in \emph{a random order}, i.e., each permutation of the $n$ people is equally likely. We use $(\vec{c}, \vec{z}) = (c_1, z_1), \dots, (c_n, z_n)$ to denote a random sequence of costs and data points, and $\{(c_1, z_1), \dots, (c_n, z_n)\}$ to represent a set of people's cost and data without the consideration of order.

When data holder $i$ arrives, the analyst asks the data holder to report his cost. We use $\widehat{c}_i$ to denote the reported cost of data holder $i$. Based on the reported cost, the analyst may offer a price to acquire the data $z_i$. Formally, the analyst uses a \emph{survey mechanism}, $M=(A,P)$, which consists of an allocation rule $A: \mathcal{C} \to [0,1]$ and a payment rule $P: \mathcal{C} \to \mathbb{R}$. With probability $A(\widehat{c}_i)$, the analyst offers payment $P(\widehat{c}_i)$ to purchase data $z_i$. If the data holder accepts this payment, he gives his data $z_i$ to the analyst. We assume that data holders do not misreport their data $z_i$. This assumption holds in situations when data can be verified once collected (e.g. medical records). The data holder walks away without revealing his data if $P(\widehat{c}_i) < c_i$. With probability $1-A(\widehat{c}_i)$, the analyst does not attempt to acquire the data.\footnote{We describe survey mechanisms as direct-revelation mechanisms, where date holders report their costs. Any survey mechanism can be implemented as a posted-price mechanism, where a menu of (price, probability) pairs are presented and each data holder chooses one from the menu~\cite{Roth2012,Chen2018}.}

%To avoid the bias toward low cost data, we introduce \emph{survey mechanisms}~\citet{Roth2012,Chen2018}, which allows the data analyst to make various payments in response to different costs. A survey mechanism $M=(A,P)$ consists of an allocation rule $A: \mathcal{C} \to [0,1]$ and a payment rule $P: \mathcal{C} \to \mathbb{R}$. When a data holder arrives, the mechanism will elicit her reported cost $\widehat{c}_i$ and then buy her data $z_i$ with probability $A(\widehat{c}_i)$ in exchange for a payment $P(\widehat{c}_i)$. \yc{It's unclear whether the allocation and payment rules depend on the data $z_i$ or not.} For the remaining probability $1-A(\widehat{c}_i)$, the data will not be collected and no payment will be made. In this work we assume that the data holders will not misreport their data $z_i$. We use $\vec{M} = (\vec{A}, \vec{P}) = (A^1, P^1), \dots, (A^n, P^n)$ to represent a sequence of survey mechanisms.

The analyst can adaptively choose a survey mechanism for each arriving data holder. We use $\vec{M} = (\vec{A}, \vec{P}) = (A^1, P^1), \dots, (A^n, P^n)$ to represent a sequence of survey mechanisms. At round $i$, the analyst chooses an allocation rule $A^i$ and a payment rule $P^i$ based on all observed information before round $i$, denoted by $\mathcal{H}_{i-1}$. $\mathcal{H}_{i-1}$ includes the reported costs of the previous $i-1$ data holders and data points that have been acquired. The survey mechanism $(A^i, P^i)$ applies to the $i$-th arriving data holder. At the end of round $n$, the data analyst outputs an estimator $S(\vec{M}, (\vec{c}, \vec{z}))$ based on all observed information $\mathcal{H}_n$.

%Here is how the process works:
%\begin{enumerate}
%\item At round $i$, the analyst chooses an allocation rule $A^i$ and a payment rule $P^i$ based on all observed information before round $i$, denoted by $\mathcal{H}_{i-1}$. $\mathcal{H}_{i-1}$ includes the reported costs of the previous $i-1$ data holders and data points that have been acquired. 
%\item  The $i$-th data holder arrives and reports a cost $\widehat{c}_i$.  The data analyst purchases the data using allocation rule $A^i(\widehat{c}_i)$ and payment rule $P^i(\widehat{c}_i)$.
%\item At the end of round $n$, the data analyst outputs an estimator $S(\vec{M}, (\vec{c}, \vec{z}))$ based on all observed information $\mathcal{H}_n$.
%\end{enumerate}

We want to design survey mechanisms that have the following incentive and budget properties:
\begin{description}
\item[Individual rationality:] The utility of each data holder is always non-negative, i.e., $P^i(\widehat{c}_i) \ge \widehat{c}_i$ for all $i$ and $\widehat{c}_i$.
\item[Truthfulness in expectation:]  A data holder maximizes his expected utility by reporting his cost truthfully, i.e., $A^i(c_i)(P^i(c_i) - c_i) \ge A^i(\widehat{c}_i)(P^i(\widehat{c}_i) - c_i)$ for all $i$ and $\widehat{c}_i \neq c_i$.  
\item[Expected budget feasibility:] $\E \left[\sum_{i=1}^n A^i(c_i)\cdot P^i(c_i)\right] \le B = n \cdot \overline{B}$, where the expectation is taken over the random arriving order of the data holders and the internal randomness of the mechanism.
\end{description}
%\yc{We may need to justify expected budget feasibility.}

In this work we mainly investigate two types of estimation tasks: (1) get an unbiased estimator of the population mean, with the goal that the variance of the estimator is minimized; (2) find a confidence interval of the population mean, with the goal that the length of the confidence interval is minimized. 
As an estimator uses data obtained via survey mechanisms $\vec{M}$, it necessarily depends on $\vec{M}$. 
We now formally define unbiased estimator and confidence interval of population mean in our setting. The randomness of an estimator $S(\vec{M}, (\vec{c}, \vec{z}))$ comes in two parts: (1) the external randomness, which is the random order of $(\vec{c},\vec{z})$, and (2) the internal randomness of the mechanisms $\vec{M}$.  Our definitions require the estimators to be unbiased or a valid confidence interval for any realization of the external randomness.

%\yc{I'm wondering whether we should use $S(\vec{M}, \{\vec{c}, \vec{z}\})$ for the estimator and $S(\vec{M}, (\tilde{\vec{c}}, \tilde{\vec{z}}))$ for the estimation for a fixed sequence. Not sure. What we are currently using are probably ok too.}

\begin{definition}[Unbiased estimator of population mean]
An estimator $S(\vec{M}, (\vec{c}, \vec{z}))$  is an unbiased estimator of the population mean $\E[z] = \frac{1}{n} \sum_{i=1}^n z_i$ if for any fixed sequence $(\tilde{\vec{c}},\tilde{\vec{z}})$, 
$$
\E[S(\vec{M}, (\tilde{\vec{c}},\tilde{\vec{z}})] = \E[z],
$$
where the expectation in $\E[S(\vec{M}, (\tilde{\vec{c}},\tilde{\vec{z}}))]$ is taken over the internal randomness of the mechanisms $\vec{M}$.
\end{definition}
%\yc{Since there is no randomness in  $\E[z]$, should we use a different notation? $\bar{z}$? Same for the next definition.}

\begin{definition}[Confidence interval of population mean]
An estimator $S(\vec{M}, (\vec{c},\vec{z}))$ is a confidence interval for the population mean $\E[z] = \frac{1}{n} \sum_{i=1}^n z_i$ with confidence level $\gamma$ if it is an interval and for any fixed sequence $(\tilde{\vec{c}},\tilde{\vec{z}})$,
$$
\Pr\left( \E[z] \in S(\vec{M}, (\tilde{\vec{c}},\tilde{\vec{z}}))\right) \ge \gamma,
$$
where the randomness is due to the internal randomness of the mechanisms $\vec{M}$.
\end{definition}

 Our goal is to design joint survey and estimation mechanisms, $(\vec{M}, S(\vec{M}, (\vec{c},\vec{z}))$, such that the estimator $S(\vec{M}, (\vec{c},\vec{z}))$ has good statistical performance on the population. For unbiased estimators, we prefer estimators with smaller variance. For confidence intervals, we prefer ones with smaller length. However, the performance of a mechanism on a population depends on the correlation between the population's cost and data, i.e. the distribution $\mathcal{D}$.\footnote{
%When using a purchasing mechanism on a population with costs $\{c_1,\dots, c_n\}$, the performance of the mechanism can be related to the correlation of the cost and the statistic. 
For example, consider a mechanism that purchases each agent's data $z_i$ with a constant probability $p = B/\sum_{i=1}^n c_i$ and payment $c_i$, then outputs $1/(pn)$ times the sum of all purchased data as an unbiased estimation of population mean. When $z$ is always equal to $0$, the variance will be zero; when $z$ is always equal to $1$, the variance will be $(1/p-1)/n$.} We hence take a worst-case analysis approach: measure the performance of a mechanism under worst-case cost-data correlation.
%  \begin{definition}[Variance]
%  When the underlying cost-data distribution is $\mathcal{D}$, the variance of a point estimator $S(\vec{M}, (\vec{c}, \vec{z}))$ is 
%  $$
%  \Var_{\mathcal{D}}(S(\vec{M}, (\vec{c}, \vec{z}))) = 
%  $$
%  \end{definition}

\begin{definition}[Worst-case variance]
Given that the set of data holders' costs is $C = \{c_1, \dots, c_n\}$, the worst-case variance of a point estimator $S(\vec{M}, (\vec{c}, \vec{z}))$ is defined as
$$
\Var^*(S) = \max_{\mathcal{D} \textrm{ consistent with $C$}} \Var_{\mathcal{D}} (S(\vec{M}, (\vec{c}, \vec{z}))) 
$$
where the maximum is taken over all distributions $\mathcal{D}$ consistent with the set of costs $C$. The randomness is due to the random order of $(c_1, z_1), \dots, (c_n, z_n)$ and the internal randomness of the mechanism $\vec{M}$.
\end{definition}
\begin{definition}[Worst-case expected length]
Given that the set of data holders' costs is $C = \{c_1, \dots, c_n\}$, the worst-case expected length of a confidence interval $S(\vec{M}, (\vec{c}, \vec{z}))$ is defined as 
$$
L^*(S) = \max_{\mathcal{D} \textrm{ consistent with $C$}}  \E(|S(\vec{M}, (\vec{c}, \vec{z}))|) 
$$
where $|S(\vec{M}, (\vec{c}, \vec{z}))|$ represents the length of the confidence interval. The maximum is taken over all distributions $\mathcal{D}$ consistent with the set of costs $C$. The randomness is due to the random order of $(c_1, z_1), \dots, (c_n, z_n)$ and the internal randomness of the mechanisms $\vec{M}$.
\end{definition}
% \begin{definition} with probability at least $1-\delta_c$ the length of the confidence interval is no more than 
% \end{definition}

\citet{Roth2012} and \citet{Chen2018} have also considered the design of joint survey and estimation mechanism for statistical estimation. The main differences between their model and our model are: (1) they assume the marginal cost distribution is known to the data analyst, while our data analyst doesn't have such information, (2) they have the same survey mechanism for all individuals, while we consider an online setting where the analyst can adaptively change the survey mechanism, and (3) they only consider the estimation of mean, while we also investigate the estimation of confidence intervals. 
\yc{May consider to move this paragraph to the related work.}

%worst-case correlation analysis

\section{Preliminaries} \label{sec:prelim}
In this section, we first show that we can easily extend known results on one-shot truthful mechanisms to achieve truthfulness and individual rationality for a sequence of survey mechanisms $\vec{M}$. Then, we introduce the formulation proposed by \citet{Chen2018} for obtaining the optimal unbiased estimator of population mean {\em when the cost distribution is known to the analyst}. Later in Section~\ref{unbiased} we will use this known cost case as our benchmark for evaluating the performance of our optimal unbiased estimator when the cost distribution is unknown. 
While this optimal unbiased estimator has been studied by \citet{Chen2018}, their optimal mechanism requires the cost distribution to be regular. This is often not satisfied when we consider the costs of a finite set of data holders. We develop the characterization of the optimal unbiased estimator for arbitrary discrete cost distribution without any regularity assumption. We show that the optimal purchasing rule of cost-$c$ data is decided by a quantity which we define as \emph{regularized virtual costs} of the data. 

%~\cite{Chen2018} and~\cite{Roth2012}. But neither of the solutions can be directly applied to our problem. \cite{Roth2012} requires the PDF of the cost distribution to be piecewise differentiable except on a measure zero set, and~\cite{Chen2018} assumes that the cost distribution is regular, i.e., the virtual cost function $\psi(c)$ is monotone non-decreasing in $c$.   

\subsection{Truthful and Individually Rational Survey Mechanisms} \label{sec:truthful}

Since each data holder appears only once in our setting, requiring a sequence of survey mechanisms to be truthful and individually rational is equivalent to requiring that each $(A^i, P^i)$ is truthful and individually rational, which can be achieved by a straight-forward extension of known results for truthful mechanisms.  

The well-known Myerson's lemma states that  monotonicity is the necessary and sufficient condition for an allocation rule to be truthful with some payment rule.
\begin{lemma} [\citet{myerson1983efficient}] \label{true_myerson} %[Myerson's lemma\yc{Add citation}] 
An allocation rule $A(c)$ is the allocation rule of some truthful survey mechanism $(A(c)$, $P(c))$ if and only if 
%\begin{itemize}
%\item 
$A(c)$ is  monotone non-increasing in $c$.
%\item $P(c) = c + \frac{1}{A(c)} \cdot \int_{c}^{\overline{C}} A(v) \, dv$.
%\end{itemize}
\end{lemma}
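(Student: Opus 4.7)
The plan is to prove the two directions separately, following the template of Myerson's characterization for single-parameter mechanisms, specialized to the cost-procurement setting (where low cost, not high value, corresponds to the ``eager'' type).

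For necessity, I would pick any two costs $c < c'$ and write down the two incentive-compatibility inequalities a truthful mechanism must satisfy: a cost-$c$ holder does not gain by reporting $c'$, and a cost-$c'$ holder does not gain by reporting $c$. Adding the two inequalities, the $P$ terms cancel and what remains rearranges to $(A(c) - A(c'))(c' - c) \ge 0$, which immediately yields $A(c) \ge A(c')$, the required monotonicity.

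For sufficiency, I would explicitly construct a payment rule $P$ that makes any monotone non-increasing $A$ truthful. The natural analogue of the Myerson payment formula in the cost setting is to define $P$ by
\[
A(c)\cdot P(c) \;=\; A(c)\cdot c \;+\; \int_c^{\overline{C}} A(t)\,dt,
\]
with $P(c)$ chosen arbitrarily (but at least $c$) when $A(c)=0$. Under this rule the expected utility of a truthful cost-$c$ holder equals $\int_c^{\overline{C}} A(t)\,dt$. Comparing with the utility $A(\widehat{c})(P(\widehat{c}) - c)$ from a deviation to $\widehat{c}$, the difference reduces to $\int_c^{\widehat{c}}\bigl(A(t) - A(\widehat{c})\bigr)dt$, which the monotonicity of $A$ makes non-negative both for $\widehat{c} > c$ and for $\widehat{c} < c$ (in the latter case the integrand is non-positive but the limits are reversed). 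Individual rationality $P(c) \ge c$ follows directly from the defining formula, since the integral term is non-negative.

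The argument is essentially the standard Myerson derivation, so there is no substantive obstacle; the only things to be careful about are the sign flip relative to the usual value-based presentation of the lemma and the boundary case $A(c) = 0$, which is why I keep the payment formula in the multiplicative form $A(c)\cdot P(c) = \ldots$ rather than dividing through by $A(c)$.
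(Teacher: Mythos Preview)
Your proposal is correct and follows the standard Myerson argument. Note, however, that the paper does not actually prove this lemma: it is stated as the classical Myerson characterization and cited to \cite{myerson1983efficient}, and later restated (with the explicit payment identity $P(c) = c + \frac{1}{A(c)}\int_c^{\overline{C}} A(v)\,dv$) in the appendix, again without proof. So there is no ``paper's own proof'' to compare against; you have supplied the textbook derivation where the paper simply invokes the known result.
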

%
%\begin{corollary} \label{corollary:myerson}
%Let $(A(c), P(c))$ be a truthful survey mechanism, then $A(c)\cdot P(c)$ is a decreasing function of $c$.
%\end{corollary}
%\begin{proof}
%According to Lemma~\ref{myerson}, 
%$$
%\frac{d\,A(c) \cdot P(c)}{d\,c} = \frac{d}{d\,c} \left(A(c) \cdot c +  \int_{c}^{\overline{C}} A(v) \, dv \right) = A'(c)\cdot c + A(c) -A(c) = A'(c) \cdot c \le 0.
%$$
%\end{proof}
%

The following lemmas from \cite{Chen2018} are analogies of the original Myerson's Lemma, which are tailored for discrete cost distributions. 
Firstly, they show that given a fixed monotone non-increasing allocation rule $A(c)$  defined on a discrete cost set $\{c_1, \dots, c_m\}$, there exists an optimal payment rule $P(c)$ that guarantees truthfulness and individual rationality. 
\begin{lemma}[\citet{Chen2018}, Claim 2 in Section B.1.2] \label{myerson}
Let $A(c)$ be a monotone non-increasing allocation rule defined for set $\{c_1, \dots, c_m\}$ with $c_1 \le \cdots \le c_m$. Define payment rule $P(c_i) = c_i + \frac{1}{A(c_i)} \sum_{j=i+1}^m A(c_j)(c_{j}-c_{j-1})$.  Then $(A(c), P(c))$ is truthful and individually rational for all $c \in \{c_1, \dots, c_m\}$, and any payment rule $P'(c)$ that guarantees the truthfulness and individual rationality of $(A(c), P'(c))$ must have $P'(c) \ge P(c)$ for all $c\in \{c_1, \dots, c_m\}$.
\end{lemma}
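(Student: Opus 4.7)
The plan is to establish three claims in sequence: individual rationality $P(c_i) \ge c_i$, truthfulness against all deviations, and minimality of $P$ among truthful and IR payment rules. This is the discrete analogue of Myerson's payment characterization, so the argument should run along the standard telescoping lines rather than requiring any new ideas.

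Individual rationality is immediate: every summand in the formula for $P(c_i) - c_i$ is a product of the non-negative allocation $A(c_j)$ with the non-negative gap $c_j - c_{j-1}$. For truthfulness, I would first rewrite the truthful-report utility in a clean form, namely $U(c_i) := A(c_i)(P(c_i) - c_i) = \sum_{k=i+1}^m A(c_k)(c_k - c_{k-1})$. Consider an upward deviation to $c_j$ with $j > i$. Expanding $A(c_j)(P(c_j) - c_i) = A(c_j)(P(c_j) - c_j) + A(c_j)(c_j - c_i)$ and using the telescoping identity $c_j - c_i = \sum_{k=i+1}^j (c_k - c_{k-1})$, the gain from truthful reporting collapses to $U(c_i) - A(c_j)(P(c_j) - c_i) = \sum_{k=i+1}^j \bigl(A(c_k) - A(c_j)\bigr)(c_k - c_{k-1})$, which is non-negative because $A$ is monotone non-increasing and $k \le j$. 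The downward-deviation case $j < i$ is symmetric: after the same expansion the residual becomes $\sum_{k=j+1}^i \bigl(A(c_j) - A(c_k)\bigr)(c_k - c_{k-1}) \ge 0$, again by monotonicity.

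For the minimality claim, let $P'$ be any truthful and IR payment rule and write $U'(c_i) := A(c_i)(P'(c_i) - c_i)$. The IC constraint stating that type $c_i$ does not prefer to mimic $c_{i+1}$ gives $U'(c_i) \ge A(c_{i+1})(P'(c_{i+1}) - c_i) = U'(c_{i+1}) + A(c_{i+1})(c_{i+1} - c_i)$. Iterating this inequality from $i$ up to $m$ and using IR, $U'(c_m) \ge 0$, as the base case telescopes to $U'(c_i) \ge \sum_{k=i+1}^m A(c_k)(c_k - c_{k-1}) = U(c_i)$. Dividing by $A(c_i)$ (positive case) yields $P'(c_i) \ge P(c_i)$; if $A(c_i) = 0$ the inequality is vacuous since the payment is never triggered, and we may define $P(c_i)$ arbitrarily without affecting incentives.

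The proof is essentially bookkeeping, so I do not expect a real obstacle. The only mild care-points are (i) keeping the direction of the telescoping sums straight when swapping between upward and downward deviations, and (ii) handling the degenerate allocation $A(c_i) = 0$, which I would dispatch with the short remark above rather than treat as a separate case.
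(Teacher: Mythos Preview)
Your proof is correct and follows the standard discrete Myerson argument. The paper does not actually supply a proof of this lemma; it is quoted from \citet{Chen2018} and used as a black box. Your telescoping computation for both upward and downward deviations, and the chain of adjacent IC constraints for the minimality direction, are exactly the expected proof of this cited result.
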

Furthermore for any cost distribution supported on $\{c_1, \dots, c_m\}$, the expected total payment of $(A(c),P(c))$, with the optimal payment rule $P(c)$ defined in Lemma~\ref{myerson}, can be equivalently represented in a simpler form in terms of virtual costs. 
\begin{definition}[Virtual costs] \label{def:virtual_costs}
Let $f(c)$ and $F(c)$ be the PDF and the CDF of a cost distribution $\mathcal{F}$ supported on $\{c_1, \dots, c_m\}$ with $c_1 \le \cdots \le c_m$. Let $c_0 = 0$. The virtual cost function $\psi(c)$ of this cost distribution is defined as
$$\psi(c_i) = c_i + \frac{c_i -c_{i-1}}{f(c_i)} F(c_{i-1})$$
for all $1\le t \le m$. 
\end{definition}

We remark that it is very likely that $\psi(c_i)$ is not regular, i.e., is not a monotone non-decreasing function of $c_i$. Consider when the cost distribution is the uniform distribution over a finite set, $\psi(c_i) = c_i + (c_i - c_{i-1}) * (i-1)$. If $c_i$ is very close to $c_{i-1}$, then $\psi(c_i)$ will be roughly equal to $c_i$. If $c_i$ is much larger than $c_{i-1}$, $\psi(c_i)$ can be close to $i*c_i$. So if there exist three consecutive costs, the difference between the first two costs is large, and the difference between the last two is small, the virtual costs will very likely be irregular. For example, $c_1 = 1$, $c_2 = 10$, $c_3 = 11$, then $\psi(c_2) = 19$ and $\psi(c_3) = 13$, so $\psi$ is not monotone non-decreasing.

\begin{lemma}[\citet{Chen2018}, Lemma 10 in Section B.1.2] \label{lem:expected_payment}
Let $A(c)$ be a monotone non-increasing allocation rule defined on set $\{c_1, \dots, c_m\}$ with $c_1 \le \cdots \le c_m$. Let $P(c)$ be the optimal truthful and individually rational payment rule defined in Lemma~\ref{myerson}.
When cost follows a distribution $\mathcal{F}$ supported on $\{c_1, \dots, c_m\}$, the expected total payment $\E_{c\sim \mathcal{F}}[A(c)P(c)]$ is equal to the expected virtual cost $\E_{c\sim \mathcal{F}}[A(c) \psi(c)]$ where $\psi(c)$ is the virtual cost function of $\mathcal{F}$.
\end{lemma}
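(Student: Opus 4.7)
The plan is a direct calculation: expand $\E_{c\sim\mathcal{F}}[A(c)P(c)]$ using the definition of the payment rule from Lemma 3.2, then swap the order of the resulting double summation so that what was a payment-side sum becomes a virtual-cost correction term. The underlying identity is the discrete analogue of integration by parts (Abel summation), so no regularity assumption on $\psi$ is needed.

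Concretely, I would first write
\[
\E_{c\sim\mathcal{F}}[A(c)P(c)] \;=\; \sum_{i=1}^m f(c_i)\,A(c_i)\,c_i \;+\; \sum_{i=1}^m f(c_i)\sum_{j=i+1}^m A(c_j)(c_j-c_{j-1}),
\]
where the $\frac{1}{A(c_i)}$ in the payment formula cancels against the outer $A(c_i)$ (the edge case $A(c_i)=0$ is handled by treating the payment as irrelevant, since its contribution to the expectation vanishes). The first term is already in virtual-cost form, contributing the $c_i$ piece of $\psi(c_i)$.

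Next I would swap the order of summation in the second double sum:
\[
\sum_{i=1}^m f(c_i)\sum_{j=i+1}^m A(c_j)(c_j-c_{j-1}) \;=\; \sum_{j=1}^m A(c_j)(c_j-c_{j-1})\sum_{i=1}^{j-1}f(c_i) \;=\; \sum_{j=1}^m A(c_j)(c_j-c_{j-1})\,F(c_{j-1}),
\]
using the convention $c_0=0$ so the $j=1$ term vanishes because $F(c_0)=0$. Re-indexing $j\to i$ and factoring out $f(c_i)$ from each summand, this is precisely $\sum_{i=1}^m f(c_i)A(c_i)\cdot \frac{c_i-c_{i-1}}{f(c_i)}F(c_{i-1})$, which is the virtual-cost correction $A(c_i)[\psi(c_i)-c_i]$ weighted by the PDF. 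Adding the two pieces gives $\sum_i f(c_i)A(c_i)\psi(c_i) = \E_{c\sim\mathcal{F}}[A(c)\psi(c)]$.

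The only step that requires any care is the bookkeeping of the summation swap and the boundary convention $c_0=0$, $F(c_0)=0$; everything else is algebraic rearrangement. Monotonicity of $A$ and the truthfulness/IR of $(A,P)$ established in Lemma 3.2 are not needed for this identity itself — they only matter in that they justify using this particular $P$. I do not expect any substantive obstacle; the main thing to be explicit about is that the formula holds pointwise in $i$ after the exchange, so no integrability or regularity hypothesis enters.
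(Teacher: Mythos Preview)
Your proposal is correct and is exactly the standard Myerson-style Abel-summation argument for this identity. The paper does not supply its own proof of this lemma---it is quoted directly from \citet{Chen2018}---so there is nothing further to compare; your write-up would serve as a complete self-contained proof.
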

%Suppse the first $i$ costs are $c_1, \dots, c_i$. Define $\lceil c \rceil$ to be the minimum $c_j$ that is larger than $c$. If $c>c_j$ for all $1\le j \le i$, then $\lceil c \rceil = \overline{C}$.
%The  payment rule that guarantees truthfulness is equal to
%$$
%P(c_i) = c_i + \sum_{j=i+1}^n \frac{A(c_j)}{A(c_i)}\cdot (c_j - c_{j-1})
%$$
%and $P(c) = P(\lceil c \rceil)$ for $c \neq c_j$ for any $j$. 

The above lemmas assume that costs are from a finite discrete set. Our benchmark mechanism where the analyst already knows all the costs satisfies this assumption. We'll use the above result to establish the performance of our benchmark mechanism. However, our mechanisms developed in this paper for the unknown cost case do not have any restriction on the set of possible costs. The allocation rules and the payment rules of our mechanisms are first computed on a discrete set using the above result, and then extended to all other values of cost. We show below that such extension preserves truthfulness and individual rationality. 
%
%In above lemmas, it is assumed that the possible costs are within a discrete set. We clarify that our mechanisms do not require a discrete set of all possible costs. Neither do they make any restriction on reported costs. These lemmas are mainly used to study the benchmark mechanism that already knows all the costs. However, the allocation rules and the payment rules of our mechanisms are first computed on a discrete set, and then extended to all other values.
\begin{definition}[Extended allocation rule and payment rule] \label{def:extend_mech}
Given a survey mechanism $(A^d,P^d)$ that is defined on a discrete cost set $\{c_1, \dots, c_m\}$ with $c_1 \le \cdots \le c_m$. The extended allocation rule and payment rule $A,P$ are defined as follows
$$
A(c) = A^d(\lceil c \rceil), \quad P(c) = P^d(\lceil c \rceil), \textrm{ for all } c\in [0, c_m],
$$
where $\lceil c \rceil$ is the minimum cost in $\{c_1, \dots, c_m\}$ that is greater than or equal to $c$. 
\end{definition}
\begin{lemma} \label{lem:extended_mech}
Let $A^d(c)$ be a monotone non-increasing allocation rule defined on set $\{c_1, \dots, c_m\}$ with $c_1 \le \cdots \le c_m$. Let $P^d(c)$ be the optimal payment rule defined in Lemma~\ref{myerson}.
Then the extended allocation rule and payment rule of $(A^d, P^d)$ is still truthful and individually rational.
\end{lemma}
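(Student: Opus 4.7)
My plan is to view the extended mechanism as a menu mechanism whose options coincide with the discrete outcomes $(A^d(c_j),P^d(c_j))$, indexed by $j\in\{1,\dots,m\}$, and to check that the truthful report remains optimal under this menu. I would first observe that because $A(\widehat{c})=A^d(\lceil\widehat{c}\rceil)$ and $P(\widehat{c})=P^d(\lceil\widehat{c}\rceil)$, every report $\widehat{c}\in(c_{j-1},c_j]$ (with $c_0:=0$) induces option $j$, so a data holder with true cost $c\in(c_{i-1},c_i]$ reports truthfully exactly by selecting option $i$. Individual rationality is then immediate: for any report $\widehat{c}\in[0,c_m]$, $P(\widehat{c})=P^d(\lceil\widehat{c}\rceil)\ge\lceil\widehat{c}\rceil\ge\widehat{c}$, using the IR of the discrete mechanism from Lemma~\ref{myerson}.

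For truthfulness I would substitute the closed-form optimal payment from Lemma~\ref{myerson} to obtain the identity
$$V_j:=A^d(c_j)\bigl(P^d(c_j)-c_j\bigr)=\sum_{k>j}A^d(c_k)(c_k-c_{k-1}),$$
so that selecting option $j$ at true cost $c$ yields utility $u_j(c)=V_j+A^d(c_j)(c_j-c)$. Fixing $c\in(c_{i-1},c_i]$, I would verify $u_i(c)\ge u_j(c)$ for all $j$ via two telescoping rearrangements. For $j>i$,
$$u_i(c)-u_j(c)=\sum_{k=i+1}^{j}\bigl(A^d(c_k)-A^d(c_j)\bigr)(c_k-c_{k-1})+\bigl(A^d(c_i)-A^d(c_j)\bigr)(c_i-c)\ge 0,$$
because $A^d$ is monotone non-increasing and $c\le c_i$. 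For $j<i$,
$$u_i(c)-u_j(c)=\bigl(A^d(c_j)-A^d(c_i)\bigr)(c-c_{i-1})+\sum_{k=j+1}^{i-1}\bigl(A^d(c_j)-A^d(c_k)\bigr)(c_k-c_{k-1})\ge 0,$$
this time because $c>c_{i-1}$ and $A^d(c_j)\ge A^d(c_k)$ for every $k\ge j+1$.

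The main subtlety, and where I expect to concentrate the most care, is the $j<i$ case: it genuinely relies on $c$ lying \emph{strictly above} $c_{i-1}$, which is exactly what the rounding convention $\lceil\cdot\rceil$ in Definition~\ref{def:extend_mech} guarantees. Had the extension rounded reported costs down instead of up, a truthful holder with $c$ slightly above $c_{i-1}$ would strictly prefer to misreport toward some $c_{i-1}$ (or lower), since the slope of $u_{i-1}$ as a function of $c$ is $-A^d(c_{i-1})\ge -A^d(c_i)$; rounding up preempts this deviation by assigning such $c$ to option $i$ rather than option $i-1$. Beyond choosing $\lceil\cdot\rceil$, no ingredient is needed besides Lemma~\ref{myerson}'s payment formula and the monotonicity of $A^d$.
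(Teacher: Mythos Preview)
Your proof is correct, and it takes a genuinely different route from the paper's. The paper argues by checking that the extended pair $(A,P)$ satisfies Myerson's continuous payment identity: writing the discrete sum in $P^d(c_i)$ as an integral over the step function $A(v)=A^d(\lceil v\rceil)$, one gets $P(c)=c+\tfrac{1}{A(c)}\int_c^{c_m}A(v)\,dv$, and then truthfulness follows from Lemma~\ref{true_myerson}. Your approach instead views the extended mechanism as a menu with the $m$ discrete options $(A^d(c_j),P^d(c_j))$ and verifies the finitely many incentive constraints directly via the two telescoping identities. The paper's route is a couple of lines shorter because it recycles the continuous Myerson characterization; yours is more self-contained (it needs only the discrete payment formula of Lemma~\ref{myerson}) and makes transparent exactly which of the two inequalities $c\le c_i$ and $c\ge c_{i-1}$ drives each half of the argument.

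One small correction to your concluding aside: if the extension rounded \emph{down}, a holder with true cost $c$ slightly above $c_{i-1}$ would be assigned option $i{-}1$, and since $u_i(c)-u_{i-1}(c)=(A^d(c_{i-1})-A^d(c_i))(c-c_{i-1})>0$ whenever $A^d(c_{i-1})>A^d(c_i)$, she would misreport \emph{upward} (to reach option $i$), not downward. Your conclusion---that rounding up is the right convention---is correct; only the direction of the profitable deviation in the counterfactual is reversed. This has no bearing on the proof itself.
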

The proof of the lemma can be found  in~\ref{app:extended_alloc}.

\subsection{Formulating the Optimal Unbiased Estimator with Known Costs} \label{sec:formulating_opt_unbiased}

In this section, we formulate an optimization problem that solves the optimal survey mechanism for an unbiased estimator {\em when the cost distribution is known}, based on the results of~\cite{Chen2018}. The optimization problem is only slightly different from~\cite{Chen2018} in the objective function because in our setting, the agents are not i.i.d. drawn from the same distribution but come as a random permutation. The value of statistic $z$ is assumed  to be bounded and without loss of generality $0\le z \le 1$. %\yc{Need to justify the assumption $0\le x(z) \le 1$ (e.g. give some examples) or translate this assumption on assumptions on data.}

\paragraph{Horvitz-Thompson Estimator:} When we use truthful survey mechanisms $\vec{M} = (A^1, P^1), \dots,$ $ (A^n, P^n)$ to purchase the data points, the data of agent $i$ will be collected with probability $A^i(c_i)$. Define 
$$
\widehat{x}_i = \left\{ \begin{array}{ll}
						z_i, \textrm{ with probability } A^i(c_i)\\
                        0, \textrm{ otherwise}
						\end{array}
                        \right.
$$
to be the observed data point which is set to zero if no purchase is made. Define $y_i = \frac{\widehat{x}_i}{A^i(c_i)}$. 
To get unbiased estimation, we use Horvitz-Thompson Estimator, which is the unique unbiased linear estimator in our setting (see~\citep{Roth2012} for more details),  
$$
S(\vec{M}, (\vec{c}, \vec{z})) = \frac{1}{n} \sum_{i=1}^n y_i.
$$
Notice that an unbiased estimator always buys the data points with probability greater than $0$, i.e., $A^i(c_i) > 0$ for all $i$ and $c_i$. If $A^i(c_i) = 0$, the mechanism will never know the value of $z_i$ and thus cannot be unbiased. 

When the cost set  $C = \{c_1, \dots, c_n\}$ is known to the analyst, the optimal mechanism that uses the same survey mechanism for all data holders has been derived by~\citet{Chen2018}. They reduce the mechanism design problem to a min-max optimization problem. The optimal allocation rule that minimizes the worst-case variance of the Horvitz-Thompson Estimator can be formulated as the solution of an optimization problem $OPT(n, C, B)$, which is defined as follows,
\begin{align} \label{prog:unbiased_benchmark_def}
OPT(n, C, B) = \arg \min_A  \max_{\vec{z}\in [0,1]^n} & \quad \frac{1}{n^2} \left(\sum_{i=1}^n  \frac{z_i^2}{A(c_i)} - \sum_{i=1}^n z_i^2  \right)  \\
%\max_{\mathcal{D}_{c_1,\dots, c_i}} 				&\E_{\mathcal{D}_{c_1,\dots, c_i}} \left[\left(\frac{\widehat{x}}{A(c)}\right)^2\right]\\
\textrm{ s.t. }& \quad \sum_{i=1}^n A(c_i) \psi(c_i) \le B \notag\\
 & \quad A(c) \textrm{ is monotone non-increasing in } c \notag\\
 & \quad 0\le A(c) \le 1, \quad \forall c \notag 
\end{align}

Here the objective function is changed from the original formulation in~\cite{Chen2018} so that it is equal to the worst-case variance of the Horvitz-Thompson Estimator in our setting. According to the law of total variance,  
$\Var(S(\vec{M}, (\vec{c}, \vec{z}))) = \E[\Var(S|(\vec{c}, \vec{z}))] + \Var( \E[S|(\vec{c}, \vec{z})])$.
Since  the estimator is always unbiased for any order $(\vec{c}, \vec{z})$, the second term is zero. Furthermore, when conditioning on a sequence,  $y_i = \frac{\widehat{x}_i}{A(c_i)}$ become independent when a fixed allocation rule $A$ is used.
 Therefore the variance of the Horvitz-Thompson Estimator is equal to 
\begin{align*}
 \Var(S(\vec{M}, (\vec{c}, \vec{z}))) = & \E_{(\vec{c}, \vec{z})}[\Var(S|(\vec{c}, \vec{z}))] =  \E_{(\vec{c}, \vec{z})}\left[ \frac{1}{n^2} \sum_{i=1}^n \Var\left(y_i | (\vec{c}, \vec{z})\right) \right]\\
 = &\frac{1}{n^2}  \cdot \E_{(\vec{c}, \vec{z})} \left[  \sum_{i=1}^n  \E[y_i^2 | (\vec{c}, \vec{z})] -\E[y_i | (\vec{c}, \vec{z})]^2 \right]. 
\end{align*}
For any arriving sequence $(\vec{c}, \vec{z})$, $\sum_{i=1}^n  \E[y_i^2 | (\vec{c}, \vec{z})] -\E[y_i | (\vec{c}, \vec{z})]^2$ stays the same, which is equal to 
$\sum_{i=1}^n  \frac{z_i^2}{A(c_i)} - \sum_{i=1}^n z_i^2$.
Therefore by maximizing over $\vec{z}$, we get the worst-case variance of the Horvitz-Thompson estimator
$$
 \max_{\vec{z} \in [0,1]^n}  \frac{1}{n^2} \left(\sum_{i=1}^n  \frac{z_i^2}{A(c_i)} - \sum_{i=1}^n z_i^2  \right).
$$

The last constraint $0\le A(c) \le 1$ makes sure that $A$ is an allocation rule of a survey mechanism. The second constraint is the sufficient and necessary condition for $A$ to be the allocation rule of a truthful mechanism. The first constraint guarantees expected budget feasibility according to Lemma~\ref{lem:expected_payment}, where $\psi(c)$ is the virtual cost function. 

To simplify the analysis, in this work we will use an approximation of the above optimization problem, which is defined as $APPROX(n, C, B)$ by removing the second term of the objective function of $OPT(n, C,B)$,
\begin{align} \label{prog:unbiased_benchmark_approx}
APPROX(n, C, B) = \arg \min_A  \max_{\vec{z}\in [0,1]^n} & \quad \sum_{i=1}^n  \frac{z_i^2}{A(c_i)} \\
\textrm{ s.t. }& \quad \sum_{i=1}^n A(c_i) \psi(c_i) \le B \notag\\
& \quad A(c) \textrm{ is monotone non-increasing in } c \notag\\
 & \quad 0\le A(c) \le 1, \quad \forall c \notag 
\end{align}
\begin{lemma}
The worst-case variance of  $APPROX(n, C,B)$ is no more than the worst-case variance of $OPT(n,C,B)$ plus $\frac{1}{n}$.
\end{lemma}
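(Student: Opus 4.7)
The plan is to reduce both formulations to minimizing $\sum_i 1/A(c_i)$ over the shared feasible set, by first performing the inner maximum over $\vec{z}$, and then comparing the resulting scalar objectives at the two optimizers $A^*_{OPT}$ and $A^*_{APPROX}$.

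First I would identify the worst-case $\vec{z}$ for both formulations. Any feasible $A$ must satisfy $A(c_i) \in (0, 1]$, so each summand $z_i^2(1/A(c_i) - 1)$ appearing in the $OPT$ objective has a non-negative coefficient on $z_i^2$, and each summand $z_i^2/A(c_i)$ appearing in the $APPROX$ objective has a positive coefficient on $z_i^2$. Both objectives separate across $i$, each term is non-decreasing in $z_i^2$, and $z_i \in [0,1]$, so both inner maxima are attained at $\vec{z} = (1,\ldots,1)$. Substituting this value, the worst-case variance of any feasible $A$ equals $\frac{1}{n^2}\sum_i 1/A(c_i) - \frac{1}{n}$, and the worst-case value of the $APPROX$ objective at $A$ equals $\sum_i 1/A(c_i)$.

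Next I would compare the two optimizers. Since $A^*_{OPT}$ is feasible for $APPROX(n,C,B)$ and $A^*_{APPROX}$ minimizes $\sum_i 1/A(c_i)$ over this common feasible set, one has $\sum_i 1/A^*_{APPROX}(c_i) \le \sum_i 1/A^*_{OPT}(c_i)$. Substituting into the worst-case variance expression for $A^*_{APPROX}$ and using $-\frac{1}{n} \le 0$ on the left-hand side yields
\begin{equation*}
\frac{1}{n^2}\sum_i \frac{1}{A^*_{APPROX}(c_i)} - \frac{1}{n} \;\le\; \frac{1}{n^2}\sum_i \frac{1}{A^*_{OPT}(c_i)} \;=\; \left(\frac{1}{n^2}\sum_i \frac{1}{A^*_{OPT}(c_i)} - \frac{1}{n}\right) + \frac{1}{n},
\end{equation*}
which is exactly the claimed inequality.

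The only substantive step is observing that both inner maxima concentrate at $\vec{z} = \vec{1}$, which collapses a seemingly distinct min-max problem into a scalar minimization of $\sum_i 1/A(c_i)$; beyond that, the proof is bookkeeping. The $\frac{1}{n}$ slack is exactly the $-\frac{1}{n}$ constant dropped in the chain of inequalities above, and in fact both $OPT$ and $APPROX$ reduce to minimizing the same scalar objective over the same feasible region, so the tighter conclusion $V(A^*_{APPROX}) \le V(A^*_{OPT})$ holds as well; however, the looser $\frac{1}{n}$ form is what the later online analysis needs.
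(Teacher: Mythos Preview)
Your proof is correct and follows essentially the same route as the paper's one-line argument, just with the details filled in: the paper simply observes that the dropped term $\frac{1}{n^2}\sum_i z_i^2$ is at most $\frac{1}{n}$, while you explicitly evaluate both inner maxima at $\vec{z}=\vec{1}$ and compare the resulting scalar objectives. Your additional remark that $OPT$ and $APPROX$ in fact share the same optimizer (so the worst-case variances are actually equal, not just within $\frac{1}{n}$) is correct and slightly sharper than what the paper states.
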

\begin{proof}
As $z_i \in [0,1]$, the second term of the objective function of $OPT(n, C,B)$ $\le 1/n$. 
\end{proof}
\subsection{Characterization of the Optimal Unbiased Estimator}
The characterization of the optimal unbiased estimator when the cost distribution is known has been studied in both~\cite{Chen2018} and~\cite{Roth2012}. But neither of the solutions can be directly applied to our problem. \citet{Roth2012} require the PDF of the cost distribution to be piecewise differentiable except on a measure zero set, and~\citet{Chen2018} assume that the cost distribution is regular, i.e., the virtual cost function $\psi(c)$ is monotone non-decreasing in $c$.   
Although the optimization problem~\eqref{prog:unbiased_benchmark_approx} has a convex objective function and thus can be solved efficiently by the convex optimization algorithms (see~\cite{Boyd:2004:CO:993483}), the closed-form solution is still non-trivial to derive.  Below we give the exact characterization of the optimal solution, which has a very simple form and will further be used to derive the optimal confidence interval mechanism: the optimal allocation rule $A^i(c)$ is inversely proportional to the square root of the regularized virtual cost of $c$, which is defined as follows,
\begin{definition}[Regularized virtual costs] \label{def:regularized_virtual_costs}
For a  discrete uniform distribution  supported on $\{c_1, \dots, c_m\}$ with $c_1 \le \cdots \le c_m$ and its virtual costs function $\psi(c_1), \dots, \psi(c_m)$. For every $i\le k$, let $Avg(i,k)$ be the average of $\psi(c_i), \dots, \psi(c_k)$. We define regularized virtual cost $\phi(c_i)$ as follows
\begin{align*}
&\phi(c_i) = \max \{ \psi'(c_1), \dots, \psi'(c_i)\},\\
&\psi'(c_i) = \min_{k: k\ge i} Avg(i,k).
\end{align*}
\end{definition}
The form of the regularized virtual costs is very similar to the ironed virtual value  used in  revenue-maximizing auction design~\cite{myerson1981optimal}.  The idea is to replace the exact virtual cost (value) with the average virtual cost (value) on an interval that has non-regular virtual cost (value). But our proof is different because the underlying optimization problem is not the same. 
\begin{theorem} \label{lem:unbiased_chrz}
 The optimal solution of $APPROX(n,C,B)$ is 
$$A(c_{(k)}) = \min\left\{ 1, \frac{\lambda}{\sqrt{\phi(c_{(k)})}} \right\}, \quad \textrm{ for all } 1\le k \le n,$$
 where $\phi(c)$ is the regularized virtual cost function when the cost distribution is the uniform distribution over $C$, and $\lambda$ is chosen such that the budget constraint is satisfied with equality $$ \sum_{k=1}^n A(c_{(k)}) \psi(c_{(k)}) = B.$$
The value of $\lambda$ can be computed using binary search within time $O(\log |C|)$.
\end{theorem}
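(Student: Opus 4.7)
The plan is to reduce the program to a pure minimization problem, solve it first while ignoring monotonicity, and then impose monotonicity via an ironing/pooling argument that naturally produces the regularized virtual cost $\phi$.

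First I would handle the inner maximization. Because every $A(c_i)>0$ for an unbiased allocation and the coefficient $1/A(c_i)$ is nonnegative, the max over $\vec{z}\in[0,1]^n$ of $\sum_{i=1}^n z_i^2/A(c_i)$ is attained at $\vec{z}=(1,\dots,1)$. So $APPROX(n,C,B)$ reduces to the convex program
\[
\min_{A}\ \sum_{i=1}^n \frac{1}{A(c_i)}\quad \text{s.t.}\quad \sum_{i=1}^n A(c_i)\psi(c_i)\le B,\ A(\cdot)\text{ non-increasing},\ 0\le A\le 1.
\]
Setting up the Lagrangian for the budget constraint, KKT applied to each unconstrained coordinate gives $-1/A(c_i)^2+\mu\psi(c_i)=0$, i.e.\ $A(c_i)=\lambda/\sqrt{\psi(c_i)}$ with $\lambda=1/\sqrt{\mu}$. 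Combined with the box constraint $A\le 1$ this yields the tentative form $A(c_i)=\min\{1,\lambda/\sqrt{\psi(c_i)}\}$.

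Next I would address the monotonicity constraint, which is the crux of the argument since $\psi$ need not be monotone. Here I expect to invoke an ironing argument analogous to Myerson's. Let $[i,j]$ be any maximal interval on which the regularized function $\phi$ is constant and equals $\phi_{ij}$; by Definition~\ref{def:regularized_virtual_costs} this constant is exactly the average $\frac{1}{j-i+1}\sum_{k=i}^{j}\psi(c_{(k)})$. I would argue (via a standard exchange argument, keeping the budget unchanged and using convexity of $x\mapsto 1/x$) that any feasible solution with non-constant $A$ on such a pool can be strictly improved, or at least matched, by replacing $A$ on the pool with its pool-average. Hence the optimal $A$ is constant on every pool, say $A_p$, and on that pool the budget contribution equals $A_p(j_p-i_p+1)\phi_p$ while the objective contribution equals $(j_p-i_p+1)/A_p$. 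Thus the program collapses to a reduced program of the same form with $\psi$ replaced by $\phi$, and since $\phi$ is non-decreasing by construction, monotonicity of $A$ becomes automatic. Applying the KKT condition to this reduced program yields exactly $A(c_{(k)})=\min\{1,\lambda/\sqrt{\phi(c_{(k)})}\}$.

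Finally I would nail down $\lambda$ and its computation. The function $\lambda\mapsto \sum_{k=1}^n \min\{1,\lambda/\sqrt{\phi(c_{(k)})}\}\cdot\psi(c_{(k)})$ is continuous, non-decreasing, and piecewise linear in $\lambda$ with at most $n$ breakpoints corresponding to indices $k$ for which $\lambda/\sqrt{\phi(c_{(k)})}$ hits the cap $1$. One can therefore binary search on which subset of indices is capped (by the non-decreasing monotonicity of $\phi$ these are the lowest-$\phi$ indices, so there are only $n+1=O(|C|+1)$ possibilities); for each candidate capping threshold the equation $\sum A(c_{(k)})\psi(c_{(k)})=B$ is linear in $\lambda$ and solved in closed form, then checked for consistency with the threshold. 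This gives the stated $O(\log|C|)$ runtime. The main obstacle I anticipate is the pooling exchange step: making precise that averaging $A$ on an irregular segment weakly decreases the objective while exactly preserving the budget contribution, and then showing by induction on pools that the solution characterization is globally optimal rather than only optimal within each pool.
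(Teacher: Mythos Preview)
Your overall strategy is sound and constitutes a genuine alternative to the paper's proof. The paper does \emph{not} reduce to a pooled problem; instead it writes the full Lagrangian including the monotonicity constraints $A_{j+1}-A_j\le 0$, guesses the primal solution $A^*(c_{(j)})=\min\{1,\lambda/\sqrt{\phi(c_{(j)})}\}$ together with explicit duals $\xi=1/\lambda^2$, $\pi_j=\pi_{j-1}+\xi(\psi(c_{(j)})-\phi(c_{(j)}))$, $\eta^1_j=\max\{0,1-\xi\phi(c_{(j)})\}$, and then checks primal/dual feasibility, stationarity, and complementary slackness directly. The properties of $\phi$ it isolates for this purpose (Lemma~\ref{lem:unbiased_chara_lem1}) are that $\phi$ is nondecreasing, that $\sum_{j\in I}\phi(c_{(j)})=\sum_{j\in I}\psi(c_{(j)})$ on each level set $I$, and that the partial sums satisfy $\sum_{j\le i}\phi(c_{(j)})\le\sum_{j\le i}\psi(c_{(j)})$ with equality at pool boundaries. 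Your ironing route is arguably more conceptual; the paper's route is shorter once the dual variables are written down, and it avoids any exchange argument.

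One concrete correction to your plan: averaging $A$ over a pool does \emph{not} ``exactly preserve the budget contribution.'' On a pool $[L,R]$ with monotone nonincreasing $A$, Abel summation plus the partial-sum inequality $\sum_{j\le k}\phi\le\sum_{j\le k}\psi$ (with equality at $L-1$ and $R$) gives
\[
\sum_{k=L}^{R} A_k\,\psi(c_{(k)}) \;\ge\; \sum_{k=L}^{R} A_k\,\phi(c_{(k)}) \;=\; \bar A\sum_{k=L}^{R}\phi(c_{(k)}) \;=\; \bar A\sum_{k=L}^{R}\psi(c_{(k)}),
\]
so the budget spent on the pool \emph{weakly decreases} after averaging (and the objective weakly decreases by convexity of $x\mapsto 1/x$). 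That is all you need---the averaged allocation stays feasible and is no worse---but you should state it this way; ``budget unchanged'' is false in general. The facts you are implicitly using (pool averages of $\psi$ equal $\phi$, and the partial-sum inequality) are exactly the content of the paper's Lemma~\ref{lem:unbiased_chara_lem1}, so be sure to prove or cite them rather than assert them from Definition~\ref{def:regularized_virtual_costs} alone.
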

The proof of the theorem can be found  in~\ref{app:unbiased_chrz}, in which we demonstrate some properties of the regularized virtual cost function, and show that the KKT conditions are satisfied because of these properties.

%{\color{red} Suppose we know whether $\E[x]$ is greater than $\frac{1}{2}$ or not. Then WLOG we assume $\E[x]\le \frac{1}{2}$, otherwise consider $1-x$. }

\section{Optimal Confidence Interval with Known Costs}\label{sec:CI}
In this section, we design purchasing mechanisms to get the best confidence interval of the statistic when the cost distribution is known to the analyst at the beginning of the survey.  In this case, the optimal mechanism needs to find the optimal trade-off of the bias and the variance in order to minimize the length of the interval. We consider the class of confidence intervals that are defined around the sample mean, and the length of which is decided by a bias term and the sample variance.  We will first introduce an extended survey mechanism that allows biased estimation, e.g., by ignoring the high-cost data. Then again we formulate an optimization problem and present the characterization of the optimal solution.

Before introducing the new estimator, we first show how to convert our unbiased estimator into a confidence interval. In this work, we use the most classic approach to construct confidence interval based on sample mean and sample variance. 
\paragraph{Construct confidence interval using unbiased estimator:} Consider an unbiased estimator $S(\vec{M}, (\vec{c}, \vec{z}))$ that uses survey mechanism $\vec{M} = (\vec{A},\vec{P})$, and we want to construct a confidence interval for $\E[z] = \frac{1}{n} \sum_{i=1}^n z_i$. Again we use 
$$
\widehat{x}_i = \left\{ \begin{array}{ll}
						z_i, \textrm{ with probability } A^i(c_i)\\
                        0, \textrm{ otherwise}
						\end{array}
                        \right.
$$
to denote the observed data point and define $y_i = \frac{\widehat{x}_i}{A^i(c_i)}$. Notice that the random variables $y_1, \dots, y_n$ are not independent since the allocation rule $A^i$ can depend on $c_1, \dots, c_{i-1}$. But if we consider a fixed realization $(\tilde{\vec{c}}, \tilde{\vec{z}})$, the mechanisms $(A^1, P^1), \dots, (A^n, P^n)$ will also be fixed.  Then $y_1, \dots, y_n$ become independent, because when the probability of purchasing each data point $A^1(c_1), \dots, A^n(c_n)$ is fixed, whether to purchase each data point or not is independently decided. Therefore given a confidence level $\gamma$, we can construct a confidence interval of the expected mean $\E\left[\sum y_i/n \big |  (\tilde{\vec{c}}, \tilde{\vec{z}})\right]$
using the sample mean $\sum_{i=1}^n y_i / n$ and sample variance $\widehat{\sigma}^2 = \sum_{i=1}^n \left(y_i - \sum_{i=1}^n y_i/n\right)^2/(n-1)$ according to Bernstein's inequality (see more details in~\ref{app:concentration}):
$$
\left[\sum_{i=1}^n y_i / n - \frac{\alpha_\gamma}{\sqrt{n}} \cdot \widehat{\sigma}, \quad \sum_{i=1}^n y_i/ n + \frac{\alpha_\gamma}{\sqrt{n}} \cdot \widehat{\sigma}\right]
$$
where $\alpha_\gamma$ is a constant that is chosen to achieve confidence level $\gamma$.  When the estimator is unbiased, $\E\left[\sum y_i/n \big |  (\tilde{\vec{c}}, \tilde{\vec{z}})\right] = \E[z]$ for all $ (\tilde{\vec{c}}, \tilde{\vec{z}})$, this interval is just a confidence interval of $\E[z]$ with confidence level $\gamma$.

\subsection{Confidence Interval and Bias-variance Tradeoff}

 This ``unbiased'' confidence interval does not necessarily have the minimum length. Observe that a small portion of high-cost data can drastically increase the variance of the unbiased estimator as $A(c)$ must be small. We can allow the mechanism to simply ignore these data points, i.e., to have $A^i(c_i)=0$ for some $i$ and set $y_i = 0$. This can probably reduce the variance of the estimator. However, doing so causes the estimator to be biased. Therefore, we need to increase the length of the confidence interval to compensate for this added bias.  

For this reason, we extend the standard survey mechanism to incorporate biased estimation.
The mechanism ignores a data point with probability $U(c)$. The added bias can be represented as a function of $U(c)$. Although the mechanisms now have a new component $U(c)$, the results in Section~\ref{sec:prelim} can nevertheless be applied by seeing $(1-U(c)) A(c)$ as the allocation rule.

\paragraph{Survey mechanisms that allow bias:} We add a new component $\vec{U} = (U^1, \dots, U^n)$ into our allocation rule to allow biased estimation, where each $U^i$ is a function of reported cost $c_i$. A  mechanism that allows bias consists of $(\vec{A}, \vec{U}, \vec{P})$. When a data point with cost $c_i$ comes at time $i$, the mechanism first flips a coin $\widehat{U}_i$ to decide whether to ignore this data point or not, and the probability of $\widehat{U}_i$ being $1$ (which means ignoring the data) is equal to $U^i(c_i)$. If the data is ignored, a bias term will be added into the final estimation to compensate the error.  If $\widehat{U}_i = 0$, then the mechanism purchases the data with probability $A^i(c_i)>0$ and pays $P^i(c_i)$ if the data is purchased. Then the observed data $\widehat{x}_i$ follows
$$
\widehat{x}_i = \left\{ \begin{array}{ll}
						z_i, \textrm{ with probability } (1- U^i(c_i)) A^i(c_i)\\
                        0, \textrm{  with probability } (1- U^i(c_i)) (1 - A^i(c_i))\\
                        \textrm{ignored, with probability } U^i(c_i).
						\end{array}
                        \right.
$$
We re-define $y_i$ as
$$
y_i = \left\{ \begin{array}{ll}
						\frac{\widehat{x}_i}{A^i(c_i)}, \textrm{ if  } \widehat{U}_i =0 \\
                        0, \textrm{ if }\widehat{U}_i =1.
						\end{array}
                        \right.
$$
Then for a fixed arriving sequence $(\tilde{\vec{c}}, \tilde{\vec{z}})$, the bias of estimator $\sum y_i/n$ is equal to 
$$
err = \E[z] - \frac{1}{n} \sum_{i=1}^n \E[y_i | (\tilde{\vec{c}}, \tilde{\vec{z}})]  = \frac{1}{n} \sum_{i=1}^n z_i - \frac{1}{n} \sum_{i=1}^n (1- \widehat{U}_i) \tilde{z}_i  = \frac{1}{n} \sum_{i=1}^n \tilde{z}_i \cdot \widehat{U}_i.
$$
Notice that this bias is not observable because the mechanism does not know the $\tilde{z}_i$ that is not purchased. But since $\tilde{z}_i$ is between $0$ and $1$ and we use worst-case analysis  in this work, we can just assume $\tilde{z}_i$ equals its worst-case value $1$. (This can be seen more clearly in our formulation of the optimization problem  in the next section.) Then the confidence interval of $\E[z]$ with confidence level $\gamma$ can be constructed as follows
$$
\left[\sum_{i=1}^n y_i / n - \frac{\alpha_\gamma}{\sqrt{n}} \cdot \widehat{\sigma}, \quad \sum_{i=1}^n y_i/ n + \frac{1}{n} \sum_{i=1}^n \widehat{U}_i + \frac{\alpha_\gamma}{\sqrt{n}} \cdot \widehat{\sigma}\right]
$$
where  $\widehat{U}_i$ is the indicator of whether the $i$-th data point is ignored and  $\widehat{\sigma}^2$ is the sample variance of $y_1, \dots, y_n$.

For convenience, in the rest of the paper, we write $U^i_c = U^i(c)$ for short, and in cases when the costs are indexed as $c_1, \dots, c_n$ or $c_{(1)}, \dots, c_{(n)}$, we use $U^i_j$ to represent $U^i(c_{j})$ or $U^i(c_{(j)})$.

\subsection{Formulation of the Optimal Confidence Interval} \label{sec:biased_benchmark}
%\sherry{can also be used for risk function minimization etc.}
 We formulate a min-max optimization problem that approximately solves the optimal allocation rule. 
The expected length of the interval we construct is 
$2 \cdot  \frac{\alpha_\gamma}{\sqrt{n}} \cdot \E[\widehat{\sigma} ]+ \E[err]$.
Since the expectation of sample standard deviation $\E[\widehat{\sigma} ]$ is difficult to compute, we estimate $\E[\widehat{\sigma} ]$ with $\sqrt{\E \sum_{i=1}^n y_i^2/n}$. 
When $0 \le \E[y_i] \le 1$, the difference between $\E[\widehat{\sigma} ]$ and $\sqrt{\E \sum_{i=1}^n y_i^2/n}$ is no more than $1 + O(1/n)$ (see~\ref{app:difference} for more details). 

The approximate expected length of the confidence interval can thus be written into a function of $A$ and $U$ and $\vec{z}$
$$
2 \cdot  \frac{\alpha_\gamma}{\sqrt{n}} \cdot \sqrt{\E \sum_{i=1}^n y_i^2/n} + \E[err] = 2 \cdot  \frac{\alpha_\gamma}{\sqrt{n}} \cdot \sqrt{\frac{1}{n} \sum_{i=1}^n \frac{(1-U_i) z_i^2}{A(c_i)}} + \frac{1}{n} \sum_{i=1}^n z_i \cdot U_i.
$$
Then we only need to take maximum over all possible $\vec{z}$ to get the worst-case expected length. Therefore, suppose the underlying cost set  is $C = \{c_{(1)}, \dots, c_{(n)}\}$ with $c_{(1)} \le  \cdots \le c_{(n)}$, then the approximately optimal allocation rule $(A^*, U^*)$ can be again formulated as the solution of $OPT_{CI}(n, C, B)$, which is defined as
\begin{align} \label{prog:biased_benchmark}
 OPT_{CI}(n, C,B) = \arg\min_{A, U} \max_{\vec{z}\in[0,1]^n} \quad & \beta \cdot \sqrt{\frac{1}{n} \cdot \sum_{i=1}^{n} \frac{(1-U_i) z_i^2}{A(c_{(i)})}} + \frac{\sum_{i=1}^{n} z_i \cdot U_i}{n}  \\
\textrm{s.t.} \quad & \sum_{i=1}^{n} (1-U_i) \cdot A(c_{(i)}) \psi(c_{(i)})  \le B \notag\\
& (1-U_c)A(c) \textrm{ is monotone non-increasing in } c \notag\\
& 0\le A(c) \le 1, \quad  0 \le U_c \le 1, \quad \forall c \notag
\end{align}
where $\beta =  2 \cdot  \frac{\alpha_\gamma}{\sqrt{n}}$. 
\begin{lemma} \label{lem:biased_benchmark}
Let $L^*$ be the value of the objective function~\eqref{prog:biased_benchmark} when $A^*$ and $U^*$ is used. ( $L^*$ is an approximation of the worst-case expected length of the confidence interval produced by $(A^*, U^*)$. ) Then the difference between $L^*$ and the worst-case expected length of the actual optimal confidence interval $MIN$ is no more than $2 \beta (1 + O(1/n)) = 4 \cdot  \frac{\alpha_\gamma}{\sqrt{n}} (1 + O(1/n))$.
\end{lemma}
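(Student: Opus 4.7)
The strategy is a standard two-step ``optimization-under-approximation'' argument: first bound the per-sequence gap between the actual expected length of the confidence interval and the approximation used inside the max in \eqref{prog:biased_benchmark}, then promote this bound through the $\max_{\vec{z}}$ and $\min_{A,U}$ layers.

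\textbf{Step 1: per-sequence approximation.} Fix any allocation/ignoring pair $(A,U)$ and any realization $(\tilde{\vec{c}},\tilde{\vec{z}})$. Conditional on $(\tilde{\vec{c}},\tilde{\vec{z}})$, the $y_i$ are independent, so the actual expected length is
\[
\ell_{\mathrm{true}}(A,U,\vec{z}) \;=\; 2\cdot\tfrac{\alpha_\gamma}{\sqrt{n}}\,\E[\widehat{\sigma}\mid\vec{z}] \;+\; \tfrac{1}{n}\sum_{i} z_i\,U_i,
\]
while the inner expression of \eqref{prog:biased_benchmark} is
\[
\ell_{\mathrm{approx}}(A,U,\vec{z}) \;=\; \beta\,\sqrt{\tfrac{1}{n}\textstyle\sum_i \tfrac{(1-U_i)z_i^2}{A(c_{(i)})}} \;+\; \tfrac{1}{n}\sum_{i} z_i\,U_i.
\]
Since the bias terms agree exactly, the per-sequence gap equals $\beta\bigl|\E[\widehat{\sigma}\mid\vec{z}] - \sqrt{\E[\tfrac{1}{n}\sum y_i^2\mid\vec{z}]}\bigr|$. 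Invoking the bound cited in Appendix~\ref{app:difference} (where $0 \le \E[y_i] \le 1$), this is at most $\beta(1+O(1/n))$.

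\textbf{Step 2: pass the bound through the max and the min.} For any $(A,U)$, choose $\vec{z}^{\mathrm{t}},\vec{z}^{\mathrm{a}}$ that maximize $\ell_{\mathrm{true}}$ and $\ell_{\mathrm{approx}}$ respectively, and define $L_{\mathrm{true}}(A,U) = \max_{\vec{z}}\ell_{\mathrm{true}}$ and $L_{\mathrm{approx}}(A,U)=\max_{\vec{z}}\ell_{\mathrm{approx}}$. The standard two-sided comparison
\[
L_{\mathrm{true}}(A,U) \le \ell_{\mathrm{approx}}(A,U,\vec{z}^{\mathrm{t}}) + \beta(1+O(1/n)) \le L_{\mathrm{approx}}(A,U) + \beta(1+O(1/n)),
\]
and symmetrically in the other direction, yields $\bigl|L_{\mathrm{true}}(A,U)-L_{\mathrm{approx}}(A,U)\bigr| \le \beta(1+O(1/n))$ for every $(A,U)$. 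Let $(A^{**},U^{**})$ be a true optimum, so $MIN = L_{\mathrm{true}}(A^{**},U^{**})$, and recall $L^* = L_{\mathrm{approx}}(A^*,U^*)$. Chaining the inequalities
\[
L^* \le L_{\mathrm{approx}}(A^{**},U^{**}) \le L_{\mathrm{true}}(A^{**},U^{**}) + \beta(1+O(1/n)) = MIN + \beta(1+O(1/n)),
\]
\[
MIN \le L_{\mathrm{true}}(A^*,U^*) \le L_{\mathrm{approx}}(A^*,U^*) + \beta(1+O(1/n)) = L^* + \beta(1+O(1/n)),
\]
gives $|L^*-MIN| \le 2\beta(1+O(1/n))$, as claimed; the factor of $2$ arises because on each side of the absolute value we chain one step of optimality with one step of approximation.

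\textbf{Main obstacle.} The nontrivial technical ingredient is Step~1, namely justifying $\bigl|\E[\widehat{\sigma}]-\sqrt{\E[\tfrac{1}{n}\sum y_i^2]}\bigr|=1+O(1/n)$ under the condition $0\le\E[y_i]\le 1$. The subtlety is that $\widehat{\sigma}^2$ is a sample variance with a $\frac{1}{n-1}$ normalization rather than $\frac{1}{n}$, and $\sqrt{\cdot}$ is concave, so one has to control both the bias introduced by Jensen's inequality and the discrepancy $\tfrac{1}{n-1}\sum(y_i-\bar y)^2$ versus $\tfrac{1}{n}\sum y_i^2$. This is deferred to the referenced appendix; once granted, the rest of the argument is a routine min-max approximation lemma.
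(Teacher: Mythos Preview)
Your proposal is correct and matches the paper's (largely implicit) argument: the paper does not write out a separate proof of this lemma but relies on the bound $\bigl|\E[\widehat\sigma]-\sqrt{\E\sum y_i^2/n}\bigr|\le 1+O(1/n)$ from Appendix~\ref{app:difference} and the surrounding discussion, which is exactly your Step~1, and then the min--max chaining of Step~2 is the routine completion. Two small remarks: (i) the ``bias terms agree exactly'' claim is slightly off pointwise in $\vec z$ --- the actual interval carries $\tfrac{1}{n}\sum_i \widehat U_i$, whose expectation is $\tfrac{1}{n}\sum_i U_i$ independent of $\vec z$ --- but this does not matter since after the $\max_{\vec z}$ both expressions collapse to $\tfrac{1}{n}\sum_i U_i$; (ii) your two displayed chains in Step~2 actually yield the tighter bound $|L^*-MIN|\le \beta(1+O(1/n))$, so the lemma's factor of $2$ is simply slack rather than something your chaining needs to account for.
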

The optimal solution of~\eqref{prog:biased_benchmark} is still difficult to solve. But if we replace the objective function by the sum of the squares of its two terms 
\begin{align} \label{prog:biased_approx}
APPROX_{CI}(n, C, B) = \arg \min_{A, U} \max_{\vec{z}\in[0,1]^n} \quad & \beta^2 \cdot \frac{1}{n} \cdot \sum_{i=1}^{n} \frac{(1-U_i) z_i^2}{A(c_{(i)})} + \left(\frac{\sum_{i=1}^{n} z_i \cdot U_i}{n}\right)^2
%\textrm{s.t.} \quad & \sum_{i=1}^{n} (1-U_i) \cdot A(c_{(i)}) \psi(c_{(i)})  \le B \notag\\
%& (1-U_c)A(c) \textrm{ is monotone non-increasing in } c \notag\\
%& 0\le A(c) \le 1, \quad  0 \le U_c \le 1, \quad \forall c \notag
\end{align}
the optimal solution can be computed efficiently.  The optimization problem with the new objective function will give a $2$-approximation of~\eqref{prog:biased_benchmark} because $a^2 + b^2 \le (a+b)^2 \le 2(a^2 + b^2)$ (see more details in the last paragraph of~\ref{app:CI_main}).  
\begin{lemma}
$APPROX_{CI}( n, C, B)$ will give a $2$-approximation of $OPT_{CI}(n, C,B)$. 
\end{lemma}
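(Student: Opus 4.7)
The plan is to use the pointwise inequality $a^2+b^2 \le (a+b)^2 \le 2(a^2+b^2)$ (valid for $a,b\ge 0$) to transfer an approximation guarantee between the two min-max problems. Let me fix some notation. For any feasible $(A,U)$ and any $\vec{z}\in[0,1]^n$, write
\[
a(A,U,\vec{z}) \;=\; \beta\sqrt{\tfrac{1}{n}\sum_{i=1}^n \tfrac{(1-U_i)z_i^2}{A(c_{(i)})}}, \qquad b(A,U,\vec{z}) \;=\; \tfrac{1}{n}\sum_{i=1}^n z_i U_i,
\]
so that the objective of $OPT_{CI}$ is $f:=a+b$ and the objective of $APPROX_{CI}$ is $g:=a^2+b^2$. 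Note $a,b\ge 0$ at every feasible point, and the two problems share the same feasible region.

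First I would observe that, pointwise in $(A,U,\vec{z})$, the elementary inequality gives $g \le f^2 \le 2g$. Since $f\ge 0$, maximizing over $\vec{z}$ preserves this: $\max_{\vec{z}} g(A,U,\vec{z}) \le (\max_{\vec{z}} f(A,U,\vec{z}))^2 \le 2\max_{\vec{z}} g(A,U,\vec{z})$, using that $\max_{\vec{z}} f^2=(\max_{\vec{z}}f)^2$ because $t\mapsto t^2$ is monotone on $\mathbb{R}_{\ge 0}$.

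Now let $(A^*,U^*)$ denote an optimal solution of $OPT_{CI}$ and let $(A',U')$ denote an optimal solution of $APPROX_{CI}$. The key chain of inequalities I would write out is
\[
\bigl(\max_{\vec{z}} f(A',U',\vec{z})\bigr)^2 \;\le\; 2\max_{\vec{z}} g(A',U',\vec{z}) \;\le\; 2\max_{\vec{z}} g(A^*,U^*,\vec{z}) \;\le\; 2\bigl(\max_{\vec{z}} f(A^*,U^*,\vec{z})\bigr)^2,
\]
where the first step is the pointwise bound, the middle step uses that $(A',U')$ minimizes $\max_{\vec{z}} g$, and the last step uses $g\le f^2$ at $(A^*,U^*)$. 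Taking square roots then gives that the $OPT_{CI}$-objective value attained by $(A',U')$ is at most $\sqrt{2}$ (hence at most $2$) times that attained by $(A^*,U^*)$, which is the desired $2$-approximation.

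I don't expect a real obstacle here; the only point to check carefully is that $\max_{\vec{z}}$ commutes with squaring (needing nonnegativity of $f$) and that the feasibility regions of $OPT_{CI}$ and $APPROX_{CI}$ coincide so that $(A^*,U^*)$ and $(A',U')$ are both feasible in either problem. Both are immediate from the problem statements: the constraints on $A$, $U$, monotonicity, and the budget in \eqref{prog:biased_benchmark} and \eqref{prog:biased_approx} are identical, and $a,b\ge 0$ is automatic from $A(c)\in(0,1]$, $U_i\in[0,1]$, $z_i\in[0,1]$.
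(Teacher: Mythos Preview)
Your proof is correct and follows essentially the same approach as the paper, which also relies on the elementary inequality $a^2+b^2 \le (a+b)^2 \le 2(a^2+b^2)$ for $a,b\ge 0$; the paper only sketches this, while you have spelled out the min--max comparison carefully. In fact your chain of inequalities yields a $\sqrt{2}$-approximation, which is slightly sharper than the stated factor~$2$.
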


In this work, we consider the bias-variance tradeoff in the setting of getting the shortest confidence interval. By changing the value of $\beta$ in the objective function, our mechanism can also be applied to other estimation tasks that involve the bias-variance tradeoff. For example, if we want to minimize the mean squared error of the output estimator, or equivalently the risk of the estimator when the loss function is the squared error function, we can set $\beta = 1$ and apply the same mechanism. 

\subsection{Characterization of the Optimal Confidence Interval}
So how should the optimal mechanism look like when some data can simply be ignored? It is natural to believe that the optimal mechanism should ignore the data points with the highest costs. This is corroborated in the following theorem that the optimal mechanism ignores all the data points with \emph{regularized virtual costs} above a threshold $H$, and purchases (with probability) all the data points below the threshold. The characterization of the optimal $A$ remains the same as the unbiased case.
\begin{theorem} \label{thm:CI_roundi}
The optimal solution of $APPROX_{CI}( n, C, B)$ is as follows:
\begin{align*}
&U_j = \left\{ \begin{array}{ll}
				 0, & \textrm{ if $\phi(c_{(j)}) < H$} \\
				 p \in (0,1], & \textrm{ if } \phi(c_{(j)}) = H \\
				 1, & \textrm{ if $\phi(c_{(j)})>H$}
			     \end{array}
			\right.\\
&A(c_{(j)}) =  \min \left\{1, \frac{\lambda}{\sqrt{\phi(c_{(j)})}} \right\} 
\end{align*}
where $p$ is a constant in $(0,1]$, and $\phi(c)$ is the regularized virtual cost function (Definition~\ref{def:regularized_virtual_costs}) when the cost distribution is the uniform distribution over $C$, and $\lambda$ is chosen such that the budget constraint is satisfied with equality.
The value of $\lambda$ and $H$ can be computed using binary search over set $C$ within time $O(\log |C|)$.  
%Furthermore, let $r$ be the largest number that has $U^i_r < 1$, or equivalently the largest number that has $\phi(c_{(r)}) = H$, then the optimal solution should have 
%$$
%4 \cdot  \frac{\alpha^2_\gamma}{n} \cdot  \frac{1}{A^i(c_{(r)})}  = \frac{\sum_{j=1}^i U^i_j}{i}.
%$$
\end{theorem}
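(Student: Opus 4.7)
The plan is to extend the proof of Theorem~\ref{lem:unbiased_chrz} to the biased setting: resolve the inner maximization over $\vec{z}$, analyze KKT conditions for the form of $(A,U)$, handle the monotonicity constraint via ironing, and verify the computational claim. For the first step, both terms of the objective of $APPROX_{CI}(n,C,B)$ are coordinate-wise non-decreasing in $z_i\in[0,1]$ (the first because $(1-U_i)/A(c_{(i)})\ge 0$ and $z_i^2$ is non-decreasing on $[0,1]$, the second because $U_i\ge 0$), so the worst case is attained at $\vec{z}=\vec{1}$, leaving the convex minimization
\begin{equation*}
\min_{A,U}\ \frac{\beta^2}{n}\sum_{i=1}^n\frac{1-U_i}{A(c_{(i)})}+\Big(\frac{1}{n}\sum_{i=1}^n U_i\Big)^2
\end{equation*}
subject to the budget, monotonicity, and box constraints.

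Next I would temporarily drop the monotonicity constraint and write down the KKT conditions. Write $\mu\ge 0$ for the budget multiplier, set $T=\frac{1}{n}\sum_j U_j$, $A_i:=A(c_{(i)})$, $\psi_i:=\psi(c_{(i)})$, and $\lambda=\beta/\sqrt{n\mu}$. In the regime $0<A_i,U_i<1$, stationarity yields $\beta^2/(nA_i^2)=\mu\psi_i$ and $\beta^2/(nA_i)+\mu A_i\psi_i=2T/n$. The first equation forces $A_i=\lambda/\sqrt{\psi_i}$ (capped at $1$); substituting it into the second pins $\psi_i=(T\lambda/\beta^2)^2=:H$. Complementary slackness on the box constraints for $U_i$ then gives $U_i=0$ whenever $\psi_i<H$ and $U_i=1$ whenever $\psi_i>H$, with fractional $U_i$ admitted only at exact ties $\psi_i=H$.

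The main obstacle is reinstating the monotonicity constraint $(1-U_i)A(c_{(i)})$ non-increasing in $i$, which can fail when $\psi$ is irregular. Following the ironing argument of Theorem~\ref{lem:unbiased_chrz}, on every maximal block $[i,k]$ of indices where $\psi$ violates monotonicity the constraint binds, forcing both $A(c_{(j)})$ and $U_j$ to be constant across the block; averaging the stationarity equations over such a block replaces $\psi(c_{(j)})$ by its block average, which is precisely the regularized virtual cost $\phi(c_{(j)})$ of Definition~\ref{def:regularized_virtual_costs}. The resulting formula $A(c_{(j)})=\min\{1,\lambda/\sqrt{\phi(c_{(j)})}\}$ together with the threshold rule based on $\phi$ then automatically satisfies monotonicity, since $\phi$ is non-decreasing in $j$, $A$ is non-increasing in $\phi$, and $U$ is non-decreasing in $\phi$.

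Finally, the budget must bind at the optimum, for otherwise one could raise $\lambda$ and strictly decrease the objective until either the budget saturates or some cap $A(c_{(i)})=1$ activates. The map $\lambda\mapsto\sum_i(1-U_i)A(c_{(i)})\psi(c_{(i)})$ is piecewise smooth and non-decreasing in $\lambda$, with breakpoints determined by the finite set $\{\phi(c_{(i)})\}$, so a binary search over $C$ locates the correct pair $(\lambda,H)$ in $O(\log|C|)$ time, establishing the computational claim.
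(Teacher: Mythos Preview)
Your overall plan—resolve the inner max at $\vec z=\vec 1$, read off the form of $(A,U)$ from KKT, then handle monotonicity by ironing—has the right shape and your interior KKT computations are correct, but two steps do not go through as written.

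\textbf{The ironing step.} You assert that when the monotonicity constraint binds on a block, ``both $A(c_{(j)})$ and $U_j$ [are forced] to be constant across the block.'' But the constraint is on the \emph{product} $(1-U_j)A(c_{(j)})$, so binding only pins the product, not each factor separately; nothing in your argument rules out, say, $U_j$ increasing and $A(c_{(j)})$ increasing in a compensating way along the block. The paper does real work here that your sketch skips: it first shows by a direct exchange argument that at the optimum $U^*_j$ is non-decreasing and $A^*(c_{(j)})$ non-increasing, and that fractional values of $U^*$ can only occur on a single $\phi$-level set; it then shows, via an averaging construction and Jensen applied to $x\mapsto 1/x$, that one may take both $U^*$ and $A^*$ constant on every $\phi$-level set without loss. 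Only after these structural lemmas are in place does the KKT argument of Theorem~\ref{lem:unbiased_chrz} (now run for $A$ with $U$ fixed) yield $A(c_{(j)})=\min\{1,\lambda/\sqrt{\phi(c_{(j)})}\}$. Your proposal jumps straight to the ironed formula without establishing this block-constancy, and the stated justification is incorrect.

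\textbf{The binary search.} You claim a single binary search over $C$ locates the pair $(\lambda,H)$, but you are searching in two parameters and give no reason a one-dimensional search suffices. The paper supplies the missing idea: writing $M=\sum_j U_j$ and letting $g(M)$ be the optimal value of the variance term for that $M$, it proves $g$ is convex with $\partial g/\partial M=-\tfrac{2\beta^2}{n}\cdot\tfrac{1}{A_M(c_{(r)})}$ (Lemma~\ref{lem:baised_eq_point}). Convexity of $g(M)+(M/n)^2$ is what licenses binary search on $M$ (hence on $H$), after which $\lambda$ is found by the binary search of Theorem~\ref{lem:unbiased_chrz}. Without this convexity argument your $O(\log|C|)$ claim is unsupported.
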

The theorem is proved in ~\ref{app:CI_roundi}.
 The optimal $H$ can be found by binary search because it can be proved that  the objective function is a convex function of $M = \sum_{j=1}^n U_j$, when $A$ is optimized after $U$ is decided.  Let $g(M)$ be the the optimal value of the first term $\beta^2 \cdot \frac{1}{n}  \sum_{j = 1}^n \frac{1-U_j}{A(c_{(j)})}$ when $\sum_{j=1}^n U_j$ is set to $M$. Then the objective function is just $g(M) + \left(\frac{M}{n}  \right)^2$.  The second term $\left(\frac{M}{n}  \right)^2$ is a convex function of $M$. We prove that $g(M)$ is also a convex function of $M$.
\begin{lemma} \label{lem:baised_eq_point}
The function $g(M)$ is a convex function of $M$. Furthermore, let $A_M$ be an optimal allocation rule  when $\sum_{j=1}^n U_j = M$ and let $c_{(r)}$ be the largest cost that is not ignored with probability $1$.  Then for non-integer $M$ that has $A_M(c_{(r)}) < 1$,
$$
\frac{\partial g(M)}{\partial M} =  - \beta^2 \cdot \frac{2}{n} \cdot  \frac{1}{A_M(c_{(r)})}.
$$
which is an non-decreasing function of $M$.
%$$4 \cdot  \frac{\alpha^2_\gamma}{n} \cdot  \frac{1}{A^i(c_{(r)})}  = \frac{M}{i}$$
%and this optimal solution can be bound using binary search over $M$. 
\end{lemma}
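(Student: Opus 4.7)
The plan is to exploit the closed-form characterization from Theorem~\ref{thm:CI_roundi} and parameterize $M$ within each ``regime'' of the solution. Fix $M$, let $H = H(M)$ be the threshold, $T = \{j : \phi(c_{(j)}) = H\}$ the threshold pool, $S_1 = \{j : A_M(c_{(j)}) = 1\}$ the fully-allocated set, and $S_2 = \{j : \phi_j < H,\ A_M(c_{(j)}) < 1\}$ the partially-allocated ``below-pool'' set. By Theorem~\ref{thm:CI_roundi}, $U_j = p$ for every $j \in T$, so $M = |\{j : \phi_j > H\}| + p|T|$ is affine in $p$ inside a regime, with $dp/dM = 1/|T|$.

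The key algebraic input is the ironing identity $\sum_{j \in P} \psi_j = |P|\,\phi_P$ for every maximal pool $P$ of $\phi$, which is built into Definition~\ref{def:regularized_virtual_costs}. Because $A(c) = \lambda/\sqrt{\phi(c)}$ is constant on each pool when $A < 1$ and $U$ is constant on $T$, the sets $S_1$ and $S_2$ split along whole pools, and applying the identity pool by pool yields
\[
\sum_{j \in S_2} \frac{\psi_j}{\sqrt{\phi_j}} \;=\; \sum_{j \in S_2} \sqrt{\phi_j} \quad\text{and}\quad \sum_{j \in T} \frac{\psi_j}{\sqrt{\phi_j}} \;=\; |T|\sqrt{H} \;=\; \sum_{j \in T} \sqrt{\phi_j}.
\]
Writing $B' := B - \sum_{j \in S_1} \psi_j$ and $R := \sum_{j \in S_2} \sqrt{\phi_j} + (1-p)|T|\sqrt{H}$, the tight budget becomes $\lambda R = B'$ and $g(M)$ collapses to $g(M) = \tfrac{\beta^2}{n}\bigl(|S_1| + R/\lambda\bigr) = \tfrac{\beta^2}{n}\bigl(|S_1| + R^2/B'\bigr)$. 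Since $dR/dp = -|T|\sqrt{H}$ so that $dR/dM = -\sqrt{H}$,
\[
\frac{dg}{dM} \;=\; -\frac{2\beta^2 R \sqrt{H}}{n B'} \;=\; -\frac{2\beta^2 \sqrt{H}}{n \lambda} \;=\; -\frac{2\beta^2}{n\,A_M(c_{(r)})},
\]
because $c_{(r)} \in T$ and $A_M(c_{(r)}) = \lambda/\sqrt{H}$.

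For convexity within one regime, as $M$ grows, $R$ shrinks, $\lambda = B'/R$ grows, and $A_M(c_{(r)}) = \lambda/\sqrt{H}$ grows, so $-2\beta^2/(n A_M(c_{(r)}))$ is non-decreasing. Two kinds of boundary separate adjacent regimes: (a)~$p$ reaches $1$ and the pool $T$ is absorbed into the fully-ignored block, so the active threshold drops to a new pool $T'$ with value $H' < H$; (b)~$\lambda$ crosses $\sqrt{\phi_j}$ for some $j$ below $T$, moving $j$ from $S_2$ into $S_1$. At a type-(b) boundary $\lambda \psi_j/\sqrt{\phi_j} = \psi_j$ at the crossing, so the budget equation and hence $\lambda$, $A_M(c_{(r)})$, and $dg/dM$ are all continuous through it. At a type-(a) boundary both regimes describe the same ignored configuration at the crossing, so $\lambda$ is continuous, but $r$ jumps from $\max T$ to $\max T'$; because $H' < H$ the new value $\lambda/\sqrt{H'}$ strictly exceeds the old $\lambda/\sqrt{H}$, giving an upward jump in $dg/dM$. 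In every case the derivative is non-decreasing in $M$, which is exactly convexity of $g$.

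The main obstacle is justifying the pool-decomposition step cleanly: one must verify that the monotonicity constraint on $(1 - U_c) A(c)$ together with the definition of $\phi$ genuinely forces $A$ to be constant on every $\phi$-pool and $U$ to be constant on the threshold pool $T$, so that $S_1$, $S_2$, and $T$ really partition into whole pools and the identity $\sum_{j \in P}\psi_j = |P|\phi_P$ applies piecewise. This structural fact is implicit in Theorem~\ref{thm:CI_roundi} but deserves an explicit argument here; once it is in place, both the derivative formula and the cross-regime monotonicity reduce to short calculations.
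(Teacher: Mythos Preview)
Your proposal is correct and arrives at the same derivative formula $-2\beta^2\sqrt{H}/(n\lambda) = -2\beta^2/(n\,A_M(c_{(r)}))$ as the paper, but by a genuinely different route. The paper first passes to an equivalent \emph{continuous} problem, replacing sums by integrals over $[0,n]$ with the step function $\phi(c_{\lceil x\rceil})$, and then differentiates $g(M)=\tfrac{\beta^2}{n}\int_0^{n-M}1/A_M^*(x)\,dx$ using Leibniz's rule, with the budget identity supplying $\lambda_M = (B-\int_0^t\phi)/\int_t^{n-M}\sqrt{\phi}$ so that $\partial\lambda_M/\partial M$ drops out. You instead stay discrete: decompose the active indices into whole $\phi$-pools, collapse the budget and objective to $\lambda R = B'$ and $g=\tfrac{\beta^2}{n}(|S_1|+R^2/B')$, and differentiate in the single scalar $p$ within each regime. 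Your method is more elementary (no reformulation, no differentiation under the integral) but trades that for an explicit boundary analysis between regimes; the paper's continuous viewpoint makes the derivative a single calculation but must justify the continuous equivalence and the Leibniz step at the moving boundary $t$. The pool-decomposition issue you flag as the ``main obstacle'' is exactly what Lemma~\ref{lem:unbiased_chara_lem1}(2) together with Lemma~\ref{lem:biased_chara_lem3} provide: $S_1$, $S_2$, and $T$ are each unions of complete $\phi$-level sets, and $\sum_{j\in P}\psi_j = |P|\,\phi_P$ on each, so the identities $\sum_{j\in S_2}\psi_j/\sqrt{\phi_j}=\sum_{j\in S_2}\sqrt{\phi_j}$ and $\sum_{j\in T}\psi_j=|T|H$ are already available.
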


Therefore the optimal $H$ can be found by binary search over the optimal $M$ such that $\frac{\partial g(M)}{\partial M} + \frac{2M}{n^2} \ge 0$ on the right and $\frac{\partial g(M)}{\partial M} + \frac{2M}{n^2} \le 0$ on the left. The complete proof of the lemma is in~\ref{app:biased_eq_point}.

\section{Online Mechanisms}\label{unbiased}

We now move to the case when the cost distribution is unknown at the beginning. 
The idea of our mechanism is very simple: at time $i$, use the approximately optimal allocation rule $APPROX(\cdot)$ as if (1) there are $i$ data holders with costs $c_1, \dots, c_{i-1}$ and $\{\overline{C}\}$, and (2) the analyst's total budget for these $i$ data points is proportional to $\sqrt{i}$, so that the average budget for each data point  is proportional to $\frac{1}{\sqrt{i}}$. So the average budget is a decreasing function of $i$, which means we use more budget in the earlier stages when the estimation of the cost distribution is not accurate.  
 We prove that for both of the unbiased estimator problem and the confidence interval problem, such an online mechanism will only be worse for a constant factor compared to the optimal mechanisms $OPT(n, C, B)$ that knows the cost distribution at the beginning.

\begin{algorithm}[!h]               % enter the algorithm environment
\SetAlgorithmName{Mechanism}{mechanism}{List of Mechanisms}
\begin{algorithmic}
	\FOR {i = 1,\dots,n}
    \STATE 1) Let $A^i = APPROX\left(i, \ T_i,\ \xi_n B \sqrt{i}\right)$ be the optimal allocation rule  when there are $i$ data holders with costs $$T_i=\{c_1, \dots, c_{i-1},\overline{C}\},$$  and the total budget for these $i$ data points is $ \xi_n B \sqrt{i}$.
 \STATE 2) Ask agent $i$ to report cost $c_i$ and purchase the data using (an approximation of) $(A^i, P^i)$, where $P^i$ is computed as in Lemma~\ref{myerson}.
    \ENDFOR
    \STATE Aggregate all collected to output an estimator. 
\end{algorithmic}
\caption{Online Mechanism Outline}
\label{alg:online_unbiased}
\end{algorithm}
The outline of our online mechanisms (for both the unbiased estimator and the confidence interval) is given in Mechanism~\ref{alg:online_unbiased}.

We will describe the specifics of the online mechanism for the unbiased estimation of mean and the confidence interval respectively in the next two parts. For each, we'll prove that the online mechanism is optimal within a constant factor. We sketch the high-level idea of our proofs as follows. 
%The proofs for the online mechanisms are quite long and convoluted. The most basic ideas are sketched as follows. 
\begin{enumerate}
\item We compare both of our online algorithm and the benchmark with an intermediate mechanism 
$$
(A'^{,i},P'^{,i}) = APPROX\left(i, \ \{ c_1, \dots, c_{i-1}, c_i \}, \ \xi_n B \sqrt{i}\right)
$$ at each step $i$. This intermediate mechanism $(A'^{,i},P'^{,i})$ is basically the same as $(A^i, P^i)$, but is ``one-step-ahead''. $(A', P')$ is the optimal mechanism when the same amount of budget is assigned, but knows an additional piece of information, the value of $c_i$, beforehand. 
\item We show that the difference between $(A'^{,i},P'^{,i})$ and $(A^i, P^i)$ is no more than a constant factor. This is mainly due to the similarity of $(A'^{,i},P'^{,i})$and $(A^i, P^i)$. Since the two mechanisms only differ in one element in the cost set, the regularized virtual costs are not going to change a lot, and so is the allocation rule.
\item Then we prove that if  $(A'^{,i},P'^{,i})$ is used at each round,  the performance of the output estimator is no worse than a constant times the benchmark. Our budget allocation method and the random arriving order play crucial roles here. We give the basic idea of our budget allocation rule below. %To give the readers a sense how the proof works, we briefly show how this constant factor is derived for the unbiased estimator mechanism. Informally speaking, we prove that the ``loss'' (variance for the unbiased estimator case) occurred by $A'^{,i}$ at round $i$ is no more that $D \cdot \frac{\sqrt{n}}{\sqrt{i}}$ times that of $A^*$, where $D$ is a constant (mainly because of the budget allocation method and the random arriving order). Then as $\frac{1}{n} \sum_{i=1}^n \frac{\sqrt{n} }{\sqrt{i}} \le 2$,  the difference is no more than a constant factor. 
\end{enumerate}

\paragraph{A simplified modeling of the budget allocation.} We're able to show that if we allocate average budget $\overline{b}_i$ at round $i$,  the ``loss'' (of performance) occurred by our mechanism at round $i$ is bounded by $\frac{r\cdot B}{i \cdot \overline{b}_i}$ times the ``loss'' occurred by the benchmark $A^*$, where $r$ is a constant. So the optimal budget allocation is essentially the following problem
\begin{align*}
\min_{\overline{b}_i} \quad & r \cdot B \sum_{i=1}^n \frac{1}{i \cdot \overline{b}_i} \\
\textrm{s.t. } \quad & \sum_{i=1}^n \overline{b}_i \le B,\\
 & \overline{b}_i \ge 0.
\end{align*}
which gives $\overline{b}_i \propto \frac{1}{\sqrt{i}}$. In addition, it can be shown that the total ``loss'' of our mechanism is no more than a constant times the total ``loss'' of the benchmark when this budget allocation rule is applied.   

\paragraph{Optimality of the online mechanisms.} Our mechanisms are proved to be optimal within constant factors. But it remains open whether our constant factors are optimal.  One possible method to improve the constant is to only collect agents' costs without purchasing any data at the beginning and start collecting data after a more accurate cost estimation is acquired. But such a method would weaken the incentive guarantee for the agents to report their true costs: if the agents' report costs are not going to affect their rewards, why would they report the true costs? It is also challenging to analyze the performance of such a method. As we've shown in Theorem~\ref{lem:unbiased_chrz} and Theorem~\ref{thm:CI_roundi}, the optimal allocation rule is defined by the regularized virtual costs. It may not be easy to estimate the regularized virtual costs because (1) the regularized virtual cost function is a global property of the cost distribution; (2) the value of the virtual cost function is very sensitive to the PDF of the cost distribution, which appears in the denominator  of the second term: $f(c_i)/(c_i -c_{i-1})$ in $\psi(c_i) = c_i + \frac{1}{f(c_i)/(c_i -c_{i-1})} F(c_{i-1})$ (see Definition~\ref{def:virtual_costs}). The analysis of our online mechanisms is possible because the intermediate mechanism $(A'^{,i},P'^{,i})$ only differs from the optimal mechanism $(A^i, P^i)$ in one element in the cost set, which doesn't affect the regularized virtual costs much. However, our analysis cannot be easily extended to analyze the performance of the above mentioned alternate method as its difficult to bound the estimation error of the regularized virtual costs. The performance of this alternate method remains an open question. 
%It is also challenging to analyze the performance of such a method. Our proofs cannot be directly extended to this case, and getting any theoretical  performance guarantee can be very complicated. As  we show in Theorem~\ref{lem:unbiased_chrz} and Theorem~\ref{thm:CI_roundi}, the optimal allocation rule is defined by the regularized virtual costs. It may not be easy to estimate the regularized virtual costs because (1) the regularized virtual cost function is global property of the cost distribution; (2) the value of the virtual cost function is very sensitive to the PDF of the cost distribution, which appears in the denominator  of the second term: $f(c_i)/(c_i -c_{i-1})$ in $\psi(c_i) = c_i + \frac{1}{f(c_i)/(c_i -c_{i-1})} F(c_{i-1})$ (see Definition~\ref{def:virtual_costs}).  
%\sherry{Add a few comments here to answer Review B about using empirical CDF. Do we need the following arguments?} \yc{I commented out the next few sentences as I think there is not enough context to really understand these.}
%We cite~\cite{Roughgarden2016IroningIT} as a reference on the potential hardness of problem.  In their problem, learning the optimal mechanism can be reduced to learning the reserve price, which is a single parameter, but our problem requires learning the whole allocation rule. Furthermore, their optimization problem can be represented using the CDF of the valuation distribution. We may not be able to represent our optimization problem by CDF only, but need to use PDF. But their problem is already not easy.

\subsection{Unbiased Estimator}
We first introduce the benchmark to which we compare our online algorithm. 
\begin{definition} \label{def:unbiased_benchmark}
Let $c_{(1)}  \le  \cdots \le c_{(n)}$ be the $n$ data holders' costs ordered from smallest to largest. Suppose there is one more data holder with cost $\overline{C}$. We define the benchmark $(A^*, P^*)$ to be the mechanism that purchases data from these $n+1$ data holders, and minimizes the worst case variance when it knows the set of costs  $\{c_{(1)}, c_{(2)}, \dots, c_{(n)}, \overline{C}\}$ at the beginning.
\begin{align} \label{prog:unbiased_benchmark}
A^* =  OPT(n+1, \ \{c_{(1)}, c_{(2)}, \dots, c_{(n)},  \overline{C}\}, \ B).
\end{align}
\end{definition}
This additional cost $\overline{C}$ can be interpreted as the loss of unknown upper bound of the possible cost. When the cost distribution is known, the mechanism knows the exact maximum cost $c_{(n)}$, and thus the optimal mechanism will never have a positive probability to buy a data point with cost higher than $c_{(n)}$. But when $c_{(n)}$ is unknown, the mechanism always has to buy any data point (with cost under $\overline{C}$) with a positive probability.

We show that our online mechanism satisfies all three constraints and its worst-case variance is roughly within a constant times the benchmark.
\begin{theorem} \label{thm:unbiased_online}
If we use Mechanism~\ref{alg:online_unbiased} with
\begin{itemize}
    \item $\xi_n = \frac{1}{4\sqrt{n}}$.
    \item At round $i$, use the extended allocation rule and payment rule (Definition~\ref{def:extend_mech}) of $(A^i, P^i)$, where $$A^i = APPROX\left( i, \ T_i,\ \xi_n B \sqrt{i}\right)$$ and $P^i$ is computed as in Lemma~\ref{myerson}. Let the collected data be $\widehat{x_i}$.
    \item Output unbiased estimator $S = \frac{1}{n} \sum_{i=1}^n y_i = \frac{1}{n} \sum_{i=1}^n \frac{\widehat{x}_i}{A^i(c_i)}$ at last.
\end{itemize}
Then the mechanism satisfies (1) truthfulness and individually rationality, (2) the expected total spending is no more than $B=n\overline{B}$, and  (3) for any cost distribution $\{c_1, \dots, c_n\}$ the worst-case variance of the final estimator $S$ is no more than 
$$16\cdot \left(\left(1+\frac{1}{n}\right)^2 \cdot \Var^*(A^*)+\frac{1}{n}+ \frac{1}{n\sqrt{n}} \cdot \frac{1}{A^*(\overline{C})}\right), $$
where  $A^*$ is the benchmark defined in Definition~\ref{def:unbiased_benchmark}.
\end{theorem}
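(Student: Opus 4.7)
For part (1), the plan is to invoke Lemma~\ref{true_myerson}, Lemma~\ref{myerson}, and Lemma~\ref{lem:extended_mech} round by round: each $A^i$ returned by $APPROX$ is monotone non-increasing in $c$ by construction (an explicit constraint of~\eqref{prog:unbiased_benchmark_approx}), so the Myerson payment of Lemma~\ref{myerson} plus the extension of Definition~\ref{def:extend_mech} gives truthfulness and IR on all of $[0,\overline C]$, which suffices since each data holder appears only once. For part (2), the strategy is to introduce the one-step-ahead intermediate mechanism $(A'^{,i},P'^{,i}) := APPROX(i,\{c_1,\dots,c_i\},\xi_n B\sqrt i)$. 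By the random-order assumption, conditioning on the set $S_i:=\{c_1,\dots,c_i\}$ makes $c_i$ uniform on $S_i$, so Lemma~\ref{lem:expected_payment} applied to $A'^{,i}$ gives conditional expected spending $\tfrac{1}{i}\sum_{c\in S_i} A'^{,i}(c)\psi'^{,i}(c) \le \xi_n B/\sqrt i$. Because $T_i$ and $S_i$ differ in only one element ($\overline C$ vs.\ $c_i$), I plan to use Theorem~\ref{lem:unbiased_chrz} to argue that $A^i$ and $A'^{,i}$ agree on common elements up to a bounded multiplicative factor, so the actual round-$i$ spending is still $O(\xi_n B/\sqrt i)$. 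Summing and using $\xi_n = 1/(4\sqrt n)$ gives $\sum_{i=1}^n \xi_n B/\sqrt i \le 2\xi_n B \sqrt n = B/2$, which absorbs the constant loss to leave total expected spending at most $B$.

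The main event is part (3). Since the $y_i$ are conditionally independent given the realized sequence, $\Var(S\mid (\vec c,\vec z)) = \tfrac{1}{n^2}\sum_{i=1}^n(z_i^2/A^i(c_i) - z_i^2)$, so the worst case over $\vec z$ reduces to bounding $\tfrac{1}{n^2}\E[\sum_i 1/A^i(c_i)]$ up to an additive $1/n$ coming from the $OPT$-vs-$APPROX$ gap. The plan is a two-stage comparison to $A^*$: first replace $A^i(c_i)$ by $A'^{,i}(c_i)$ at a constant multiplicative cost using the part-(2) perturbation argument, then bound $1/A'^{,i}(c)$ against $1/A^*(c)$ through the closed form $\min\{1,\lambda/\sqrt{\phi(c)}\}$ from Theorem~\ref{lem:unbiased_chrz}. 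Since the normalizing sum $\sum \psi/\sqrt\phi$ scales linearly in the number of points while $\lambda$ scales linearly in the budget, the ratio $\lambda^*/\lambda_i$ will come out to $O(\sqrt{i/n})$. Using random order to write $\E[1/A'^{,i}(c_i)\mid S_i] = \tfrac{1}{i}\sum_{c\in S_i} 1/A'^{,i}(c)$ and then the uniform-random-subset identity $\E[\tfrac{1}{i}\sum_{c\in S_i} 1/A^*(c)] = \tfrac{1}{n+1}\sum_c 1/A^*(c)$, the total telescopes to $\sum_i O(\sqrt{i/n})\cdot\tfrac{1}{n}\sum_c 1/A^*(c) = O(1)\cdot \sum_c 1/A^*(c) = O(n^2)\Var^*(A^*)$, yielding the claimed constant-factor bound. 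The $(1+1/n)^2$ factor is incurred when passing from $n{+}1$ to $n$ data points, and the $\tfrac{1}{n\sqrt n}\cdot \tfrac{1}{A^*(\overline C)}$ correction comes from the dummy point $\overline C$ that always sits in $T_i$. The schedule $\overline b_i \propto 1/\sqrt i$ is precisely what minimizes the simplified program $\min \sum 1/(i\overline b_i)$ subject to $\sum \overline b_i \le B$, which is why $\xi_n = 1/(4\sqrt n)$ is the right choice.

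The hardest part will be making the one-step-ahead perturbation step quantitative. The regularized virtual cost $\phi$ is produced by the ironing procedure in Definition~\ref{def:regularized_virtual_costs}, a global function of the entire ordered cost vector whose ironed intervals can in principle reshuffle when a single element is swapped. The plan is to exploit that replacing one element of a size-$i$ set perturbs the empirical CDF pointwise by at most $1/i$, hence changes each raw virtual cost $\psi$ by a controlled amount, and then to argue that the running-min/running-max averaging used to form $\phi$ only rearranges ironed intervals by a bounded multiplicative factor on common elements. Feeding this stability through $A(c)=\min\{1,\lambda/\sqrt{\phi(c)}\}$ together with a Lipschitz estimate for $\lambda$ yields the constant-factor agreement between $A^i$ and $A'^{,i}$ that is invoked in both parts~(2) and~(3). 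Once this perturbation estimate is in hand, the remaining work — tracking the explicit constant $16$ and the lower-order correction terms — is routine bookkeeping.
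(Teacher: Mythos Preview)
Your skeleton is right—the intermediate one-step-ahead mechanism plus random-order averaging is exactly the paper's strategy—but two concrete technical choices are off, and without them the inequalities do not close.

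First, the intermediate mechanism. You take $A'^{,i}=APPROX(i,\{c_1,\dots,c_i\},\xi_nB\sqrt i)$, so $T_i=\{c_1,\dots,c_{i-1},\overline C\}$ and your $S_i=\{c_1,\dots,c_i\}$ differ by a \emph{swap}. The paper instead uses $T_{i+1}=\{c_1,\dots,c_i,\overline C\}$, so $T_i\subset T_{i+1}$ differ by \emph{adding} one point; this inclusion is what makes the perturbation lemma (Lemma~\ref{lem:unbiased_adj}) applicable. More importantly, the paper gives the intermediate a \emph{different budget}: twice $A^i$'s budget for the spending direction and half $A^i$'s budget for the variance direction. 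This asymmetry is what forces the comparison inequalities to point the right way (you need $A^i(c)P^i(c)\le A'(c)P'(c)$ in part~(2) but $A^i(c)\ge A'(c)$ in part~(3), and with the same budget you cannot have both). Your proposal keeps the budget equal, so the perturbation step as written cannot deliver both directions.

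Second, and more substantively, the comparison of $A'$ to the benchmark $A^*$ is \emph{not} done by comparing $\phi$-functions or $\lambda$-values pointwise. That route would require controlling $\phi$ on a random size-$i$ subset against $\phi$ on the full set, which is genuinely hard (and is not the one-element perturbation you sketch). The paper's move is much simpler: show that the scaled rule $A^*/(8\sqrt{n/i})$ is \emph{feasible} for the program defining $A'$—this follows immediately from the subset inequality $\sum_{c\in T_{i+1}}A(c)\psi^{i+1}(c)\le \sum_{c\in T}A(c)\psi(c)$ (Lemma~\ref{lem:VC_subset})—and then invoke optimality of $A'$ to conclude $\sum_{c\in T_{i+1}}1/A'(c)\le 8\sqrt{n/i}\sum_{c\in T_{i+1}}1/A^*(c)$. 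No $\phi$-stability is needed at all; the ``hardest part'' you anticipate simply disappears. Once you have this, averaging over random $S_{i+1}$ and summing $\sum_i 1/\sqrt i\le 2\sqrt n$ gives the constant~$16$ and the lower-order terms directly.
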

\paragraph{Discussion:} We have the factor $\left(1+\frac{1}{n}\right)^2$ in the first term of our upper bound because the benchmark mechanism has one more data point. It is no more than $4$ when $n\ge 1$ and goes to $1$ when $n$ gets large. The second additive term $\frac{1}{n}$ is due to our estimation of $\Var(S)$ by $\E[S^2]$. We know that  $\Var^*(A^*)$ is roughly $\frac{1}{n+1} \cdot \E[\frac{1}{A^*(c)}]$. So when the problem is non-trivial, we should have the average $\frac{1}{A^*(c)}$ much larger than $1$, and  $\frac{1}{n}$ will be small compared to $\Var^*(A^*)$. The last additive term $\frac{1}{n\sqrt{n}} \cdot \frac{1}{A^*(\overline{C})}$ is dominated by $\frac{1}{n}$. It is only comparable to $\Var^*(A^*)$ when $\sqrt{n} \le  \frac{1}{A^*(\overline{C}) \cdot \E[\frac{1}{A^*(c)}]}$

The complete proof can be found in Appendix~\ref{app:unbiased_online}.

\subsection{Confidence Interval}

Our benchmark for the online mechanism is again the optimal mechanism that knows the cost distribution at the beginning and uses a single optimal mechanism $(A^*, U^*, P^*)$ throughout the survey. We still add an additional cost $\overline{C}$ into the underlying cost set of the benchmark mechanism, in order to make the comparison possible without knowing the exact maximum cost.
\begin{definition} \label{def:biased_benchmark}
Let $c_{(1)}  \le  \cdots \le c_{(n)}$ be the $n$ data holders' costs ordered from smallest to largest. Suppose there is one more data holder with cost $\overline{C}$. We define the benchmark $(A^*, U^*, P^*)$ to be the mechanism that purchases data from these $n+1$ data holders, and minimizes the worst case variance when it knows the set of costs  $\{c_{(1)}, c_{(2)}, \dots, c_{(n)}, \overline{C}\}$ at the beginning.
\begin{align} \label{prog:unbiased_benchmark}
A^*, U^* =  OPT_{CI}(n+1, \ \{c_{(1)}, c_{(2)}, \dots, c_{(n)}, \overline{C}\}, \ B).
\end{align}
\end{definition}

\begin{theorem} \label{thm:CI_main}
If we use Mechanism~\ref{alg:online_unbiased} with
\begin{itemize}
    \item $\xi_n = \frac{1}{16\sqrt{n}}$.
    \item At round $i$, compute $$(A^i, U^i) = APPROX_{CI}(i, \ T_i, \ \xi_n B \sqrt{i}).$$ Let $\mathbbm{1}\left(U^i(c) \ge \frac{1}{2}\right)$ be the rule that completely ignores data with cost $c$ if $U^i(c) \ge \frac{1}{2}$, and never ignores the data if $U^i(c) < \frac{1}{2}$. Then the mechanism purchases agent $i$'s data using the extended allocation and payment rule of $$\left(A^i,  \mathbbm{1}\left(U^i(c) \ge \frac{1}{2}\right), P^i\right),$$ where $P^i$ is computed as in Lemma~\ref{myerson}. Let the collected data be $\widehat{x_i}$.
    \item Output confidence interval 
        $$
        \left[\sum_{i=1}^n y_i / n - \frac{\alpha_\gamma}{\sqrt{n}} \cdot \widehat{\sigma}, \quad \sum_{i=1}^n y_i/ n + \frac{\sum_{i=1}^n \widehat{U}_i}{n} + \frac{\alpha_\gamma}{\sqrt{n}} \cdot \widehat{\sigma}\right],
        $$
        where $y_i = \frac{\widehat{x}_i}{A^i(c_i)}$ and  $\widehat{\sigma}^2$ is the sample variance of $y_1,\dots, y_n$, and $\widehat{U}_i$ represents whether the $i$-th data point is ignored or not.
\end{itemize}
Then the mechanism is (1) truthful in expectation and individually rational; (2) satisfies the budget constraint $B=n\overline{B}$ in expectation;  (3) and the for any cost distribution $\{c_1, \dots, c_n\}$, the worst-case expected length of the output confidence interval is no more than 
$$
8\sqrt{10} \cdot L^*+ \frac{2\sqrt{10}}{\sqrt{n}} +o\left(\frac{1}{\sqrt{n}}\right), % + \frac{2\sqrt{2}\cdot (1 + \ln n)}{n} 
$$
where $L^*$ is the approximate worst-case expected length of the benchmark as defined in Lemma~\ref{lem:biased_benchmark}.
\end{theorem}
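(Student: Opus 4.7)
My plan is to verify the three properties in sequence, following the ``online $\to$ intermediate $\to$ benchmark'' strategy sketched in the preamble of this section. First, for truthfulness and individual rationality, I will check that at each round $i$ the effective allocation rule $\tilde A^i(c) := \bigl(1-\mathbbm{1}(U^i(c)\ge 1/2)\bigr)\,A^i(c)$ is monotone non-increasing in $c$. By Theorem~\ref{thm:CI_roundi}, $U^i$ is a threshold rule on the regularized virtual cost $\phi$ and $A^i(c)$ is non-increasing in $\phi(c)$; combined with the fact that $\phi$ is non-decreasing in $c$ (Definition~\ref{def:regularized_virtual_costs}), the desired monotonicity follows. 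Lemma~\ref{true_myerson}, Lemma~\ref{myerson} and Lemma~\ref{lem:extended_mech} then yield truthfulness and individual rationality of the extended mechanism applied to arbitrary reported costs at every round.

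For the budget, I introduce the intermediate ``one-step-ahead'' mechanism $(A'^{,i}, U'^{,i}, P'^{,i}) := APPROX_{CI}\bigl(i,\{c_1,\dots,c_i\},\xi_n B\sqrt{i}\bigr)$. Under the random arrival order, conditional on the unordered set $\{c_1,\dots,c_i\}$ the label $c_i$ is uniform over that set, so by Lemma~\ref{lem:expected_payment} the expected spend in round $i$ under the intermediate rule is at most $\xi_n B/\sqrt{i}$. I then bound the discrepancy between $(A^i, U^i)$ and $(A'^{,i}, U'^{,i})$: the two only differ by swapping one cost ($c_i$ for $\overline C$) in a cost set of size $i$, so by the sensitivity of the regularized virtual-cost function to a single element their budgets agree up to a constant. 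The $1/2$-rounding of $U^i$ at most doubles the effective purchase probability, so the total spend is bounded by $C\xi_n B \sum_{i=1}^n 1/\sqrt{i} \le 2C\xi_n B\sqrt{n}$, comfortably below $B$ for $\xi_n = 1/(16\sqrt n)$.

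For the expected length, I plan to work with the approximate objective. Lemma~\ref{lem:biased_benchmark} and the $\widehat\sigma$-vs-$\sqrt{\E\sum y_i^2/n}$ comparison in~\ref{app:difference} reduce the problem to bounding
\begin{equation*}
2\beta\sqrt{\tfrac{1}{n}\sum_{i=1}^n \E[y_i^2]} \;+\; \tfrac{1}{n}\sum_{i=1}^n \E[\widehat U_i]
\end{equation*}
by a constant multiple of the benchmark value $L^*$ plus an additive $O(1/\sqrt n)$. I will (a) replace $(A^i, U^i)$ by the intermediate $(A'^{,i}, U'^{,i})$, controlling the change via the same single-swap perturbation used for the budget; (b) use random arrival, so that conditional on $\{c_1,\dots,c_i\}$ the round-$i$ contributions to $\E[y_i^2]$ and $\E[\widehat U_i]$ equal, respectively, $\tfrac{1}{i}\sum_k (1-U'^{,i}_k) z_k^2/A'^{,i}(c_{(k)})$ and $\tfrac{1}{i}\sum_k U'^{,i}_k$; and (c) compare the summed intermediate objective with the benchmark $(A^*,U^*)$ on $n+1$ agents with budget $B$. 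For (c) I will use the benchmark allocation, rescaled to the round-$i$ budget $\xi_n B\sqrt i$, as a feasible comparator in the round-$i$ optimization; this yields a per-round loss factor of order $B/(i\,\overline b_i)$ relative to $L^*$, and the choice $\overline b_i \propto 1/\sqrt i$ induced by $\xi_n = 1/(16\sqrt n)$ is exactly what makes the sum telescope into an $O(1)$ overall blow-up; the composition of these multiplicative constants with the $\sqrt{a^2+b^2}\le a+b\le \sqrt{2(a^2+b^2)}$ conversion and the rounding overhead produces the stated factor $8\sqrt{10}$.

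The main obstacle will be the single-element stability of $APPROX_{CI}$ used in steps 2 and 3(a). Because the regularized virtual cost $\phi$ is a \emph{global} quantity---defined via minima over suffixes and maxima over prefixes of the sorted virtual costs in Definition~\ref{def:regularized_virtual_costs}---replacing one cost by $\overline C$ can in principle shift $\phi$ along a long interval of the support, and this shift must then be propagated through both the closed-form allocation $A(c) = \min\{1,\lambda/\sqrt{\phi(c)}\}$ and the threshold $H$ on $\phi$ that defines $U^i$. I expect to handle this by exploiting that the swap only enlarges (not shrinks) the cost set in the ``virtual cost'' sense (since $\overline C$ is the worst possible cost) and by using the KKT-style characterization behind Theorem~\ref{thm:CI_roundi} to argue that $\lambda$ and $H$ move by at most a constant factor. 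Once the sensitivity is quantified, the rest of the argument is essentially the budget-allocation calculation $\min \sum 1/(i\,\overline b_i)$ subject to $\sum \overline b_i \le B$.
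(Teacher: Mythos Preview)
Your proposal is essentially correct and follows the paper's strategy: truthfulness/IR via monotonicity of the effective allocation, budget via an intermediate ``one-step-ahead'' mechanism and random arrival, and competitive ratio via per-round comparison of the squared $APPROX_{CI}$ objective against the rescaled benchmark, followed by the $a+b \leftrightarrow \sqrt{a^2+b^2}$ conversion.

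The one substantive difference is your choice of intermediate. You take $APPROX_{CI}(i,\{c_1,\dots,c_i\},\xi_n B\sqrt{i})$, i.e.\ the same budget as $(A^i,U^i)$ but with $\overline C$ \emph{swapped} for $c_i$. The paper instead keeps $\overline C$ and \emph{adds} $c_i$, defining the intermediate on $T_{i+1}=\{c_1,\dots,c_i,\overline C\}$, and it uses two different budget scalings: double the round-$i$ budget for the spending direction and half for the objective direction. This buys two things. First, the sensitivity step reduces to the single case ``add one element to the cost set,'' for which the paper proves explicit factor-$2$ control of the regularized virtual costs $\phi$ (this is the content of the adjacency lemmas for $\phi$ and for the optimal $(A,U)$); your swap would need those lemmas applied twice, or a separate swap lemma, and the constants would degrade. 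Second, keeping $\overline C$ in the intermediate set lets the paper compare directly with the benchmark $(A^*,U^*)$, which is defined on $\{c_1,\dots,c_n,\overline C\}$; this is why the residual $\overline C$-term $\sqrt{(1-U^*_{\overline C})/A^*(\overline C)}$ appears only as a lower-order additive error rather than inside the main multiplicative constant. With your $\overline C$-free intermediate, you would have to reconcile the benchmark's $\overline C$ term separately.

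One smaller point: in your step (b) you display the bias contribution linearly, as $\tfrac{1}{n}\sum_i \E[\widehat U_i]$. The paper instead controls $\E[\widehat U_i\mid S_{i+1}]^2$ at the per-round level and aggregates via $(\tfrac{1}{n}\sum a_i)^2 \le \tfrac{1}{n}\sum a_i^2$; this is what lets the per-round $APPROX_{CI}$ comparison (which bounds the sum of the two squared terms, not either term separately) feed into a bound on $L^2$. Your mention of the $\sqrt{a^2+b^2}$ conversion suggests you have this in mind, but it is worth making explicit that the squared bias, not the linear bias, is the quantity that is bounded round by round.
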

\begin{corollary}
The worst-case expected length of our mechanism's output confidence interval is no more than 
$$
8 \sqrt{10} \cdot MIN+ \frac{2\sqrt{10}+32\sqrt{10}\cdot \alpha_\gamma}{\sqrt{n}} +o\left(\frac{1}{\sqrt{n}}\right), % + \frac{2\sqrt{2}\cdot (1 + \ln n)}{n} 
$$
where $MIN$ is the worst-case expected length of the optimal confidence interval estimator.
\end{corollary}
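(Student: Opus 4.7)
The plan is to derive the corollary as a direct consequence of Theorem~\ref{thm:CI_main} combined with the approximation guarantee from Lemma~\ref{lem:biased_benchmark}. The theorem already bounds the worst-case expected length of the produced confidence interval in terms of $L^{*}$, the approximate worst-case expected length of the benchmark. Lemma~\ref{lem:biased_benchmark} tells us that the gap between $L^{*}$ and the true optimum $MIN$ is additive and controlled by $2\beta(1+O(1/n)) = 4\alpha_\gamma/\sqrt{n}\,(1+O(1/n))$, since $\beta = 2\alpha_\gamma/\sqrt{n}$. So the whole argument reduces to substituting the latter bound into the former.

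Concretely, I would first write $L^{*} \le MIN + 4\alpha_\gamma/\sqrt{n} + O(1/n^{3/2})$ from Lemma~\ref{lem:biased_benchmark}. Then I would plug this into the bound of Theorem~\ref{thm:CI_main}, which reads
\[
\text{expected length} \;\le\; 8\sqrt{10}\cdot L^{*} + \frac{2\sqrt{10}}{\sqrt{n}} + o\!\left(\tfrac{1}{\sqrt{n}}\right).
\]
Distributing the factor $8\sqrt{10}$ across the two terms of the bound on $L^{*}$ yields a leading term $8\sqrt{10}\cdot MIN$, a $1/\sqrt{n}$-order term equal to $8\sqrt{10}\cdot 4\alpha_\gamma/\sqrt{n} = 32\sqrt{10}\,\alpha_\gamma/\sqrt{n}$, plus the pre-existing $2\sqrt{10}/\sqrt{n}$ from the theorem, and finally the residual $8\sqrt{10}\cdot O(1/n^{3/2})$ which gets absorbed into the $o(1/\sqrt{n})$ slack.

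Collecting these terms gives exactly the claimed bound
\[
8\sqrt{10}\cdot MIN + \frac{2\sqrt{10} + 32\sqrt{10}\cdot\alpha_\gamma}{\sqrt{n}} + o\!\left(\tfrac{1}{\sqrt{n}}\right).
\]
Since both Theorem~\ref{thm:CI_main} and Lemma~\ref{lem:biased_benchmark} are already established, there is no real obstacle here beyond bookkeeping of the constants and confirming that the error terms combine as claimed; in particular, one should check that the $O(1/n)$ factor multiplying $4\alpha_\gamma/\sqrt{n}$ in Lemma~\ref{lem:biased_benchmark} contributes only to the $o(1/\sqrt{n})$ remainder after scaling by $8\sqrt{10}$, which is immediate since $1/n^{3/2} = o(1/\sqrt{n})$.
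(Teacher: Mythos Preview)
Your proposal is correct and matches the paper's intended argument: the corollary is stated immediately after Theorem~\ref{thm:CI_main} with no separate proof, so it is meant to follow by exactly the substitution you describe, plugging the additive bound $|L^*-MIN|\le 4\alpha_\gamma/\sqrt{n}\,(1+O(1/n))$ from Lemma~\ref{lem:biased_benchmark} into the conclusion of Theorem~\ref{thm:CI_main}. Your bookkeeping of the constants ($8\sqrt{10}\cdot 4\alpha_\gamma = 32\sqrt{10}\,\alpha_\gamma$) and the absorption of the $O(n^{-3/2})$ remainder into $o(1/\sqrt{n})$ are both correct.
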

\paragraph{Discussion:} As we show in the previous section, $L^*$ is roughly $\frac{1}{\sqrt{n}}\cdot \left(2\alpha_\gamma \sqrt{ \E[\frac{1-U^*_c}{A^*(c)}]} + \sqrt{n} \cdot \E[U^*_c]\right)$. If the problem is non-trivial, we should have  $\frac{1}{\sqrt{n}}$ dominated by $L^*$. 
The complete proof can be found in~\ref{app:CI_main}.
\section{Discussions}\label{sec:discussion}
In this work, we restrict our estimators to use only the collected data.  When the data are correlated with the costs, the data analyst may gradually learn the cost-data correlation based on the collected pairs. This means that if a data holder arrives and reports his cost, the data analyst may form a prediction for his data based on his reported cost and the learned cost-data correlation, even if the data is not collected. This leads to an interesting open question: can the final estimation be improved if the data analyst makes use of such predicted data? 
%one question that remains open is whether it is possible to improve the final estimation by using the cost-data correlation on the collected pairs, while still guaranteeing  unbiaseness or a valid confidence interval. The other problem that may be raised by making use of cost-data correlation is the privacy issue. 

Allowing the data analyst to leverage on the cost-data correlation brings up an additional level of challenge when the data holders care about the privacy of their data. Such data holders may hesitate to report their costs, because reporting the cost itself reveals some information about his data. This makes it more challenging to achieve truthfulness in design an online mechanism. 
%This privacy problem is the main theme of another line of work that studies the markets for private data~\cite{}, which is not the main concern of this work.   

Another open problem is whether it is possible to do better than the worst-case analysis. The optimality of our mechanism is based on the worst-case cost-data correlation. When the designer can gradually learn the cost-data correlation, is it possible to adjust the purchasing mechanism accordingly so that it achieves optimality with respect to the actual cost-data correlation?

More generally, it would be interesting to develop mechanisms for other more complicated statistical estimation tasks, such as hypothesis testing. 

%%
%% The acknowledgments section is defined using the "acks" environment
%% (and NOT an unnumbered section). This ensures the proper
%% identification of the section in the article metadata, and the
%% consistent spelling of the heading.
%\begin{acks}
%This material is based upon work supported by the National Science Foundation under Grant No. CCF-1718549. Any opinions, findings, and conclusions or recommendations expressed in this material are those of the authors and do not necessarily reflect the views of the National Science Foundation.
%\end{acks}

%%
%% The next two lines define the bibliography style to be used, and
%% the bibliography file.
\newpage
\bibliographystyle{plainnat}
\bibliography{sample,bibliography,library}

%%
%% If your work has an appendix, this is the place to put it.
\newpage
\appendix

\section{Concentration Bounds}
\subsection{Empirical Bernstein's Inequality} \label{app:concentration}
\begin{comment}
\begin{lemma}[Bernstein's inequality] \label{Bernstein}
Suppose that the $X_1, \dots, X_n$ are independent random variables with bounded support $[0,M]$, mean $\mu$ and variance $\sigma^2$. Define $\widehat{\mu} = \sum_i X_i/n$. Then for all $t>0$,
$$
\Pr[|\widehat{\mu} - \mu| \ge t] \le 2 \exp \left( \frac{-nt^2}{2\sigma^2 + 2Mt/3}\right).
$$
This inequality implies with probability at least $1-\delta$,
\begin{align} \label{bernieq}
|\mu- \widehat{\mu}| \le \sigma \sqrt{\frac{2\ln(2/\delta)}{n}} + \frac{2M \ln(2/\delta)}{3n}.
\end{align}
\end{lemma}
\end{comment}
We show how to construct confidence interval using sample mean and sample variance of $y_i$'s.
\begin{lemma}[\cite{maurer2009empirical}] \label{empBern}
Let $X_1, \dots, X_n$ be independent random variables taking their values in
$[0, M]$. Consider the sample expectation $\widehat{\mu}$ and sample variance $\widehat{\sigma}^2$ defined respectively by $\widehat{\mu} = \sum_i X_i/n$ and
$
\widehat{\sigma}^2 = \sum_{i=1}^n (X_i - \widehat{\mu})^2/(n-1).
$
Then with probability at least $1-\gamma$,
$$
\left\vert \widehat{\mu} - \E\widehat{\mu}\right\vert \le \widehat{\sigma}\sqrt{\frac{2 \ln(4/\gamma)}{n}} + \frac{7M\ln(4/\gamma)}{3n}.
$$
\end{lemma}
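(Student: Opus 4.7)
The plan is to reduce the empirical Bernstein inequality to two standard ingredients: the classical Bernstein inequality (which involves the unknown population standard deviation $\sigma$) and a one-sided concentration bound that controls $\sigma - \widehat{\sigma}$. Taking a union bound and substituting the second into the first converts the bound into one involving only observable quantities, at the price of an enlarged coefficient on the $M\ln(4/\gamma)/n$ term.

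First, I would apply the classical one-sided Bernstein inequality to $X_1,\ldots,X_n$. With failure probability $\gamma/2$, one obtains, with probability at least $1-\gamma/2$,
$$\bigl|\widehat{\mu} - \E\widehat{\mu}\bigr| \;\le\; \sigma\sqrt{\frac{2\ln(4/\gamma)}{n}} \;+\; \frac{M\ln(4/\gamma)}{3n}.$$
This already has the desired form, except that it features the unknown $\sigma$ rather than the observable $\widehat{\sigma}$.

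Second, I would establish a high-probability one-sided bound of the shape
$$\sigma \;\le\; \widehat{\sigma} \;+\; c\,M\sqrt{\frac{\ln(c'/\gamma)}{n-1}}$$
for suitable absolute constants $c,c'$, with probability at least $1-\gamma/2$. The cleanest route is to observe that $\widehat{\sigma}$ is a function of $(X_1,\ldots,X_n)$ with bounded differences of order $M/\sqrt{n-1}$ (a direct calculation from the sample-variance formula, using $X_i\in[0,M]$), so McDiarmid's inequality controls the lower tail of $\widehat{\sigma}$ around $\E[\widehat{\sigma}]$; combining with the Jensen inequality $\E[\widehat{\sigma}] \le \sqrt{\E[\widehat{\sigma}^2]} = \sigma$ then converts this into an upper tail for $\sigma - \widehat{\sigma}$. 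An alternative (the Maurer--Pontil approach) is to apply a Bernstein-type tail bound for self-bounded functions directly to $(n-1)\widehat{\sigma}^2$, which yields essentially the same estimate with sometimes sharper constants.

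Finally, I would combine the two events via a union bound and substitute the upper bound on $\sigma$ into Bernstein's inequality. The cross term $M\sqrt{2\ln(4/\gamma)/n}\cdot\sqrt{2\ln(2/\gamma)/(n-1)}$ arising from this substitution is bounded by $O(M\ln(4/\gamma)/n)$ by using $\ln(2/\gamma)\le\ln(4/\gamma)$ together with $n-1\ge n/2$; adding this contribution to the pre-existing $M\ln(4/\gamma)/(3n)$ term produces the constant in front of $M\ln(4/\gamma)/n$ in the conclusion. The main obstacle is this last piece of bookkeeping: the two probabilistic ingredients are routine, but squeezing the constant down to the specific value $7/3$ stated in the lemma requires a careful choice of the confidence split between the two events and a careful application of elementary inequalities on the cross term, which is exactly the place where the Maurer--Pontil self-bounded concentration inequality for $\widehat{\sigma}^2$ pays off relative to a naive McDiarmid argument.
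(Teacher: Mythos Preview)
The paper does not prove this lemma at all: it is quoted verbatim from \cite{maurer2009empirical} and used as a black box to build the confidence interval that follows. So there is no ``paper's own proof'' to compare against.

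That said, your sketch is a faithful outline of the argument in the cited reference. The two ingredients you identify --- classical Bernstein at confidence $\gamma/2$ plus a one-sided concentration bound for $\sigma-\widehat{\sigma}$ at confidence $\gamma/2$, combined by a union bound --- are exactly how Maurer and Pontil obtain the result, and you correctly flag that the self-bounding inequality for $\widehat{\sigma}^2$ (rather than a raw McDiarmid bound) is what makes the constant $7/3$ attainable. For the purposes of this paper nothing more is needed, since the lemma is invoked only as a citation.
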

As we proved in the main text, $y_1 = \frac{\widehat{x}_1}{A^1(c_1)}, \dots, y_n = \frac{\widehat{x}_n}{A^n(c_n)}$ become independent when conditioning on a realization of arriving sequence $(\tilde{\vec{c}}, \tilde{\vec{z}})$, and when the estimator is unbiased, for any realization $(\tilde{\vec{c}}, \tilde{\vec{z}})$, $\E[\sum_i y_i/n] = \sum_i z_i/n = \E[z]$. Therefore by the above lemma, with probability at least $1-\gamma$,
$$
\left| \sum_i y_i/n - \E[z] \right| \le \widehat{\sigma}\sqrt{\frac{2 \ln(4/\gamma)}{n}} + \frac{7M\ln(4/\gamma)}{3n},
$$
where $M = \max_i y_i$. Therefore the following interval is a confidence interval of $\E[z]$ with confidence level $\gamma$,
$$
\left[\sum_{i=1}^n y_i / n - \frac{\alpha_\gamma}{\sqrt{n}} \cdot \widehat{\sigma}, \quad \sum_{i=1}^n y_i/ n + \frac{\alpha_\gamma}{\sqrt{n}} \cdot \widehat{\sigma}\right]
$$
where $\alpha_\gamma = \sqrt{2 \ln(4/\gamma)} +  \frac{7M\ln(4/\gamma)}{3\sqrt{n}\cdot \widehat{\sigma}}$. Notice that the second term in $\alpha$, $ \frac{7M\ln(4/\gamma)}{3\sqrt{n}\cdot \widehat{\sigma}}$ is not a constant. But since we estimate $\widehat{\sigma}^2$ with $\sum y_i^2/n$ (see Section~\ref{sec:biased_benchmark}) when computing the optimal confidence interval mechanism, 
$$
\frac{M}{\sqrt{n} \cdot \widehat{\sigma}}= \frac{\max_i y_i}{\sqrt{n}\cdot \sqrt{\sum_{i=1}^n y_i^2/n}}\le \frac{\max_i y_i}{\sqrt{\sum_{i=1}^n y_i^2}}\le 1,
$$
we can just assume that $\alpha_\gamma$ is a constant in the range of $\sqrt{2 \ln(4/\gamma)}$ to $\sqrt{2 \ln(4/\gamma)} +  \frac{7\ln(4/\gamma)}{3}$ when solving the optimization problem.

\subsection{Estimation of Sample Standard Deviation} \label{app:difference}
We show that when $0 \le \E[y_i] \le 1$, the difference between $\E[\widehat{\sigma} ]$ and $\sqrt{\E \sum_{i=1}^n y_i^2/n}$ is no more than $1 + O(1/n)$, where $\widehat{\sigma}^2$ is the sample variance of $y_1, \dots, y_n$. 

We first show that the difference between $\E[\widehat{\sigma}^2]$ and $\frac{1}{n} \cdot \sum_{i=1}^n \E[ y_i^2]$ is no more than $1$.
Because $y_i$'s are independent conditioned on $c_1, \dots, c_n$,
\begin{align*}
\E[\widehat{\sigma}^2 \vert c_1, \dots, c_n]= &\E\left[\sum_{i=1}^n \left(y_i - \sum_{i=1}^n y_i/n\right)^2/(n-1) \Big \vert c_1, \dots, c_n\right]\\
= & \frac{1}{n-1} \cdot \E\left[\sum_{i=1}^n y_i^2 - \frac{2}{n}\sum_{i=1}^n y_i \sum_{i=1}^n y_i + n \cdot \left( \sum_{i=1}^n y_i/n\right)^2 \Big\vert c_1, \dots, c_n\right]\\
= & \frac{1}{n-1} \cdot \E\left[\sum_{i=1}^n y_i^2 - \frac{1}{n} \left(\sum_{i=1}^n y_i \right)^2 \Big \vert c_1, \dots, c_n\right]\\
= & \frac{1}{n-1} \cdot \E\left[\sum_{i=1}^n y_i^2 - \frac{1}{n} \sum_{i=1}^n y_i^2 -\frac{1}{n} \sum_{i=1}^n \sum_{j\neq i} y_i y_j \Big \vert c_1, \dots, c_n\right]\\
= & \frac{1}{n} \cdot \sum_{i=1}^n \E[ y_i^2]  -\frac{1}{n(n-1)} \sum_{i=1}^n \sum_{j\neq i} \E[y_i \vert c_1, \dots, c_n] \E[ y_j \vert c_1, \dots, c_n]\\
\in & \left[\frac{1}{n} \cdot \sum_{i=1}^n \E[ y_i^2] -1, \quad \frac{1}{n} \cdot \sum_{i=1}^n \E[ y_i^2]\right]
\end{align*}
Therefore 
$$
\sqrt{\frac{1}{n} \cdot \sum_{i=1}^n \E[ y_i^2]} - 1 \le \sqrt{\E[\widehat{\sigma}^2 \vert c_1, \dots, c_n]} \le \sqrt{\frac{1}{n} \cdot \sum_{i=1}^n \E[ y_i^2]}
$$
for all $c_1, \dots, c_n$.
It has been shown (e.g. in~\cite{gurland1971}) that the difference between $\E[\widehat{\sigma} | c_1, \dots, c_n]$ and $\sqrt{\E[\widehat{\sigma}^2 | c_1, \dots, c_n]}$ decreases as sample size grows, dropping off as $O(1/n)$. Therefore 
$$
\sqrt{ \frac{1}{n} \cdot \sum_{i=1}^n \E[ y_i^2]} -1 -O(1/n) \le \E[\widehat{\sigma}] \le \sqrt{  \frac{1}{n} \cdot \sum_{i=1}^n \E[ y_i^2]}.
$$ 
The difference between $\E[\widehat{\sigma} ]$ and $\sqrt{\E \sum_{i=1}^n y_i^2/n}$ is no more than $1 + O(1/n)$.
\section{Truthfulness and Virtual Cost Function}
\subsection{Extended Allocation Rule and Payment Rule} \label{app:extended_alloc}
We apply the well-known Myerson's lemma to prove Lemma~\ref{lem:extended_mech}. 
\begin{lemma}[\cite{myerson1983efficient}] \label{lem:true_myerson}
A survey mechanism $(A(c), P(c))$ defined on $c\in [0, \overline{C}]$ is truthful and pays $0$ when $A(c) = 0$ if and only if 
\begin{itemize}
\item $A(c)$ is a monotone non-increasing function of $c$.
\item $P(c) = c + \frac{1}{A(c)} \cdot \int_{c}^{\overline{C}} A(v) \, dv$.
\end{itemize}
\end{lemma}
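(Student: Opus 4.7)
The plan is to prove the two directions of the equivalence via standard envelope / incentive-compatibility arguments from mechanism design, adapted to the procurement setting where higher reported costs must induce a weakly lower purchase probability.

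For the sufficiency direction, I would assume $A$ is monotone non-increasing and $P$ satisfies the stated integral formula. First, I would observe that monotonicity of $A$ forces $A(v)=0$ for every $v\ge c$ whenever $A(c)=0$, so $\int_c^{\overline{C}}A(v)\,dv=0$ and hence $A(c)P(c)=0$, confirming the pays-$0$ clause (read as $A(c)P(c)=A(c)c+\int_c^{\overline{C}}A(v)\,dv$). For truthfulness, I would compute the utility from truth-telling as $u(c)=A(c)(P(c)-c)=\int_c^{\overline{C}}A(v)\,dv$, and the utility from reporting $\widehat c$ as $A(\widehat c)(\widehat c-c)+\int_{\widehat c}^{\overline{C}}A(v)\,dv$. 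Taking the difference, the gain from truth-telling equals $\int_c^{\widehat c}[A(\widehat c)-A(v)]\,dv$ when $\widehat c>c$ and $\int_{\widehat c}^{c}[A(v)-A(\widehat c)]\,dv$ when $\widehat c<c$; in both cases the integrand is non-negative by monotonicity of $A$, giving $u(c)\ge A(\widehat c)(P(\widehat c)-c)$.

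For the necessity direction, I would assume the mechanism is truthful and pays $0$ when $A(c)=0$. Monotonicity of $A$ would come from the standard two-type exchange argument: for $c_1<c_2$, adding the IC inequalities $A(c_1)(P(c_1)-c_1)\ge A(c_2)(P(c_2)-c_1)$ and $A(c_2)(P(c_2)-c_2)\ge A(c_1)(P(c_1)-c_2)$ yields $(A(c_1)-A(c_2))(c_2-c_1)\ge 0$. For the payment formula, I would set $u(c)=A(c)(P(c)-c)$ and use the same IC inequalities to obtain the sandwich $A(c')(c'-c)\le u(c)-u(c')\le A(c)(c'-c)$ for $c<c'$. This shows $u$ is Lipschitz and, by the envelope theorem (or direct integration after dividing by $c'-c$ and passing to the limit), $u'(c)=-A(c)$ almost everywhere. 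Integrating from $c$ to $\overline{C}$ gives $u(c)=u(\overline{C})+\int_c^{\overline{C}}A(v)\,dv$, which is exactly the asserted formula once $u(\overline{C})=0$.

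The main obstacle will be pinning down the integration constant $u(\overline{C})$. If $A(\overline{C})=0$ then $u(\overline{C})=A(\overline{C})(P(\overline{C})-\overline{C})=0$ by the pays-$0$ clause, and we are done; the delicate case is when $A(c)>0$ for every $c\in[0,\overline{C}]$, where the pays-$0$ hypothesis is vacuous. My plan here is to argue that $u(\overline{C})=0$ is built into the definitional content of the formula $P(c)=c+\frac{1}{A(c)}\int_c^{\overline{C}}A(v)\,dv$ (which at $c=\overline{C}$ gives $P(\overline{C})=\overline{C}$, i.e., the highest type is exactly compensated for her cost), and to note that this is consistent with the discrete boundary normalization already used in Lemma~\ref{myerson}. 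With this normalization fixed, the integrated envelope identity delivers the stated $P(c)$ uniquely.
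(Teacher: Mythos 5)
The paper does not prove this lemma; it is cited directly from Myerson and only its sufficiency direction is actually invoked (in the proof of Lemma~\ref{lem:extended_mech}). Your plan is the standard Myerson envelope argument and is correct in substance, but two points deserve attention.

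First, a sign slip in the sufficiency step: with $\widehat c>c$, the gain from truth-telling works out to $\int_c^{\widehat c}[A(v)-A(\widehat c)]\,dv$, and with $\widehat c<c$ to $\int_{\widehat c}^{c}[A(\widehat c)-A(v)]\,dv$ — you have these two expressions swapped. As written, "the integrand is non-negative by monotonicity" is false; for the corrected expressions it is true (since $v\le\widehat c\Rightarrow A(v)\ge A(\widehat c)$, and symmetrically). This is clearly a transcription error rather than a conceptual one.

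Second, and more substantively, you are right to single out the integration constant $u(\overline{C})$. Your resolution, however, is not a derivation: appealing to the fact that the target formula gives $P(\overline{C})=\overline{C}$ is circular, since the formula is what you are trying to establish. The cleaner way to say it is that the biconditional as stated is actually imprecise when $A(c)>0$ for all $c$ — for any $K>0$ the payment $P(c)+K/A(c)$ is also truthful and strictly individually rational, violating the "only if" direction. The "pays $0$ when $A(c)=0$" clause is exactly what kills such a $K$ whenever $A$ does vanish (because then $A(c)P(c)=K\ne 0$), but it is vacuous otherwise. So the honest statement of the "only if" direction requires either assuming $A(\overline{C})=0$, or replacing "pays $0$ when $A(c)=0$" with the normalization "$A(\overline{C})(P(\overline{C})-\overline{C})=0$" (no information rent for the highest cost), or settling for uniqueness up to the additive utility constant. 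None of this affects the paper, which only uses sufficiency, but your write-up should separate the two directions so that the boundary normalization is stated explicitly as a hypothesis for necessity rather than smuggled in from the conclusion.
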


Let $(A^d,P^d)$ be a truthful and individually rational survey mechanism defined on a discrete cost set $\{c_1, \dots, c_m\}$ with $c_1 \le \cdots \le c_m$. Then when we use the extended allocation rule and payment rule
$$
A(c) = A^d(\lceil c \rceil), \quad P(c) = P^d(\lceil c \rceil), \textrm{ for all } c\in [0, c_m].
$$
The allocation rule $A(c)$ must be monotone non-increasing because $A^d$ must be monotone non-increasing. 
Suppose $\lceil c \rceil = c_i$, then 
\begin{align*}
P(c) = P^d(c_i) = & c_i + \frac{1}{A(c_i)} \sum_{j=i+1}^m A(c_j)(c_{j}-c_{j-1})\\
 =& c + (c_i - c) + \frac{1}{A(c)}\sum_{j=i+1}^m A(c_j)(c_{j}-c_{j-1}) \\
  =& c + (c_i - c) + \frac{1}{A(c)}\sum_{j=i+1}^m \int_{c_{j-1}}^{c_{j}} A(c_j) \,dv\\
 = & c + \frac{1}{A(c)} \cdot \int_{c}^{c_m} A(v) \, dv,
\end{align*}
which is just the payment rule that satisfies the second condition of the Myerson's lemma. Therefore $(A(c), P(c))$ is truthful. It is also individually rational because we always have $P(c) \ge c$. 

\subsection{Virtual Cost Function}
The following results on the virtual cost function (Definition~\ref{def:virtual_costs}) will be used in our proofs.
\begin{lemma} \label{lem:VC_subset}
Let $\psi(c)$ be the virtual cost function of the discrete uniform distribution over set $T = \{c_1, \dots, c_n\}$. Let $T_i$ be an arbitrary subset of  $T$ with length $i$ and let $\psi^i(c)$ be the virtual cost function of the discrete uniform distribution over set $T_i$. Then for any non-increasing allocation rule $A(c)$, 
$$
\sum_{c \in T_i} A(c) \psi^i(c) \le \sum_{c \in T} A(c) \psi(c).
$$
\end{lemma}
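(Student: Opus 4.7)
The plan is to translate the inequality into one about expected payments via Lemma~\ref{lem:expected_payment}, and then compare the Myerson payments computed on $T$ against those computed on $T_i$ pointwise on $T_i$.

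First, by Lemma~\ref{lem:expected_payment} applied to the uniform distribution on $T$ (respectively $T_i$), the sums $\sum_{c\in T} A(c)\psi(c)$ and $\sum_{c\in T_i} A(c)\psi^i(c)$ equal $\sum_{c\in T} A(c) P(c)$ and $\sum_{c\in T_i} A(c) P^i(c)$ respectively, where $P$ and $P^i$ are the optimal truthful and individually rational payment rules from Lemma~\ref{myerson} for the two cost sets (both well-defined since $A$ is non-increasing). So it suffices to prove $\sum_{c\in T_i} A(c)P^i(c) \le \sum_{c\in T} A(c)P(c)$.

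Write $T=\{c_1\le\cdots\le c_n\}$ and $T_i=\{c_{k_1},\ldots,c_{k_i}\}$ with $k_1<\cdots<k_i$. By Lemma~\ref{myerson}, for every $l\in\{1,\ldots,i\}$, $A(c_{k_l})P(c_{k_l}) = A(c_{k_l})c_{k_l} + \sum_{j=k_l+1}^{n} A(c_j)(c_j-c_{j-1})$ while $A(c_{k_l})P^i(c_{k_l}) = A(c_{k_l})c_{k_l} + \sum_{l'=l+1}^{i} A(c_{k_{l'}})(c_{k_{l'}}-c_{k_{l'-1}})$. The pointwise task is therefore to show that the ``tail'' sum computed over $T$ dominates the one computed over $T_i$. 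I would partition the index range $\{k_l+1,\ldots,n\}$ into the blocks $\{k_{l'-1}+1,\ldots,k_{l'}\}$ for $l'=l+1,\ldots,i$, plus the leftover $\{k_i+1,\ldots,n\}$. On each block, telescoping gives $\sum_{j=k_{l'-1}+1}^{k_{l'}}(c_j-c_{j-1}) = c_{k_{l'}}-c_{k_{l'-1}}$, and since $A$ is non-increasing and $j\le k_{l'}$ throughout the block we have $A(c_j)\ge A(c_{k_{l'}})$. Hence the block contributes at least $A(c_{k_{l'}})(c_{k_{l'}}-c_{k_{l'-1}})$, exactly matching the corresponding term in the $T_i$-tail. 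The leftover block is non-negative and can be dropped, giving $A(c_{k_l})P(c_{k_l})\ge A(c_{k_l})P^i(c_{k_l})$ for each $l$.

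Summing these pointwise inequalities over $l=1,\ldots,i$ gives $\sum_{c\in T_i}A(c)P(c)\ge \sum_{c\in T_i}A(c)P^i(c)$. Since $A(c)P(c)\ge A(c)\cdot c\ge 0$ for every $c$, adjoining the terms for $c\in T\setminus T_i$ only enlarges the left-hand side, yielding $\sum_{c\in T}A(c)P(c)\ge \sum_{c\in T_i}A(c)P^i(c)$. Translating back through Lemma~\ref{lem:expected_payment} then gives $\sum_{c\in T}A(c)\psi(c)\ge \sum_{c\in T_i}A(c)\psi^i(c)$, as desired. The only substantive step is the blockwise telescoping comparison of the tail sums; I do not anticipate a real obstacle there, since it reduces to the single observation that replacing $A(c_j)$ on $\{k_{l'-1}+1,\ldots,k_{l'}\}$ by the smaller value $A(c_{k_{l'}})$ only weakens the inequality.
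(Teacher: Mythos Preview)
Your proof is correct. The route differs from the paper's: rather than working directly with virtual costs, you pass through Lemma~\ref{lem:expected_payment} to the payment side and then establish the pointwise inequality $P(c_{k_l})\ge P^i(c_{k_l})$ by comparing tail sums blockwise. The paper instead stays with virtual costs, using the telescoping identity $\sum_{j=p_{k-1}+1}^{p_k}\psi(c_j)=p_k c_{p_k}-p_{k-1}c_{p_{k-1}}$ to show the blockwise bound $\sum_{j=p_{k-1}+1}^{p_k}\psi(c_j)\ge\psi^i(c_{p_k})$, and then weights by $A(c_{p_k})\le A(c_j)$. The two arguments are dual to one another via summation by parts, and the underlying mechanism (telescoping on blocks plus monotonicity of $A$) is the same. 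Your organization is arguably cleaner in that it uses Lemmas~\ref{myerson} and~\ref{lem:expected_payment} as black boxes and yields, as a byproduct, the pointwise payment comparison $P\ge P^i$ on $T_i$; the paper's version avoids the detour through payments and keeps everything in the virtual-cost language used elsewhere in the analysis.
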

\begin{proof}
Without loss of generality, assume $c_1 \le \cdots \le c_n$ and $T_i = \{ c_{p_1}, \dots, c_{p_i}\}$.  Then according to the definition of $\psi(c)$ and $\psi^{i}(c)$, for all $1\le k \le i$,
\begin{align*}
\sum_{j=p_{k-1} + 1}^{p_k} \psi(c_j) = & \sum_{j=p_{k-1} + 1}^{p_k} j \cdot c_j - (j-1) c_{j-1} \\
= & (p_k - p_{k-1} )\cdot c_{p_k} + p_{k-1} \cdot ( c_{p_k} - c_{p_{k-1}})\\
\ge & c_{p_k} + (k-1) (c_{p_k} - c_{p_{k-1}}) \\
= & \psi^i(c_{p_k}).
\end{align*}
since $p_k - p_{k-1} \ge 1$ and $p_{k-1} \ge k-1$. 

For all $c_j \in T$, define $\lceil c_j \rceil$ to be the smallest cost in $T_i$ that is greater than or equal to $c_j$.
\begin{align*}
\sum_{c \in T_i} A(c) \psi^i(c) = & \sum_{k = 1}^i   A(c_{p_k}) \psi^i(c_{p_k}) \\
\le  &  \sum_{k = 1}^i A(c_{p_k}) \left(\sum_{j=p_{k-1} + 1}^{p_k} \psi(c_j) \right)\\
= & \sum_{j=1}^n A(\lceil c_j\rceil) \psi(c_j)\\
\le & \sum_{j=1}^n A( c_j ) \psi(c_j)\\ 
\end{align*}
since $A$ is monotone non-increasing,
\end{proof}

\section{Proofs for Optimal Unbiased Estimator} 
In this section, we give the proofs for the optimal online unbiased estimator. In Section~\ref{app:unbiased_chrz}, we characterize the  optimal $A$ by the use of the Lagrangian function and the KKT conditions. Since the objective function of our optimization problem is convex, we prove the optimality by constructing the dual variables that satisfy the KKT conditions with our primal solution. This is based on some properites of the regularized virtual cost function $\phi$, which is presented at the beginning of the section. In Section~\ref{app:unbiased_adj}, we prove some lemmas that will be used in Section~\ref{app:unbiased_online} when we show the budget feasibility and the performance of our online mechanism. 

\subsection{Proof of Theorem~\ref{lem:unbiased_chrz}} \label{app:unbiased_chrz}

We first prove the following properties of the regularized virtual cost function $\phi$ when compared with the non-regularized virtual cost function $\psi$. 
\begin{lemma} \label{lem:unbiased_chara_lem1}
For a discrete uniform distribution  supported on $\{c_1, \dots, c_m\}$ with $c_1 \le \cdots \le c_m$. Let $\psi$ be its virtual cost function and $\phi$ be the regularized virtual cost function. For any  $1\le i \le m$, let $I_i$ be the set of all the $j$ that has $\phi(c_j) = \phi(c_i)$. Then 
\begin{enumerate}
\item $\phi(c)$ is a non-decreasing function of $c$. 
\item  $\sum_{j \in I_i} \phi(c_j) = \sum_{j \in I_i} \psi(c_j)$.
\item  $\sum_{j=1}^i \phi(c_j) \le \sum_{j=1}^i \psi(c_j)$ for all $1\le i \le m$. If $\phi(c_i) \neq \phi(c_{i+1})$ then $\sum_{j=1}^i \phi(c_j) = \sum_{j=1}^i \psi(c_j)$.
\end{enumerate}
\end{lemma}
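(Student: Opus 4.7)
The plan is to reduce everything to a single structural claim about how $\phi$ partitions the index set. Write the maximal level sets of $\phi$ as intervals $[a_1,b_1], \ldots, [a_L,b_L]$ partitioning $\{1,\ldots,m\}$ with corresponding values $v_1 < v_2 < \cdots < v_L$, so each $I_i$ appearing in the lemma is one of these $[a_k, b_k]$. The structural claim I will prove is
\[\phi(c_{a_k}) \;=\; \psi'(c_{a_k}) \;=\; \mathrm{Avg}(a_k, b_k) \qquad \text{for every } k.\]
Once this holds, Property 2 is $\sum_{j \in I_k} \phi(c_j) = |I_k|\, v_k = |I_k|\,\mathrm{Avg}(a_k,b_k) = \sum_{j \in I_k}\psi(c_j)$, and Property 1 is already manifest from $\phi(c_i) = \max_{j \le i} \psi'(c_j)$ since a max over a growing set is non-decreasing.

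For the structural claim I would proceed in two steps. First, $\phi(c_{a_k}) = \psi'(c_{a_k})$: the case $k=1$ is by definition since $a_1 = 1$, and for $k \ge 2$ the strict jump $\phi(c_{a_k}) > \phi(c_{a_k - 1}) = \max_{j \le a_k - 1} \psi'(c_j)$ forces the max in $\phi(c_{a_k}) = \max(\phi(c_{a_k - 1}),\psi'(c_{a_k}))$ to be attained at the new term. Second, I need to show the minimizer of $\mathrm{Avg}(a_k, \cdot)$ defining $\psi'(c_{a_k}) = v_k$ can be taken to be $b_k$ itself. Let $k^{**}$ be the largest such minimizer. If $k^{**} < b_k$, then for $j = k^{**} + 1 \in (a_k, b_k]$ the identity $\phi(c_j) = v_k$ yields $\psi'(c_j) \le v_k$, hence some $k' \ge j$ with $\mathrm{Avg}(j,k') \le v_k$; writing $\mathrm{Avg}(a_k,k')$ as the weighted average of $\mathrm{Avg}(a_k,k^{**}) = v_k$ and $\mathrm{Avg}(j,k') \le v_k$ gives $\mathrm{Avg}(a_k,k') \le v_k$, and combined with the reverse inequality from $v_k = \min_{k''} \mathrm{Avg}(a_k,k'')$ this produces a strictly larger minimizer, contradicting maximality of $k^{**}$. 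If instead $k^{**} > b_k$, then $k^{**} \ge a_{k+1}$ and $\mathrm{Avg}(a_{k+1}, k^{**}) \ge \psi'(c_{a_{k+1}}) = v_{k+1} > v_k$; splitting $\mathrm{Avg}(a_k, k^{**})$ into the piece over $[a_k,b_k]$ (which is $\ge v_k$) and the piece over $[a_{k+1},k^{**}]$ (which is $> v_k$) forces $\mathrm{Avg}(a_k, k^{**}) > v_k$, contradicting that it equals $v_k$.

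With Property 2 in hand, Property 3 splits by cases on $i$. If $i = b_k$ for some $k$ (which covers all positions where $\phi(c_i) \ne \phi(c_{i+1})$, plus $i = m$), summing Property 2 over $1, \ldots, k$ gives $\sum_{j=1}^i \phi(c_j) = \sum_{j=1}^i \psi(c_j)$. Otherwise $i$ is strictly inside some interval $[a_k, b_k]$, and since the intervals before $[a_k,b_k]$ contribute equality,
\[\sum_{j=1}^i \phi(c_j) - \sum_{j=1}^i \psi(c_j) = (i - a_k + 1) v_k - \sum_{j=a_k}^i \psi(c_j) = (i - a_k + 1)\bigl(v_k - \mathrm{Avg}(a_k, i)\bigr) \le 0,\]
since $v_k = \psi'(c_{a_k}) = \min_{k' \ge a_k} \mathrm{Avg}(a_k, k') \le \mathrm{Avg}(a_k, i)$. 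The main obstacle is the $k^{**} = b_k$ argument: the two-sided weighted-average splitting has to produce strict inequalities in the right direction on either side of $b_k$, and this crucially uses the maximality of $k^{**}$ on one side and the strict jump $v_{k+1} > v_k$ on the other.
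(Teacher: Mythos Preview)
Your argument is correct, and it reaches the same conclusion as the paper but via a different organization. The paper's proof works ``bottom-up'': it defines $R_i = \arg\min_{k \ge i} \mathrm{Avg}(i,k)$ for every $i$, then proves three auxiliary claims about the family of intervals $[i,R_i]$ (no partial overlaps; $\psi'$ is dominated on sub-intervals; $\psi'$ strictly increases across maximal intervals). Only after this structure is in place does it identify each level set $I_i$ with a maximal interval $[L,R_L]$ and read off Properties~(2) and~(3). Your approach is ``top-down'': you start directly from the level-set partition $[a_k,b_k]$ of $\phi$ and establish the single structural claim $\psi'(c_{a_k}) = \mathrm{Avg}(a_k,b_k)$ by a two-sided squeeze on the largest minimizer $k^{**}$, using only weighted-average splittings and the already-known jump $\psi'(c_{a_{k+1}}) = v_{k+1}$ from Step~1. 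This bypasses the paper's nesting lemma (Claim~1) and the per-index analysis of $R_i$ entirely. The payoff is a shorter and more self-contained argument; what you lose is the finer nesting picture of the intervals $[i,R_i]$, but that picture is not used elsewhere in the paper, so nothing is sacrificed.
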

\begin{proof}
Define $Avg(i,k)$ as the average of $\psi(i), \dots, \psi(k)$, i.e.,
$
Avg(i,k) = \frac{1}{k - i + 1} \sum_{j = i}^{k} \psi(c_j).
$
Recall that the definition of $\phi(c_i)$ is as follows
\begin{equation}  \label{unbiased_phi_eq1}
\phi(c_i) = \max \{ \psi'(c_1), \dots, \psi'(c_i)\},
\end{equation}
\begin{equation*}
\psi'(c_i) = Avg(i, R_i),
\end{equation*}
where $R_i$ is the $k$ to the right of $i$ that minimizes $Avg(i,k)$, i.e.,
\begin{align*}
R_i = \arg \min_{k: k\ge i}  Avg(i,k).
\end{align*}
If there are multiple $k$'s that achieve the minimum, without loss of generality let $R_i$ be the maximum of them.

We prove the following properties of the intervals $[i, R_i]$. 
\begin{claim} \label{unbiased_chara_claim1}
The intervals $[1, R_1], [2, R_2], \dots, [m, R_m]$ will not partially intersect, i.e., if $i<j$ and $j\le R_i$ then $R_j \le R_i$.
\end{claim}
\begin{proof}
 We prove by contradiction. Suppose $i < j \le R_i < R_j$, then since% \sherry{show a picture here}
\begin{eqnarray*}
Avg(j, R_i) \le Avg(i,j)\\
Avg(j, R_i) \ge Avg(j, R_j)
\end{eqnarray*}
it satisfies that
$$
Avg(i, R_i) \ge Avg(i, R_j),
$$
which contradicts the definition of $R_i$. 
\end{proof}

\begin{claim} \label{unbiased_chara_claim2}
 For any $j$ that is contained in $[i, R_i]$, it always holds that $Avg(j, R_j) \le Avg(i, R_i)$, or equivalently $\psi'(c_j) \le \psi'(c_i)$. 
 \end{claim}
\begin{proof}
First by Claim~\ref{unbiased_chara_claim1}, $[j,R_j]\subseteq [i, R_i]$. Then the claim can also be proved by contradiction. Assume there exists $i < j \le R_j \le R_i$ that has $Avg(j, R_j) > Avg(i, R_i)$. Then since $R_j$ minimizes the average $Avg(j, k)$ over all $k\ge j$, we must also have $Avg(j, R_i)>Avg(i, R_i)$. But $Avg(j, R_i)>Avg(i, R_i)$ will immediately lead to $Avg(i, j-1) < Avg(i, R_i)$, as
\begin{align*}
&Avg(i, j-1) * (j-i) + Avg(j, R_i) * (R_i - j +1) \\
=& Avg(i, R_i) * ( j-i) + Avg(i, R_i) * (R_i - j +1),
\end{align*}
which is contradictory to the definition of $R_i$ that $R_i$ minimizes $Avg(i, k)$ over all $k\ge i$. 
\end{proof}

\begin{claim}\label{unbiased_chara_claim3}
If an interval $[i, R_i]$ is not contained in any other intervals. Then for any $j<i$, it holds that $Avg(j, R_j) < Avg(i, R_i)$, or equivalently $\psi'(j)<\psi'(i)$.
\end{claim}
\begin{proof}
By Claim~\ref{unbiased_chara_claim1}, $R_j<i$. We then prove by induction. First for $j=i-1$, by the definition of $R_j$, we must have $Avg(j, R_j) = Avg(j,j) < Avg(i, R_i)$, or equivalently $\psi'(j)<\psi'(i)$. Now suppose the claim holds for any $j\in [k+1, i-1]$, then for $j=k$, it must hold that $Avg(j, R_j)<\psi'(R_j + 1) \le \psi'(i)$.  Again this is because $R_j$ minimizes $Avg(j, R_j)$.
\end{proof}

With the three results above, our lemma can be proved as follows: 
\paragraph{(1):} According to definition~\eqref{unbiased_phi_eq1}, $\phi$ is non-decreasing.
\paragraph{(2):} Consider a fixed $i$, let $L$ be the minimal $j$ that has $R_j \ge i$, in other words let $[L, R_L]$ be the maximal interval that contains $i$. Then by Claim~\ref{unbiased_chara_claim2} and Claim~\ref{unbiased_chara_claim3}, 
$$ \phi(c_i) = \max \{ \psi'(c_1), \dots, \psi'(c_i)\} = \psi'(c_L) =Avg(L, R_L).$$
The same holds for any other $i \in [L, R_L]$. By Claim~\ref{unbiased_chara_claim3}, for any $j<L$, $$\phi(c_j) =  \max \{ \psi'(c_1), \dots, \psi'(c_j)\} < \psi'(c_L).$$ And similarly for any $j>R_L$, $\psi'(c_j) > \psi'(c_L)$.  Therefore $I_i = \{L, \dots, R_L\}$ and $\sum_{j\in I_i} \phi(c_j) = Avg(L, R_L) * (R_L - L +1) = \sum_{j\in I_i} \psi(c_j)$.
\paragraph{(3):} Consider a fixed $i$, let $I_i = \{L, \dots, R_L\}$ as proved in (2). If $\phi(c_i) \neq \phi(c_{i+1})$, then $i$ must be the right end point of $I_i$. Then by summing over all $I_j$ to the left of $I_i$, we get the equality $\sum_{j=1}^i \phi(c_j) = \sum_{j=1}^i \psi(c_j)$. Otherwise we have $\sum_{j=1}^{L-1} \phi(c_j) = \sum_{j=1}^{L-1} \psi(c_j)$ and $\sum_{j=L}^i \psi(c_j) = Avg(L, i) * (i-L+1) \ge Avg(L, R_L)*(i-L+1) = \sum_{j=L}^i \phi(c_j)$. 
\end{proof}

Now we prove the theorem:  the solution of 
\begin{align} \label{prog:A_i_lalalala}
A^*= &\arg \min_A \max_{\vec{z} \in [0,1]^n} \sum_{k = 1}^n \frac{z_k^2}{A(c_{(k)})}\\ 
& \quad \textrm{s.t. } \sum_{k=1}^n A(c_{(k)}) \psi(c_{(k)}) \le B\notag\\
& \quad \qquad A(c_{(k)})\ge A(c_{(k+1)}), \forall k \notag\\
& \quad  \qquad 0 \le A(c_{(k)}) \le 1, \quad \forall k \notag
\end{align}
is
$A^*(c_{(j)}) = \min\left\{ 1, \frac{\lambda}{\sqrt{\phi(c_{(j)})}} \right\}$ for all $1\le j\le n$, and $\lambda$ is chosen such that the budget constraint is satisfied with equality
$$
 \sum_{j=1}^n A(c_{(j)}) \psi(c_{(j)}) = B
$$
where $\psi$ and $\phi$ are the non-regularized and regularized virtual cost function when the underlying cost distribution is the uniform distribution over $C$. The value of $\lambda$ can be computed using binary search.

\begin{proof}
For simplicity, we write $A_k = A(c_{(k)})$ and $A^*_k = A^*(c_{(k)})$.  Notice that the objective function is an increasing function of $z_j$ for all $j$ and thus the maximum is obtained when $z_j = 1$ for all $j$.
Then the objective function of our minimization problem becomes $\sum_{j=1}^n \frac{1}{A_j}$. In the region $A_j > 0$ for all $j$, the objective function is convex. Therefore  the KKT conditions are necessary and sufficient for optimality of the primal and dual variables (see~\cite{Boyd:2004:CO:993483}). 
The Lagrangian function of the optimization problem is
\begin{align*}
L( A, \xi, \vec{\pi}, \vec{\eta}^0, \vec{\eta}^1) = &\sum_{j=1}^n \frac{1}{A_j} \\
 & + \xi \left( \sum_{j=1}^n A_j \psi(c_{(j)}) - B \right) \\
 & + \sum_{j=1}^{n-1} \pi_j(A_{j+1} - A_j)\\
 & + \sum_{j=1}^n \eta^0_j A_j + \sum_{j=1}^n \eta^1_j (A_j - 1).
\end{align*}
We prove that the optimal primal variables are $A^*(c) = \min\left\{ 1, \frac{\lambda}{\sqrt{\phi(c)}} \right\}$, where  $\lambda$ is chosen such that the budget constraint is satisfied with equality,
and the optimal dual variables are 
\begin{align*}
&\xi = \frac{1}{\lambda^2}, \\
& \pi_j = \pi_{j-1} + \xi\left(\psi(c_{(j)}) - \phi(c_{(j)})\right),  \textrm{ (Here we assume $\pi_{0} = 0$.)}\\
& \eta_j^0 = 0 \textrm{ for all } j, \\
& \eta_j^1 = \left\{ \begin{array}{ll}
				1-\xi \cdot \phi(c_{(j)}), \textrm{ if } \xi \cdot \phi(c_{(j)}) < 1,\\
				0, \textrm{ otherwise.}
			\end{array} \right.
\end{align*}
\paragraph{Primal feasibility:} We first prove the primal feasibility.
\begin{enumerate}
\item By the definition of $A^*$, the budget constraint is satisfied with equality.
%For each $j$, let $I_j$ be the set of all the $k$ that has $\phi^i(c_{(k)}) = \phi^i(c_{(j)})$. And let $\mathcal{I}$ be the set of all different $I_j$, then according to (2) in Lemma~\ref{lem:unbiased_chara_lem1} and the definition of $A^*$,
%\begin{align*}
%\sum_{j=1}^i A^*_j \psi^i(c_{(j)}) = &\sum_{I \in \mathcal{I}} \sum_{j\in I} A^*_j  \psi^i(c_{(j)}) = \sum_{I \in \mathcal{I}} A^*_I \sum_{j\in I}  \psi^i(c_{(j)}) = \sum_{I \in \mathcal{I}} A^*_I \sum_{j\in I}  \phi^i(c_{(j)})\\
%=& \sum_{j=1}^i A^*_j  \phi^i(c_{(j)}) = \sum_{j=1}^i  \min\left\{ \phi^i(c_{(j)}), \lambda \sqrt{\phi^i(c_{(j)})} \right\}  = \frac{B}{4\sqrt{n/i}},
%\end{align*}
%where $A_I$ is the the same $A_j$ for all $j\in I$.
\item By (1) in Lemma~\ref{lem:unbiased_chara_lem1}, $\phi(c)$ is a non-decreasing function of $c$. Then $A^*(c) = \min\left\{ 1, \frac{\lambda}{\sqrt{\phi(c)}} \right \}$ is a non-increasing function. 
\item It is easy to verify that $0 \le A^*(c) \le 1$. 
\end{enumerate}
\paragraph{Dual feasibility:}  By (3) in Lemma~\ref{lem:unbiased_chara_lem1}, it is easy to verify that all of the dual variables greater or equal to $0$.  
\paragraph{Stationarity:} The partial derivative of $L( A, \xi, \vec{\pi}, \vec{\eta}^0, \vec{\eta}^1)$ with respect to each $A_j$ is
\begin{align*}
\frac{\partial L( A, \xi, \vec{\eta}^0, \vec{\eta}^1) }{\partial A_j} =  &-   \frac{1}{A^2_j} +\xi  \cdot  \psi(c_{(j)}) +\pi_{j-1} - \pi_j + \eta^0_j + \eta^1_j\\
= & -  \max\left\{ 1, \  \xi \cdot \phi(c_{(j)}) \right\} +\xi  \cdot  \psi(c_{(j)}) +\pi_{j-1} - \pi_j+ \eta^1_j\\
= & - \xi \cdot \phi(c_{(j)})+\xi  \cdot  \psi(c_{(j)}) +\pi_{j-1} - \pi_j.
\end{align*}
By the definition of $\pi_j$, we have the above quantity equal to $0$. 
\paragraph{Complementary slackness:} 
\begin{enumerate}
\item The budget constraint in the primal is satisfied with equality. 
\item For all $A^*_j \neq A^*_{j+1}$, we must have $\phi(c_{(j)}) \neq \phi(c_{(j+1)})$. Then by (3) in Lemma~\ref{lem:unbiased_chara_lem1}, $\pi_j = 0$. Thus $\pi_j(A^*_{j+1} -A^*_j) = 0$ for all $j$.
\item We have $\eta_j^0 = 0$ for all $j$ and $\eta_j^1 = 0$ for all $A^*_j = \frac{1}{\sqrt{\xi \cdot \phi(c_{(j)})}} <1$.
\end{enumerate}

As we have proved, the optimal allocation rule is monotone non-increasing.  Since the sum $\sum_{j=1}^n A^*_j \psi(c_{(j)})$ is an increasing function of $\lambda$, we can perform binary search to find the right value of $\lambda$ such that the sum equals the budget. Moreover, we can reduce the search space to $|C|$ by searching the critical point $c^*\in C$ that has $\frac{\lambda}{\sqrt{ \phi(c) }} \ge 1$ for $c\le c^*$ and $\frac{\lambda}{\sqrt{ \phi(c) }} \le 1$ for $c\ge c^*$.

This completes the proof of Lemma~\ref{lem:unbiased_chrz}.
\end{proof}

\subsection{Optimal Mechanism for Adjacent Cost Sets} \label{app:unbiased_adj}
In this section, we show that the optimal solution $A$ will not change a lot if we slightly modify the optimization problem~\eqref{prog:A_i_lalalala}, more specifically, if the set of costs contains one more element.

We first show that if the set of costs contains one more element, the regularized virtual costs function $\phi(c)$ of the uniform distribution over this set will change no more than a factor of $2$, which is mainly because the virtual costs $\psi(c_i) = i\cdot c_i - (i-1)c_{i-1}$ change at most  by a factor of $2$.
\begin{lemma} \label{lem:unbiased_adj_lem1}
Let $T_1$ and $T_2$ be two costs sets that only differ in one element $c_k$. More specifically, suppose $T_2 = \{c_1, \dots, c_m\}$ with $c_1\le \cdots \le c_m$ and $T_1 = \{c_1, \dots, c_{k-1}, c_{k+1}, \dots, c_m\}$. Let $\phi_1(c)$ be the regularized virtual cost function (Definition~\ref{def:regularized_virtual_costs}) of the discrete uniform distribution over $T_1$ and let $\phi_2(c)$ be the regularized virtual cost function of the discrete uniform distribution over $T_2$. Then it holds that for all $i\neq k$,
    $$
    \frac{1}{2} \cdot \phi_1(c_i) \le \phi_2(c_i) \le 2\cdot \phi_1(c_i).
    $$
\end{lemma}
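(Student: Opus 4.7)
The plan is to express virtual costs explicitly under the uniform distribution, match intervals of indices between $T_1$ and $T_2$ so that averages of $\psi_1$ and $\psi_2$ over matched intervals are comparable, and finally lift the comparison to $\phi_1,\phi_2$ via the min-max characterization of the regularization.

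First, for the uniform distribution on a sorted set, the virtual cost has the closed form $\psi(c_i) = i c_i - (i-1) c_{i-1}$ (with $c_0 := 0$), which yields the telescoping identity $\sum_{j=a}^{b} \psi(c_j) = b c_b - (a-1) c_{a-1}$ and hence $Avg(a,b) = \frac{b c_b - (a-1) c_{a-1}}{b-a+1}$. Writing $T_2 = \{c_1,\dots,c_m\}$ and indexing $T_1$ by its own positions, I would match an interval $[a,b]$ of $T_1$-positions with an interval of $T_2$-positions by taking $[a,b]$ if $b < k$; $[a+1, b+1]$ if $a \ge k$; and $[a, b+1]$ if $a < k \le b$ (the \emph{straddling} case, where the $T_2$-interval also includes the re-inserted element $c_k$). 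Using the telescoping formula, the matched $T_1$- and $T_2$-averages admit closed forms in the same two endpoint costs, differing only by the length of the averaging window. A direct calculation then shows that matched averages are always within a multiplicative factor of $2$: in the non-straddling cases this follows from the pointwise bound $\psi_1(c'_j) \le \psi_2(c_{j+1}) \le 2\psi_1(c'_j)$ (since $\psi_2 - \psi_1 = c_{j+1}-c_j \le c_{j+1} \le \psi_1$), and in the straddling case from the observation that enlarging the averaging window by one term changes an average of virtual costs by at most a factor of $2$, as can be read off the closed-form expression.

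Second, I would lift this to the regularized functions. By the definition of $\phi$ together with Lemma~\ref{lem:unbiased_chara_lem1}, one has $\phi_s(c_i) = \max_{j \le i_s} \min_{l \ge j} Avg_s(j,l)$, where $i_s$ denotes the position of $c_i$ in $T_s$. To prove $\phi_2(c_i) \le 2\phi_1(c_i)$, I would take any interval $[j,l]$ with $j \le i_1$ that witnesses $\phi_1(c_i)$, apply the matching above to produce an interval $[j',l']$ in $T_2$ with $j' \le i_2$, and observe that $\min_{l' \ge j'} Avg_2(j',l') \le 2\,Avg_1(j,l)$; taking the max over $j$ yields the bound. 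The reverse inequality $\phi_1(c_i) \le 2\phi_2(c_i)$ follows by the symmetric argument, matching $T_2$-intervals back to $T_1$-intervals (collapsing the straddling ones by deleting the index $k$).

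The main obstacle is the bookkeeping in the straddling case, where the $T_2$-interval is one longer than the corresponding $T_1$-interval and must absorb the re-inserted $c_k$. Ensuring that the matched interval still contains the target position $i_s$ requires some care when $i$ is close to $k$, and is exactly the reason the lemma excludes $i = k$. The factor $2$ in the statement is tight precisely because of this straddling case with a one-element $T_1$-window, where adjoining $c_k$ doubles the averaging length.
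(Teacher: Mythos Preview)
Your high-level plan---closed-form virtual costs, matched intervals with comparable averages, then a min--max lift to $\phi$---is exactly the paper's approach. But the execution has a real gap at the boundary, and the lifting step is stated in the wrong direction.

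\textbf{The boundary case $a=k$ breaks your pointwise bound.} Your ``non-straddling'' case $a\ge k$ claims $\psi_2(c_{j+1})-\psi_1(c'_j)=c_{j+1}-c_j$ for all $j\ge k$. This is correct for $j\ge k+1$, but at $j=k$ the predecessor of $c'_k=c_{k+1}$ in $T_1$ is $c_{k-1}$, not $c_k$, so
\[
\psi_2(c_{k+1})-\psi_1(c_{k+1}) \;=\; (k{+}1)c_{k+1}-kc_k \;-\;\bigl(kc_{k+1}-(k{-}1)c_{k-1}\bigr)\;=\;c_{k+1}-c_k-(k{-}1)(c_k-c_{k-1}),
\]
which can be arbitrarily negative. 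Concretely, take $k=3$, $c_1=c_2=0$, $c_3=c_4=1$, $m=4$: then $\psi_1(c_4)=3$ while $\psi_2(c_4)=1$, so neither the pointwise bound nor the matched-average bound $Avg_2([k{+}1,b{+}1])\ge\tfrac12 Avg_1([k,b])$ holds. Consequently your matching does \emph{not} give $\psi'_1(c_{k+1})\le 2\psi'_2(c_{k+1})$, and the lift to $\phi_1(c_i)\le 2\phi_2(c_i)$ fails for $i\ge k+1$ if it goes through $j=k$. The paper repairs this with a separate case: it shows $\psi_2(c_k)+\psi_2(c_{k+1})=\psi_1(c_{k+1})+c_{k+1}\in[\psi_1(c_{k+1}),\,2\psi_1(c_{k+1})]$, and from this deduces $\tfrac12\psi'_1(c_{k+1})\le \max\{\psi'_2(c_k),\psi'_2(c_{k+1})\}\le 2\psi'_1(c_{k+1})$. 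Since $\phi_2$ is the running maximum of $\psi'_2$, comparing against the \emph{pair} $\{k,k+1\}$ rather than $k+1$ alone is enough, and that is the missing idea in your sketch.

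\textbf{The min--max lift is stated backwards.} To prove $\phi_2(c_i)\le 2\phi_1(c_i)$ you must upper-bound $\psi'_2(c_j)$ for \emph{every} $T_2$-starting-point $j\le i_2$; exhibiting a single matched interval from a $\phi_1$-witness only gives a lower bound on $\phi_2$. The correct quantifier order is: for each such $j$, find a $T_1$-start $j'\le i_1$ so that for the $l'$ minimizing $Avg_1(j',l')$ there is a matched $l$ with $Avg_2(j,l)\le 2\,Avg_1(j',l')$. This may be a writing slip, but as stated the argument does not close.
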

\begin{proof}
According to Definition~\ref{def:virtual_costs}, the (non-regularized) virtual costs are
\begin{align*}
&\psi_1(c_{1}) = c_{1}, \ \dots,\  \psi_1(c_{k-1}) = (k-1) c_{k-1} - (k-2) c_{k-2}, \\
&\psi_1(c_{k+1}) = k \cdot c_{k+1} - (k-1) c_{k-1}, \\
&\psi_1(c_{k+2}) = (k+1) c_{k+2} - k \cdot c_{k+1}, \ \dots, \ \psi_1(\overline{C}) = i \cdot \overline{C} - (i-1) c_{i}
\end{align*}
and
\begin{align*}
&\psi_2(c_{1}) = c_{1}, \ \dots, \  \psi_2(c_{k-1}) = (k-1) c_{k-1} - (k-2) c_{k-2}, \\
&\psi_2(c_{k}) = k\cdot c_{k} - (k-1) c_{k-1}, \ \psi_2(c_{k+1}) = (k+1) \cdot c_{k+1} - k \cdot  c_{k}, \\
&\psi_2(c_{k+2}) = (k+2) c_{k+2} - (k+1) \cdot c_{k+1}, \ \dots, \  \psi_2(\overline{C}) = (i+1) \cdot \overline{C} - i \cdot c_{i}
\end{align*}
\renewcommand{\labelenumi}{(\roman{enumi})}
\begin{enumerate}
\item For $i,j<k$, $Avg_1(i,j) = Avg_2(i,j)$. 
\item For $Avg_2(i,j)$ that has $i<k$ and $j=k$, we have 
\begin{align*}
Avg_1(i, k-1) - Avg_2(i,k) = &  \frac{(k-1) \cdot c_{k-1} - (i-1) c_{i-1}}{k-i} - \frac{k \cdot c_k - (i-1) c_{i-1}}{k-i+1} \\
 =& \frac{(i-1)(c_k - c_{i-1})}{(k-i) (k-i+1)} \\
 = & \frac{(i-1)(c_k - c_{i-1})}{k-i+1} \cdot \frac{1}{k-i}\\
 \in & \left[ 0, Avg_2(i,k) \right]
\end{align*}
Thus $ \frac{1}{2} \cdot Avg_1(i, k-1) \le Avg_2(i,k) \le  Avg_1(i, k-1)$.
\item For $i < k$ and $j\ge k+1$, we have $Avg_1(i,j) = \frac{(j-1)c_j - (i-1) c_{i-1}}{j-i}$ and $Avg_2(i,j) = \frac{j \cdot c_j - (i-1) c_{i-1}}{j-i+1}$. Then
$$
Avg_1(i,j) - Avg_2(i,j) = \frac{(i-1)(c_j - c_{i-1})}{(j-i)(j-i+1)} \le  \frac{(i-1)(c_j - c_{i-1})}{j-i} \cdot \frac{1}{j-i+1} \in \left[0, Avg_1(i,j) \cdot \frac{1}{2} \right].
$$
Thus $\frac{1}{2} Avg_1(i,j) \le Avg_2(i,j) \le Avg_1(i,j)$.

\item For $i,j \ge k+2$, we have $Avg_1(i,j) = \frac{(j-1)c_j - (i-2) c_{i-1}}{j-i+1}$ and $Avg_2(i,j) = \frac{j \cdot c_j - (i-1) c_{i-1}}{j-i+1}$. Then 
$$
 Avg_2(i,j) - Avg_1(i,j) = \frac{c_j - c_{i-1}}{j-i+1} \le  \frac{c_j }{j-i+1} \le \frac{(j-1)c_j - (i-2) c_{i-1}}{j-i+1} \in \left[0, Avg_1(i,j)\right]. 
$$
Thus $Avg_1(i,j) \le Avg_2(i,j) \le 2\cdot Avg_1(i,j)$.
\end{enumerate}
According to the above four properties of the $Avg$ function and the definition of $\psi'(c_i)$,  we have 
\begin{align} \label{eqn:unbiased_adj_eq1}
\frac{1}{2} \cdot \psi'_1(c_i) \le \psi'_2(c_i) \le 2\psi'_1(c_i), \textrm{ for } i <k \textrm{ and } i\ge k+2.
\end{align}
To compare $\psi'_1(c_i)$ and $\psi'_2(c_i)$ for the case $i = k+1$, we prove the follows
\begin{description}
\item[$\bullet $ $\boldsymbol{\mathbf{\max\{ \psi'_2(c_k), \psi'_2(c_{k+1})\} \ge \frac{1}{2}\psi'_1(c_{k+1}):}}$] Observe that 
$$
\psi_2(c_k) + \psi_2(c_{k+1}) = (k+1) c_{k+1} - (k-1) c_{k-1} \ge k \cdot c_{k+1} - (k-1) c_{k-1} = \psi_1(c_{k+1}).
$$
Therefore $Avg_2(k, k+1) \ge \frac{1}{2} Avg_1(k+1, k+1)$ and at least one of the $\psi_2(c_k)$ and $\psi_2(c_{k+1})$ must be no less than $\psi_1(c_{k+1})/2$. If $\psi_2(c_k) \ge \psi_1(c_{k+1})/2$, then 
\begin{align*}
&Avg_2(k, k) \ge \frac{1}{2} Avg_1(k+1, k+1)\\
&Avg_2(k, k+1) \ge \frac{1}{2} Avg_1(k+1, k+1)
\end{align*}
In addition, by (iv) we have $Avg_2(k+2,j) \ge Avg_1(k+2,j)$ for all $j\ge k+2$, so 
\begin{align*}
& Avg_2(k, j) \ge \frac{1}{2} Avg_1(k+1, j)
\end{align*}
for all $j\ge k+2$. Then 
$$
\psi'_2(c_k) = \min_{j \ge k} Avg_2(k,j) \ge \frac{1}{2} \min_{j \ge k+1} Avg_1(k+1,j) = \frac{1}{2}\psi'_1(c_{k+1}).
$$
Similarly for the other case $\psi_2(c_{k+1}) \ge \psi_1(c_{k+1})/2$ we can get $\psi'_2(c_{k+1}) \ge \frac{1}{2}\psi'_1(c_{k+1})$.

\item[$\bullet $ $\boldsymbol{\mathbf{\max\{ \psi'_2(c_k), \psi'_2(c_{k+1})\} \le 2\psi'_1(c_{k+1}):}}$] Observe that
\begin{align*}
\psi_2(c_k) + \psi_2(c_{k+1}) = &(k+1) \cdot c_{k+1} - (k-1) c_{k-1}\\
= &c_{k+1}+ (k \cdot c_{k+1} - (k-1) c_{k-1}) \\
\le & 2(k \cdot c_{k+1} - (k-1) c_{k-1}) \\
= & 2\psi_1(c_{k+1}).
\end{align*}
Thus both $\psi_2(c_k)$ and $\psi_2(c_{k+1})$ is no greater than $2\psi_1(c_{k+1})$. Furthermore by (iv), $Avg_2(k+2, j) \le 2 Avg_1(k+2, j)$ for all $j\ge k+2$, it holds that
\begin{align*}
Avg_2(k, j) \le 2 Avg_1(k+1, j), \quad Avg_2(k+1, j)\le 2 Avg_1(k+1, j)
\end{align*}
for all $j\ge k+1$. Thus 
\begin{align*}
&\psi'_2(c_k) = \min_{j\ge k} Avg_2(k, j) \le \min_{j\ge k+1} 2 Avg_1(k+1, j) = 2\psi'_1(c_{k+1})\\
&\psi'_2(c_{k+1}) = \min_{j\ge k+1} Avg_2(k+1, j) \le \min_{j\ge k+1} 2 Avg_1(k+1, j) = 2\psi'_1(c_{k+1})
\end{align*}
\end{description}
Combining the above two inequalities with~\eqref{eqn:unbiased_adj_eq1}, 
\begin{align*}
 \phi_2(c_i) = \max_{j\le i} \psi'_2(c_j) \le 2\cdot \max_{j\le i} \psi'_1(c_j) = 2\phi_1(c_i), \textrm{ for all } i\neq k\\
 \phi_1(c_i) = \max_{j\le i} \psi'_1(c_j) \le 2\cdot \max_{j\le i} \psi'_2(c_j) = 2\phi_2(c_i), \textrm{ for all } i\neq k
\end{align*}
which completes the first part of the proof. 
\end{proof}
Based on the above analysis, we further prove the follows.
\begin{lemma} \label{lem:unbiased_adj_lem2}
Let $T_1$ and $T_2$ be two costs sets that have $T_2 = \{c_1, \dots, c_m\}$ with $c_1\le \cdots \le c_m$ and $T_1 = \{c_1, \dots, c_{k-1}, c_{k+1}, \dots, c_m\}$. Let $\phi_1(c)$ be the regularized virtual cost function (Definition~\ref{def:regularized_virtual_costs}) of the discrete uniform distribution over $T_1$ and let $\phi_2(c)$ be the regularized virtual cost function of the discrete uniform distribution over $T_2$. Then it holds that for all number $K\ge 0$, 
$$
 \sum_{c \in T_1} \min\left\{ \phi_1(c),  K  \sqrt{\phi_1(c)}\right\} \le \sum_{c \in T_2} \min\left\{2 \phi_2(c), K \sqrt{2 \phi_2(c)}\right\} 
$$
$$
\sum_{c \in T_2} \min\left\{\phi_2(c), K \sqrt{\phi_2(c)}\right\}  \le  \sum_{c \in T_1} \min\left\{2 \phi_1(c),  2 K \sqrt{ \phi_1(c)}\right\}.
$$
\end{lemma}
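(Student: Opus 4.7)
The plan is to derive both inequalities from Lemma~\ref{lem:unbiased_adj_lem1} by exploiting two features of the envelope function $g_K(x) := \min\{x, K\sqrt{x}\}$: it is non-decreasing, and it is concave with $g_K(0) = 0$, so $g_K(\alpha x) \le \alpha\, g_K(x)$ for every $\alpha \ge 1$, and by Jensen's inequality $g_K(a)+g_K(b) \le 2\, g_K((a+b)/2)$. One also checks the identities $\min\{2\phi_2(c), K\sqrt{2\phi_2(c)}\} = g_K(2\phi_2(c))$ and $\min\{2\phi_1(c), 2K\sqrt{\phi_1(c)}\} = 2\, g_K(\phi_1(c))$, so the two inequalities to prove become
$$
\sum_{c\in T_1} g_K(\phi_1(c)) \;\le\; \sum_{c\in T_2} g_K(2\phi_2(c)), \qquad \sum_{c\in T_2} g_K(\phi_2(c)) \;\le\; 2 \sum_{c\in T_1} g_K(\phi_1(c)).
$$

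The first inequality is the easy direction: $T_1 \subseteq T_2$, and Lemma~\ref{lem:unbiased_adj_lem1} gives $\phi_1(c) \le 2\phi_2(c)$ for every $c \in T_1$, so monotonicity of $g_K$ yields $g_K(\phi_1(c)) \le g_K(2\phi_2(c))$ term by term, and the extra term $g_K(2\phi_2(c_k))$ for $c_k \in T_2 \setminus T_1$ on the right is non-negative.

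For the second inequality, split $\sum_{c\in T_2} g_K(\phi_2(c)) = g_K(\phi_2(c_k)) + \sum_{c\in T_1} g_K(\phi_2(c))$. For every $c \in T_1 \setminus \{c_{k+1}\}$, Lemma~\ref{lem:unbiased_adj_lem1} together with $g_K(2x) \le 2 g_K(x)$ gives $g_K(\phi_2(c)) \le 2\, g_K(\phi_1(c))$. What remains is the pairing bound
$$
g_K(\phi_2(c_k)) + g_K(\phi_2(c_{k+1})) \;\le\; 2\, g_K(\phi_1(c_{k+1})),
$$
since adding it to the previous inequalities produces exactly $2\sum_{c\in T_1} g_K(\phi_1(c))$. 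Assuming $k < m$ (which always holds in the intended applications, since both $T_1$ and $T_2$ contain the upper bound $\overline{C}$, so $c_k$ is never the maximum), Jensen's inequality reduces the pairing bound to the core claim
$$
\phi_2(c_k) + \phi_2(c_{k+1}) \;\le\; 2\phi_1(c_{k+1}).
$$

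The hard part is establishing this core claim. The starting point is the telescoping identity $\psi_2(c_k) + \psi_2(c_{k+1}) = (k+1)c_{k+1} - (k-1)c_{k-1} = \psi_1(c_{k+1}) + c_{k+1}$, which together with $c_{k+1} \le \psi_1(c_{k+1}) = k c_{k+1} - (k-1) c_{k-1}$ proves the inequality at the non-regularized level. Lifting from $\psi$ to $\phi$ proceeds by locating the maximal $\phi_2$-constant interval $[L,R]$ that contains position $k$ (and possibly $k+1$), computing $\mathrm{Avg}_2(L,R) = (R c_R - (L-1)c_{L-1})/(R-L+1)$, and comparing it against $\mathrm{Avg}_1(L, R-1) = ((R-1)c_R - (L-1)c_{L-1})/(R-L)$, the analogous average in $T_1$ with position $k$ removed. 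A short calculation using the sort order $c_{L-1} \le c_R$ shows $\mathrm{Avg}_2(L,R) \le \mathrm{Avg}_1(L,R-1) \le \phi_1(c_{k+1})$, yielding $\phi_2(c_k) \le \phi_1(c_{k+1})$ and, when $c_{k+1}$ lies in the same interval, $\phi_2(c_{k+1}) \le \phi_1(c_{k+1})$ as well; a brief case split handles the sub-case in which $k$ and $k+1$ lie in different $\phi_2$-constant intervals. This step mirrors the averaging computations used in the proof of Lemma~\ref{lem:unbiased_adj_lem1}, but requires extra care to track how the ironing intervals reorganize across the gap at $c_k$.
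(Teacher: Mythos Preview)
Your overall architecture matches the paper's: both establish the first inequality directly from $\phi_1(c_i)\le 2\phi_2(c_i)$ (Lemma~\ref{lem:unbiased_adj_lem1}), and for the second use $\phi_2(c_i)\le 2\phi_1(c_i)$ for $i\ne k$ together with the pairing $\psi_2(c_k)+\psi_2(c_{k+1})\le 2\psi_1(c_{k+1})$. Your reduction of the pairing step to the ``core claim'' $\phi_2(c_k)+\phi_2(c_{k+1})\le 2\phi_1(c_{k+1})$ via concavity of $g_K$ is a clean explicit step that the paper leaves implicit (the paper only records the $\psi$-level inequality and then writes ``by our definition of regularized virtual cost function, it always holds that\dots'').

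However, your justification of the core claim has a real gap. In the sub-case where $k$ and $k+1$ lie in \emph{different} $\phi_2$-constant intervals, you cannot bound each of $\phi_2(c_k),\phi_2(c_{k+1})$ separately by $\phi_1(c_{k+1})$: take $T_2=\{1,2,100\}$ with $k=2$, where $\phi_2(c_3)=296>199=\phi_1(c_3)$, yet the sum $3+296=299\le 398=2\phi_1(c_3)$ still holds. So the ``brief case split'' is not brief; the two terms must be handled jointly and your sketch does not say how. Even in the same-interval case, the asserted step $\mathrm{Avg}_1(L,R{-}1)\le\phi_1(c_{k+1})$ is not obvious: $\phi_1(c_{k+1})$ equals $\mathrm{Avg}_1$ over the maximal $\phi_1$-constant interval through the $T_1$-position of $c_{k+1}$, and there is no a~priori reason that interval coincides with $[L,R{-}1]$ after removing $c_k$. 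In short, your proposal is already more detailed than the paper's proof, but the lift from the $\psi$-level pairing to the $\phi$-level core claim is the piece still missing in both.
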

\begin{proof}
By Lemma~\ref{lem:unbiased_adj_lem1}, for $i\neq k$, $\phi_1(c_i) \le 2 \phi_2(c_i)$. Thus the first inequality  holds as follows,
\begin{align*}
    \sum_{c \in T_1} \min\left\{ \phi_1(c),  K  \sqrt{\phi_1(c)}\right\} \le \sum_{c \in T_2} \min\left\{2\phi_2(c), K \sqrt{2\phi_2(c)}\right\}.
\end{align*}
By Lemma~\ref{lem:unbiased_adj_lem1}, for $i\neq k$, $\phi_2(c_i) \le 2 \phi_1(c_i)$. In addition, we have 
\begin{align*}
\psi_2(c_k) + \psi_2(c_{k+1}) = & (k+1) c_{k+1} - (k-1) c_{k-1} \\
=& c_{k+1} + k \cdot c_{k+1} - (k-1) c_{k-1}\\
\le& 2(k \cdot c_{k+1} - (k-1) c_{k-1}) \\
= & 2 \psi_1(c_{k+1}).
\end{align*}
By our definition of regularized virtual cost function, it always holds that
\begin{align*}
    \sum_{c \in T_2} \min\left\{ \phi_2(c),  K  \sqrt{\phi_2(c)}\right\}  \le  \sum_{c \in T_1} \min\left\{2 \phi_1(c), 2 K  \sqrt{ \phi_1(c)}\right\}.
\end{align*}
\end{proof}
The following lemma will be used in our proof of the main theorem.
\begin{lemma} \label{lem:unbiased_adj}
Let $T_1$ and $T_2$ be two costs sets that have $T_2 = \{c_1, \dots, c_k, \dots, c_m\}$ with $c_1\le \cdots \le c_m$ and $T_1 = \{c_1, \dots, c_{k-1}, c_{k+1}, \dots, c_m\}$. Let $B$ be an arbitrary non-negative number. We use $M^{OPT}(T, B)$ to represent the optimal unbiased estimator mechanism defined in~\eqref{prog:unbiased_roundi}, when the cost set is $T$ and the budget is $B$. Then define 
\begin{itemize}
    \item $(A_1,P_1) = M^{OPT}(T_1, B/2)$, i.e., $(A_1,P_1)$ is the optimal mechanism when the cost set is $T_1$ and the budget is $B/2$.
    \item $(A_2,P_2) = M^{OPT}(T_2, B)$, i.e., $(A_2,P_2)$ is the optimal mechanism when the cost set is $T_2$ and the budget is $B$.
    \item $(A_3, P_3) = M^{OPT}(T_2, B/4)$, i.e., $(A_3,P_3)$ is the optimal mechanism when the cost set is $T_2$ and the budget is $B/4$.
\end{itemize}  
Then we have
\begin{align*}
& A_1(c_i) \le A_2(c_i) \textrm{ for all } i\ge k+1, \quad A_1(c_{k+1})P_1(c_{k+1}) \le A_2(c_{k+1}) P_2(c_{k+1}),\\
& A_1(c_i) \ge A_3(c_i) \textrm{ for all } i\ge k+1, \quad A_1(c_{k+1})P_1(c_{k+1}) \ge A_3(c_{k+1}) P_3(c_{k+1}).
\end{align*}
\end{lemma}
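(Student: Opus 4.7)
The plan is to invoke the characterization from Theorem~\ref{lem:unbiased_chrz}, which gives each of the three optima a closed form $A_j(c) = \min\{1, \lambda_j/\sqrt{\phi_j(c)}\}$, where $\phi_j$ is the regularized virtual cost function on the relevant cost set and $\lambda_j$ is the multiplier that saturates the budget. Because the optimal $A$ is constant on intervals where $\phi$ is constant (property~(2) of Lemma~\ref{lem:unbiased_chara_lem1}), each budget constraint can be rewritten as a sum involving only $\phi$:
$$\sum_{c \in T_1}\min\{\phi_1(c), \lambda_1\sqrt{\phi_1(c)}\} = B/2,\quad \sum_{c \in T_2}\min\{\phi_2(c), \lambda_j\sqrt{\phi_2(c)}\} \in \{B, B/4\}\ \text{for}\ j=2,3.$$
So the lemma reduces to pointwise comparisons of the pairs $(\lambda_j, \phi_j)$.

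The first step is to bound the multipliers against one another using Lemma~\ref{lem:unbiased_adj_lem2}. Plugging $K = \lambda_1$ into its second inequality gives $\sum_{T_2}\min\{\phi_2, \lambda_1\sqrt{\phi_2}\} \le 2(B/2) = B$, so $\lambda = \lambda_1$ is already feasible for mechanism~2, forcing $\lambda_2 \ge \lambda_1$. Plugging $K = \lambda_1$ into its first inequality and simplifying via the identity $\min\{2a, K\sqrt{2a}\} = 2\min\{a, (K/\sqrt{2})\sqrt{a}\}$ yields $B/4 \le \sum_{T_2}\min\{\phi_2, (\lambda_1/\sqrt{2})\sqrt{\phi_2}\}$, and comparing with mechanism~3's budget equation gives $\lambda_3 \le \lambda_1/\sqrt{2}$. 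Combined with the pointwise bound $\phi_2(c_i) \ge \phi_1(c_i)/2$ from Lemma~\ref{lem:unbiased_adj_lem1} (valid for $i \ne k$), the chain $\lambda_3/\sqrt{\phi_2(c_i)} \le (\lambda_1/\sqrt{2})/\sqrt{\phi_1(c_i)/2} = \lambda_1/\sqrt{\phi_1(c_i)}$ immediately yields $A_3(c_i) \le A_1(c_i)$ for all $i \ge k+1$.

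The $A_1 \le A_2$ direction is the main obstacle: the naive combination $\lambda_2 \ge \lambda_1$ with $\phi_2(c_i) \le 2\phi_1(c_i)$ loses a factor of $\sqrt{2}$. To close this gap I would build an explicit feasible extension of $A_1$ to $T_2$. Set $\tilde A_1(c_k) = A_1(c_{k+1})$ and $\tilde A_1(c) = A_1(c)$ for $c \in T_1$; monotonicity is preserved. A direct expansion of $\sum_{T_2}\tilde A_1(c)\psi_2(c)$, using the cancellation identity $\psi_2(c_k) + \psi_2(c_{k+1}) - \psi_1(c_{k+1}) = c_{k+1}$ together with the telescoping step differences $\psi_2(c_i) - \psi_1(c_i) = c_i - c_{i-1}$ for $i \ge k+2$, collapses the extra budget over $B/2$ to exactly $A_1(c_{k+1})c_{k+1} + \sum_{i \ge k+2}A_1(c_i)(c_i - c_{i-1}) = A_1(c_{k+1})P_1(c_{k+1})$ via the Myerson formula of Lemma~\ref{myerson}. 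Since $\sum_{T_1}A_1(c)P_1(c) = \sum_{T_1}A_1(c)\psi_1(c) = B/2$ by Lemma~\ref{lem:expected_payment} and every summand is non-negative, the single term $A_1(c_{k+1})P_1(c_{k+1}) \le B/2$, so $\tilde A_1$ uses $\psi_2$-budget at most $B$ on $T_2$. Given this feasibility certificate together with the fact that $\tilde A_1$ inherits the closed-form $\min\{1, \lambda_1/\sqrt{\phi_1}\}$ on $T_1$, the pointwise inequality $A_1(c_i) \le A_2(c_i)$ for $i \ge k+1$ should follow from optimality of $A_2$ and its universal multiplicative structure via a coordinate-wise KKT/exchange argument, possibly after separating the cases where the cap $A = 1$ is active at $c_i$.

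Finally, the payment inequalities follow termwise from the allocation inequalities via the Myerson representation $A(c_{k+1})P(c_{k+1}) = A(c_{k+1})c_{k+1} + \sum_{j > k+1}A(c_j)(c_j - c_{j-1})$, whose index structure above position $k+1$ is identical in both $T_1$ and $T_2$; so $A_1(c_i) \le A_2(c_i)$ (resp.\ $\ge A_3(c_i)$) on every $i \ge k+1$ transfers term-by-term to the payment bounds. The hardest step remains converting feasibility of $\tilde A_1$ on $T_2$ into a genuine pointwise dominance by $A_2$, since the direct product of the Lemma~\ref{lem:unbiased_adj_lem1}--\ref{lem:unbiased_adj_lem2} bounds is off by $\sqrt{2}$; the explicit feasibility certificate followed by the exchange argument is the key device I would lean on.
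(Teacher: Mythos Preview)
Your handling of the $A_3 \le A_1$ direction and of the payment comparisons is essentially the paper's argument (the paper phrases the multiplier step as $\lambda_1 \ge \sqrt{2}\,\lambda_3$, obtained from the first inequality of Lemma~\ref{lem:unbiased_adj_lem2} with $K=\sqrt{2}\,\lambda_3$, which is equivalent to your $K=\lambda_1$ computation). The divergence is in the $A_1 \le A_2$ direction, and there your plan has a real gap.

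The paper does not stop at $\lambda_2 \ge \lambda_1$. It treats the two directions symmetrically: just as the first inequality of Lemma~\ref{lem:unbiased_adj_lem2} yields $\lambda_1 \ge \sqrt{2}\,\lambda_3$, the second inequality is used to obtain $\lambda_2 \ge \sqrt{2}\,\lambda_1$. With that stronger multiplier bound in hand, combining with $\phi_2(c_i) \le 2\phi_1(c_i)$ from Lemma~\ref{lem:unbiased_adj_lem1} gives
\[
\frac{\lambda_1}{\sqrt{\phi_1(c_i)}} \;\le\; \frac{\lambda_2/\sqrt{2}}{\sqrt{\phi_2(c_i)/2}} \;=\; \frac{\lambda_2}{\sqrt{\phi_2(c_i)}},
\]
which is exactly the pointwise comparison you want. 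In other words, the ``$\sqrt{2}$ obstacle'' you flag is precisely what the paper's multiplier bound absorbs; the paper's route is to sharpen $\lambda_2 \ge \lambda_1$ to $\lambda_2 \ge \sqrt{2}\,\lambda_1$, not to change strategy.

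Your alternative---extend $A_1$ to $\tilde A_1$ on $T_2$, verify $\tilde A_1$ spends $\psi_2$-budget at most $B$, and then invoke optimality of $A_2$---does not close the gap as written. The budget identity $\sum_{T_2}\tilde A_1\psi_2 = B/2 + A_1(c_{k+1})P_1(c_{k+1}) \le B$ is correct and neat, but feasibility of $\tilde A_1$ only yields the aggregate inequality $\sum_{c\in T_2}1/A_2(c) \le \sum_{c\in T_2}1/\tilde A_1(c)$. It does not force $A_2(c_i) \ge \tilde A_1(c_i)$ coordinate by coordinate: $\tilde A_1$ has the shape $\min\{1,\lambda_1/\sqrt{\phi_1}\}$ while $A_2$ has the shape $\min\{1,\lambda_2/\sqrt{\phi_2}\}$, and with two \emph{different} regularized profiles there is no general KKT or exchange principle that converts a feasibility certificate under one profile into pointwise domination by the optimum under the other. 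To finish along your line you would still have to come back to a comparison of the multipliers---which is exactly what the paper does directly.
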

\begin{proof}
Let $\phi_1(c)$ be the regularized virtual cost function (Definition~\ref{def:regularized_virtual_costs}) of the discrete uniform distribution over $T_1$ and let $\phi_2(c)$ be the regularized virtual cost function of the discrete uniform distribution over $T_2$. Then according to Lemma~\ref{lem:unbiased_chrz}, 
$$
A_i(c) = \min\left\{ 1, \frac{\lambda_i}{\sqrt{\phi_i(c)}}\right\}
$$
and the value of  $\lambda_1$, $\lambda_2$ and $\lambda_3$ should satisfy
$$
\sum_{c \in T_1} \min\left\{\phi_1(c), \lambda_1  \sqrt{\phi_1(c)}\right\} = B/2,
$$
$$
 \sum_{c \in T_2} \min\left\{\phi_2(c), \lambda_2  \sqrt{\phi_2(c)}\right\} = B,
$$
$$
 \sum_{c \in T_2} \min\left\{\phi_2(c), \lambda_3  \sqrt{\phi_2(c)}\right\} = B/4,
$$
We first compare $A_1$ and $A_3$. By the first inequality of Lemma~\ref{lem:unbiased_adj_lem2}, 
$$
 \sum_{c \in T_1} \min\left\{ \phi_1(c),  \sqrt{2} \cdot \lambda_3 \sqrt{\phi_1(c)}\right\} \le \sum_{c \in T_2} \min\left\{2 \phi_2(c), \sqrt{2} \cdot \lambda_3 \sqrt{2 \phi_2(c)}\right\}  \le 2\cdot B/4 = B/2.
$$
Because the value of $\lambda_1$ is optimal when the cost set is $T_1$ and the budget is $B/2$, it should satisfy that  $\lambda_1 \ge \sqrt{2} \cdot \lambda_3$. And since $\phi_1(c_i) \le 2 \phi_2(c_i)$ for all $i\ge k+1$ according to Lemma~\ref{lem:unbiased_adj_lem1}, we get
$$
A_3(c_i) = \min \left\{1, \frac{\lambda_3}{\sqrt{ \phi_2(c_i) }} \right \} \le \min \left\{ 1, \frac{\lambda_1/\sqrt{2}}{\sqrt{\phi_1(c_i)/2}}\right\} = A_1(c_i)
$$
for all $i\ge k+1$. 
Similarly we have
$$
A_1(c_i) = \min \left\{1, \frac{\lambda_1}{\sqrt{ \phi_1(c_i) }} \right \} \le \min \left\{ 1, \frac{\lambda_2/\sqrt{2}}{\sqrt{\phi_2(c_i)/2}}\right\} = A_2(c_i)
$$
for all $i\ge k+1$. 

The expected payment of the two allocation rules can be compared according to the definition in Lemma~\ref{myerson},
\begin{align*}
A_1(c_{k+1}) P_1(c_{k+1}) = & A_1(c_{k+1})c_{k+1} +  \int_{c_{k+1}}^{\overline{C}} A_1(v) \, dv \\
\le & A_2(c_{k+1})c_{k+1} +  \int_{c_{k+1}}^{\overline{C}} A_2(v) \, dv = A_2(c_{k+1}) P_2(c_{k+1}).
\end{align*}
Here we use integrals to equivalently represent the payment rule as in Lemma~\ref{lem:true_myerson}.
\end{proof}

\subsection{Proof of Theorem~\ref{thm:unbiased_online}} \label{app:unbiased_online}
We prove that when we use the following allocation rule 
\begin{align} \label{prog:app_unbiased}
A^i= &\arg \min_A \max_{\vec{z} \in [0,1]^i} \sum_{k = 1}^i \frac{z_k^2}{A(c_{(k)})}\\ 
& \quad \textrm{s.t. } \sum_{k=1}^i A(c_{(k)}) \psi^i(c_{(k)}) \le \frac{B}{4\sqrt{n/i}}\notag\\
& \quad \qquad A(c_{(k)})\ge A(c_{(k+1)}), \forall k \notag\\
& \quad  \qquad 0 \le A(c_{(k)}) \le 1, \quad \forall k \notag
\end{align}
at round $i$ and output Horvitz-Thompson Estimator  $S = \frac{1}{n} \sum_{i=1}^n y_i =  \frac{1}{n} \sum_{i=1}^n \frac{\widehat{x}_i}{A^i(c_i)}$ at last,  the mechanism (1) is truthful and individually rational; (2) the expected total spending is no more than $B$;  (3) for any cost distribution $\{c_1, \dots, c_n\}$, the worst-case variance of the final estimator $S$ is no more than roughly $16$ times the benchmark defined in Definition~\ref{def:unbiased_benchmark}.  
The worst case variance of the benchmark  equals
\begin{align} \label{prog:unbiased_benchmark_app}
\Var^*(A^*) =  \frac{1}{(n+1)^2} \ & \min_A  \max_{\vec{z}\in [0,1]^n} \quad \sum_{i=1}^{n+1}  \frac{z_i^2}{A(c_i)} - \sum_{i=1}^{n+1} z_i^2 \\
%\max_{\mathcal{D}_{c_1,\dots, c_i}} 				&\E_{\mathcal{D}_{c_1,\dots, c_i}} \left[\left(\frac{\widehat{x}}{A(c)}\right)^2\right]\\
\textrm{ s.t. }& \ \sum_{i=1}^{n+1} A(c_i) \psi(c_i) \le B \notag\\
& \   A(c_i)\ge A(c_{i+1}), \quad \forall 1\le i \le n \notag \\
& \  0\le A(c) \le 1, \quad \forall c \notag
\end{align}
where we let $c_{n+1} = \overline{C}$ for notation simplicity.

To prove budget feasibility and the performance of the mechanism, we construct an intermediate mechanism $(A',P')$ at each step $i$. This intermediate mechanism $(A',P')$ is basically the same as $(A^i, P^i)$, but is ``one-step-ahead''. Loosely speaking, $(A', P')$ is the optimal mechanism when the same amount of budget is assigned at round $i$, but knows an additional piece of information, the value of $c_i$, beforehand. We compare $(A',P')$ with $(A^i, P^i)$ based on the results in Section~\ref{app:unbiased_adj}. 

\paragraph{Notations:} Before the proof, we define some notations. Let $c_{(1)}  \le  \cdots \le c_{(n)}$ be the ordered costs from smallest to largest. We use $\E[X\vert \{c_1, \dots, c_k\}]$ to represent the conditional expectation of random variable $X$ conditioning on the event that  the set of the first $k$ data holders' costs is $\{c_1, \dots, c_k\}$. We use  $\E[X\vert c_1, \dots, c_k]$ to represent the conditional expectation of random variable $X$ given that the sequence of the first $k$ data holders' costs is $c_1, \dots, c_k$, i.e., the first data holder has cost $c_1$, the second has $c_2$ and so on so forth. Notations are the same for the conditional variance $\Var(X\vert \{c_1, \dots, c_k\})$ and $\Var(X\vert c_1, \dots, c_k)$.
Unless otherwise stated, the randomness is taken over the random arriving order and the internal randomness of the mechanism.

\subsubsection{Truthfulness and Individual Rationality} 
It is easy to see that the extended allocation rule and payment rule of a truthful mechanism is still truthful. The payment rule also guarantees individual rationality as $P^i(c)$ is always greater or equal to $c$ by definition.

\subsubsection{Expected Budget Feasibility} 
Suppose the costs of the population is $\{c_1, \dots, c_n\}$. The total expected spending of the mechanism is $\sum_{i=1}^n \E[A^i(c_i)\cdot P^i(c_i)]$. Consider a fixed round $i$. Let $S_i$ be the set of first $i-1$ agents' costs and define $T_i = S_i \cup \{\overline{C}\}$. Similarly let $S_{i+1}$ be the set of first $i$ agents' costs and define $T_{i+1} = S_{i+1} \cup \{\overline{C}\}$. Then conditioning on the event that the set of first $i-1$ costs is $S_i$, the allocation rule $A^i$ can be uniquely decided, which is the solution of 
\begin{align*}
A^i = \min_A  &\sum_{c \in T_i} \frac{1}{A(c)}\\
\textrm{ s.t. }& \sum_{c \in T_i}  A(c) \psi^i(c) \le \frac{B}{4\sqrt{n/i}}\\
& A \textrm{ is monotone non-increasing} \\
& 0 \le A(c) \le 1, \quad \forall c \in T_i\notag
\end{align*}
Notice that $c_i \notin T_i$, so $A(c_i)$ is not a decision variable of the mathematical program defined above. The value of $A(c_i)$ is decided by Definition~\ref{def:extend_mech}. Let $c_{(1)}, \dots, c_{(i)}$ be the first $i$ costs in non-decreasing order and let $c_{(i+1)} = \overline{C}$. Suppose $c_i$ is the $k$-th smallest cost, i.e., $c_i = c_{(k)}$.  
Then according to Definition~\ref{def:extend_mech}, 
$$
 A^i(c_i)=A^i(c_{(k)}) = A^i(c_{(k+1)})
 $$
 where $c_{(k+1)}$ belongs to $T_i$. Now consider the following allocation rule $A'$
 \begin{align*}
A'= \min_A  &\sum_{c \in T_{i+1}} \frac{1}{A(c)}\\
\textrm{ s.t. }& \sum_{c \in T_{i+1}}  A(c) \psi^{i+1}(c) \le \frac{B}{2\sqrt{n/i}}\\
& A \textrm{ is monotone non-increasing} \\
& 0 \le A(c) \le 1, \quad \forall c \in T_{i+1}\notag
\end{align*}
Since $T_i,\,T_{i+1}$ only differ by one element $c_i$, and $A'$ uses twice the budget of $A^i$, according to Lemma~\ref{lem:unbiased_adj}, 
 \begin{align*}
 A^i(c_i) P^i(c_i) = &  A^i(c_{(k+1)})  P^i(c_{(k+1)})  \le  A'(c_{(k+1)})  P'(c_{(k+1)}).
 \end{align*}
 Now assume the set of  the first $i$ costs is $S_{i+1}$. When the data holders come in random order, $c_i$ is a random element chosen from $S_{i+1}$. Therefore $c_i$'s  rank $k$ should be uniformly distributed over $\{1, \dots, i\}$, and thus
\begin{align*}
\E[A^i(c_i) \cdot P^i(c_i) | S_{i+1}] \le & \E[ A'(c_{(k+1)})  P'(c_{(k+1)}) | S_{i+1}]\\
= & \frac{1}{i} \cdot \sum_{j=1}^i A'(c_{(j+1)})  P'(c_{(j+1)})\\
\le & \frac{1}{2\sqrt{n\cdot i}}  \cdot B.
\end{align*}
Therefore the total spending of the mechanism is bounded as 
\begin{align*}
\sum_{i=1}^n \E[A^i(c_i)\cdot P^i(c_i)] & \le  \sum_{i=1}^n \frac{B}{2\sqrt{ n \cdot i}} \le B,
\end{align*}
since $ \sum_{i=1}^n \frac{1}{\sqrt{i}} \le 2\sqrt{n}$.

\subsubsection{Competitive Analysis} \label{prg:compareA}
We first show that the variance of the final estimator $S = \frac{1}{n} \sum_{i=1}^n y_i$  can be upper bounded by the sum of the variances ``occur at each round''.
\begin{lemma} \label{lem:unbiased_lem1}
For a population with costs $\{c_1, \dots, c_n\}$ and any cost-data distribution $\mathcal{D}$ that is consistent with the costs, the variance of the output estimator of Mechanism~\ref{alg:online_unbiased} can be upper bounded as
$$
\Var(S) \le \frac{1}{n^2} \sum_{i=1}^n \E \left[y_i^2\right],
$$
where $y_i = \frac{\widehat{x}_i}{A^i(c_i)}$ is the re-weighted data.
\end{lemma}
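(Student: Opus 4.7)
The plan is to apply the law of total variance by conditioning on the entire realized sequence $(\vec{c},\vec{z})$ (both the random permutation of cost--data pairs and the observed values), which kills the ``between-sequence'' variance because of unbiasedness and renders the $y_i$'s conditionally independent.

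Concretely, I would write
$$
\Var(S) \;=\; \E_{(\vec{c},\vec{z})}\!\bigl[\Var(S \mid (\vec{c},\vec{z}))\bigr] \;+\; \Var_{(\vec{c},\vec{z})}\!\bigl(\E[S \mid (\vec{c},\vec{z})]\bigr).
$$
For any fixed realization $(\tilde{\vec{c}},\tilde{\vec{z}})$ of the arriving sequence, the Horvitz--Thompson estimator is unbiased by construction (the mechanism is a legitimate unbiased estimator as established in Section~\ref{sec:formulating_opt_unbiased}), so $\E[S \mid (\vec{c},\vec{z})] = \E[z] = \frac{1}{n}\sum_i z_i$, which is a constant function of the realization $\{c_i,z_i\}$ (it depends only on the set, not the order). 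Hence the second term vanishes and it remains to bound the first term.

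Next I would argue that once $(\vec{c},\vec{z})$ is fixed, the allocation rules $A^1,\ldots,A^n$ become deterministic functions: $A^i$ is chosen based on $\mathcal H_{i-1}$, which depends on the prefix of the sequence and on the Bernoulli coin flips that determine purchases, but the $A^i$ are themselves deterministic once the (now-fixed) coin flips and prefix are, and more importantly the purchase decisions $\widehat x_i$ are then independent Bernoulli draws with parameters $A^i(c_i)$ applied to the fixed data values. So $y_1,\ldots,y_n$ are conditionally independent given $(\vec{c},\vec{z})$, and
$$
\Var(S \mid (\vec{c},\vec{z})) \;=\; \frac{1}{n^2}\sum_{i=1}^n \Var(y_i \mid (\vec{c},\vec{z})) \;\le\; \frac{1}{n^2}\sum_{i=1}^n \E[y_i^2 \mid (\vec{c},\vec{z})].
$$
Taking expectation over $(\vec{c},\vec{z})$ and using the tower rule yields the claimed bound $\Var(S) \le \frac{1}{n^2}\sum_i \E[y_i^2]$.

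The only step that needs mild care is justifying the conditional independence of the $y_i$'s: even though $A^i$ depends on $c_1,\dots,c_{i-1}$ (and on the purchase outcomes before round $i$), conditioning on the \emph{entire} sequence $(\vec c,\vec z)$ fixes the $c_i$'s, and the sole remaining randomness is the independent purchase coin flips within each round. This is exactly the same conditioning trick used when bounding the variance of the benchmark $OPT(n,C,B)$ earlier in the paper, so I would simply invoke that observation rather than redo it.
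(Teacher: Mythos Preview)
Your proposal is correct and follows essentially the same route as the paper: condition on the realized sequence, use unbiasedness to kill the between-sequence variance term, then use conditional independence of the $y_i$'s to bound the within-sequence variance by $\frac{1}{n^2}\sum_i \E[y_i^2]$. One small point to tighten: your middle paragraph hedges about $A^i$ possibly depending on the earlier purchase coin flips via $\mathcal H_{i-1}$, but if that were the case conditional independence would fail; the reason the argument works here is precisely that for Mechanism~\ref{alg:online_unbiased} the rule $A^i$ is a function of $T_i=\{c_1,\dots,c_{i-1},\overline{C}\}$ alone, so fixing the cost sequence fixes all the $A^i$'s and the only remaining randomness is the independent purchase coins---which is exactly what the paper states.
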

\begin{proof}
According to the law of total variance,
\begin{align*}
\Var(S) & = \mathbb{E} \left[\Var(S\vert c_1, \dots, c_n) \right] + \Var\left( \mathbb{E}[S \vert c_1,\dots, c_n]\right),
\end{align*}
Since the estimator is always unbiased, conditional mean $\E[S \vert c_1,\dots, c_n]$ always equals to the mean of $n$ data points $\sum_{i=1}^n z_i/n$ for any order $c_1, \dots, c_n$. Because the variance of a constant is zero, the second term $\Var\left( \mathbb{E}[S \vert c_1,\dots, c_n]\right)$ equals $0$.
%\begin{align*}
%\Var\left( \mathbb{E}[\theta \vert c_1,\dots, c_n]\right) = \Var\left( \E \left[\sum_{i=1}^n x_i/n \Big \vert c_1,\dots, c_n\right]\right)
%\end{align*}
%Since $\theta$ is always unbiased, $\mathbb{E}[\theta \vert c_1,\dots, c_n]$ is always constant. Therefore the second term is equal to zero,
Therefore $$\Var(S)  = \mathbb{E} \left[\Var(S\vert c_1, \dots, c_n) \right].$$ 
Notice that conditioning on the cost sequence $c_1, \dots, c_n$, the allocation rules used at each round $A^1, \dots, A^n$ will be fixed. Thus $y_1, \dots, y_n$ become independent. So we have 
\begin{align*}
 \E\left[\Var(S \vert c_1, \dots, c_n) \right]  &= \E\left[\frac{1}{n^2}\cdot \Var\left(\sum_{i=1}^n y_i \,\Big \vert \, c_1, \dots, c_n \right) \right] \\
&= \frac{1}{n^2} \cdot \E \sum_{i=1}^n \Var\left( y_i \, \Big \vert \, c_1, \dots, c_n \right) \\
&\le \frac{1}{n^2} \cdot \E \sum_{i=1}^n \E \left[y_i^2 \, \vert \, c_1, \dots, c_n \right]\\
& = \frac{1}{n^2} \sum_{i=1}^n \E[y_i^2]
\end{align*}
\end{proof}

Let $y_i = \frac{\widehat{x}_i}{A^i(c_i)}$ be the re-weighted observed data at round $i$. The following lemma compares the expectation $\E[y_i^2]$ with the worst-case variance of $A^*$.
\begin{lemma}
For any distribution $\mathcal{D}$ that is consistent with the cost distribution $\{c_1, \dots, c_n\}$, when the data holders come in random order,   Mechanism~\ref{alg:online_unbiased} has 
$$
\E[y_i^2] \le \frac{8 \cdot \sqrt{n}}{\sqrt{i}} \cdot \left(n \Var^*(A^*)+1 +  \frac{1}{i} \cdot \frac{1}{A^*(\overline{C})}\right).
$$
\end{lemma}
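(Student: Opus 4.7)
The plan is to first reduce $\E[y_i^2]$ to $\E[1/A^i(c_i)]$, then compare the round-$i$ allocation rule $A^i$ against a rescaled copy of the benchmark $A^*$ through a common ``one-step-ahead'' intermediate mechanism $A''$ sitting on the enlarged cost set $T_{i+1}=S_{i+1}\cup\{\overline C\}$. The reduction is immediate: conditioning on everything up to round $i$, one has $\E[y_i^2\mid\cdot]=z_i^2/A^i(c_i)\le 1/A^i(c_i)$ because $z_i\in[0,1]$.

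To deal with the dependence of $A^i$ on which element of $S_{i+1}$ happens to be $c_i$, I condition on $S_{i+1}$ (so $c_i$ is uniform over it) and let $t_1\le\cdots\le t_{i+1}$ be the sorted elements of $T_{i+1}$. When $c_i=t_k$, Definition~\ref{def:extend_mech} gives $A^i(c_i)=A^i(t_{k+1})$ because $t_{k+1}$ is the smallest cost in $T_i=T_{i+1}\setminus\{t_k\}$ that is $\ge t_k$. I then introduce $A''=APPROX(i+1,T_{i+1},B\sqrt i/(8\sqrt n))$, which has half the budget of $A^i$ but lives on a cost set with one extra element, so Lemma~\ref{lem:unbiased_adj} (applied with $T_2=T_{i+1}$, $T_1=T_i$, removed element $t_k$) yields $A^i(t_{k+1})\ge A''(t_{k+1})$ for every rank $k$. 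Averaging over the uniform choice of $k\in\{1,\dots,i\}$,
$$\E_{c_i\mid S_{i+1}}\!\left[\tfrac{1}{A^i(c_i)}\right]\ \le\ \tfrac{1}{i}\sum_{k=1}^i \tfrac{1}{A''(t_{k+1})}\ \le\ \tfrac{1}{i}\sum_{c\in T_{i+1}}\tfrac{1}{A''(c)}.$$

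The rest is a scaling-and-averaging argument. I define $\widetilde A(c)=\tfrac{\sqrt i}{8\sqrt n}A^*(c)$ on $T_{i+1}$; it is monotone non-increasing, bounded by $1$, and Lemma~\ref{lem:VC_subset} bounds its $\psi^{i+1}$-cost by $\tfrac{\sqrt i}{8\sqrt n}\sum_{c\in T}A^*(c)\psi(c)\le B\sqrt i/(8\sqrt n)$, matching $A''$'s budget, so $\widetilde A$ is feasible for the optimization defining $A''$. Optimality of $A''$ then yields $\sum_{c\in T_{i+1}}1/A''(c)\le \tfrac{8\sqrt n}{\sqrt i}\sum_{c\in T_{i+1}}1/A^*(c)$. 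Since each $c_j$ lies in $S_{i+1}$ with probability $i/n$ under the random-order assumption, $\E_{S_{i+1}}\bigl[\sum_{c\in T_{i+1}}1/A^*(c)\bigr]=\tfrac{i}{n}\sum_{j=1}^n 1/A^*(c_j)+1/A^*(\overline C)$. Plugging $\vec z\equiv\vec 1$ into the definition of $\Var^*(A^*)$ in \eqref{prog:unbiased_benchmark_app} gives $\sum_{j=1}^{n+1}1/A^*(c_{(j)})\le(n+1)^2\Var^*(A^*)+(n+1)$, so $\tfrac1n\sum_{j=1}^n 1/A^*(c_j)\le n\Var^*(A^*)+1$ up to $O(1/n)$ lower-order terms that are absorbed by the $(1+1/n)^2$ slack already budgeted in Theorem~\ref{thm:unbiased_online}. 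Chaining the three inequalities produces the claim.

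The hard part is the middle step: $A^i$ itself depends on which element of $S_{i+1}$ is designated $c_i$, so a direct comparison between $A^i(c_i)$ and $A^*(c_i)$ would require controlling $i$ different round-$i$ mechanisms simultaneously. Introducing the $c_i$-independent comparator $A''$ on the full set $T_{i+1}$ sidesteps this, but forces me to halve the budget (the $B/2\mapsto B/4$ relation in Lemma~\ref{lem:unbiased_adj}); this doubling, composed with a second doubling in the scaling step, is precisely what gives the leading constant $8$ in the right-hand side of the lemma.
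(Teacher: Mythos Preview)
Your proposal is correct and follows the paper's proof essentially line by line: the paper introduces the same one-step-ahead comparator (their $A'$, your $A''$) on $T_{i+1}$ with budget $B/(8\sqrt{n/i})$, invokes Lemma~\ref{lem:unbiased_adj} to get $A^i(c_{(k+1)})\ge A'(c_{(k+1)})$, then shows $A^*/(8\sqrt{n/i})$ is feasible for $A'$'s program via Lemma~\ref{lem:VC_subset}, and finally averages over the random subset $S_{i+1}$. Your handling of the last step is also accurate---the paper's own derivation actually yields $\tfrac{(n+1)^2}{n}\Var^*(A^*)+1$ rather than the $n\Var^*(A^*)+1$ in the lemma's stated bound, and this $(1+1/n)^2$ factor is precisely what is carried through to Theorem~\ref{thm:unbiased_online}.
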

\begin{proof} We fix a time step $i$. Let $S_i$ be the set of first $i-1$ agents' costs and define $T_i = S_i \cup \{\overline{C}\}$. And similarly let $S_{i+1}$ be the set of first $i$ agents' costs and define $T_{i+1} = S_{i+1} \cup \{\overline{C}\}$.  We first consider a fixed $S_{i+1}$ and compare $\E[y_i^2|S_{i+1}]$ with $ \Var^*(A^*)$, where the randomness of $\E[y_i^2|S_{i+1}]$ is taken over the random arriving order and the internal randomness of the mechanism. We will define an intermediate allocation rule $A'$ and compare it with both $A^i$ and $A^*$. 

  Conditioning on the event that the set of first $i-1$ costs is $S_i$, the allocation rule $A^i$ can be uniquely decided, which is the solution of 
\begin{align*}
A^i = \min_A  &\sum_{c \in T_i} \frac{1}{A(c)}\\
\textrm{ s.t. }& \sum_{c \in T_i}  A(c) \psi^i(c) \le \frac{B}{4\sqrt{n/i}}\\
& A \textrm{ is monotone non-increasing} \\
& 0 \le A(c) \le 1, \quad \forall c \in T_i\notag
\end{align*}
Notice that agent $i$'s cost $c_i \notin T_i$, so $A(c_i)$ is not a decision variable of the mathematical program defined above. The value of $A(c_i)$ is decided by Definition~\ref{def:extend_mech}. Let $c_{(1)}, \dots, c_{(i)}$ be the first $i$ costs in non-decreasing order and let $c_{(i+1)} = \overline{C}$. Suppose $c_i$ is the $k$-th smallest cost, i.e., $c_i = c_{(k)}$.  
Then
$$
 A^i(c_i)=A^i(c_{(k)}) = A^i(c_{(k+1)})
 $$
 where $c_{(k+1)}$ belongs to $T_i$. 
 Now consider the following allocation rule $A'$
 \begin{align*}
A'= \min_A  &\sum_{c \in T_{i+1}} \frac{1}{A(c)}\\
\textrm{ s.t. }& \sum_{c \in T_{i+1}}  A(c) \psi^{i+1}(c) \le \frac{B}{8\sqrt{n/i}}\\
& A \textrm{ is monotone non-increasing} \\
& 0 \le A(c) \le 1, \quad \forall c \in T_{i+1}\notag
\end{align*}

\paragraph{Comparing $A'$ with $A^*$:} 
 We first compare $A'$ with $A^*$ by proving that $\frac{A^*}{8\sqrt{n/i}}$ is a feasible solution of the mathematical program that defines $A'$. Let $\psi(c)$ be the virtual cost function when the underlying cost distribution is the uniform distribution over $T = \{c_1, \dots, c_{n}, \overline{C}\}$. Then according to definition, the benchmark $A^*$ satisfies the constraint
$ \sum_{c \in T}  A^*(c) \psi(c) \le B$.
Combine this with Lemma~\ref{lem:VC_subset},
$$
\sum_{c \in T_{i+1}} \frac{A^*}{8\sqrt{n/i}} \cdot  \psi^{i+1}(c) \le \sum_{c\in T}  \frac{A^*}{8\sqrt{n/i}} \cdot \psi(c) \le \frac{B}{8\sqrt{n/i}}. 
$$
Therefore $\frac{A^*}{8\sqrt{n/i}}$ is a feasible solution of the mathematical program that defines $A'$. Then since $A'$ is the optimal solution, we have
\begin{equation} \label{eqn:unbiased_eqn1}
\sum_{c \in T_{i+1}} \frac{1}{A'(c)} \le  \sum_{c \in T_{i+1}} \frac{8\sqrt{n/i}}{A^*(c)}.
\end{equation}

\paragraph{Comparing $A'$ with $A^i$:}
 Now we compare $A^i(c_i) = A^i(c_{(k+1)})$ with $A'(c_{(k+1)})$. 
 %According to Lemma~\ref{lem:unbiased_chrz}, 
 %$A^i(c_{(k+1)}) \cdot 2\sqrt{n/i}= \min\left\{ 1, \frac{\lambda^i}{\sqrt{\phi^i(c_{(k+1)})}} \right\}$, where $\lambda^i$ satisfies
%$$
%\sum_{c\in T_i} \min\left\{\phi^i(c), \lambda^i  \sqrt{\phi^i(c)}\right\} = B,
%$$
%and
%$A'(c_{(k+1)}) = \min\left\{ 1, \frac{\lambda'}{\sqrt{\phi^{i+1}(c_{(k+1)})}} \right\}$, where $\lambda'$ satisfies
%$$
%\sum_{c\in T_{i+1}} \min\left\{\phi^{i+1}(c), \lambda' \sqrt{\phi^{i+1}(c)}\right\} = B.
%$$
Since $T_i$ and $T_{i+1}$ only differs in one element $c_i = c_{(k)}$ and $A'$ uses half of the budget of $A^i$, according to Lemma~\ref{lem:unbiased_adj}, 
\begin{equation} \label{eqn:unbiased_eqn2}
 A^i(c_{(k+1)}) \ge A'(c_{(k+1)}). 
\end{equation}
for any set of costs $S_{i+1}$ of first $i$ agents.

Conditioning on the event that the set of  the first $i$ costs is $S_{i+1}$, when the data holders come in random order, $c_i$ is a random element chosen from $S_{i+1} = \{c_{(1)}, \dots, c_{(i)}\}$. Therefore $c_i$'s  rank $k$ should be uniformly distributed over $\{1, \dots, i\}$.  Together with~\eqref{eqn:unbiased_eqn2}, we have
\begin{align*}
 \E[y_i^2 | S_{i+1}] \le  \E\left[\frac{1}{A^i(c_{(k+1)})} \Big | S_{i+1}\right]  \le  \E\left[\frac{ 1}{A'(c_{(k+1)})} \Big | S_{i+1}\right] = \frac{1}{i} \sum_{k=1}^{i} \frac{1}{A'(c_{(k+1)})}.
\end{align*}
Then by equation~\eqref{eqn:unbiased_eqn1},
\begin{align*}
 \E[y_i^2 | S_{i+1}] \le & \frac{1}{i} \sum_{k=1}^{i} \frac{1}{A'(c_{(k+1)})} 
 \le  \frac{1 }{i}  \sum_{c\in T_{i+1}} \frac{1}{A'(c)} 
 \le  \frac{8\cdot \sqrt{n/i} }{i}  \sum_{c\in T_{i+1}} \frac{1}{A^*(c)}.
\end{align*}

Now we are ready to compute $\E[y_i^2]$ by averaging $ \E[y_i^2 | S_{i+1}] $ over random subset $S_{i+1}$ (when the agents come in random order, $S_{i+1}$ is a random subset of $\{c_1, \dots, c_n\}$ with length $i$). 
\begin{align*}
\E[y_i^2] =& \E_{S_{i+1}} \E[y_i^2 | S_{i+1}] \\
\le &  \frac{8 \cdot \sqrt{n/i} }{i} \cdot \E_{S_{i+1}}\left[ \sum_{c\in T_{i+1}} \frac{1}{A^*(c)} \right]\\
= &  \frac{8 \cdot \sqrt{n/i} }{i} \cdot \E_{S_{i+1}}\left[ \sum_{c\in S_{i+1}} \frac{1}{A^*(c)} + \frac{1}{A^*(\overline{C})}\right]\\
= &  8 \cdot \sqrt{n/i}  \left( \frac{1}{n} \cdot \sum_{i=1}^n \frac{1}{A^*(c_i)} + \frac{1 }{i} \cdot \frac{1}{A^*(\overline{C})}\right)\\
\le& 8 \cdot \sqrt{n/i}  \left(\frac{(n+1)^2}{n} \cdot \Var^*(A^*) + 1 +  \frac{1}{i} \cdot \frac{1}{A^*(\overline{C})}\right).
\end{align*}

\end{proof}
Combining the two lemmas, for any joint distribution of cost and data, the variance of the output estimator of our mechanism satisfies
\begin{align*}
\Var(S) \le &  \frac{1}{n^2} \sum_{i=1}^n \E[y_i^2] \\
\le & \frac{1}{n^2} \sum_{i=1}^n \frac{8\cdot \sqrt{n}}{\sqrt{i}}  \left(\frac{(n+1)^2}{n} \cdot \Var^*(A^*) + 1 +  \frac{1}{i} \cdot \frac{1}{A^*(\overline{C})}\right)\\
=& \frac{ 8 \sqrt{n}}{n^2} \left( \frac{(n+1)^2}{n} \cdot \Var^*(A^*)+ 1\right) \cdot \sum_{i=1}^n \frac{1}{\sqrt{i}} + \frac{8 \cdot \sqrt{n}}{n^2} \cdot \frac{1}{A^*(\overline{C})} \cdot \sum \frac{1}{i\sqrt{i}}\\
\le & 16 \cdot \left(\frac{(n+1)^2}{n^2} \Var^*(A^*)+\frac{1}{n}+ \frac{\sqrt{n}}{n^2} \cdot \frac{1}{A^*(\overline{C})}\right) 
\end{align*}
since $\sum_{i=1}^n \frac{1}{\sqrt{i}} \le 2\sqrt{n}$ and $\sum_{i=1}^n \frac{1}{\sqrt{i}\cdot i} \le 2$.

%Finally we show that running our mechanism on a population with costs $\{c_1, \dots, c_n\}$ is equivalent to running the ``one-step-ahead'' mechanism on the ``shift-by-one'' population costs $\{c_2, \dots, c_n, \overline{C}\}$ when the budget is $\frac{B}{2\sqrt{i\cdot n}}$. 

This completes the proof of Theorem~\ref{thm:unbiased_online}.

\section{Proofs for Optimal Confidence Interval}
We then give the proofs for the optimal online confidence interval estimator. 
In Section~\ref{app:CI_roundi}, we give the characterization of the optimal solution $(A^*, U^*, P^*)$ and show how to compute it. In Section~\ref{app:CI_adj}, we prove some lemmas that will be used to compare our mechanism with the benchmark in Section~\ref{app:CI_main}.
\subsection{Proof of Theorem~\ref{thm:CI_roundi}} \label{app:CI_roundi}

We prove that the optimal solution of
\begin{align} \label{prog:biased_roundi_app}
A^*, \ U^* = &\arg\min_{A, U}  \quad \beta^2 \cdot \frac{1}{n}  \sum_{j = 1}^n \frac{1-U_j}{A(c_{(j)})} + \left(\frac{1}{n} \sum_{j=1}^n U_j   \right)^2 \\
& \qquad \qquad \textrm{s.t.} \  \sum_{j=1}^n (1-U_j) \cdot A(c_{(j)}) \psi(c_{(j)}) \le B \notag\\
& \qquad  \qquad \qquad  (1-U_j) \cdot A(c_{(j)})  \textrm{ is monotone non-increasing in $j$} \notag\\
& \qquad  \qquad \qquad 0\le A(c_{(j)}) \le 1, \quad 0 \le U_j \le 1,  \quad \forall j \notag\end{align}
is as follows
\begin{align*}
&U^*_j = \left\{ \begin{array}{ll}
				 0, & \textrm{ if $\phi(c_{(j)}) < H$} \\
				 p \in (0,1], & \textrm{ if } \phi(c_{(j)}) = H \\
				 1, & \textrm{ if $\phi(c_{(j)})>H$}
			     \end{array}
			\right.\\
&A^*(c_{(j)}) =  \min \left\{1, \frac{\lambda}{\sqrt{\phi(c_{(j)})}} \right\}
\end{align*}
$\lambda$ is chosen such that $ \sum_{j=1}^n (1-U_j) \cdot A(c_{(j)}) \psi(c_{(j)}) = B$ and the optimal value of $\lambda$ and $H$ can be found using binary search.

The following lemma proves that the optimal solution should have $U^*$ as described above for some $H$, and $A^*$ should be non-increasing. 
\begin{lemma} \label{lem:biased_chara_lem1}
Let $U^*$ and  $A^*$ be the optimal solution of~\eqref{prog:biased_roundi_app}, then $U^*_j$ should be non-decreasing in $j$ and $A^*(c_{(j)})$ should be non-increasing in $j$. In addition, if $A^*(c_{(j)}) \neq A^*(c_{(j+1)})$, then at least one of $U^*_j$ and $U^*_{j+1}$ should be equal to $0$ or $1$. 
\end{lemma}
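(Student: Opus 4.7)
My plan is to derive all three assertions from one local perturbation. Fix any optimum $(A^*,U^*)$, abbreviate $A^*_j=A^*(c_{(j)})$, and set $t_j=(1-U^*_j)A^*_j$. For any pair of adjacent indices $j,j+1$ consider the one-parameter family that keeps every other coordinate fixed and sets
\[
U_j(\eta)=U^*_j-\eta,\quad U_{j+1}(\eta)=U^*_{j+1}+\eta,\quad A_j(\eta)=\tfrac{t_j}{1-U_j(\eta)},\quad A_{j+1}(\eta)=\tfrac{t_{j+1}}{1-U_{j+1}(\eta)}.
\]
Since $(t_j,t_{j+1})$ are held fixed, the monotonicity of $(1-U)A$ and the budget constraint are preserved; since $U_j+U_{j+1}$ is held fixed, the bias term $\bigl(\sum_k U_k/n\bigr)^2$ in the objective is unchanged. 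Rewriting the remaining part of the objective via $\tfrac{1-U_k}{A_k}=\tfrac{(1-U_k)^2}{t_k}$, a direct computation gives
\[
\frac{d}{d\eta}\bigg|_{\eta=0}\frac{\beta^2}{n}\Big(\tfrac{(1-U_j)^2}{t_j}+\tfrac{(1-U_{j+1})^2}{t_{j+1}}\Big)=\frac{2\beta^2}{n}\Big(\tfrac{1}{A^*_j}-\tfrac{1}{A^*_{j+1}}\Big),
\]
so the sign of $A^*_j-A^*_{j+1}$ determines which direction of $\eta$ decreases the objective to first order.

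For the non-increasing monotonicity of $A^*$, suppose for contradiction $A^*_j<A^*_{j+1}$ for some $j$. The constraint $(1-U^*_j)A^*_j\ge(1-U^*_{j+1})A^*_{j+1}$ forces $U^*_{j+1}>U^*_j\ge 0$, and $A^*_{j+1}\le 1$ forces $A^*_j<1$; together these show the $\eta<0$ direction is feasible, since the binding bounds $A_j(\eta)\le 1$ and $U_{j+1}(\eta)\ge 0$ strictly slacken. The derivative is positive, so a small $\eta<0$ strictly decreases the objective, contradicting optimality. Hence $A^*_j\ge A^*_{j+1}$.

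For the third claim (and then the first), suppose $A^*_j>A^*_{j+1}$; the derivative is negative, so I would like to take $\eta>0$. The upper bounds on $\eta$ come from $U_j(\eta)\ge 0$, $U_{j+1}(\eta)\le 1$, and $A_{j+1}(\eta)\le 1$; if all three slackened, namely if $U^*_j>0$, $U^*_{j+1}<1$, and $A^*_{j+1}<1$, a strictly improving feasible perturbation would exist. But $A^*_{j+1}=1$ combined with $A^*_j\le 1$ contradicts $A^*_j>A^*_{j+1}$, so the only way to avoid a contradiction is $U^*_j=0$ or $U^*_{j+1}=1$, which is Claim 3. Claim 1 is then immediate: $U^*_j>U^*_{j+1}$ combined with the monotonicity of $(1-U)A$ forces $A^*_j>A^*_{j+1}$, and neither $U^*_j=0$ nor $U^*_{j+1}=1$ is compatible with $U^*_j>U^*_{j+1}$.

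The main care I anticipate is in the degenerate case $U^*_j=1$, where $t_j=0$ and $A^*_j$ is not pinned down by the objective, so the perturbation above is ill-defined. In that regime monotonicity of $t$ gives $t_{j+1}=0$, and finiteness of the objective then forces $U^*_{j+1}=1$ as well (otherwise $(1-U^*_{j+1})/A^*_{j+1}$ is infinite); all three claims then hold trivially in this case after we set the unconstrained $A^*$-values to any non-increasing sequence in $[0,1]$.
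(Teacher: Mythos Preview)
Your proof is correct and follows essentially the same approach as the paper: both fix $(1-U_j)A_j$, $(1-U_{j+1})A_{j+1}$, and $U_j+U_{j+1}$, rewrite the remaining objective as $\frac{(1-U_j)^2}{t_j}+\frac{(1-U_{j+1})^2}{t_{j+1}}$, and compute its derivative $\tfrac{2}{A_j}-\tfrac{2}{A_{j+1}}$ to find a strictly improving direction. The only differences are organizational---you establish monotonicity of $A^*$ first, then Claim~3 via the $\eta>0$ direction, and deduce monotonicity of $U^*$ as a corollary, whereas the paper argues each claim separately---and you are more explicit than the paper about checking feasibility of the perturbation direction and about the degenerate case $U^*_j=1$.
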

\begin{proof}
We first prove the monotonicity of $U^*$ and $A^*$. Let $U'$ and $A'$ be a feasible solution.  Suppose $U'$ decreases at some position $j$, i.e., $U'_j >U'_{j+1}$. We prove that that $U'$ cannot be optimal because it is not the optimal solution of the following optimization problem (for simplicity we write $A_j = A(c_{(j)})$).
\begin{align} \label{prog:gotosleep}
& \min_{A_j, A_{j+1}, U_j, U_{j+1}} \quad  \frac{(1-U_j)}{A_j} + \frac{(1-U_{j+1})}{A_{j+1}}\\
&  \textrm{ s.t. } \quad (1-U_j) A_j = (1-U'_j) A'(c_{(j)}) \notag \\
& \qquad \quad (1-U_{j+1}) A_{j+1} = (1-U'_{j+1}) A'(c_{(j+1)})  \notag\\
& \qquad \quad (1-U_j) + (1-U_{j+1}) = (1-U'_j) + (1-U'_{j+1}) \notag\\
& \qquad \quad 0 \le U_j, U_{j+1}, A_j, A_{j+1} \le 1. \notag
\end{align}
It is easy to see that a feasible solution of the above problem is also a feasible solution of~\eqref{prog:biased_roundi_app}. And a strictly better solution than $A', U'$ in the above problem will give a strictly better solution of~\eqref{prog:biased_roundi_app}. 
If we write the right-hand constant of the three equalities as $D_1, D_2, D_3$ and let $O_j = 1-U_j$, then the objective function can be represented as a single variable function of $O_j$,
$$
 \frac{(1-U_j)}{A_j} + \frac{(1-U_{j+1})}{A_{j+1}} = \frac{O_j^2}{D_1} + \frac{(D_3 - O_j)^2}{D_2}.
$$
Its derivative with respect to $O_j$ equals $ \frac{2 O_j}{D_1} - \frac{2(D_3 - O_j)}{D_2}$.
The value of this derivative at point $O_j = 1-U'_j$ is 
$$
2\left(\frac{ (1-U'_j)}{(1-U'_j) A'(c_{(j)})} - \frac{(1-U'_{j+1})}{(1-U'_{j+1}) A'(c_{(j+1)}) } \right).
$$
Since $A', U'$ is a feasible solution, it should satisfy the constraint $(1-U'_j) A'(c_{(j)}) \ge (1-U'_{j+1}) A'(c_{(j+1)})$. Therefore if we have $1\ge U'_j >U'_{j+1}\ge 0$, this derivative will be negative, which means that the objective value can be decreased by slightly decrease $U_j'$ (it will still be feasible because $U'_j > U'_{j+1}\ge 0$). Therefore the optimal $U^*$ must be monotone non-decreasing. The monotonicity of $A^*$ can be proved by the same arguments, that is, to assume that $A'_j<A'_{j+1}$ for some $j$ and show that it cannot be the optimal solution of the above optimization problem.

The same approach can be used to prove that the second part.  Let $U'$ and $A'$ be an optimal solution with $A'(c_{(j)}) \neq A'(c_{(j+1)})$ and $U'_j, \ U'_{j+1} \in (0,1)$ for some $j$. We again consider the optimization~\eqref{prog:gotosleep}. Again the value of the derivative at point $O_j = 1-U'_j$ is 
$$
2\left(\frac{ 1}{A'(c_{(j)})} - \frac{1}{A'(c_{(j+1)}) } \right).
$$
Since an optimal allocation rule must be non-increasing, the value of this derivative is negative. By the same argument in the first part, $U'$ cannot be optimal, which is contradictory. 
\end{proof}
Therefore the optimal $U^*_j\in (0,1)$ only for $j$'s that have the same $A^*(c_{(j)})$. Based on this, we further prove the following lemma.
\begin{lemma}
Let $l$ be the smallest number that has  $U^*_l \in (0,1)$ and let $r$ be the largest number that has $U^*_r \in (0,1)$. Then $\phi(c_{(l)}) = \phi(c_{(r)})$. 
\end{lemma}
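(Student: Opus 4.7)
The plan is an exchange argument. We may assume $l<r$, as otherwise the claim is vacuous. By Lemma~\ref{lem:biased_chara_lem1}, since $U^*_k \in (0,1)$ for all $k \in [l,r]$, the allocation $A^*(c_{(k)})$ is constant on $[l,r]$; denote this common value by $\bar A$. Suppose for contradiction that $\phi(c_{(l)}) < \phi(c_{(r)})$. Since $\phi$ is non-decreasing by Lemma~\ref{lem:unbiased_chara_lem1}, there exists an index $j \in [l, r-1]$ at a $\phi$-block boundary with $\phi_1 := \phi(c_{(j)}) < \phi(c_{(j+1)}) =: \phi_2$. Write $[L_1, R_1]$ for the $\phi$-block containing $j$ (so $R_1 = j$ and $\psi'(c_{(L_1)}) = \phi_1$) and $[L_2, R_2]$ for the next block (so $L_2 = j+1$ and $\psi'(c_{(L_2)}) = \phi_2$).

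For small $\epsilon > 0$, consider the perturbation $U'_k = U^*_k - \epsilon$ for $k \in [l,j]$, $U'_k = U^*_k + \alpha\epsilon$ for $k \in [j+1, r]$ with $\alpha = (j-l+1)/(r-j)$, and $U'_k = U^*_k$ otherwise; $A$ is unchanged. The factor $\alpha$ is chosen so that $\sum_k U'_k = \sum_k U^*_k$. Feasibility holds for small $\epsilon$: within $[l,r]$ all $U^*_k$ are strictly interior, monotonicity within each sub-block is preserved and across $j/j+1$ is strengthened, and at the boundaries with $[1,l-1]$ and $[r+1, n]$ we have $U^*_{l-1} = 0$ and $U^*_{r+1} = 1$, giving strict slack in the monotonicity constraint $(1-U_k)A_k$. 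Because $A^* \equiv \bar A$ on $[l,r]$, the first term of the objective depends on $U$ within $[l,r]$ only through $\sum_{k\in[l,r]}(1-U_k)/\bar A$, which is unchanged by construction; the second term depends only on $\sum_k U_k$ and is also unchanged. The budget usage, however, changes by
\[
\bar A\,\epsilon\,(j-l+1)\bigl[\mathrm{Avg}(l,j)-\mathrm{Avg}(j+1,r)\bigr], \qquad \mathrm{Avg}(i,k):=\tfrac{1}{k-i+1}\sum_{t=i}^{k}\psi(c_{(t)}).
\]

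The crux is the inequality $\mathrm{Avg}(l,j)\le\phi_1<\phi_2\le\mathrm{Avg}(j+1,r)$. The rightmost inequality is immediate: $\phi_2=\psi'(c_{(L_2)})=\min_{k\ge L_2}\mathrm{Avg}(L_2,k)$, so $\mathrm{Avg}(L_2,r)\ge\phi_2$. The strict middle is the choice of $j$. For the leftmost, the same minimizing property of $\psi'$ at each $\phi$-block left endpoint $L$ gives $\mathrm{Avg}(L,k)\ge\phi(c_{(L)})$ for any $k\ge L$. In particular, for $l>L_1$, $\mathrm{Avg}(L_1,l-1)\ge\phi_1$, and the identity $\phi_1(R_1-L_1+1)=(l-L_1)\mathrm{Avg}(L_1,l-1)+(R_1-l+1)\mathrm{Avg}(l,R_1)$ yields $\mathrm{Avg}(l,R_1)\le\phi_1$; for $l\le L_1$, the same telescoping is iterated over preceding $\phi$-blocks, each contributing an average bounded by its block value, which is itself $<\phi_1$. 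Hence the perturbation strictly frees a positive amount $\Delta>0$ of budget without changing the objective.

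To conclude, we spend $\Delta$ on a second small perturbation that strictly decreases the objective, contradicting optimality. The budget constraint is binding at $(A^*, U^*)$: otherwise the KKT stationarity in any interior $A^*(c_{(k)})<1$ forces $U^*_k=1$, so necessarily $\bar A=1$, and in this degenerate case the optimum is non-unique and may be chosen with only a single fractional $U^*$-coordinate, making $l=r$ contrary to our assumption. With strict slack $\Delta$, either increase $\bar A$ by a small $\delta$ if $\bar A<1$ (strictly decreasing the first term) or decrease $U^*_l$ by a small $\eta$ (feasible by the strict monotonicity slack at the $l{-}1,l$ boundary; strictly decreasing the objective because the KKT for $U^*_l$ with $\mu>0$ gives $\beta^2/(n\bar A)-2\sum U^*/n^2 = -\mu\bar A\psi(c_{(l)})<0$). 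Either improvement contradicts optimality, forcing $\phi(c_{(l)})=\phi(c_{(r)})$. The main obstacle is the averaged-virtual-cost inequality $\mathrm{Avg}(l,j)\le\phi_1$: because $\psi$ is generally non-monotone (as noted after Definition~\ref{def:virtual_costs}), this requires exploiting the iterative minimizing/``ironing'' structure built into $\phi$ in Lemma~\ref{lem:unbiased_chara_lem1} rather than any pointwise monotonicity of $\psi$.
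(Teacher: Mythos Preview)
Your argument is essentially the same exchange argument as the paper's: assume $\phi(c_{(l)})<\phi(c_{(r)})$, pick a $\phi$-block boundary inside $[l,r]$, shift $U^*$ mass from the left segment to the right segment so that $\sum_k U^*_k$ is unchanged, and use the block structure of $\phi$ together with property~(3) of Lemma~\ref{lem:unbiased_chara_lem1} to show this strictly frees budget while leaving the objective fixed (since $A^*$ is constant on $[l,r]$). The paper chooses the split at $L_r-1$ and compares both averages to $\phi(c_{(r)})$; you choose a generic block boundary $j$ and compare to $\phi_1<\phi_2$. Both work, and your feasibility check at the boundaries $l{-}1,l$ and $r,r{+}1$ is correct.

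Two places where you overcomplicate relative to the paper. First, the case ``$l\le L_1$'' is unnecessary: since $\phi(c_{(l)})<\phi(c_{(r)})$, you may simply take $j$ to be the right endpoint of the $\phi$-block containing $l$, which forces $l\in[L_1,R_1]$ and your clean identity $\phi_1(R_1-L_1+1)=(l-L_1)\mathrm{Avg}(L_1,l{-}1)+(R_1-l+1)\mathrm{Avg}(l,R_1)$ applies directly. Second, your final paragraph (binding budget, KKT for $U_l$, the $\bar A=1$ degenerate case) is more than the paper asks for and is not airtight as written: the stationarity equation for $U_l$ omits the monotonicity multipliers on $(1-U)A$, and the ``choose an optimum with a single fractional coordinate'' step needs a feasibility check. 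The paper dispenses with all of this in one line (``the total expected spending strictly decreases, which means the objective value can be strictly improved''), which is the standard monotonicity-in-budget observation; if you want to spell it out, the cleanest route is simply to increase $A$ uniformly on $[l,r]$ by a small $\delta$ when $\bar A<1$, and when $\bar A=1$ note that the objective on $[l,r]$ then depends only on $\sum_{k\in[l,r]}U_k$, so the freed budget can be spent elsewhere or $U$ can be reallocated to a threshold form with $l=r$.
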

\begin{proof}
Suppose to the contrary,  $\phi(c_{(l)}) < \phi(c_{(r)})$. We construct another feasible solution that is strictly better than $A^*, U^*$. Define $I_l = \{k: \phi(c_{(k)}) = \phi(c_{(l)})\}$ and $I_{r} = \{ k: \phi(c_{(k)}) = \phi(c_{(r)})\}$. $I_l, I_r$ should both contain consecutive numbers.  Let $R_l$ be the right end point of $I_l$ and let $L_r$ be the left end point of $I_r$.  First by definition of regularized virtual costs,  $\phi(c_{(r)}) = \phi(c_{(L_r)}) =  \min_{k: k\ge L_r} Avg(L_r, k)$, where $Avg$ is the average function of $\psi$. Then 
\begin{align*}
Avg(L_r, r) \ge \min_{k: k\ge L_r} Avg(L_r, k) = \phi(c_{(r)}).
\end{align*}
Meanwhile according to  Claim~\ref{unbiased_chara_claim2} in the proof of  Lemma~\ref{lem:unbiased_chara_lem1},  we should have 
\begin{align*}
Avg(l, R_l) \le Avg(I_l) = \phi(c_{(l)}) < \phi(c_{(r)}).
\end{align*}
 In addition, if $R_l+1 \le L_r-1$, then $\phi(c_{(j)}) < \phi(c_{(r)})$ for all $j \in [R_l+1, L_r-1]$, which means $Avg(R_l+1, L_r-1)<\phi(c_{(r)})$. Therefore we should have
\begin{align} \label{eqn:biased_chara_lem1_eq1}
Avg(l,L_r-1) < \phi(c_{(r)}) \le Avg(L_r, r).
\end{align}
Therefore we can construct a better solution by uniformly decreasing $U^*_l, \dots, U^*_{L_r-1}$ and uniformly increasing $U^*_{L_r}, \dots, U^*_r$ by a small amount, so that the total increase is equal to the total decrease. The value of the objective function $\beta^2 \cdot \frac{1}{n}  \sum_{k = 1}^n \frac{1-U^*_k}{A^*(c_{(k)})} + \left(\frac{1}{n} \sum_{k=1}^n U^*_k \right)^2$ remains the same, because (1) $\sum_{k=1}^n U^*_k$ does not change and (2) the value of $A^*(c_{(k)})$ is the same for all $k\in [l,r]$ according to Lemma~\ref{lem:biased_chara_lem1}.  But the total  expected spending $ \sum_{k=1}^n (1-U^*_k) \cdot A^*(c_{(k)}) \psi(c_{(k)})$ strictly decreases due to~\eqref{eqn:biased_chara_lem1_eq1}, which means that the objective value can be strictly improved. 
\end{proof}
So the lemma shows that $U^*_j \in (0,1)$ only for $j$'s that have the same $\phi(c_{(j)})$. We then prove the other direction: for $j$'s that have the same $\phi(c_{(j)})$, there exists an optimal solution that has the same $U^*_j$ for these $j$'s.
\begin{lemma} \label{lem:biased_chara_lem3}
There exists an optimal solution $(U^*, A^*)$ such that  for all $(j,k)$ that has $\phi(c_{(j)})= \phi(c_{(k)})$, $U^*_j = U^*_k$ and $A^*(c_{(j)}) = A^*(c_{(k)})$.
\end{lemma}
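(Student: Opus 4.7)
The plan is to take any optimal pair $(A^*, U^*)$ and symmetrize it within each level set of $\phi$ without changing the objective or violating feasibility. Since $\phi$ is non-decreasing by Lemma~\ref{lem:unbiased_chara_lem1}, the set $I = \{j : \phi(c_{(j)}) = \phi_0\}$ for each value $\phi_0$ consists of consecutive indices, say $I = [L,R]$. On each such block I would define
$$\bar{U} = \frac{1}{|I|}\sum_{j\in I} U^*_j, \qquad \bar{A} = \frac{1}{|I|(1-\bar{U})}\sum_{j \in I}(1-U^*_j)\, A^*(c_{(j)})$$
(setting $\bar{A}$ to any value in $[0,1]$ when $\bar{U}=1$, since the $A$-entries on that block are then irrelevant), overwrite $(U^*_j, A^*(c_{(j)}))$ by $(\bar{U},\bar{A})$ for every $j \in I$, and repeat independently on every block.

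The verifications I would carry out are threefold. First, the second term of the objective depends only on $\sum_j U_j$, which is preserved block-by-block by the choice of $\bar{U}$. Second, the budget constraint is preserved exactly, since on block $I$ the spending equals $\phi_0 \sum_{j \in I}(1-U_j)A(c_{(j)})$, which is fixed by the choice of $\bar{A}$. Third, for the first term of the objective I would substitute $v_j = 1-U_j$ and $x_j = v_j A(c_{(j)})$, so that $(1-U_j)/A(c_{(j)}) = v_j^2/x_j$; then Cauchy--Schwarz (equivalently, convexity of $(v,x)\mapsto v^2/x$) yields
$$\sum_{j \in I}\frac{v_j^2}{x_j} \;\ge\; \frac{\bigl(\sum_{j \in I} v_j\bigr)^2}{\sum_{j \in I} x_j} \;=\; |I|\,\frac{\bar{v}^2}{\bar{x}} \;=\; |I|\,\frac{1-\bar{U}}{\bar{A}},$$
which is exactly the first-term contribution after averaging. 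So the first term weakly decreases under symmetrization.

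The main obstacle I expect is preserving the monotonicity of $x_j = (1-U_j)A(c_{(j)})$ across block boundaries after averaging. But since the original $x^*_j$ are non-increasing within $I = [L,R]$, the average $\bar{x} = (1-\bar{U})\bar{A}$ satisfies $x^*_R \le \bar{x} \le x^*_L$; hence $x^*_{L-1} \ge x^*_L \ge \bar{x}$ at the left boundary and $\bar{x} \ge x^*_R \ge x^*_{R+1}$ at the right boundary, and the same sandwiching handles the case of two adjacent averaged blocks. Finally $\bar{A} \in [0,1]$ because $x^*_j \le v^*_j$ implies $\bar{x} \le \bar{v}$, giving $\bar{A} = \bar{x}/\bar{v} \le 1$. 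Combining, the symmetrized pair is feasible with no larger objective, so it is itself optimal and satisfies the required block-constancy.
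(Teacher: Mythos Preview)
Your overall strategy---averaging $(U^*,A^*)$ within each $\phi$-level set and checking the objective only improves---is exactly the paper's approach. But your verification of the budget constraint has a real gap.

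The budget constraint in~\eqref{prog:biased_roundi_app} is
\[
\sum_{j=1}^n (1-U_j)\,A(c_{(j)})\,\psi(c_{(j)}) \le B,
\]
with the \emph{non-regularized} virtual cost $\psi$, not $\phi$. Within a $\phi$-level set $I=[L,R]$, the values $\psi(c_{(j)})$ are \emph{not} constant (that is precisely when ironing is nontrivial), so your claim that the block spending ``equals $\phi_0\sum_{j\in I}(1-U_j)A(c_{(j)})$'' is false for the original $(U^*,A^*)$; it only becomes true \emph{after} you flatten $(1-U_j)A(c_{(j)})$ to a constant $\bar x$ on $I$, since then $\bar x\sum_{j\in I}\psi(c_{(j)})=\bar x\,|I|\,\phi_0$ by Lemma~\ref{lem:unbiased_chara_lem1}(2). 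What you still owe is the inequality
\[
\phi_0\sum_{j\in I}(1-U^*_j)A^*(c_{(j)}) \;\le\; \sum_{j\in I}(1-U^*_j)A^*(c_{(j)})\,\psi(c_{(j)}),
\]
i.e.\ that the budget weakly \emph{decreases} under symmetrization, not that it is preserved. This requires one more ingredient: writing $x^*_j=(1-U^*_j)A^*(c_{(j)})$ (non-increasing on $I$) and Abel-summing against the partial sums of $\psi$, the inequality follows from the property $\mathrm{Avg}(L,k)\ge \mathrm{Avg}(L,R)=\phi_0$ for all $k\in[L,R]$, which holds because $R$ minimizes $\mathrm{Avg}(L,\cdot)$ by the definition of the regularized virtual costs. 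The paper supplies exactly this step; without it your feasibility check does not go through.

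The remaining pieces of your argument---preservation of $\sum_j U_j$, the Cauchy--Schwarz/Jensen bound for the first term, and the sandwich argument for monotonicity of $(1-U_j)A(c_{(j)})$ across block boundaries---are correct and match the paper (your monotonicity check is in fact more explicit than the paper's).
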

\begin{proof}
Define $P_H$ to be the set of $j$'s that have $\phi(c_{(j)}) = H$. Then the elements in $P_H$ must be consecutive. Suppose $P_H = \{ l, \dots, r\}$. 

Let $U^*, A^*$ be an optimal solution.  We show that the following solution $U', A'$ is also feasible and has objective value no larger than that of $U^*, A^*$:
\begin{align*}
&U'_j = \frac{1}{r-l+1} \sum_{k=l}^r  U^*_k, \quad \textrm{for all } j\in P_H\\
&A'(c_{(j)}) = \frac{\sum_{k=l}^r (1-U^*_k) A^*(c_{(k)}) }{\sum_{k=l}^r 1-U^*_k}, \quad \textrm{ for all } j \in P_H.
\end{align*}
We first prove that $U', A'$ is feasible. The monotonicity constraint and the range constraint are trivially satisfied.  For the budget constraint on $\sum_{j=1}^n (1-U_j) \cdot A(c_{(j)}) \psi(c_{(j)})$, first observe that the sum $\sum_{k=l}^r (1- U'_k)A'(c_{(k)})$ remains the same as $\sum_{k=l}^r (1-U^*_k) A^*(c_{(k)})$,
\begin{align*}
\sum_{k=l}^r (1- U'_k)A'(c_{(k)}) = &\left(\sum_{k=l}^r 1- U'_k \right) A'(c_{(l)})\\
= & \left(\sum_{k=l}^r 1- U^*_k\right)\frac{\sum_{k=l}^r (1-U^*_k) A^*(c_{(k)}) }{\sum_{k=l}^r 1-U^*_k}\\
= & \sum_{k=l}^r (1-U^*_k) A^*(c_{(k)}).
\end{align*}
By definition of regularized virtual costs,  $Avg(l,r) = \phi(c_{(l)}) =  \min_{k: k\ge l} Avg(l, k)$, where $Avg$ is the average function of $\psi$. Thus
\begin{align*}
Avg(l, r) \le Avg(l, k) \textrm{ for all } k\in[l,r].
\end{align*}
And since $(1- U^*(c))A^*(c)$ must be non-increasing, by shifting $(1- U^*(c))A^*(c)$ to be constant within $P_H$, the total expected spending must only decrease,
$$
\sum_{k=l}^r (1-U'_k) \cdot A'(c_{(k)}) \psi(c_{(k)}) \le \sum_{k=l}^r (1-U^*_j) \cdot A^*(c_{(k)}) \psi(c_{(k)}).
$$ 

Then we prove the optimality of $U', A'$. Because $\sum_{k=l}^r U'_k = \sum_{k=l}^r U^*_k$, the second term of the objective function remains the same. For the first term, since $f(x) = \frac{1}{x}$ is a convex function when $x>0$, by Jensen's inequality,
\begin{align*}
\sum_{k=l}^r \frac{1-U'_k}{A'(c_{(k)})} = & \left(\sum_{k=l}^r 1-U'_k\right)\frac{1}{A'(c_{(l)})} \\
= & \left(\sum_{k=l}^r 1- U^*_k\right) \frac{1}{\sum_{k=l}^r \frac{1-U^*_k}{\sum_{k=l}^r 1-U^*_k} A^*(c_{(k)}) }\\
\le  & \left(\sum_{k=l}^r 1- U^*_k\right) \sum_{k=l}^r \frac{1-U^*_k}{\sum_{k=l}^r 1-U^*_k} \frac{1}{ A^*(c_{(k)}) } \\
= & \sum_{k=l}^r (1-U^*_k) \frac{1}{ A^*(c_{(k)}) }.
\end{align*}
Therefore $U', A'$ is no worse than $U^*, A^*$, and is thus optimal.
\end{proof}

Combining the above three lemmas, there exists an optimal $U^*$ as defined in the theorem. Next, we characterize the optimal $A^*$.
\begin{lemma} \label{lem:biased_chara_lem2}
If we fixed an optimal solution of $U$: $U^*_1, \dots, U^*_j$, which should have $U^*_j= U^*_k$ if $\phi(c_{(j)}) = \phi(c_{(k)})$ according to Lemma~\ref{lem:biased_chara_lem3}, then the optimal solution of $A$ should be equal to
\begin{align*}
A^*(c_{(j)}) =   \min \left\{1, \frac{\lambda}{\sqrt{\phi(c_{(j)})}} \right\}, 
\end{align*}
where $\lambda$ is chosen such that the budget constraint is satisfied with equality $$ \sum_{j=1}^n (1-U^*_j) \cdot A^*(c_{(j)}) \psi(c_{(j)}) = B.$$
%or equivalently
%$$
% \sum_{j=1}^i (1-U^*_j) \cdot A^*(c_{(j)}) \phi(c_{(j)}) = \frac{B}{4\sqrt{n/i}}.
%$$
\end{lemma}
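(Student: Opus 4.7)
The plan is to adapt the Lagrangian/KKT argument used in the proof of Theorem~\ref{lem:unbiased_chrz}, treating the $(1-U^*_j)$'s simply as fixed coefficients. With $U^*$ frozen, only the first term of the objective in \eqref{prog:biased_roundi_app} depends on $A$, so I would minimize
\[
\sum_{j=1}^n \frac{1-U^*_j}{A(c_{(j)})}
\]
subject to $\sum_j (1-U^*_j) A(c_{(j)}) \psi(c_{(j)}) \le B$, the monotonicity of $(1-U^*_j)A(c_{(j)})$ in $j$, and $0 \le A(c_{(j)}) \le 1$. This objective is convex on $A_j>0$, so KKT is necessary and sufficient.

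Writing the Lagrangian with multipliers $\xi$ for the budget, $\pi_j$ for monotonicity of the \emph{product} $(1-U_j)A_j$, and $\eta^0_j,\eta^1_j$ for the box constraints, the stationarity condition at an $A_j \in (0,1)$ with $U^*_j<1$ simplifies (after dividing by $1-U^*_j$) to
\[
\frac{1}{A(c_{(j)})^2} \;=\; \xi\,\psi(c_{(j)}) + \pi_{j-1} - \pi_j,
\]
which is structurally identical to the key equation in the proof of Theorem~\ref{lem:unbiased_chrz}. I would therefore guess the same dual variables as there: $\xi = 1/\lambda^2$, $\pi_j = \xi\sum_{k=1}^{j}(\psi(c_{(k)}) - \phi(c_{(k)}))$, and $\eta^1_j = \max\{0,\,1-\xi\phi(c_{(j)})\}$. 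Dual feasibility $\pi_j \ge 0$ is then immediate from part (3) of Lemma~\ref{lem:unbiased_chara_lem1}, and this choice forces $A^*(c_{(j)}) = \min\{1,\lambda/\sqrt{\phi(c_{(j)})}\}$.

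For primal feasibility, monotonicity of $\phi$ (part (1) of Lemma~\ref{lem:unbiased_chara_lem1}) makes $A^*$ non-increasing, and since $U^*_j$ is non-decreasing in $j$ by Lemma~\ref{lem:biased_chara_lem1}, the product $(1-U^*_j)A^*(c_{(j)})$ is non-increasing. The budget constraint is satisfied with equality by the choice of $\lambda$. For $U^*_j = 1$, the variable $A_j$ drops out of the objective and both constraint sums (since every appearance is weighted by $1-U^*_j=0$), so the stated value is consistent by convention. Complementary slackness for the $\eta^1_j$'s is standard.

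The delicate step, and the one I expect to be the main obstacle, is complementary slackness for the monotonicity multipliers $\pi_j$, since the constraint bound is on $(1-U_j)A_j$ rather than $A_j$ alone. By part (3) of Lemma~\ref{lem:unbiased_chara_lem1}, $\pi_j$ can be nonzero only when $\phi(c_{(j)}) = \phi(c_{(j+1)})$; on such an index the proposed $A^*$ satisfies $A^*(c_{(j)}) = A^*(c_{(j+1)})$, and the hypothesis that $U^*_j = U^*_k$ whenever $\phi(c_{(j)}) = \phi(c_{(k)})$ (granted by the statement and by Lemma~\ref{lem:biased_chara_lem3}) then forces $(1-U^*_j)A^*(c_{(j)}) = (1-U^*_{j+1})A^*(c_{(j+1)})$, so the monotonicity constraint is tight exactly where $\pi_j \neq 0$. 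This closes the KKT check and certifies optimality; binary search over $\lambda$ to match the budget proceeds verbatim as in Theorem~\ref{lem:unbiased_chrz}.
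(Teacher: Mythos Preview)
Your approach is essentially identical to the paper's: fix $U^*$, write the Lagrangian with a multiplier on the \emph{product}-monotonicity constraint, and verify KKT using the regularized virtual costs exactly as in Theorem~\ref{lem:unbiased_chrz}. The only correction needed is in your guess for $\eta^1_j$: since the objective term, the budget term, and the monotonicity terms all carry a factor $(1-U^*_j)$ in front of $A_j$, while the box constraint $A_j \le 1$ does not, stationarity at an index with $A^*_j = 1$ forces
\[
\eta^1_j \;=\; (1-U^*_j)\max\{0,\,1-\xi\,\phi(c_{(j)})\}
\]
rather than $\max\{0,\,1-\xi\,\phi(c_{(j)})\}$. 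With that single adjustment your KKT verification goes through exactly as you sketch, and this is precisely the dual certificate the paper uses.
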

\begin{proof}
When $U$ is fixed as $U^*_1, \dots, U^*_j$,  then the optimal $A^*$ should be the solution of 
\begin{align*} 
A^* = &\arg\min_{A}  \quad  \sum_{j = 1}^n \frac{(1-U^*_j)}{A(c_{(j)})} \\
& \qquad  \textrm{s.t.} \quad \sum_{j=1}^n (1-U^*_j) \cdot A(c_{(j)}) \psi(c_{(j)}) \le B \notag\\
& \qquad \qquad (1-U^*_j) A(c_{(j)})  \textrm{ is monotone non-increasing in $j$} \notag\\
& \qquad \qquad  0\le A(c_{(j)}) \le 1,  \quad \forall j \notag
\end{align*}
The Lagrangian function is (for simplicity we write $A_j = A(c_{(j)})$)
\begin{align*}
L( A, \xi, \vec{\pi}, \vec{\eta}^0, \vec{\eta}^1) = &\sum_{j=1}^n \frac{1-U^*_j}{A_j} \\
 & + \xi \left( \sum_{j=1}^n (1-U^*_j) A_j \psi(c_{(j)}) - B\right) \\
 & + \sum_{j=1}^n \pi_j\left((1-U^*_{j+1})A_{j+1} - (1-U^*_j)A_j\right)\\
 & + \sum_{j=1}^n \eta^0_j A_j + \sum_{j=1}^n \eta^1_j (A_j - 1).
\end{align*}
%Referring back to the Lagrangian function for the unbiased case (Section~\ref{app:unbiased_chrz}), only the first two lines are changed. So we only prove the KKT conditions that involve the first two lines. 
We prove that the optimal primal variables are 
\begin{align*}
A^*_j = \min \left\{1, \frac{\lambda}{\sqrt{\phi(c_{(j)})}} \right\},
\end{align*}
where $\lambda$ is chosen such that the budget constraint is satisfied with equality $ \sum_{j=1}^n (1-U^*_j) \cdot A^*_j \psi(c_{(j)}) = B$, and the optimal dual variables are
\begin{align*}
&\xi = \frac{1}{\lambda^2}, \\
& \pi_j = \pi_{j-1} + \xi \left(\psi(c_{(j)}) - \phi(c_{(j)})\right),  \textrm{ (Here we assume $\pi_{0} = 0$.)}\\
& \eta_j^0 = 0 \textrm{ for all } j, \\
& \eta_j^1 = \left\{ \begin{array}{ll}
				(1-U^*_j) \left( 1-\xi \cdot \phi(c_{(j)}) \right), \textrm{ if } \xi \cdot \phi(c_{(j)}) < 1,\\
				0, \textrm{ otherwise.}
			\end{array} \right.
\end{align*}

\paragraph{Primal feasibility:} We first prove the primal feasibility.
\begin{enumerate}
\item By our definition of $A^*$, the budget constraint should be satisfied with equality.
%For each $j$, let $I_j$ be the set of all the $k$ that has $\phi(c_{(k)}) = \phi(c_{(j)})$. And let $\mathcal{I}$ be the set of all different $I_j$, then according to (2) in Lemma~\ref{lem:unbiased_chara_lem1} and the definition of $A^*$,
%\begin{align*}
%\sum_{j=1}^i A^*_j \psi(c_{(j)}) = &\sum_{I \in \mathcal{I}} \sum_{j\in I} A^*_j  \psi(c_{(j)}) = \sum_{I \in \mathcal{I}} A^*_I \sum_{j\in I}  \psi(c_{(j)}) = \sum_{I \in \mathcal{I}} A^*_I \sum_{j\in I}  \phi(c_{(j)})\\
%=& \sum_{j=1}^i A^*_j  \phi(c_{(j)}) = \sum_{j=1}^i  \min\left\{ \phi(c_{(j)}), \lambda \sqrt{\phi(c_{(j)})} \right\}  = \frac{B}{4\sqrt{n/i}},
%\end{align*}
%where $A_I$ is the the same $A_j$ for all $j\in I$.
\item By (1) in Lemma~\ref{lem:unbiased_chara_lem1}, $\phi(c)$ is a non-decreasing function of $c$. Then $A^*(c) = \min\left\{ 1, \frac{\lambda}{\sqrt{\phi(c)}} \right \}$ is a non-increasing function. And since $U^*_j$ is non-decreasing in $j$, $(1-U^*_j) A^*_j$ is non-increasing in $j$.
\item It is easy to verify that $0 \le A^*(c) \le 1$. 
\end{enumerate}
\paragraph{Dual feasibility:}  By (3) in Lemma~\ref{lem:unbiased_chara_lem1}, it is easy to verify that all of the dual variables greater or equal to $0$.  
\paragraph{Stationarity:} The partial derivative of $L( A, \xi, \vec{\pi}, \vec{\eta}^0, \vec{\eta}^1)$ with respect to each $A_j$ is
\begin{align*}
&\frac{\partial L( A, \xi, \vec{\eta}^0, \vec{\eta}^1) }{\partial A_j} \\
= & -   \frac{1- U_j}{A^2_j} +\xi  \cdot (1-U_j) \psi(c_{(j)}) + (1-U_{j})\pi_{j-1} - (1-U_j) \pi_j + \eta^0_j + \eta^1_j \\
= & -  (1- U_j) \max\left\{ 1, \  \xi \cdot \phi(c_{(j)}) \right\} +\xi  \cdot (1-U_j)  \cdot  \psi(c_{(j)}) + (1-U_{j})\pi_{j-1} - (1-U_j) \pi_j+ \eta^1_j\\
= & \ \xi(1- U_j)  \cdot (\psi(c_{(j)}) - \phi(c_{(j)}))  + (1-U_{j})\pi_{j-1} - (1-U_j) \pi_j \\
=& \ 0
\end{align*}
By the definition of $\pi_j$, we have the above quantity equal to $0$. 
\paragraph{Complementary slackness:} 
\begin{enumerate}
\item The budget constraint in the primal is satisfied with equality. 
\item For all $A^*_j \neq A^*_{j+1}$, we must have $\phi(c_{(j)}) \neq \phi(c_{(j+1)})$. Then by (3) in Lemma~\ref{lem:unbiased_chara_lem1}, $\pi_j = 0$. Since $U^*_j = U^*_{j+1}$ if $\phi(c_{(j)}) = \phi(c_{(j+1)})$ (which means $A^*_j = A^*_{j+1}$),  $\pi_j\left((1-U^*_{j+1})A^*_{j+1} - (1-U^*_j)A^*_j\right) = 0$ for all $j$.
\item We have $\eta_j^0 = 0$ for all $j$ and $\eta_j^1 = 0$ for all $A^*_j = \frac{1}{\sqrt{\xi \cdot \phi(c_{(j)})}} <1$.
\end{enumerate}
\end{proof}

Therefore there must exists an optimal solution $U^*, A^*$ as stated in the theorem. 

\subsection{Proof of Lemma~\ref{lem:baised_eq_point}} \label{app:biased_eq_point}

By Lemma~\ref{lem:unbiased_chrz},  the optimal value of $\lambda$ can be found using binary search after $U^*$ is decided.  Define $M = \sum_{k=1}^i U^*_k$. Then when $M$ is decided, we can immediately find a rule $U^*$ that has the characterization in the theorem, and the optimal $A^*$ can be computed as well. We now show how to find optimal $M$ that minimizes 
$$
\beta^2 \cdot \frac{1}{n}  \sum_{j = 1}^n \frac{1-U_j}{A(c_{(j)})} + \left(\frac{M}{n}  \right)^2
$$
using binary search within time $O(\log |C|)$. 

Let $g(M)$ be the the optimal value of the first term $\beta^2 \cdot \frac{1}{n}  \sum_{j = 1}^n \frac{1-U_j}{A(c_{(j)})}$ when $M$ is decided. Then the objective function is just $g(M) + \left(\frac{M}{n}  \right)^2$.  The second term $\left(\frac{M}{n}  \right)^2$ is a convex function of $M$. We prove that $g(M)$ is also a convex function of $M$.
Before the main proof, we first prove the following claim.
\begin{claim} \label{claim:biased_chara_cl1}
Let $U^*, A^*$ be an optimal solution as stated in the theorem. Then 
$$
 \sum_{j=1}^n (1-U^*_j) \cdot A^*(c_{(j)}) \phi(c_{(j)}) =  \sum_{j=1}^n (1-U^*_j) \cdot A^*(c_{(j)}) \psi(c_{(j)}) = B.
$$
\end{claim}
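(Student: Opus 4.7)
The plan is to establish the first equality $\sum_j (1-U^*_j) A^*(c_{(j)}) \phi(c_{(j)}) = \sum_j (1-U^*_j) A^*(c_{(j)}) \psi(c_{(j)})$; the second equality is just the statement that the budget constraint binds, which is already part of Theorem~\ref{thm:CI_roundi}. The key observation is that $\phi$ is obtained from $\psi$ by averaging on certain intervals, so as long as the coefficient $(1-U^*_j)A^*(c_{(j)})$ is constant on each of those intervals, swapping $\psi$ for $\phi$ leaves the weighted sum unchanged.

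Concretely, I would first partition the index set $\{1,\dots,n\}$ into the maximal intervals $I$ on which $\phi(c_{(\cdot)})$ is constant; these are exactly the sets $I_i$ introduced in Lemma~\ref{lem:unbiased_chara_lem1}, and part (2) of that lemma gives $\sum_{j\in I} \phi(c_{(j)}) = \sum_{j\in I} \psi(c_{(j)})$. Next I would argue that on any such interval $I$ the factor $(1-U^*_j)A^*(c_{(j)})$ does not depend on $j$: by the characterization in Theorem~\ref{thm:CI_roundi}, $A^*(c_{(j)}) = \min\{1,\lambda/\sqrt{\phi(c_{(j)})}\}$ is a function of $\phi(c_{(j)})$ and hence constant on $I$, and by Lemma~\ref{lem:biased_chara_lem3} $U^*_j$ is constant on any set of indices sharing the same value of $\phi(c_{(j)})$, hence constant on $I$.

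Writing $w_I$ for this common value of $(1-U^*_j)A^*(c_{(j)})$ on an interval $I$, the desired identity then follows from
\begin{equation*}
\sum_{j=1}^n (1-U^*_j) A^*(c_{(j)}) \phi(c_{(j)}) \;=\; \sum_{I} w_I \sum_{j\in I}\phi(c_{(j)}) \;=\; \sum_{I} w_I \sum_{j\in I}\psi(c_{(j)}) \;=\; \sum_{j=1}^n (1-U^*_j) A^*(c_{(j)}) \psi(c_{(j)}),
\end{equation*}
and the last expression equals $B$ by Theorem~\ref{thm:CI_roundi}. There is no real obstacle here: the only thing to verify carefully is that the two prior structural lemmas (Lemma~\ref{lem:unbiased_chara_lem1}(2) and Lemma~\ref{lem:biased_chara_lem3}) can be combined so that $(1-U^*_j)A^*(c_{(j)})$ is genuinely constant on each maximal plateau of $\phi$, which is immediate once the characterization of $A^*$ as a function of $\phi$ is invoked.
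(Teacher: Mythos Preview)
Your proposal is correct and follows essentially the same argument as the paper: partition the indices into the level sets of $\phi$, use that both $A^*(c_{(j)})$ and $U^*_j$ (by the form of the optimal solution in Theorem~\ref{thm:CI_roundi}) depend only on $\phi(c_{(j)})$ and are therefore constant on each block, and then apply Lemma~\ref{lem:unbiased_chara_lem1}(2) to swap $\phi$ for $\psi$ on each block. The only minor remark is that you need not invoke Lemma~\ref{lem:biased_chara_lem3} separately, since the hypothesis ``$U^*,A^*$ as stated in the theorem'' already forces $U^*_j$ to take a common value on each level set of $\phi$.
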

\begin{proof}
For each $j$, let $I_j$ be the set of all the $k$ that has $\phi(c_{(k)}) = \phi(c_{(j)})$. And let $\mathcal{I}$ be the set of all different $I_j$, then according to (2) in Lemma~\ref{lem:unbiased_chara_lem1} and the definition of $A^*$,
\begin{align*}
\sum_{j=1}^n (1-U^*_j) A^*_j \phi(c_{(j)}) = &\sum_{I \in \mathcal{I}} \sum_{j\in I} (1-U^*_j) A^*_j  \phi(c_{(j)}) \\
= & \sum_{I \in \mathcal{I}} (1-U^*_I)A^*_I \sum_{j\in I}  \phi(c_{(j)}) \\
= & \sum_{I \in \mathcal{I}} (1-U^*_I)A^*_I \sum_{j\in I}  \psi(c_{(j)})\\
=& \sum_{j=1}^n (1-U^*_j) A^*_j  \psi(c_{(j)}) =  B,
\end{align*}
where $A^*_I$ is the the same $A^*_j$ for all $j\in I$, $U^*_I$ is the same $U^*_j$ for all $j \in I$.
\end{proof}

\begin{lemma} 
The function $g(M)$ is a convex function of $M$. Furthermore, let $A_M$ be an optimal allocation rule  when $\sum_{j=1}^n U_j = M$ and let $c_{(r)}$ be the largest cost that is not ignored with probability $1$.  Then for non-integer $M$ that has $A_M(c_{(r)}) < 1$,
$$
\frac{\partial g(M)}{\partial M} =  - \beta^2 \cdot \frac{2}{n} \cdot  \frac{1}{A_M(c_{(r)})}.
$$
which is an non-decreasing function of $M$.
%$$4 \cdot  \frac{\alpha^2_\gamma}{n} \cdot  \frac{1}{A^i(c_{(r)})}  = \frac{M}{i}$$
%and this optimal solution can be bound using binary search over $M$. 
\end{lemma}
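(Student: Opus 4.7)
The plan is to use the explicit characterization of the optimal pair $(A^*,U^*)$ from Theorem~\ref{thm:CI_roundi}, together with Claim~\ref{claim:biased_chara_cl1}, to reduce $g(M)$ to a closed-form expression and then differentiate. By the characterization, for each $M$ the optimizer has (generically) a unique boundary index $r$ with $U^*_r\in(0,1)$, $U^*_j=1$ for $\phi(c_{(j)})>\phi(c_{(r)})$, $U^*_j=0$ for $\phi(c_{(j)})<\phi(c_{(r)})$, and $A^*(c_{(j)})=\min\{1,\lambda/\sqrt{\phi(c_{(j)})}\}$ with $\lambda$ set to saturate the budget. Writing $\mathcal{C}_1=\{j:A^*_j=1\}$ and $\mathcal{C}_2=\{j:0<A^*_j<1\}$, and replacing $\psi$ by $\phi$ in the budget equality via Claim~\ref{claim:biased_chara_cl1}, I would solve for $\lambda$ as
\[
\lambda \;=\; \frac{B-\sum_{j\in\mathcal{C}_1}(1-U^*_j)\phi(c_{(j)})}{\sum_{j\in\mathcal{C}_2}(1-U^*_j)\sqrt{\phi(c_{(j)})}},
\]
and substitute into $\sum_j(1-U^*_j)/A^*(c_{(j)})$ to obtain
\[
g(M)\;=\;\frac{\beta^2}{n}\Bigl[\textstyle\sum_{j\in\mathcal{C}_1}(1-U^*_j)\;+\;\bigl(\sum_{j\in\mathcal{C}_2}(1-U^*_j)\sqrt{\phi(c_{(j)})}\bigr)^2\big/\bigl(B-\sum_{j\in\mathcal{C}_1}(1-U^*_j)\phi(c_{(j)})\bigr)\Bigr].
\]

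\textbf{Derivative.} I would then restrict attention to an open interval of $M$ on which the partition $(\mathcal{C}_1,\mathcal{C}_2,r,\text{fully-ignored set})$ is constant, so only $U^*_r$ moves with $M$; by the monotone structure of the optimum this is the generic situation and the regime boundaries are isolated. On such an interval, the $\mathcal{C}_1$-sums are constant, while $d(1-U^*_r)/dM=-1$, so $S(M):=\sum_{j\in\mathcal{C}_2}(1-U^*_j)\sqrt{\phi(c_{(j)})}$ satisfies $dS/dM=-\sqrt{\phi(c_{(r)})}$. A direct differentiation of the closed form, combined with the identities $\lambda=(B-\sum_{\mathcal{C}_1}(1-U^*_j)\phi_j)/S$ and $A_M(c_{(r)})=\lambda/\sqrt{\phi(c_{(r)})}$, will collapse to the claimed formula $\partial g/\partial M=-(2\beta^2/n)\cdot 1/A_M(c_{(r)})$.

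\textbf{Convexity.} For convexity it suffices to show that $A_M(c_{(r(M))})$ is non-decreasing in $M$. Within a fixed regime, increasing $M$ raises $U^*_r$, lowers $S(M)$, raises $\lambda$, and therefore raises $A_M(c_{(r)})=\lambda/\sqrt{\phi(c_{(r)})}$. At a regime transition the boundary moves from $r$ to some $r'<r$ once $U^*_r$ reaches $1$, and since $\phi$ is monotone non-decreasing by Lemma~\ref{lem:unbiased_chara_lem1} we have $\phi(c_{(r')})\le\phi(c_{(r)})$; continuity of $\lambda$ in $M$ then gives $A_{M^+}(c_{(r')})\ge A_{M^-}(c_{(r)})$, so the right-derivative of $g$ is at least its left-derivative at the transition. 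An analogous argument handles transitions where an index enters or leaves $\mathcal{C}_1$. Combining these pieces, $\partial g/\partial M$ is non-decreasing, proving convexity of $g$ and hence of $g(M)+(M/n)^2$.

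\textbf{Main obstacle.} I expect the main technical difficulty to be bookkeeping at the regime transitions, specifically the possibility that a single increase in $M$ simultaneously pushes $U^*_r$ to $1$ and shifts the cap membership of some adjacent index, so that the neat ``only $U^*_r$ moves'' picture breaks down. The structural fact that $A^*$ is non-increasing in $\phi$ (Theorem~\ref{thm:CI_roundi}) pins the ordering ``capped low-$\phi$ indices $\prec$ interior indices $\prec$ boundary $\prec$ fully-ignored high-$\phi$ indices,'' and the monotonicity of $\phi$ controls the direction in which $\lambda/\sqrt{\phi(c_{(r)})}$ jumps; these two observations together should let me check the right-/left-derivative comparison in every transition case without case-by-case disaster.
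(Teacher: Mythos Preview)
Your approach is correct and uses the same computational core as the paper: solve the saturated budget for $\lambda$, substitute to write the non-capped part of $g(M)$ as $S(M)^2/(B-\text{const})$, and differentiate to get $-\tfrac{2\beta^2}{n}\sqrt{\phi(c_{(r)})}/\lambda=-\tfrac{2\beta^2}{n}\cdot 1/A_M(c_{(r)})$. The paper's only twist is to first recast the problem as a continuous one over $x\in[0,n-M]$ with $\phi(c_x):=\phi(c_{\lceil x\rceil})$ piecewise constant, so that Leibniz's rule handles in one stroke the regime-transition bookkeeping you identified as the main obstacle; your discrete-regime argument reaches the same conclusion but with more casework.
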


\begin{proof}
To simplify the notation, in the proof of this lemma, we use $c_1, \dots, c_n$ instead of $c_{(1)}, \dots, c_{(n)}$ to represent the costs in $C$ ordered from smallest to largest. To better illustrate how our objective value changes as $M$ increases, we consider an equivalent optimization problem that is defined as follows:
\begin{align} \label{CI_con_approx}
A^*, M^* = \arg\min_{A, M} &\quad \beta^2 \cdot  \frac{1}{n} \int_{0}^{n-M} \frac{1}{A(x)} \,dx + \left(\frac{M}{n} \right)^2\\
\textrm{s.t.} & \quad \int_{0}^{n-M} A(x) \cdot \phi(c_x) \, dx \le B  \notag \\
& \quad A \textrm{ is monotone non-increasing} \notag\\
&\quad 0\le A(x) \le 1, \quad \forall \, 0\le x \le n \notag
\end{align}
where
$$
c_x = c_{\lceil x \rceil}, \quad \forall \, 0\le x \le n,
$$
and thus $\phi(c_x) = \phi(c_{\lceil x \rceil})$. This optimization problem is equivalent to~\eqref{prog:biased_roundi_app} because 
\begin{enumerate}
    \item For any feasible solution $(A(x), M)$ of the above optimization problem, we can find a feasible solution of~\eqref{prog:biased_roundi_app} $(A(c_j), U_j)$ with better objective value by setting $A(c_j)$ equal to the mean of $A(x)$ for $x \in \{y: y \le n-M \textrm{ and } \phi(c_y) = \phi(c_j)\}$, and setting $U_j$ to be the mean of indicator function $\mathbbm{1}(x \ge n-M)$ for  $x \in \{y: \phi(c_y) = \phi(c_j)\}$. It can be verified that the budget constraint of~\eqref{prog:biased_roundi_app} is still satisfied according to (2) in Lemma~\ref{lem:unbiased_chara_lem1}. And the objective function only gets smaller because the mean of the reciprocals is no smaller than the reciprocal of the mean, as the reciprocal function is convex.
    \item For any optimal solution $A(c_j), U_j$ of~\eqref{prog:biased_roundi_app} that has the characterization as stated in the theorem, we can find a feasible solution $A(x), M$ of the above optimization problem with the same objective value by setting $M = \sum_j U_j$,  $A(x) = A(c_{\lceil x \rceil})$.
\end{enumerate}   
Therefore by the characterization in Lemma~\ref{lem:biased_chara_lem2}, if we fix a threshold $M$, there exists an optimal allocation rule, denoted by $A^*_M(x)$, which is equal to
$$ 
A^*_M(x) = \min\left\{ 1, \frac{\lambda_M}{\sqrt{\phi(c_x)}} \right\}
$$ 
for $0 \le x \le n-M$, and $\lambda_M$ is chosen such that the budget constraint is satisfied with equality. By Claim~\ref{claim:biased_chara_cl1}, the value of $\lambda_M$ satisfies
$$
 \sum_{j=1}^n (1-U^*_j) \cdot A^*(c_{(j)}) \phi(c_{(j)}) =  \sum_{j=1}^n (1-U^*_j) \cdot A^*(c_{(j)}) \psi(c_{(j)}) = B,
$$
or equivalently using $A^*_M(x)$ and $M$,
$$
\int_0^{n-M} A^*_M(x) \phi(c_x) \, dx = \int_0^{n-M} \min\left\{\phi(c_x), \lambda_M  \sqrt{\phi(c_x)}\right\} \,dx = B.
$$

Recall that 
$$
g(M) = \beta^2 \cdot \frac{1}{n} \int_{0}^{n-M} \frac{1}{A^*_M(x)} \,dx.
$$
Take derivative of the above function\footnote{The function $g(M)$ may not be differentiable at integer points, but is semi-differentiable, WLOG we use right derivatives at those points.}, by the Leibniz's rule for differentiation under the integral,
\begin{align} \label{eq:opt_A_derivative}
\frac{\partial g(M)}{\partial M} = &\frac{\partial }{\partial M} \left(\beta^2 \cdot \frac{1}{n} \int_{0}^{n-M} \frac{1}{A^*_M(x)} \,dx \right)\\
= & \beta^2 \cdot  \frac{1}{n} \left(- \frac{1}{A^*_M(n-M)} -  \int_{0}^{n-M}  \frac{1}{A^*_M(x)^2} \cdot \frac{\partial A^*_M(x)}{\partial M} \,dx \right). \notag
\end{align}
We denote the quantity inside the brackets by $Q$, then the derivative equals $\beta^2 \cdot \frac{1}{n} \cdot Q$.
Let $t$ be the maximum of $x$ that has $A^*_M(x)=1$. Then for all $x \in [0,t]$, we should have $\frac{\partial A^*_M(x)}{\partial M} = 0$, because $\lambda_M$ can only increase as $M$ increase, which means $A^*_M(x)$ can only increase for $x\in [0,t]$, but they have already reached the upper bound $1$, which means that $A^*_M(x)$ will stay the same as $M$ increase for all $x\in [0,t]$.
Then we can remove the part $0\le x < t$ of  $\int_{0}^{n-M} - \frac{1}{A^*_M(x)^2} \cdot \frac{\partial A^*_M(x)}{\partial M} \,dx $ because it integrates to zero. So the term inside the brackets is equal to
\begin{align} \label{eq:opt_A_d2}
Q = &  -  \frac{1}{A^*_M(n-M)} - \int_{t}^{n-M} \frac{1}{A^*_M(x)^2} \cdot \frac{\partial A^*_M(x)}{\partial M} \,dx \notag\\
=\, & \frac{\partial }{\partial M} \left(  \int_{t}^{n-M} \frac{1}{A^*_M(x)} \,dx  \right) \notag\\
=\, & \frac{\partial }{\partial M} \left( \int_{t}^{n-M} \frac{\sqrt{\phi(c_x)}}{\lambda_M} \,dx  \right).
\end{align}
To further simplify the derivative, we represent  $\lambda_M$ as a function of $\phi(c_x)$. Since $\lambda_M$ is chosen such that the budget constraint is satisfied with equality, i.e.,
$$
\int_0^{n-M} A^*_M(x) \phi(c_x) \, dx = B.
$$
And since $t$ is defined such that for $x\in[0,t]$, $A^*_M(x) = 1$; for $x \in (t, i-M]$,  $A^*_M(x) = \frac{\lambda_M}{\sqrt{\phi(c_x)}}$, the above equality is equivalent to 
$$
\int_0^{t} \phi(c_x) + \int_t^{n-M} \lambda_M \sqrt{\phi(c_x)} \, dx = B.
$$
Therefore the value of $\lambda_M$ should satisfy
$$
\lambda_M = \frac{B - \int_0^t \phi(c_x) \, dx}{\int_{t}^{n-M} \sqrt{\phi(c_x)}\,dx}.
$$ 
Plug the above expression of $\lambda_M$ into~\eqref{eq:opt_A_d2}, we get
\begin{align*}
Q = & \frac{\partial }{\partial M} \left(   \frac{\left(\int_{t}^{n-M} \sqrt{\phi(c_x)}\,dx\right)^2 }{B - \int_0^t \phi(c_x) \, dx}   \right) \\
=\, &   \frac{1}{B  - \int_0^t \phi(c_x) \, dx}  \left(2 \cdot \int_{t}^{n-M} \sqrt{\phi(c_x)}\,dx\right) \left(-\sqrt{\phi(c_{n-M})}\right)\\
=\, & \frac{2}{\lambda_M}  \left(-\sqrt{\phi(c_{n-M})}\right)
\end{align*}
by definition of $\lambda_M$. Plug it into~\eqref{eq:opt_A_derivative},
\begin{align*} 
\frac{\partial g(M)}{\partial M} 
=  \beta^2 \cdot \frac{1}{n} \cdot  \frac{2}{\lambda_M}  \left(-\sqrt{\phi(c_{n-M})}\right).
\end{align*}
 Recall that $M$ represents the amount of data that is ignored. When $M$ increases,   $\lambda_M$ will increase, and $\sqrt{\phi(c_{n-M})}$ will  decrease as the regularized virtual cost function is an increasing function. Therefore this partial derivative 
$$
 \beta^2 \cdot \frac{1}{n} \cdot  \frac{2}{\lambda_M}  \left(-\sqrt{\phi(c_{n-M})}\right) =  - \beta^2 \cdot \frac{2}{n} \cdot  \frac{1}{A^*_M(n-M)}
$$
is an increasing function of $M$. 
\end{proof}

Then the optimal threshold
\begin{align*} 
M^* = \arg\min_{M} & \ g(M) + \left( \frac{M}{n} \right)^2.
\end{align*}
 should have the derivative greater than zero on its right, i.e., for $M > M^*$,
$$
\frac{2M}{n^2} > \beta^2 \cdot \frac{2}{n} \cdot  \frac{1}{A^*_M(n-M)}.
$$
 and smaller than zero on the left, i.e., for $M < M^i$,
$$
\frac{2M}{n^2} < \beta^2 \cdot \frac{2}{n} \cdot  \frac{1}{A^*_M(n-M)}.
$$
Thus the optimal $M^*$ can be found by binary search.

\subsection{Optimal Mechanism for Adjacent Cost Sets} \label{app:CI_adj}
The following lemmas will be used in the proof of the main theorem.

\begin{lemma} \label{lem:biased_adj1}
Let $T_1$ and $T_2$ be two adjacent costs sets that have $T_2 = \{c_1, \dots, c_k, \dots, c_{i+1}\}$ with $c_1\le \cdots \le c_{i+1}$ and $T_1 = \{c_1, \dots, c_{k-1}, c_{k+1}, \dots, c_{i+1}\}$. Let $B$ be an arbitrary non-negative number. We use $OPT(T, B)$ to represent the optimal confidence interval mechanism (at round $i$) defined in~\eqref{prog:biased_roundi} when the cost set is $T$ and the budget is $B$. Then define 
\begin{itemize}
    \item $(A,U, P) = OPT(T_1, B/2)$, i.e., $(A,U,P)$ is the optimal mechanism when the cost set is $T_1$ and the budget is $B/2$.
    \item $(A', U', P') = OPT(T_2, B)$, i.e., $(A', U', P')$ is the optimal mechanism when the cost set is $T_2$ and the budget is $B$.
\end{itemize}  
Then we have
$$
(1-U_{k+1}) A(c_{k+1})P(c_{k+1}) \le 2 \cdot (1-U'_{k+1}) A'(c_{k+1})P'(c_{k+1}).
$$
\end{lemma}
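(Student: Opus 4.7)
The plan is to adapt the argument of Lemma~\ref{lem:unbiased_adj} to the biased confidence-interval setting by replacing the allocation rule $A$ with the effective allocation rule $\tilde A(c) := (1-U(c))A(c)$ and tracking the extra factor of $2$ that the transition from $T_1$ to $T_2$ introduces at the level of the regularized virtual costs. First I would use Myerson's payment identity in its extended, integral form (from Lemma~\ref{myerson} together with Definition~\ref{def:extend_mech}), applied to the monotone non-increasing rule $\tilde A$, to obtain
\[
(1-U_{k+1})A(c_{k+1})P(c_{k+1}) \;=\; \tilde A(c_{k+1})\,c_{k+1} \;+\; \int_{c_{k+1}}^{c_{i+1}} \tilde A(v)\, dv,
\]
and the analogous identity for $\tilde A'(c)=(1-U'(c))A'(c)$. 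It therefore suffices to prove the pointwise bound $\tilde A(c) \le 2\,\tilde A'(c)$ for every $c \ge c_{k+1}$ in the common support of $T_1$ and $T_2$, and then integrate.

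Next I would invoke the closed-form characterization from Theorem~\ref{thm:CI_roundi}: on $T_1$ one has $A(c)=\min\{1,\lambda_1/\sqrt{\phi_1(c)}\}$ with $U$ being the $\phi_1$-threshold rule at level $H_1$, and similarly on $T_2$ with parameters $\lambda_2, H_2, \phi_2$. Because $T_1$ and $T_2$ differ by exactly one cost $c_k$, Lemma~\ref{lem:unbiased_adj_lem1} gives the comparison $\phi_1(c) \le 2\phi_2(c)$ for every $c\neq c_k$, so the remaining work is to compare the parameter pairs $(\lambda_1,H_1)$ and $(\lambda_2,H_2)$. I would do this via a candidate-solution argument: given the $T_1$-optimal pair $(A,U)$, construct a feasible candidate for $(T_2,B)$ by extending it with $U(c_k):=1$ and uniformly rescaling $A$ by a factor $\sqrt{2}$ (truncated at $1$). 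Lemma~\ref{lem:unbiased_adj_lem2} ensures that the candidate consumes at most budget $B$ on $T_2$, and optimality of $(A',U')$ then forces $\lambda_2 \ge \lambda_1/\sqrt{2}$ together with a matching compatibility between the ignore thresholds. Combining these parameter inequalities with $\phi_1(c)\le 2\phi_2(c)$ yields the pointwise bound $\tilde A(c) \le 2\tilde A'(c)$ for $c \ge c_{k+1}$, and integrating as in the first step completes the proof.

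The main obstacle I anticipate is coordinating the ignore threshold with the allocation rescaling. In the unbiased analog only $\lambda$ had to be compared, but here I must also rule out the degenerate scenario in which $T_1$ does not fully ignore $c_{k+1}$ while $T_2$ does, for otherwise the right-hand side of the claimed inequality would vanish while the left-hand side would not. This can be excluded by the same candidate construction---shifting any such ignored mass back into sampling strictly improves the $T_2$ objective---but making the strict-improvement argument precise relies on the convexity of the objective in the ignore mass $M$ derived in Lemma~\ref{lem:baised_eq_point}, which I expect to be the most delicate piece of the proof.
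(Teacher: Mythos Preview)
Your high-level outline---reduce to a pointwise bound on the effective allocations $\tilde A(c)=(1-U(c))A(c)\le 2(1-U'(c))A'(c)=2\tilde A'(c)$ for $c\ge c_{k+1}$ and then integrate via the Myerson payment identity---matches the paper. The gap is in how you propose to obtain that pointwise bound.

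The candidate you build for $(T_2,B)$ by setting $U(c_k):=1$ is generally \emph{infeasible}: the constraint is that $(1-U_c)A(c)$ be monotone non-increasing in $c$, and if $c_k$ lies strictly below the $T_1$ ignore threshold (so that $(1-U_{k+1})A(c_{k+1})>0$) you would be inserting a zero between two positive values. Even ignoring that, optimality of $(A',U')$ for the confidence-interval objective does not pin down $\lambda_2$ the way it does in the unbiased problem: here $\lambda$ is determined jointly by the budget \emph{and} the chosen ignore mass $M$, so a feasibility-of-candidate argument alone gives no inequality on $\lambda_2$. You correctly flag the threshold coordination as the crux and point to Lemma~\ref{lem:baised_eq_point}, but the mechanism you sketch for resolving it (shift ignored mass back into sampling and appeal to strict improvement) does not by itself produce the needed comparisons $M\ge M'$ and $\lambda\le\sqrt{2}\,\lambda'$.

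The paper takes a different route. It uses the first-order condition of Lemma~\ref{lem:baised_eq_point}, $a/A(c_r)=M$ and $a/A'(c_{r'})=M'$ at the rightmost non-ignored costs $c_r,c_{r'}$, and then introduces an intermediate mechanism on $T_1$ with budget $B/2$ but with the ignore mass \emph{frozen} at $M'$. With $M$ fixed, the subproblem collapses to the unbiased setting, so Lemma~\ref{lem:unbiased_adj} applies and gives $A''(c_{r'})\le A'(c_{r'})$; feeding this back into the first-order condition and the convexity of Lemma~\ref{lem:baised_eq_point} yields $M\ge M'$ and $r\le r'$. These two facts do all the remaining work: $M\ge M'$ gives $A(c_r)\le A'(c_{r'})\le A'(c_r)$ and hence $\lambda\le\sqrt{2}\,\lambda'$ via $\phi_2(c_r)\ge\tfrac12\phi_1(c_r)$, so $A(c_j)\le 2A'(c_j)$ for $j\ge k+1$; and $r\le r'$ together with $M\ge M'$ gives $U_j\ge U'_j$ for $j\ge k+1$. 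The degenerate case $k>r'$ is dispatched separately by showing both sides vanish. The ingredient your plan lacks is precisely this freeze-$M$-and-reduce-to-Lemma~\ref{lem:unbiased_adj} step, which is what lets the first-order condition compare $\lambda$ and the thresholds simultaneously.
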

\begin{proof}
The proof of this lemma is based on the proof of Lemma~\ref{lem:baised_eq_point}.

We use $c_r \in T_1$ to represent the largest cost that has $U_r < 1$, and use $c_{r'} \in T_2$ to represent the largest cost that has $U'_{r'} <1$. 
And define $M = \sum_{c\in T_1} U_c$ and $M' = \sum_{c\in T_2} U'_c$. To better illustrate the idea of our proof, we assume that the minimum point of our objective function is always achieved when its derivative with respect to $M$ is equal to $0$, which means that 
\begin{align*}
& a \cdot  \frac{1}{A(c_r)}  = M \\
& a \cdot \frac{1}{A'(c_{r'})} = M'
\end{align*}
where $a=4\alpha_\gamma^2 \cdot i/n$ is a constant (see the proof of Lemma~\ref{lem:baised_eq_point} for more details). The following analysis can be extended to the case when the derivative does not exists at the optimal point, but is greater than zero on the right and smaller than zero on the left.

First of all if $k > r'$, which means mechanism $(A', U', P')$ never purchase $c_k$, as well as $c_{k+1}$. Then we can remove $c_k$ from $T_2$ to get a new optimal mechanism $OPT(T_1, B)$. Then the new $r'$ should only move to the left,\footnote{This is because (1) $M'$ decreases and (2) $Avg(r'+1, j)$ will only increase for all $j\ge r'+1$. } which means we still have $k>r'$ and the new mechanism never purchase $c_k$. Because the budget of $(A,U, P) = OPT(T_1, B/2)$ is even smaller, it will not purchase $c_k$ or $c_{k+1}$ either. Therefore the inequality holds with both sides equal to zero.

Now suppose $k \le r'$. To compare the two mechanisms, we define $(A'', U'', P'')$ to be the optimal mechanism when (1) the cost set is $T_1$ and the budget is $B/2$ (which is the same as those of $(A,U,P)$) and (2) the total probability of ignoring the data points equals that of $U'$, i.e., $\sum_{c\in T_1} U''_c = \sum_{c\in T_2} U'_c$. Then we can apply Lemma~\ref{lem:unbiased_adj} to compare the non-zero parts of the two mechanisms. By the first line of the results in Lemma~\ref{lem:unbiased_adj},
$$
A''(c_{r'}) \le A'(c_{r'}), 
$$
and thus 
$$
a \cdot \frac{1}{A''(c_{r'})} \ge a \cdot \frac{1}{A'(c_{r'})} = M',
$$
According to the proof of Lemma~\ref{lem:baised_eq_point},  the optimal mechanism $(A,U,P)$ must have $r \le r'$ and $M \ge M'$. Then 
$$
A(c_r) = a/M \le a/M' = A'(c_{r'}) \le A'(c_r).
$$
According to Lemma~\ref{lem:unbiased_adj_lem1}, $\phi_2(c_r) \ge \frac{1}{2} \phi_1(c_r)$, together with the characterization of the optimal allocation rule in Theorem~\ref{thm:CI_roundi}, (for the mechanism to be non-trivial, we assume $A(c_r) <1$)
$$
\lambda = A(c_r) \sqrt{\phi_1(c_r)} \le A'(c_r) \sqrt{2\phi_2(c_r)} \le \sqrt{2} \lambda'.
$$
Again by Lemma~\ref{lem:unbiased_adj_lem1}, $\phi_2(c_j) \le 2 \phi_1(c_j)$ for all $j\ge k+1$. Thus for all $j\ge k+1$,
$$
A(c_j) = \min \left\{ 1, \frac{\lambda}{\sqrt{\phi_1(c_j)}} \right\} \le  \min \left\{ 1, \frac{\sqrt{2} \lambda'}{\sqrt{\phi_2(c_j)/2}} \right\} \le 2 \min \left\{ 1, \frac{ \lambda'}{\sqrt{\phi_2(c_j)}} \right\} = 2 A'(c_j).
$$
In addition, it should hold that $U^j \ge U'_j$ for all $j\ge k+1$.\footnote{This is because (1) $M\ge M'$ and (2) $Avg(l,r)$ will only increase for all $l\le k+1, r\ge k+1$ and $Avg(l,r)$ will only decrease for all $l,r>k+1$, so $I_{k+1}$ will only extend to the right.} According to the definition in Lemma~\ref{myerson} (see $(1-U(c))A(c)$ as an allocation rule here), the expected payment can be equivalently represented as an integral as in Lemma~\ref{lem:true_myerson},
\begin{align*}
(1-U_{k+1}) A(c_{k+1})P(c_{k+1}) = & (1-U_{k+1})  A(c_{k+1})c_{k+1} + \int_{c_{k+1}}^{\overline{C}} (1-U_v)A(v) \, dv \\
 \le & 2(1-U'_{k+1})  A'(c_{k+1})c_{k+1} + 2 \int_{c_{k+1}}^{\overline{C}} (1-U'_v)A'(v) \, dv \\
 = & 2 (1-U'_{k+1}) A'(c_{k+1})P'(c_{k+1}),
\end{align*}
which completes our proof.
\end{proof}

\begin{lemma} \label{lem:biased_adj2}
Let $T_1$ and $T_2$ be two adjacent costs sets that have $T_2 = \{c_1, \dots, c_k, \dots, c_{i+1}\}$ with $c_1\le \cdots \le c_{i+1}$ and $T_1 = \{c_1, \dots, c_{k-1}, c_{k+1}, \dots, c_{i+1}\}$. Let $B$ be an arbitrary non-negative number. We use $OPT(T, B)$ to represent the optimal confidence interval mechanism (at round $i$) defined in~\eqref{prog:biased_roundi} when the cost set is $T$ and the budget is $B$. Then define 
\begin{itemize}
    \item $(A,U, P) = OPT(T_1, B)$, i.e., $(A,U,P)$ is the optimal mechanism when the cost set is $T_1$ and the budget is $B$.
    \item $(A', U', P') = OPT(T_2, B/2)$, i.e., $(A', U', P')$ is the optimal mechanism when the cost set is $T_2$ and the budget is $B/2$.
\end{itemize} 
Define $M = \sum_{c\in T_1} U_c$ and $M' = \sum_{c\in T_2} U'_c$ and $W_{k+1} = \mathbbm{1}\left( U_{k+1} \ge \frac{1}{2} \right)$. Then we have 
\begin{itemize}
    \item $M\le M'$.
    \item 
$4 \cdot  \frac{\alpha_\gamma^2}{n} \cdot \frac{1 - W_{k+1}}{A(c_{k+1})} \le 8 \cdot  \frac{\alpha_\gamma^2}{n} \cdot \frac{1 - U'_{k+1}}{A'(c_{k+1})}+  U'_{k+1}\cdot \frac{M'}{i}$.
\end{itemize}
\end{lemma}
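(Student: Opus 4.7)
The lemma mirrors Lemma~\ref{lem:biased_adj1} with the roles of the two mechanisms swapped: here $(A,U,P)$ operates on the \emph{smaller} cost set $T_1$ with \emph{twice} the budget of $(A',U',P')$, so we expect $A(c) \ge A'(c)$ pointwise on $T_1$ and $M \le M'$. The plan is to first establish these two structural facts by adapting the arguments of Lemmas~\ref{lem:unbiased_adj} and~\ref{lem:biased_adj1}, and then derive the main inequality by a short algebraic combination.

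\textbf{Step 1: $M \le M'$ and $A(c_{k+1}) \ge A'(c_{k+1})$.}
At an interior optimum of the $M$-minimization, Lemma~\ref{lem:baised_eq_point} (applied with the round-$i$ coefficient $\beta_i^2 = 4\alpha_\gamma^2/n$) gives $M = a/A(c_r)$ and $M' = a/A'(c_{r'})$ with $a = 4\alpha_\gamma^2 i/n$, where $c_r, c_{r'}$ are the largest non-fully-ignored costs in the respective mechanisms. I adapt the intermediate-mechanism trick from the proof of Lemma~\ref{lem:biased_adj1}: introduce $(A'', U'', P'')$ that is optimal on $T_1$ with budget $B$ subject to $\sum_{c \in T_1} U''_c = M'$, and compare $A''$ against $A'$ (same total ignored mass $M'$, but on adjacent cost sets and with budgets in ratio $2$) and against $A$ (both on $T_1$ with budget $B$). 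The first comparison falls into the unbiased-style setting of Lemma~\ref{lem:unbiased_adj}, which, combined with the characterization $A = \min\{1,\lambda/\sqrt{\phi}\}$ from Theorem~\ref{thm:CI_roundi} and the factor-two bound on regularized virtual costs in Lemma~\ref{lem:unbiased_adj_lem1}, yields $A''(c) \ge A'(c)$ for all $c \ne c_k$; the second comparison together with the convexity of $g$ from Lemma~\ref{lem:baised_eq_point} yields $M \le M'$. Specializing to $c = c_{k+1}$ gives $A(c_{k+1}) \ge A'(c_{k+1})$. Non-smooth points of $g$ are handled via the one-sided derivative bounds in Lemma~\ref{lem:baised_eq_point}.

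\textbf{Step 2: the main inequality.}
If $W_{k+1} = 1$, the LHS vanishes. Otherwise $U_{k+1} < 1/2 < 1$; by Theorem~\ref{thm:CI_roundi} this forces $c_r \ge c_{k+1}$, and monotonicity of $A$ gives $A(c_r) \le A(c_{k+1})$. Using Step~1,
\[
\frac{M'}{i} \ge \frac{M}{i} = \frac{a}{i\,A(c_r)} \ge \frac{a}{i\,A(c_{k+1})} = \frac{4\alpha_\gamma^2}{n\,A(c_{k+1})}.
\]
Combined with $A'(c_{k+1}) \le A(c_{k+1})$,
\begin{align*}
\frac{8\alpha_\gamma^2(1-U'_{k+1})}{n\,A'(c_{k+1})} + U'_{k+1}\cdot\frac{M'}{i}
&\ge \frac{4\alpha_\gamma^2}{n\,A(c_{k+1})}\bigl(2(1-U'_{k+1}) + U'_{k+1}\bigr) \\
&\ge \frac{4\alpha_\gamma^2}{n\,A(c_{k+1})},
\end{align*}
where the last step uses $U'_{k+1} \le 1$. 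This matches the LHS.

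\textbf{Main obstacle.}
Essentially all substance sits in Step~1: establishing $M \le M'$ and the pointwise comparison $A(c_{k+1}) \ge A'(c_{k+1})$ in the unusual regime where $A$ has \emph{more} budget on a \emph{smaller} cost set than $A'$. The coupled optimization of $M$ (via Lemma~\ref{lem:baised_eq_point}) and $\lambda$ (via Theorem~\ref{thm:CI_roundi}) must be unpacked carefully, and the intermediate-mechanism trick from the proof of Lemma~\ref{lem:biased_adj1} has to be rebuilt in the opposite direction, tracking the factor-two drift in regularized virtual costs from Lemma~\ref{lem:unbiased_adj_lem1}. Non-differentiability of $g$ at integer $M$ forces a one-sided derivative argument throughout. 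Once Step~1 is in hand, Step~2 is the short algebraic rearrangement shown above.
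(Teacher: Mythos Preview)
Your outline is essentially the paper's own proof: introduce the intermediate mechanism $(A'',U'',P'')$ on $T_1$ with budget $B$ and fixed total ignore mass $M'$, invoke Lemma~\ref{lem:unbiased_adj} on the non-ignored part to compare with $(A',U')$, use the first-order condition from Lemma~\ref{lem:baised_eq_point} to deduce $M\le M'$, and finish with an algebraic splitting $1=(1-U'_{k+1})+U'_{k+1}$. The paper also treats the case $k+1>r'$ (i.e.\ $U'_{k+1}=1$) separately, which your Step~2 handles implicitly.

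There is, however, an overstatement in Step~1. The chain you sketch does \emph{not} give $A(c_{k+1})\ge A'(c_{k+1})$; it gives only $A(c_{k+1})\ge \tfrac12 A'(c_{k+1})$. Two separate factor-$2$ drifts enter: the comparison $A(c_{r'})\ge A'(c_{r'})$ together with $\phi_2(c_{r'})\le 2\phi_1(c_{r'})$ yields $\lambda\ge \lambda'/\sqrt{2}$, and then at $c_{k+1}$ one needs $\phi_1(c_{k+1})\le 2\phi_2(c_{k+1})$ (Lemma~\ref{lem:unbiased_adj_lem1}), giving
\[
A(c_{k+1})=\min\Bigl\{1,\tfrac{\lambda}{\sqrt{\phi_1(c_{k+1})}}\Bigr\}\ \ge\ \min\Bigl\{1,\tfrac{\lambda'/\sqrt{2}}{\sqrt{2\,\phi_2(c_{k+1})}}\Bigr\}\ \ge\ \tfrac12\,A'(c_{k+1}).
\]
This weaker bound is precisely why the lemma's right-hand side carries the coefficient $8$ rather than $4$. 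Your Step~2 algebra still closes with it: using $1/A'(c_{k+1})\ge 1/(2A(c_{k+1}))$ and $M'/i\ge 4\alpha_\gamma^2/(nA(c_{k+1}))$ one gets
\[
\frac{8\alpha_\gamma^2(1-U'_{k+1})}{n\,A'(c_{k+1})}+U'_{k+1}\cdot\frac{M'}{i}\ \ge\ \frac{4\alpha_\gamma^2}{n\,A(c_{k+1})}\bigl((1-U'_{k+1})+U'_{k+1}\bigr)=\frac{4\alpha_\gamma^2}{n\,A(c_{k+1})},
\]
which is the paper's computation. Also note that your sentence ``Specializing to $c=c_{k+1}$ gives $A(c_{k+1})\ge A'(c_{k+1})$'' silently identifies $A''$ with $A$; these are different allocation rules (same cost set and budget, but ignore masses $M'$ versus $M$), and the passage from one to the other is exactly where the $\lambda$-comparison above is needed.
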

\begin{proof}
The proof of this lemma is based on the proof of Lemma~\ref{lem:baised_eq_point}.

Again we use $c_r \in T_1$ to represent the largest cost that has $U_r < 1$, and use $c_{r'} \in T_2$ to represent the largest cost that has $U'_{r'} <1$. To better illustrate the idea of our proof, we assume that the minimum point of our objective function is achieved when its derivative with respect to $M$ is equal to $0$, which means that 
\begin{align} 
& a \cdot  \frac{1}{A(c_r)}  = M \label{eqn:biased_adj2_eqn1} \\
& a \cdot \frac{1}{A'(c_{r'})} = M' \label{eqn:biased_adj2_eqn2}
\end{align}
where $a=4\alpha_\gamma^2 \cdot i/n$ is a constant (see the proof of Lemma~\ref{lem:baised_eq_point} for more details). The following analysis can be extended to the case when the derivative does not exists at the optimal point, but is greater than zero on the right and smaller than zero on the left.

First consider the case $k+1>r'$, which means $U'_{k+1} = 1$ and mechanism $(A', U', P')$ never purchase $c_{k+1}$. If $k+1$ is also greater than $r$, which means that $W_{k+1} = 1$ and $(A,U, P)$ never purchases $c_{k+1}$,  then the inequality trivially holds with both sides equal to zero. If $k+1 \le r$, then since $U'_{k+1} = 1$,
\begin{align*}
\frac{a}{i} \cdot \frac{1 - W_{k+1}}{A(c_{k+1})} \le & \frac{a}{i} \cdot \frac{1}{A(c_r)} \\
= & \frac{M}{i}  \\
= & \frac{a}{i} \cdot \frac{1 - U'_{k+1}}{A'(c_{k+1})}+  U'_{k+1} \cdot \frac{M}{i} \\
\le & \frac{a}{i} \cdot \frac{1 - U'_{k+1}}{A'(c_{k+1})}+  U'_{k+1}\cdot \frac{M'}{i}.
\end{align*} 

%We then prove the inequality~\eqref{eqn:biased_adj2_eq1}. 

Now suppose $k+1\le r'$. To compare the two mechanisms, we define $(A'', U'', P'')$ to be the optimal mechanism when (1) the cost set is $T_1$ and the budget is $B$ (which is the same as those of $(A,U,P)$) and (2) the total probability of ignoring the data points equals that of $U'$, i.e., $\sum_{c\in T_1} U''_c = \sum_{c\in T_2} U'_c$. Then we can apply Lemma~\ref{lem:unbiased_adj} to compare the non-zero parts of the two mechanisms. By the second line of the results in Lemma~\ref{lem:unbiased_adj},
$$
A''(c_{r'}) \ge A'(c_{r'}), 
$$
and thus 
$$
a \cdot \frac{1}{A''(c_{r'})} \le a \cdot \frac{1}{A'(c_{r'})} = M'.
$$
According to the proof of Lemma~\ref{lem:baised_eq_point},  the optimal mechanism $(A,U,P)$ must have $M \le M'$ and $r \ge r'$.
Then by~\eqref{eqn:biased_adj2_eqn1} and~\eqref{eqn:biased_adj2_eqn2},
$$
A(c_{r'}) \ge A(c_r) = a/M \ge a/M' = A'(c_{r'}).
$$
According to Lemma~\ref{lem:unbiased_adj_lem1}, $\phi_2(c_{r'}) \le 2 \phi_1(c_{r'})$, together with the characterization of the optimal allocation rule in Theorem~\ref{thm:CI_roundi}, (for the mechanism to be non-trivial, we assume $A'(c_{r'}) <1$)
$$
\lambda \ge A(c_{r'}) \sqrt{\phi_1(c_{r'})} \ge A'(c_{r'}) \sqrt{\phi_2(c_{r'})/2} = \sqrt{1/2} \lambda'.
$$
Again by Lemma~\ref{lem:unbiased_adj_lem1}, $\phi_2(c_j) \ge \frac{1}{2} \phi_1(c_j)$ for all $j\ge k+1$. Thus for all $j\ge k+1$,
$$
A(c_j) = \min \left\{ 1, \frac{\lambda}{\sqrt{\phi_1(c_j)}} \right\} \ge  \min \left\{ 1, \frac{\sqrt{1/2} \lambda'}{\sqrt{2\cdot \phi_2(c_j)}} \right\} \ge \frac{1}{2} \min \left\{ 1, \frac{ \lambda'}{\sqrt{\phi_2(c_j)}} \right\} = \frac{1}{2} A'(c_j).
$$
In addition, $\frac{1}{A(c_{k+1})} \le \frac{1}{A(c_{r})} = \frac{M}{a}$ due to the monotonicity constraint. Therefore 
\begin{align*}
\frac{a}{i} \cdot \frac{1 - W_{k+1}}{A(c_{k+1})} =& \frac{a}{i} \cdot \frac{1 - U'_{k+1}+U'_{k+1}}{A(c_{k+1})} \\
\le & \frac{a}{i} \cdot \frac{1 - U'_{k+1}}{A(c_{k+1})} + U'_{k+1} \cdot \frac{M}{i} \\
\le & \frac{2a}{i}\cdot \frac{1 - U'_{k+1}}{A'(c_{k+1})}+  U'_{k+1}\cdot \frac{M}{i}\\
\le & \frac{2a}{i}\cdot \frac{1 - U'_{k+1}}{A'(c_{k+1})}+  U'_{k+1} \cdot \frac{M'}{i}.
\end{align*}
which completes our proof.
\end{proof}

\subsection{Proof of Theorem~\ref{thm:CI_main}} \label{app:CI_main}
We now prove the main theorem for the online confidence interval mechanism. Define set $T_i = \{c_1, \dots, c_{i-1}, \overline{C}\}$. We use $c_{(1)}, \dots, c_{(i)}$ to denote the elements in $T_i$ ordered from smallest to largest. Let
\begin{align*}
A^i, \ U^i = &\arg\min_{A, U}  \quad 4 \cdot  \frac{\alpha_\gamma^2}{n} \cdot \frac{1}{i}  \sum_{j = 1}^i \frac{(1-U_j)}{A(c_{(j)})} + \left(\frac{1}{i} \sum_{j=1}^i U_j  \right)^2 \\
& \qquad \textrm{s.t.} \  \sum_{j=1}^i (1-U_j) \cdot A(c_{(j)}) \psi^i(c_{(j)}) \le \frac{B}{16\sqrt{n/i}} \notag\\
& \qquad  \qquad  (1-U_j) \cdot A(c_{(j)})  \textrm{ is monotone non-increasing in $j$} \notag\\
& \qquad  \qquad 0\le A(c_{(j)}) \le 1, \quad 0 \le U_j \le 1,  \quad \forall j \notag
\end{align*}
We prove that when we use the extended allocation rule of $\left(A^i(c), \mathbbm{1}\left(U^i(c) \ge \frac{1}{2}\right)\right)$ at round $i$, and output confidence interval $\left[\sum_{i=1}^n y_i / n - \frac{\alpha_\gamma}{\sqrt{n}} \cdot \widehat{\sigma}, \quad \sum_{i=1}^n y_i/ n + \frac{\widehat{U}}{n} + \frac{\alpha_\gamma}{\sqrt{n}} \cdot \widehat{\sigma}\right]$ at the end, where $\widehat{U} = \sum_{i=1}^n \widehat{U}_i$, our mechanism (1)  is truthful in expectation and individually rational (2) satisfies the budget constraint $B$ in expectation; (3) and the for any cost distribution $\{c_1, \dots, c_n\}$, the worst-case expected length of the output confidence interval is no more than 
$$
L \le 8\sqrt{10} \cdot L^* +  \frac{2\sqrt{10}}{\sqrt{n}} + o(1/\sqrt{n}),
$$
where $L^*$ is the approximate worst-case expected length of the benchmark $A^*, U^*$ defined in Lemma~\ref{lem:biased_benchmark},
\begin{align} \label{prog:biased_benchmark_app}
L^* =  \min_{A} \max_{\vec{z}\in[0,1]^n} \quad & 2 \cdot  \frac{\alpha_\gamma}{\sqrt{n+1}} \cdot \sqrt{\frac{1}{n+1} \cdot \sum_{i=1}^{n+1} \frac{(1-U_i) z_i^2}{A(c_{(i)})}} + \frac{\sum_{i=1}^{n+1} z_i \cdot U_i}{n+1}  \\
\textrm{s.t.} \quad & \sum_{i=1}^{n+1} (1-U_i) \cdot A(c_{(i)}) \psi(c_{(i)})  \le B \notag\\
& (1-U_c)A(c) \textrm{ is monotone non-increasing in } c \notag\\
& 0\le A(c) \le 1, \quad  0 \le U_c \le 1, \quad \forall c \notag
\end{align}
For convenience we define 
 $$
 V^* = 4 \cdot  \frac{\alpha_\gamma^2}{n} \cdot \frac{1}{n} \sum_{i=1}^n \frac{1 - U^*(c_i)}{A^*(c_i)}
 $$ 
so that $L^* \ge \frac{n}{n+1} \sqrt{V^*} + \frac{\sum_{i=1}^n U^*_i}{n}$,  where $\sum_{i=1}^n U^*_i$ is the expected total number of data ignored by $A^*$.

The basic idea of the proof is the same as the unbiased case. We will again use an intermediate mechanism $(A', U', P')$ to compare $(A^i, U^i, P^i)$ with the benchmark at each round $i$. The difference of $(A',U',P')$ and $(A^i, U^i, P^i)$ is bounded using the results in the preceding section. 

We use the same notations as in the proofs of the unbiased case~\ref{app:unbiased_online}. The proofs of (1) in~\ref{app:unbiased_online} can be directly applied here, so we only prove (2) and (3). 

\subsubsection{Expected budget feasibility}
%Notice that  the worst-case correlation is always when $z_i = 1$ for all $i$. So in this proof we just assume the underlying distribution has $z$ always equal to $1$. 
The proof of budget feasibility is similar to the unbiased case. 
To bound $\sum_{i=1}^n \E[(1-\widehat{U}_i) A^i(c_i)P^i(c_i)]$, we again define $A'$ conditioning on a fixed  $T_{i+1} = S_{i+1} \cup \{\overline{C}\}$. 
 \begin{align*}
A', U' = &\arg\min_{A, U}  \quad 4 \cdot  \frac{\alpha_\gamma^2}{n} \cdot \frac{1}{i}  \sum_{c \in T_{i+1}} \frac{(1-U_c)}{A(c)} + \left(\frac{1}{i} \sum_{c\in T_{i+1}} U_c \right)^2 \\
& \qquad  \textrm{s.t.} \sum_{c\in T_{i+1}} (1-U_c) \cdot A(c) \psi^i(c) \le \frac{B}{8\sqrt{n/i}}  \notag\\
& \qquad  \qquad (1-U_c) \cdot A(c)  \textrm{ is monotone non-increasing in $c$} \notag\\
& \qquad  \qquad 0\le A(c) \le 1, \quad 0 \le U_c \le 1,  \quad \forall c \notag
\end{align*}
Since $T_i,\,T_{i+1}$ only differ by one element $c_i$, and $A'$ uses twice the budget of $A^i$, according to Lemma~\ref{lem:biased_adj1}, 
 \begin{align*}
(1- U^i(c_i)) A^i(c_i) P^i(c_i) = & (1- U^i(c_{(k+1)})) A^i(c_{(k+1)})  P^i(c_{(k+1)})  \\
\le & 2(1-U'(c_{(k+1)}))  A'(c_{(k+1)})  P'(c_{(k+1)}).
 \end{align*}
And thus 
\begin{align*}
\left(1- \mathbbm{1}\left(U^i(c_{(k+1)})\ge 1/2\right) \right)A^i(c_{(k+1)})  P^i(c_{(k+1)} \le & 2(1- U^i(c_{(k+1)})) A^i(c_{(k+1)})  P^i(c_{(k+1)})\\
\le & 4(1-U'(c_{(k+1)}))  A'(c_{(k+1)})  P'(c_{(k+1)}).
\end{align*}
 Now assume the set of  the first $i$ costs is $S_{i+1}$. When the data holders come in random order, $c_i$ is a random element chosen from $S_{i+1}$. Therefore $c_i$'s  rank $k$ should be uniformly distributed over $\{1, \dots, i\}$, and thus
\begin{align*}
\E[(1-\widehat{U}_i)A^i(c_i) \cdot P^i(c_i) | S_{i+1}] = &\frac{1}{i} \sum_{j=1}^i \left(1- \mathbbm{1}\left(U^i(c_{(j+1)})\ge 1/2\right) \right)A^i(c_{(j+1)})  P^i(c_{(j+1)})\\
\le &  \frac{1}{i} \sum_{j=1}^{i} 4(1-U'(c_{(j+1)}))  A'(c_{(j+1)})  P'(c_{(j+1)})\\
\le & \frac{4}{i} \cdot \sum_{c\in T_{i+1}} (1-U'(c))A'(c)  P'(c)\\
\le & \frac{1}{2\sqrt{n\cdot i}}  \cdot B
\end{align*}
for all $S_{i+1}$.
Therefore the total spending of the mechanism is bounded as 
\begin{align*}
\sum_{i=1}^n \E[(1-\widehat{U}_i)A^i(c_i)\cdot P^i(c_i)] & \le  \sum_{i=1}^n \frac{B}{2\sqrt{ n \cdot i}} \le B,
\end{align*}
since $ \sum_{i=1}^n \frac{1}{\sqrt{i}} \le 2\sqrt{n}$.

\subsubsection{Competitive analysis}
The expected length of the output confidence interval is equal to 
\begin{align*}
L = 2\cdot \frac{\alpha_\gamma}{\sqrt{n}} \cdot \E[\widehat{\sigma}] + \E[\widehat{U}]/n.
\end{align*}
To compare it with $L^*$, we first upper bound the expectation of the sample variance $\E[\widehat{\sigma}^2]$ with $\E[y_i^2]$.
\begin{lemma} \label{lem:app_biased_lem1}
For any underlying distribution $\mathcal{D}$, our mechanism has 
$$
\E[\widehat{\sigma}^2 ] \le \frac{1}{n}\sum_{i=1}^n  \E[y_i^2],
$$
where $\widehat{\sigma}^2$ is the sample variance of the re-weighted data $y_1 = \frac{\widehat{x}_1}{A^1(c_1)}, \dots, y_n = \frac{\widehat{x}_n}{A^n(c_n)}$.
\end{lemma}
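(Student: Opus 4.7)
The plan is to condition on the arriving sequence $c_1, \dots, c_n$ and exploit the same decomposition used already in Appendix~\ref{app:difference}. Given the sequence, the allocation rules $A^1, \dots, A^n$ are fully determined, so the random variables $y_1 = \widehat{x}_1/A^1(c_1), \dots, y_n = \widehat{x}_n/A^n(c_n)$ are \emph{independent}: each $y_i$ takes value $z_i/A^i(c_i)$ with probability $A^i(c_i)$ and $0$ otherwise, and these coin flips are independent across $i$ once the sequence is fixed.

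Starting from the algebraic identity $(n-1)\widehat{\sigma}^2 = \sum_i y_i^2 - \frac{1}{n}\bigl(\sum_i y_i\bigr)^2$, I would expand the square and take the conditional expectation. Using independence,
$$
\E[\widehat{\sigma}^2 \mid c_1,\dots,c_n] \;=\; \frac{1}{n}\sum_{i=1}^n \E[y_i^2 \mid c_1,\dots,c_n] \;-\; \frac{1}{n(n-1)} \sum_{i\neq j} \E[y_i \mid c_1,\dots,c_n]\,\E[y_j \mid c_1,\dots,c_n],
$$
exactly as derived in Appendix~\ref{app:difference}. The key observation is then that $\E[y_i \mid c_1,\dots,c_n] = z_i \ge 0$ (since $z_i \in [0,1]$ and $A^i(c_i)>0$), so every summand in the second term is non-negative and the whole second term can be dropped as a non-negative quantity, yielding $\E[\widehat{\sigma}^2 \mid c_1,\dots,c_n] \le \frac{1}{n}\sum_i \E[y_i^2 \mid c_1,\dots,c_n]$.

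Finally I would take expectation over the random arriving sequence to obtain $\E[\widehat{\sigma}^2] \le \frac{1}{n}\sum_i \E[y_i^2]$, which is the claimed bound. There is no real obstacle here: the only thing that has to be checked carefully is that the independence of $y_1,\dots,y_n$ holds conditionally on the full sequence (not merely the set of costs), which is precisely why conditioning on $c_1,\dots,c_n$ rather than on $\{c_1,\dots,c_n\}$ is necessary. After that, the inequality is immediate from dropping a non-negative cross-term.
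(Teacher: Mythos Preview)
Your proposal is correct and follows essentially the same route as the paper's proof: condition on the full sequence $c_1,\dots,c_n$, use the resulting conditional independence of the $y_i$, expand the sample-variance identity, and drop the non-negative cross term. One small caveat: in the biased (confidence-interval) setting $\E[y_i \mid c_1,\dots,c_n]$ need not equal $z_i$ exactly (since data may be ignored), but it is still non-negative, which is all your argument actually uses.
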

\begin{proof}
Because $y_i$'s are independent conditioned on $c_1, \dots, c_n$,
\begin{align*}
\E[\widehat{\sigma}^2 \vert c_1, \dots, c_n]= &\E\left[\sum_{i=1}^n \left(y_i - \sum_{i=1}^n y_i/n\right)^2/(n-1) \Big \vert c_1, \dots, c_n\right]\\
= & \frac{1}{n-1} \cdot \E\left[\sum_{i=1}^n y_i^2 - \frac{2}{n}\sum_{i=1}^n y_i \sum_{i=1}^n y_i + n \cdot \left( \sum_{i=1}^n y_i/n\right)^2 \Big\vert c_1, \dots, c_n\right]\\
= & \frac{1}{n-1} \cdot \E\left[\sum_{i=1}^n y_i^2 - \frac{1}{n} \left(\sum_{i=1}^n y_i \right)^2 \Big \vert c_1, \dots, c_n\right]\\
= & \frac{1}{n-1} \cdot \E\left[\sum_{i=1}^n y_i^2 - \frac{1}{n} \sum_{i=1}^n y_i^2 -\frac{1}{n} \sum_{i=1}^n \sum_{j\neq i} y_i y_j \Big \vert c_1, \dots, c_n\right]\\
= & \frac{1}{n} \cdot \sum_{i=1}^n \E[ y_i^2]  -\frac{1}{n(n-1)} \sum_{i=1}^n \sum_{j\neq i} \E[y_i \vert c_1, \dots, c_n] \E[ y_j \vert c_1, \dots, c_n]\\
\le &\frac{1}{n} \cdot \sum_{i=1}^n \E[ y_i^2] 
%= & \frac{1}{n}\sum_{i=1}^n \Var\left(y_i \vert c_1, \dots, c_n\right) + \frac{1}{2n(n-1)} \sum_{i,j:i\neq j} \left(\E[y_i \vert c_1, \dots, c_n] - \E\left[y_j \vert c_1, \dots, c_n\right]\right)^2 \\
%= &\frac{1}{n}\sum_{i=1}^n \Var\left(y_i \vert c_1, \dots, c_n\right) + \frac{1}{2n(n-1)} \sum_{i,j:i\neq j} \left(\E[U_i \vert c_1, \dots, c_n] - \E\left[U_j \vert c_1, \dots, c_n\right]\right)^2\\
%\le & \frac{1}{n}\sum_{i=1}^n \Var\left(y_i \vert c_1, \dots, c_n\right) + \frac{1}{2n(n-1)} \sum_{i,j:i\neq j} \left(\E[U_i \vert c_1, \dots, c_n]^2 + \E\left[U_j \vert c_1, \dots, c_n\right]^2\right)\\
%= &\frac{1}{n}\sum_{i=1}^n \Var\left(y_i \vert c_1, \dots, c_n\right) + \frac{1}{n} \sum_{i=1}^n \E[U_i \vert c_1, \dots, c_n]^2\\
%= &\frac{1}{n}\sum_{i=1}^n \Var\left(y_i \vert c_1, \dots, c_n\right) + \frac{1}{n} \sum_{i=1}^n \E[U_i \vert c_1, \dots, c_n]
\end{align*}
\end{proof}

We then prove that the sum of the squares of the two terms in $L$ can be bounded using $V^*$ and $(U^*/n)^2$. We first prove the following lemma:
\begin{lemma} \label{lem:biased_decompose}
Let $S_{i+1}$ be the set of first $i$ agents' costs. When the agents come in random order, $S_{i+1}$ is a random subset of $\{c_1, \dots, c_n\}$ with length $i$. Then it satisfies that
\begin{align*}
4 \cdot  \frac{\alpha_\gamma^2}{n} \cdot  \E[\widehat{\sigma}]^2 + \left(\E[\widehat{U}]/n\right)^2 \le \frac{1}{n} \sum_{i=1}^n \E_{S_{i+1}} \left[4 \cdot  \frac{\alpha_\gamma^2}{n} \cdot \E[y_i^2|S_{i+1}] +  \E[\widehat{U}_i|S_{i+1}]^2 \right].
\end{align*}
\end{lemma}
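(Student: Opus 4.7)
The inequality decouples cleanly into two pieces, one for the variance term $4\alpha_\gamma^2 \E[\widehat\sigma]^2/n$ and one for the bias term $(\E[\widehat U]/n)^2$, so the plan is to bound each piece separately by the corresponding per-round term on the right-hand side and then add.

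For the variance piece, I would apply Lemma~\ref{lem:app_biased_lem1} to get $\E[\widehat\sigma^2] \le \frac{1}{n}\sum_{i=1}^n \E[y_i^2]$, then use Jensen's inequality (concavity of $\sqrt{\,\cdot\,}$, equivalently convexity of $x\mapsto x^2$) to obtain $\E[\widehat\sigma]^2 \le \E[\widehat\sigma^2]$. Finally I would write each $\E[y_i^2]$ as $\E_{S_{i+1}}\E[y_i^2\mid S_{i+1}]$ by the tower property. Multiplying by $4\alpha_\gamma^2/n$ yields
\begin{equation*}
4\cdot \frac{\alpha_\gamma^2}{n}\cdot \E[\widehat\sigma]^2 \;\le\; \frac{1}{n}\sum_{i=1}^n \E_{S_{i+1}}\!\left[4\cdot \frac{\alpha_\gamma^2}{n}\cdot \E[y_i^2\mid S_{i+1}]\right].
\end{equation*}

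For the bias piece, note that $\widehat U = \sum_i \widehat U_i$ so $\E[\widehat U]/n = \frac{1}{n}\sum_i \E[\widehat U_i]$. I would apply the Cauchy--Schwarz (or power-mean) inequality $\bigl(\frac{1}{n}\sum_i a_i\bigr)^2 \le \frac{1}{n}\sum_i a_i^2$ with $a_i = \E[\widehat U_i]$, and then Jensen's inequality on the outer expectation over $S_{i+1}$ (convexity of $x\mapsto x^2$) to push the square inside:
\begin{equation*}
\bigl(\E[\widehat U_i]\bigr)^2 \;=\; \bigl(\E_{S_{i+1}}\E[\widehat U_i\mid S_{i+1}]\bigr)^2 \;\le\; \E_{S_{i+1}}\!\left[\E[\widehat U_i\mid S_{i+1}]^2\right].
\end{equation*}
Chaining these gives $(\E[\widehat U]/n)^2 \le \frac{1}{n}\sum_i \E_{S_{i+1}}[\E[\widehat U_i\mid S_{i+1}]^2]$. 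Adding the two resulting bounds yields the claimed inequality.

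The proof is essentially mechanical and I do not anticipate a real obstacle; the only thing to be careful about is the order in which Jensen is applied in the bias piece (first across $i$, then on the inner conditional expectation), and that Lemma~\ref{lem:app_biased_lem1} is the right tool to kill the cross-terms in $\widehat\sigma^2$ in the variance piece before taking square roots.
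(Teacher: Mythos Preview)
Your proposal is correct and follows essentially the same approach as the paper: Jensen to pass from $\E[\widehat\sigma]^2$ to $\E[\widehat\sigma^2]$, Lemma~\ref{lem:app_biased_lem1} to bound $\E[\widehat\sigma^2]$ by $\tfrac{1}{n}\sum_i \E[y_i^2]$, the power-mean inequality on the bias averages, and then Jensen again to push the square inside $\E_{S_{i+1}}$. The paper presents these steps in a slightly different order but uses exactly the same ingredients.
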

\begin{proof}
First by Jensen's inequality,
\begin{align*}
 4 \cdot  \frac{\alpha_\gamma^2}{n} \cdot  \E[\widehat{\sigma}]^2 + (\E[\widehat{U}]/n)^2 \le & 4 \cdot  \frac{\alpha_\gamma^2}{n}\cdot  \E[\widehat{\sigma}^2] + \left(\sum_{i=1}^n \E[\widehat{U}_i]/n\right)^2 
\le  4 \cdot  \frac{\alpha_\gamma^2}{n}\cdot  \E[\widehat{\sigma}^2] + \frac{1}{n} \sum_{i=1}^n \E[\widehat{U}_i]^2.
\end{align*}
As we prove in Lemma~\ref{lem:app_biased_lem1}, $ \E[\widehat{\sigma}^2] \le \frac{1}{n} \sum_{i=1}^n \E[y_i^2]$, so we further have
\begin{align*}
4 \cdot  \frac{\alpha_\gamma^2}{n} \cdot  \E[\widehat{\sigma}]^2 + \left(\E[\widehat{U}]/n\right)^2 \le &4 \cdot  \frac{\alpha_\gamma^2}{n} \cdot \frac{1}{n} \sum_{i=1}^n \E[y_i^2] + \frac{1}{n} \sum_{i=1}^n \E[\widehat{U}_i]^2 \\
\le & 4 \cdot  \frac{\alpha_\gamma^2}{n} \cdot \frac{1}{n} \sum_{i=1}^n \E_{S_{i+1}} \E[y_i^2|S_{i+1}] + \frac{1}{n} \sum_{i=1}^n \E_{S_{i+1}} \E[\widehat{U}_i|S_{i+1}]^2 \\
 = & \frac{1}{n} \sum_{i=1}^n \E_{S_{i+1}} \left[4 \cdot  \frac{\alpha_\gamma^2}{n} \cdot \E[y_i^2|S_{i+1}] +  \E[\widehat{U}_i|S_{i+1}]^2 \right].
\end{align*}
\end{proof} 
Then the sum of the squares of the two terms in $L$ can be bounded as follows:
\begin{lemma} \label{lem:biased_bound_square}
For any underlying distribution $\mathcal{D}$, our mechanism has 
\begin{align*}
  & 4 \cdot  \frac{\alpha_\gamma^2}{n} \cdot \E[\widehat{\sigma}]^2 + \left(\E[\widehat{U}]/n\right)^2 \\
  \le &  320\cdot V^* + 20 \left(\frac{\sum_{i=1}^n U^*_i}{n}\right)^2 + 10 \left(\frac{1+\ln n}{n} \right)^2 +   \frac{2}{n} \left( 640  \cdot  \frac{\alpha_\gamma^2}{\sqrt{n}} \cdot \frac{1-U^*_{\overline{C}}}{A^*(\overline{C})} + 10\left( U^*_{\overline{C}}\right)^2\right).
\end{align*}
where $\widehat{\sigma}^2$ is the sample variance of the re-weighted data $y_1 = \frac{\widehat{x}_1}{A^1(c_1)}, \dots, y_n = \frac{\widehat{x}_n}{A^n(c_n)}$, and $\widehat{U}$ is the number of data points ignored by the mechanism, i.e., $\widehat{U} = \sum_{i=1}^n \widehat{U}_i$.
\end{lemma}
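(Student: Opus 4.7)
I would start from Lemma~\ref{lem:biased_decompose} to reduce the LHS to $\frac{1}{n}\sum_{i=1}^n \E_{S_{i+1}}\bigl[4\alpha_\gamma^2/n \cdot \E[y_i^2|S_{i+1}] + \E[\widehat{U}_i|S_{i+1}]^2\bigr]$. In parallel with the strategy of Section~\ref{app:unbiased_online}, for each round $i$ and each fixed $S_{i+1}$ I would introduce an intermediate mechanism $(A',U',P') = APPROX_{CI}\bigl(i+1,\,T_{i+1},\,\xi_n B\sqrt{i}/2\bigr)$ on the fuller cost set $T_{i+1} = S_{i+1} \cup \{\overline{C}\}$ with half the per-round budget; it knows $c_i$ in advance and acts as a bridge between our online mechanism $(A^i,U^i)$ and the benchmark $(A^*,U^*)$.

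For the variance-like term, Lemma~\ref{lem:biased_adj2} applied at each realization of $c_i$ gives $4\alpha_\gamma^2/n \cdot (1-\widehat{U}_i)/A^i(c_i) \le 8\alpha_\gamma^2/n \cdot (1-U'(c_i))/A'(c_i) + U'(c_i)\,M'/i$, where $M' = \sum_{c\in T_{i+1}} U'(c)$. Averaging over the uniform choice of $c_i$ from $S_{i+1}$ and using $\sum_{c \in S_{i+1}} U'(c) \le M'$ yields $4\alpha_\gamma^2/n \cdot \E[y_i^2|S_{i+1}] \le 8\alpha_\gamma^2/n \cdot (1/i)\sum_{c\in T_{i+1}}(1-U'(c))/A'(c) + (M'/i)^2$. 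For the ignored-fraction term, I would combine $\widehat{U}_i \le 2U^i(c_i)$ with the monotonicity of $U^i$ in cost and the bound $\sum_{c\in T_i} U^i(c) \le M'$ from the first conclusion of Lemma~\ref{lem:biased_adj2}: if $c_i$ is the $k$-th smallest cost in $S_{i+1}$ the extended allocation rule forces $U^i(c_i)$ to equal the value of $U^i$ at the $k$-th position of $T_i$, which is $\le M'/(i-k+1)$ by monotonicity; summing over $k$ then gives $\E[\widehat{U}_i|S_{i+1}] \le 2M'(1+\ln i)/i$, and hence $\E[\widehat{U}_i|S_{i+1}]^2 \le 4M'^2(1+\ln i)^2/i^2$.

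Next, I would relate $(A',U')$ back to the benchmark by exhibiting $\bigl(A^*/(32\sqrt{n/i}),\,U^*\bigr)$ restricted to $T_{i+1}$ as a feasible solution of the intermediate optimization---the budget constraint follows from Lemma~\ref{lem:VC_subset} applied to the non-increasing function $(1-U^*)A^*$. Optimality of $(A',U')$ then transfers the per-round estimates above to the benchmark, after absorbing a factor of $2$ to promote the coefficient $4$ to $8$ on the $(1-U^*)/A^*$ term and to control the $M'^2$ piece via the scaled-benchmark objective. Averaging over random $S_{i+1}$ uses $\E_{S_{i+1}}[\sum_{c\in S_{i+1}} f(c)] = (i/n)\sum_{c\in\{c_1,\dots,c_n\}} f(c)$, with $\overline{C}$ separated as a deterministic contribution, and summing over $i$ using $\sum_i 1/\sqrt{i} \le 2\sqrt{n}$ (for the $V^*$ piece) and $\sum_i (1+\ln i)^2/i^2 = O(1)$ (for the harmonic piece) produces the stated bound; the $\overline{C}$-specific terms pick up the $2/n$ prefactor because $\overline{C}$ sits in every $T_{i+1}$ instead of being sampled. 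The principal obstacle I anticipate is the $\E[\widehat{U}_i|S_{i+1}]$ estimate: unlike the variance term it is not available directly from Lemma~\ref{lem:biased_adj2} but requires combining the monotonicity of $U^i$ with the $M \le M'$ bound through the harmonic-sum argument above, which is precisely what produces the $\bigl((1+\ln n)/n\bigr)^2$ correction in the final expression.
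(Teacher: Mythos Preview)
Your overall architecture---starting from Lemma~\ref{lem:biased_decompose}, introducing the one-step-ahead mechanism $(A',U')$ on $T_{i+1}$ with half the budget, handling the variance piece via Lemma~\ref{lem:biased_adj2}, and then pushing to the benchmark by feasibility of $(A^*/(32\sqrt{n/i}),U^*)$---matches the paper and is correct. The gap is precisely where you flag it: the estimate of $\E[\widehat{U}_i\mid S_{i+1}]$.

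Your monotonicity argument is valid as stated: if $c_i$ has rank $k$, then $U^{i,k}$ at position $k$ of $T_i$ is at most $M'/(i-k+1)$, and summing gives $\E[\widehat{U}_i\mid S_{i+1}]\le 2M'(1+\ln i)/i$. The problem is what happens next. The only handle you have on $(M'/i)^2$ is the optimality of $(A',U')$, which bounds it by the scaled-benchmark objective on $T_{i+1}$; but that objective carries a $\sqrt{n/i}$ factor on the $V^*$-type term. Multiplying by your extra $(1+\ln i)^2$ and then summing $\frac{1}{n}\sum_{i}(1+\ln i)^2\sqrt{n/i}$ produces $\Theta\bigl((\ln n)^2\bigr)$, not a constant, in front of $V^*$. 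Similarly the $(\sum U^*_i/n)^2$ piece picks up a $(\ln n)^2$ factor. So you would obtain a correct inequality, but not the one in the lemma; your claim that $\sum_i(1+\ln i)^2/i^2=O(1)$ is the relevant sum implicitly assumes $(M')^2$ is bounded independently of $i$, which it is not.

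The paper avoids the harmonic loss by a direct counting argument rather than bounding $W^{i,k}_{k+1}$ termwise. Since $\sum_j U^{i,k}_j\le M'$, at most $2M'$ entries of $U^{i,k}$ can be $\ge 1/2$, and by monotonicity these are the last $2M'$ positions of $T_i$. Hence $W^{i,k}_{k+1}=\mathbbm{1}\bigl(U^{i,k}_{k+1}\ge 1/2\bigr)$ can be nonzero only when position $k$ lies in those last $2M'$ slots, i.e.\ for at most $2M'$ values of $k$. This gives $\frac{1}{i}\sum_k W^{i,k}_{k+1}\le 2M'/i$ with no logarithm. With this sharper bound the combined per-round estimate is $\le 5\bigl(4\alpha_\gamma^2/n\cdot(1/i)\sum_{c\in T_{i+1}}(1-U'_c)/A'(c)+(M'/i)^2\bigr)$, and the rest of your plan goes through to yield the stated constants. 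Incidentally, the $10\bigl((1+\ln n)/n\bigr)^2$ term in the lemma does \emph{not} originate from any harmonic sum in the $\widehat{U}_i$ analysis; it arises later, when the cross term $\frac{1+\ln n}{n}\cdot\frac{\sum U^*_i}{n}$ (coming from the variance of $(1/i)\sum_{c\in S_{i+1}}U^*_c$ over the random subset $S_{i+1}$) is split by AM--GM.
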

\begin{proof}
Let $S_{i+1}$ be the set of first $i$ agents' costs. When the agents come in random order, $S_{i+1}$ is a random subset of $\{c_1, \dots, c_n\}$ with length $i$. 
We first consider a fixed $S_{i+1}$ and bound $4 \cdot  \frac{\alpha_\gamma^2}{n} \cdot \E[y_i^2|S_{i+1}] +  \E[\widehat{U}_i|S_{i+1}]^2$. Let $S_i \subseteq S_{i+1}$ be the set of first $i-1$ costs is $S_i$, the allocation rule $A^i$ is uniquely decided as in~\eqref{prog:biased_roundi_app} where $T_i = S_i \cup \{\overline{C}\}$. Again if $c_i = c_{(k)}$, we will have $A^i(c_i) = A^i(c_{(k+1)})$. 

Let $T_{i+1} = S_{i+1} \cup \{\overline{C}\}$. We define the following intermediate allocation rule $A', U'$
 \begin{align} \label{prog:biased_Aprime}
A', U' = &\arg\min_{A, U}  \quad 4 \cdot  \frac{\alpha_\gamma^2}{n} \cdot \frac{1}{i}  \sum_{c \in T_{i+1}} \frac{(1-U_c)}{A(c)} + \left(\frac{1}{i} \sum_{c\in T_{i+1}} U_c \right)^2 \\
& \qquad  \textrm{s.t.} \sum_{c\in T_{i+1}} (1-U_c) \cdot A(c) \psi^i(c) \le \frac{B}{32\sqrt{n/i}} \notag\\
& \qquad  \qquad (1-U_c) \cdot A(c)  \textrm{ is monotone non-increasing in $c$} \notag\\
& \qquad  \qquad 0\le A(c) \le 1, \quad 0 \le U_c \le 1,  \quad \forall c \notag
\end{align}

\begin{claim}[Compare $\left(A^i, \mathbbm{1}\left(U^i \ge \frac{1}{2}\right)\right)$ with $(A', U')$] \label{clm:biased_clm1}
 Define $M' =\sum_{c\in T_{i+1}} U'_c$, then
$$
4 \cdot  \frac{\alpha_\gamma^2}{n} \cdot \E[y_i^2|S_{i+1}] +  \E[\widehat{U}_i|S_{i+1}]^2 \le 5 \left(4 \cdot  \frac{\alpha_\gamma^2}{n} \cdot \frac{1}{i} \sum_{c\in T_{i+1}} \frac{1 - U'_c}{A'(c)} + \left(\frac{M'}{i}\right)^2\right).
$$
\end{claim}
\begin{proof}
When the data holders come in random order,  the set of first $i-1$ costs $S_i$ is a random subset of $S_{i+1}$. Therefore $c_i$'s rank $k$ is uniformly distributed over $\{1,\dots, i\}$. Let $A^{i,k}, U^{i,k}$  be the optimal solution in round $i$ when the $i$-th cost $c_i$ has rank $k$, i.e., $c_i = c_{(k)}$.  Then 
\begin{align*} 
   & 4 \cdot  \frac{\alpha_\gamma^2}{n} \cdot \E[y_i^2 | S_{i+1}] +  \E[\widehat{U}_i | S_{i+1}]^2 \notag \\
  =  & 4 \cdot  \frac{\alpha_\gamma^2}{n}  \left( \frac{1}{i} \sum_{k=1}^i \frac{1 - \mathbbm{1}\left( U^{i,k}_{k+1} \ge \frac{1}{2} \right)}{A^{i,k}(c_{(k+1)})} \right) + \left( \frac{1}{i} \sum_{k=1}^i \mathbbm{1}\left( U^{i,k}_{k+1} \ge \frac{1}{2} \right) \right)^2.
\end{align*}
Denote $\mathbbm{1}\left( U^{i,k}_{k+1} \ge \frac{1}{2} \right)$ by $W^{i,k}_{k+1}$, the above equality becomes
\begin{align} \label{eq:biased_lalala}
   4 \cdot  \frac{\alpha_\gamma^2}{n} \cdot \E[y_i^2 | S_{i+1}] +  \E[\widehat{U}_i | S_{i+1}]^2  =   4 \cdot  \frac{\alpha_\gamma^2}{n}  \left( \frac{1}{i} \sum_{k=1}^i \frac{1 -W^{i,k}_{k+1}}{A^{i,k}(c_{(k+1)})} \right) + \left( \frac{1}{i} \sum_{k=1}^iW^{i,k}_{k+1} \right)^2.
\end{align}
To compare it with $(A', U')$, we first claim the follows
\begin{claim}
 Let $M' =  \sum_{c\in T_{i+1}} U'_c$, then 
$$
\frac{1}{i} \sum_{k=1}^i W^{i,k}_{k+1} = \frac{1}{i} \sum_{k=1}^i \mathbbm{1}\left( U^{i,k}_{k+1} \ge \frac{1}{2} \right) \le 2 M'/i.
$$
\end{claim}
\begin{proof}
Define $$M^{i,k} =  U^{i,k}_1 + \cdots + U^{i,k}_{k-1} + U^{i,k}_{k+1} + \cdots + U^{i,k}_i.$$ By Lemma~\ref{lem:biased_adj2}, 
$M^{i,k} \le M'$  for every $k\in [i]$, which means
$$
U^{i,k}_1 + \cdots + U^{i,k}_{k-1} + U^{i,k}_{k+1} + \cdots + U^{i,k}_i \le  M',
$$
so each $U^{i,k}$ has at most $2 M'$ entries that are no less than $\frac{1}{2}$. And since $U^{i,k}$ is monotone non-decreasing, i.e., $U^{i,k}_1 \le \cdots \le U^{i,k}_{k-1} \le U^{i,k}_{k+1} \le \cdots \le U^{i,k}_i$, only the last $2M'$ entries can be no less than $\frac{1}{2}$. Therefore there are at most $2M'$ non-zero $\mathbbm{1}\left( U^{i,k}_{k+1} \ge \frac{1}{2} \right)$, thus
$$
\frac{1}{i} \sum_{k=1}^i \mathbbm{1}\left( U^{i,k}_{k+1} \ge \frac{1}{2} \right) \le 2 M'/i.
$$
\end{proof}
Meanwhile,  according to Lemma~\ref{lem:biased_adj2}, 
\begin{equation} \label{eqn:biased_main_eq1}
4 \cdot  \frac{\alpha_\gamma^2}{n} \cdot \frac{1 - W^{i,k}_{k+1}}{A^{i,k}(c_{(k+1)})} \le 8 \cdot  \frac{\alpha_\gamma^2}{n} \cdot \frac{1 - U'_{k+1}}{A'(c_{(k+1)})}+  U'_{k+1} \cdot \frac{M'}{i}
\end{equation}
for all $k$.
Then \eqref{eq:biased_lalala} can be bounded as 
%\begin{align} \label{eqn:biased_main_eq2}
 % 4 \cdot  \frac{\alpha_\gamma^2}{n} \cdot \E[y_i^2 | S_{i+1}] +  \E[\widehat{U}_i | S_{i+1}]^2  \le 4 \cdot  \frac{\alpha_\gamma^2}{n} \left( \frac{1}{i} \sum_{k=1}^i \frac{1 - U^{i,k}_{k+1}}{A^{i,k}(c_{(k+1)})} \right) + \left( \frac{1}{i} \sum_{k=1}^i U^{i,k}_{k+1} \right)M'.
%\end{align}
%Meanwhile, according to Lemma~\ref{lem:biased_adj2}, since $A'$ uses half of the budget that is used by $A^i$, 
% Combine \eqref{eqn:biased_main_eq1} and \eqref{eqn:biased_main_eq2}, 
\begin{align*}
&  4 \cdot  \frac{\alpha_\gamma^2}{n} \cdot \E[y_i^2 | S_{i+1}] +  \E[\widehat{U}_i | S_{i+1}]^2 \\
=  & 4 \cdot  \frac{\alpha_\gamma^2}{n}  \left( \frac{1}{i} \sum_{k=1}^i \frac{1 -W^{i,k}_{k+1}}{A^{i,k}(c_{(k+1)})} \right) + \left( \frac{1}{i} \sum_{k=1}^iW^{i,k}_{k+1} \right)^2\\
\le & 8 \cdot    \frac{\alpha_\gamma^2}{n} \cdot \frac{1}{i} \sum_{c\in T_{i+1}} \frac{1 - U'_c}{A'(c)} + \frac{1}{i} \sum_{k=1}^i   U'_{k+1} \cdot \frac{M'}{i} +  (2 M'/i)^2\\
\le & 8 \cdot    \frac{\alpha_\gamma^2}{n} \cdot \frac{1}{i} \sum_{c\in T_{i+1}} \frac{1 - U'_c}{A'(c)} + \frac{M'}{i} \cdot \frac{M'}{i} +  (2 M'/i)^2\\
= & 8 \cdot    \frac{\alpha_\gamma^2}{n} \cdot \frac{1}{i} \sum_{c\in T_{i+1}} \frac{1 - U'_c}{A'(c)} + 5 \left(\frac{M'}{i}\right)^2.  
\end{align*}
which completes the proof of Claim~\ref{clm:biased_clm1}.
\end{proof}

\begin{claim}[Compare $A^*$ with $A'$] \label{clm:biased_clm2}
\begin{align*}
 &4 \cdot  \frac{\alpha_\gamma^2}{n} \cdot \frac{1}{i}  \sum_{c \in T_{i+1}} \frac{(1-U'_c)}{A'(c)} + \left(\frac{1}{i} \sum_{c\in T_{i+1}} U'_c \right)^2 
 \le  128\sqrt{\frac{n}{i}} \cdot  \frac{\alpha_\gamma^2}{n} \cdot \frac{1}{i}  \sum_{c \in T_{i+1}} \frac{(1-U^*_c)}{A^*(c)} + \left(\frac{1}{i} \sum_{c\in T_{i+1}} U^*_c \right)^2.
\end{align*}
\end{claim}
\begin{proof}
By the same reasoning as the proof of unbiased case (see Section~\ref{prg:compareA}), it can be proved that $\frac{A^*}{32\sqrt{n/i}}, U^*$ is a feasible solution of~\eqref{prog:biased_Aprime}.
\end{proof}

Combine Claim~\ref{clm:biased_clm1} and Claim~\ref{clm:biased_clm2}, 
\begin{align*}
 &4 \cdot  \frac{\alpha_\gamma^2}{n} \cdot \E[y_i^2 | S_{i+1}] +  \E[\widehat{U}_i | S_{i+1}]^2 \\
\le &5 \left(4 \cdot  \frac{\alpha_\gamma^2}{n} \cdot \frac{1}{i} \sum_{c\in T_{i+1}} \frac{1 - U'_c}{A'(c)} + (M')^2\right)\\
 \le & 5 \left( 128\sqrt{\frac{n}{i}} \cdot  \frac{\alpha_\gamma^2}{n} \cdot \frac{1}{i}  \sum_{c \in T_{i+1}} \frac{(1-U^*_c)}{A^*(c)} + \left(\frac{1}{i} \sum_{c\in T_{i+1}} U^*_c \right)^2 \right).
\end{align*}

Now we compute $\E_{S_{i+1}} \left[4 \cdot  \frac{\alpha_\gamma^2}{n} \cdot \E[y_i^2|S_{i+1}] +  \E[\widehat{U}_i|S_{i+1}]^2 \right] $ by averaging over $S_{i+1}$
Since $S_{i+1}$ is a random subset,  the first term of the right-hand side is just 
\begin{align*}
  128\sqrt{\frac{n}{i}} \cdot  \frac{\alpha_\gamma^2}{n} \cdot \frac{1}{i} \left( \sum_{c \in S_{i+1}} \frac{1-U^*_c}{A^*(c)} + \frac{1-U^*_{\overline{C}}}{A^*(\overline{C})} \right)= &  128\sqrt{\frac{n}{i}} \cdot  \frac{\alpha_\gamma^2}{n} \left( \frac{1}{n}  \sum_{j=1}^n \frac{1-U^*_j}{A^*(c_j)}  + \frac{1}{i} \cdot \frac{1-U^*_{\overline{C}}}{A^*(\overline{C})}\right) \\
= & 32\sqrt{\frac{n}{i}}\cdot V^* +  128\sqrt{\frac{n}{i}} \cdot  \frac{\alpha_\gamma^2}{n} \cdot \frac{1}{i} \cdot \frac{1-U^*_{\overline{C}}}{A^*(\overline{C})}.
\end{align*}
Next, we upper bound the second term $\E_{S_{i+1}} \left[\left(\frac{1}{i} \sum_{c\in S_{i+1}} U^*_c\right)^2\right]$. Since $S_{i+1} = \{c_1, \dots, c_i\}$ is a random subset, 
\begin{align*}
  &\E_{S_{i+1}} \left[\left(\frac{1}{i} \sum_{c\in T_{i+1}} U^*_c\right)^2\right] \\
 =  &\E_{S_{i+1}} \left[\left(\frac{1}{i} \sum_{c\in S_{i+1}} U^*_c + \frac{1}{i} \cdot U^*_{\overline{C}}\right)^2\right] \\
 \le &  \E_{S_{i+1}} \left[2 \left(  \frac{1}{i} \sum_{c\in S_{i+1}} U^*_c \right)^2 + 2 \left(   \frac{1}{i} \cdot  U^*_{\overline{C}}\right)^2\right]\\
 = & \frac{2}{i^2} \cdot \E_{S_{i+1}}  \left[\sum_{c \in S_{i+1}} (U^*_c)^2 + \sum_{c, c' \in S_{i+1}: c\neq c'} U^*_c \cdot U^*_{c'} \right] +  \frac{2}{i^2}  \left(  U^*_{\overline{C}}\right)^2 \\
  = & \frac{2}{i^2} \cdot \E_{S_{i+1}}  \left[\sum_{c} \mathbbm{1}(c \in S_{i+1}) (U^*_c)^2 + \sum_{c, c': c\neq c' } \mathbbm{1}(c, c'\in S_{i+1}) U^*_c \cdot U^*_{c'} \right] +  \frac{2}{i^2}  \left(  U^*_{\overline{C}}\right)^2 \\
= &\frac{2}{i^2} \cdot \frac{i}{n} \sum_{i=1}^n (U^*_i)^2  + \frac{2}{i^2} \cdot \frac{i(i-1)}{n(n-1)} \sum_{j=1}^n \sum_{k\in[n], k\neq j} U^*_j \cdot U^*_k  +  \frac{2}{i^2}  \left(  U^*_{\overline{C}}\right)^2 \\
\le &\frac{2}{i} \cdot \frac{1}{n} \sum_{i=1}^n U^*_i  + 2 \cdot \frac{(i-1)n}{i(n-1)} \cdot \frac{1}{n^2} \sum_{j=1}^n \sum_{k=1}^n U^*_j \cdot U^*_k  +  \frac{2}{i^2}  \left(  U^*_{\overline{C}}\right)^2 \\
 \le &\frac{2}{i} \cdot \frac{\sum_{i=1}^n U^*_i }{n}  + 2 \left(\frac{\sum_{i=1}^n U^*_i}{n}\right)^2 +  \frac{2}{i^2}  \left(  U^*_{\overline{C}}\right)^2.
\end{align*}
Therefore 
\begin{align*}
&\E_{S_{i+1}} \left[4 \cdot  \frac{\alpha_\gamma^2}{n} \cdot \E[y_i^2|S_{i+1}] +  \E[\widehat{U}_i|S_{i+1}]^2 \right]  \\
\le &5 \cdot  \E_{S_{i+1}} \left[ 128\sqrt{\frac{n}{i}} \cdot  \frac{\alpha_\gamma^2}{n} \cdot \frac{1}{i}  \sum_{c \in T_{i+1}} \frac{(1-U^*_c)}{A^*(c)} + \left(\frac{1}{i} \sum_{c\in T_{i+1}} U^*_c \right)^2 \right]\\
\le & 5  \left(32\sqrt{\frac{n}{i}}\cdot V^* +  128\sqrt{\frac{n}{i}} \cdot  \frac{\alpha_\gamma^2}{n} \cdot \frac{1}{i} \cdot \frac{1-U^*_{\overline{C}}}{A^*(\overline{C})} + \frac{2}{i} \cdot \frac{\sum_{i=1}^n U^*_i }{n}  + 2 \left(\frac{\sum_{i=1}^n U^*_i}{n}\right)^2 +  \frac{2}{i^2}  \left(  U^*_{\overline{C}}\right)^2\right)\\
= & 10 \left(\frac{\sum_{i=1}^n U^*_i}{n}\right)^2 + 160\sqrt{\frac{n}{i}}\cdot V^* + \frac{10}{i} \cdot \frac{\sum_{i=1}^n U^*_i }{n}  +  \frac{640}{i \sqrt{i}}  \cdot  \frac{\alpha_\gamma^2}{\sqrt{n}} \cdot \frac{1-U^*_{\overline{C}}}{A^*(\overline{C})}  +  \frac{10}{i^2}  \left(  U^*_{\overline{C}}\right)^2.
\end{align*}
Define $D_1 =10 \left(\frac{\sum_{i=1}^n U^*_i}{n}\right)^2$, $D_2 = 160\cdot V^*$, $D_3 =\frac{10\sum_{i=1}^n U^*_i }{n} $, $D_4 = 640 \cdot  \frac{\alpha_\gamma^2}{\sqrt{n}} \cdot \frac{1-U^*_{\overline{C}}}{A^*(\overline{C})}$, $D_5 = 10\left( U^*_{\overline{C}}\right)^2$, then the above expectation becomes $ D_1 + \frac{\sqrt{n}\cdot D_2}{\sqrt{i}} +\frac{D_3}{i} + \frac{D_4}{i\sqrt{i}} +\frac{D_5}{i^2}$.  Finally, by Lemma~\ref{lem:biased_decompose}, we get
\begin{align*}
& 4 \cdot  \frac{\alpha_\gamma^2}{n} \cdot  \E[\widehat{\sigma}]^2 + \left(\E[\widehat{U}]/n\right)^2 \\
\le & \frac{1}{n} \sum_{i=1}^n \E_{S_{i+1}} \left[4 \cdot  \frac{\alpha_\gamma^2}{n} \cdot \E[y_i^2|S_{i+1}] +  \E[\widehat{U}_i|S_{i+1}]^2 \right]\\
\le & \frac{1}{n} \sum_{i=1}^n \left( D_1 + \frac{\sqrt{n} \cdot D_2}{\sqrt{i}} +\frac{D_3}{i} + \frac{D_4}{i\sqrt{i}} +\frac{D_5}{i^2} \right)\\
\le & D_1 + 2D_2 + \frac{1+\ln n}{n} \cdot D_3 + \frac{2}{n} (D_4 + D_5)
\end{align*}
since $\sum_{i=1}^n \frac{1}{\sqrt{i}} \le 2\sqrt{n}$ and $\sum_{i=1}^n \frac{1}{i} \le 1 + \ln n$ and $\sum_{i=1}^n \frac{1}{\sqrt{i}\cdot i} \le 2$. 
We can further replace $D_1 + \frac{1+\ln n}{n} \cdot D_3$ with a sum-of-square term $20 \left(\frac{\sum_{i=1}^n U^*_i}{n}\right)^2 + 10 \left(\frac{1+\ln n}{n} \right)^2$, which will make it easier to compare the quantity with the benchmark (details in the last paragraph of the proof).
\begin{claim}
$$
D_1 + \frac{1+\ln n}{n} \cdot D_3 \le 20 \left(\frac{\sum_{i=1}^n U^*_i}{n}\right)^2 + 10 \left(\frac{1+\ln n}{n} \right)^2.
$$
\end{claim}
\begin{proof}
By defintioin, 
$$
D_1 + \frac{1+\ln n}{n} \cdot D_3 = 10 \left(\frac{\sum_{i=1}^n U^*_i}{n}\right)^2 + \frac{1+\ln n}{n} \cdot \frac{10\sum_{i=1}^n U^*_i }{n}
$$
We prove the inequality by cases.
First, if $\sum_{i=1}^n U^*_i \le 1+\ln n$  then the second term on the right-hand side $ \frac{1+\ln n}{n} \cdot \frac{10\sum_{i=1}^n U^*_i }{n} \le  10 \frac{(1+\ln n)^2}{n^2}$.
If $\sum_{i=1}^n U^*_i \ge 1+\ln n$ then $\frac{1+\ln n}{n} \cdot \frac{10\sum_{i=1}^n U^*_i }{n} \le 10\left(\frac{\sum_{i=1}^n U^*_i }{n}\right)^2$. In any cases, we get
$$
D_1 + \frac{1+\ln n}{n} \cdot D_3 \le 20 \left(\frac{\sum_{i=1}^n U^*_i}{n}\right)^2 + 10 \left(\frac{1+\ln n}{n} \right)^2.
$$
\end{proof}
Finally by plugging in all the numbers,
\begin{align*}
 & 4 \cdot  \frac{\alpha_\gamma^2}{n} \cdot  \E[\widehat{\sigma}]^2 + \left(\E[\widehat{U}]/n\right)^2 \\
 \le & 20 \left(\frac{\sum_{i=1}^n U^*_i}{n}\right)^2 + 10 \left(\frac{1+\ln n}{n} \right)^2 + 2D_2 + \frac{2}{n} (D_4+D_5)\\
\le & 20 \left(\frac{\sum_{i=1}^n U^*_i}{n}\right)^2 + 10 \left(\frac{1+\ln n}{n} \right)^2 + 320\cdot V^* +  \frac{2}{n} \left( 640  \cdot  \frac{\alpha_\gamma^2}{\sqrt{n}} \cdot \frac{1-U^*_{\overline{C}}}{A^*(\overline{C})} + 10\left( U^*_{\overline{C}}\right)^2\right)\\
= & 320\cdot V^* + 20 \left(\frac{\sum_{i=1}^n U^*_i}{n}\right)^2 + 10 \left(\frac{1+\ln n}{n} \right)^2 +   \frac{2}{n} \left( 640  \cdot  \frac{\alpha_\gamma^2}{\sqrt{n}} \cdot \frac{1-U^*_{\overline{C}}}{A^*(\overline{C})} + 10\left( U^*_{\overline{C}}\right)^2\right).
\end{align*}
This completes the proof of the Lemma~\ref{lem:biased_bound_square}.
\end{proof}

Finally we compare the worst-case expected length of our confidence interval with the benchmark $L^*$. Our output confidence interval $\left[\sum_{i=1}^n y_i / n - \frac{\alpha_\gamma}{\sqrt{n}} \cdot \widehat{\sigma}, \quad \sum_{i=1}^n y_i/ n + \frac{\widehat{U}}{n} + \frac{\alpha_\gamma}{\sqrt{n}} \cdot \widehat{\sigma}\right]$ has expected length $L = \E[2\cdot \frac{\alpha_\gamma}{\sqrt{n}} \cdot \widehat{\sigma} + \frac{\widehat{U}}{n}]$. Then for any underlying cost-data distribution $\mathcal{D}$,  $L^2$ can be bounded as follows using Lemma~\ref{lem:biased_bound_square} and the inequalities $(a+b)^2 \le 2(a^2+b^2), \ a^2+b^2 \le (a+b)^2$ for all $a,b \ge 0$.
 \begin{align*}
L^2 & = \left(2\cdot \frac{\alpha_\gamma}{\sqrt{n}} \cdot \E[\widehat{\sigma}] + \frac{\E[\widehat{U}]}{n}\right)^2\\
& \le 2 \left( 4\cdot \frac{\alpha_\gamma^2}{n} \cdot \E[\widehat{\sigma}]^2 + \left(\frac{\E[\widehat{U}]}{n}\right)^2 \right)\\
& \le 2 \left(320\cdot V^* + 20 \left(\frac{\sum_{i=1}^n U^*_i}{n}\right)^2 + 10 \left(\frac{1+\ln n}{n} \right)^2 +   \frac{2}{n} \left( 640  \cdot  \frac{\alpha_\gamma^2}{\sqrt{n}} \cdot \frac{1-U^*_{\overline{C}}}{A^*(\overline{C})} + 10\left( U^*_{\overline{C}}\right)^2\right)\right)\\
& =  640\cdot V^* + 40 \left(\frac{\sum_{i=1}^n U^*_i}{n}\right)^2 + 20 \left(\frac{1+\ln n}{n} \right)^2 +    \frac{2560}{n}  \cdot  \frac{\alpha_\gamma^2}{\sqrt{n}} \cdot \frac{1-U^*_{\overline{C}}}{A^*(\overline{C})} + \frac{40}{n}\left( U^*_{\overline{C}}\right)^2\\
& \le 640 \left( \sqrt{V^*} + \frac{\sum_{i=1}^n U^*_i}{n}\right)^2 + 20 \left(\frac{1+\ln n}{n} \right)^2 +    \frac{2560}{n}  \cdot  \frac{\alpha_\gamma^2}{\sqrt{n}} \cdot \frac{1-U^*_{\overline{C}}}{A^*(\overline{C})} + \frac{40}{n}\left( U^*_{\overline{C}}\right)^2\\
&= 640\cdot\left(\frac{n+1}{n} \cdot L^*\right)^2 +  20 \left(\frac{1+\ln n}{n} \right)^2 +    \frac{2560}{n}  \cdot  \frac{\alpha_\gamma^2}{\sqrt{n}} \cdot \frac{1-U^*_{\overline{C}}}{A^*(\overline{C})} + \frac{40}{n}\left( U^*_{\overline{C}}\right)^2.
\end{align*}
 Since $\sqrt{\sum_j a_j} \le \sum_j \sqrt{a_j}$,
$$
L \le 8\sqrt{10} \cdot \frac{n+1}{n} \cdot L^* + \frac{2\sqrt{5}\cdot (1 + \ln n)}{n} +  \frac{16\sqrt{10}}{\sqrt{n}}  \cdot\frac{\alpha_\gamma}{n^{1/4}} \cdot \sqrt{\frac{1-U^*_{\overline{C}}}{A^*(\overline{C})} } + \frac{2\sqrt{10}}{\sqrt{n}}.
$$
 This completes the proof of Theorem~\ref{thm:CI_main}.

\end{document}